\documentclass[11pt]{article}

\pdftrailerid{}
\newif\ifanon           \anonfalse
\newif\ifhidedepththree \hidedepththreefalse

\ifhidedepththree
  \newcommand{\conditionalresultletter}{E}
\else
  \newcommand{\conditionalresultletter}{F}
\fi

\usepackage[margin=1in]{geometry}
\usepackage{amsmath,amssymb,amsthm,mathtools}
\usepackage{graphicx}
\usepackage[table]{xcolor}
\usepackage{aliascnt}
\usepackage{enumitem}
\usepackage{booktabs}
\usepackage{array}
\usepackage[colorlinks=true,linkcolor=blue,citecolor=blue,urlcolor=blue]{hyperref}
\usepackage{cleveref}
\usepackage{tikz}
\usetikzlibrary{arrows.meta}
\usepackage{orcidlink}
\usepackage{placeins}
\usepackage[multiple]{footmisc}

\setlist[itemize]{leftmargin=2em}
\setlist[enumerate]{leftmargin=2em}

\theoremstyle{plain}

\newtheorem{theorem}{Theorem}[section]

\newaliascnt{lemma}{theorem}
\newtheorem{lemma}[lemma]{Lemma}
\aliascntresetthe{lemma}

\newaliascnt{proposition}{theorem}
\newtheorem{proposition}[proposition]{Proposition}
\aliascntresetthe{proposition}

\newaliascnt{corollary}{theorem}
\newtheorem{corollary}[corollary]{Corollary}
\aliascntresetthe{corollary}

\newaliascnt{conjecture}{theorem}
\newtheorem{conjecture}[conjecture]{Conjecture}
\aliascntresetthe{conjecture}

\newaliascnt{claim}{theorem}

\aliascntresetthe{claim}

\newaliascnt{fact}{theorem}
\newtheorem{fact}[fact]{Fact}
\aliascntresetthe{fact}

\theoremstyle{definition}

\newaliascnt{definition}{theorem}
\newtheorem{definition}[definition]{Definition}
\aliascntresetthe{definition}

\newaliascnt{example}{theorem}

\aliascntresetthe{example}

\newaliascnt{problem}{theorem}

\aliascntresetthe{problem}

\theoremstyle{remark}

\newaliascnt{remark}{theorem}
\newtheorem{remark}[remark]{Remark}
\aliascntresetthe{remark}

\crefname{theorem}{Theorem}{Theorems}
\Crefname{theorem}{Theorem}{Theorems}

\crefname{lemma}{Lemma}{Lemmas}
\Crefname{lemma}{Lemma}{Lemmas}

\crefname{proposition}{Proposition}{Propositions}
\Crefname{proposition}{Proposition}{Propositions}

\crefname{corollary}{Corollary}{Corollaries}
\Crefname{corollary}{Corollary}{Corollaries}

\crefname{conjecture}{Conjecture}{Conjectures}
\Crefname{conjecture}{Conjecture}{Conjectures}

\crefname{claim}{Claim}{Claims}
\Crefname{claim}{Claim}{Claims}

\crefname{fact}{Fact}{Facts}
\Crefname{fact}{Fact}{Facts}

\crefname{definition}{Definition}{Definitions}
\Crefname{definition}{Definition}{Definitions}

\crefname{example}{Example}{Examples}
\Crefname{example}{Example}{Examples}

\crefname{problem}{Problem}{Problems}
\Crefname{problem}{Problem}{Problems}

\crefname{remark}{Remark}{Remarks}
\Crefname{remark}{Remark}{Remarks}

\crefname{algorithm}{Algorithm}{Algorithms}
\Crefname{algorithm}{Algorithm}{Algorithms}

\crefname{appendix}{appendix}{appendices}
\Crefname{appendix}{Appendix}{Appendices}

\crefname{figure}{Figure}{Figures}
\Crefname{figure}{Figure}{Figures}

\newcommand{\AC}{\mathsf{AC}}
\newcommand{\kOV}{k\text{-}\mathsf{OV}}
\newcommand{\kXOR}{k\text{-}\mathsf{XOR}}
\newcommand{\kSUM}{k\text{-}\mathsf{SUM}}
\newcommand{\SUB}{\mathrm{SUB}}
\newcommand{\CLIQUE}{\mathrm{CLIQUE}}

\DeclareMathOperator{\size}{size}
\newcommand{\proj}{\mathrm{proj}}           

\newcommand{\F}{\mathbb{F}}
\newcommand{\Z}{\mathbb{Z}}
\newcommand{\N}{\mathbb{N}}
\newcommand{\E}{\mathbb{E}}
\newcommand{\bits}{\{0,1\}}
\newcommand{\1}{\mathbf{1}}
\newcommand{\xor}{\oplus}
\DeclareMathOperator{\lcm}{lcm}
\DeclareMathOperator{\im}{im}
\DeclareMathOperator{\vecop}{vec}
\DeclareMathOperator{\rep}{rep}

\DeclareMathOperator{\tw}{tw}

\newcommand{\COV}{C_{\mathrm{OV}}}
\newcommand{\Cxor}{C_{\oplus}}
\newcommand{\Cxorlift}{C_{\oplus}'}
\newcommand{\Csum}{C_{\Sigma}}
\newcommand{\Dzero}{D_0}
\newcommand{\mxor}{m_{\oplus}}
\newcommand{\msum}{m_{\Sigma}}

\newcommand{\lab}{\mathrm{lab}}
\newcommand{\Ical}{\mathcal{I}}
\newcommand{\Mcal}{\mathcal{M}}

\newcommand{\ignore}[1]{}

\title{Fine-Grained \texorpdfstring{$\AC^0$}{AC0} Lower Bounds for
\texorpdfstring{$\kOV$}{k-OV}, \texorpdfstring{$\kXOR$}{k-XOR}, and
\texorpdfstring{$\kSUM$}{k-SUM} \\ via Colored Subgraph Isomorphism}

\ifanon
  \author{}
\else
  \author{
    Haoxing Lin\,\orcidlink{0000-0001-9594-1871}\thanks{%
      This work was supported by the National Research Foundation, Singapore, under its NRF Fellowship programme, award no.~NRF-NRFF14-2022-0010.%
    }\\[0.3em]
    \normalsize National University of Singapore\\
    \normalsize \texttt{haoxingl@comp.nus.edu.sg}
  }
\fi
\date{}

\begin{document}
\maketitle

\begin{abstract}
We prove lower bounds for $\kOV$, $\kXOR$, and $\kSUM$ in nonuniform $\AC^0$, tracking how the circuit-size exponent scales with $k$ and using no running-time hypothesis. Our framework gives depth-zero projections from colored subgraph isomorphism to the three targets at dimension, row count, or bit width $O(k\log n)$, without increasing depth or size, and preserving gate orientation. For every fixed depth and every sufficiently large fixed $k$, we obtain unconditional bounds $n^{\Omega(k)}$ for $\kOV$ and $(n/k)^{\Omega(k)}$ for $\kXOR$ and $\kSUM$, with an absolute exponent-rate constant independent of both $k$ and the depth, while the onset threshold may depend on $(d,k)$. For growing $k=n^{o(1)}$, we obtain, for every fixed depth $d$, the unconditional floor $n^{\Omega_d(\min\{\sqrt{k},\log n\})}$. This strengthens to $n^{\Omega(k)}$ at depth two for both top-gate orientations, i.e., top conjunction and top disjunction\ifhidedepththree.\else, and at depth three for top-disjunction circuits $\mathsf{OR}\circ\mathsf{AND}\circ\mathsf{OR}$, with no restriction on bottom fan-in or literal polarity. The depth-three argument rests on a minterm bound for a single CNF: a fixed CNF is very unlikely to become true for the first time exactly when a randomly planted copy is completed, with no circuit-correctness hypothesis.\fi{} Assuming a pattern-uniform strengthening of the Li--Razborov--Rossman source lower bound, the same projections complete the $k=n^{o(1)}$ frontier with $n^{\Omega_d(k)}$ \ifhidedepththree at depth three for both orientations\else for the missing top-conjunction depth-three orientation\fi{} and for every fixed depth $d\ge4$. The interface of our projection framework does not depend on which source lower bound is used\ignore{: any lower bound for a suitable colored subgraph isomorphism source transfers through the same three constructions}, so new lower bounds for suitable source families or improved source exponents pass directly to all three targets. All direct $\kXOR$ bounds stated above concern odd $k$; a black-box odd-to-even lift transfers any such lower bound through a supplied admissible parameter decomposition. The $\kSUM$ projection works for both parities. At the bit width $m=\Theta(k\log(\mathrm e n/k))$ used by our projection, a block-carry $\Sigma_3$ upper bound of size $(n/k)^{O(k)}$ matches \ifhidedepththree the fixed-$k$ specialization of \fi{}the top-disjunction depth-three lower bound $(n/k)^{\Omega(k)}$ up to constants in the exponent. The remaining upper-versus-lower-bound gaps concern depth two, top-conjunction depth three, and other width regimes. 
\end{abstract}

\clearpage
\tableofcontents

\clearpage
\section{Introduction}\label{sec:introduction}

The problems studied in this paper have different algebraic forms, but they share one witness-selection task: choose $k$ objects and verify one joint relation. Informally, their accepting witnesses have the following forms:
\begin{equation*}
\begin{aligned}
\kOV &: \quad \exists\,(u_1,\ldots,u_k)\in U_1\times\cdots\times U_k \quad \forall t\in[D],\quad \prod_{i=1}^{k}u_i(t)=0,\\
\kXOR &: \quad \exists\,S\in\binom{[n]}{k} \quad \bigoplus_{i\in S}a_i=0^m,\\
\kSUM &: \quad \exists\,S\in\binom{[n]}{k} \quad \sum_{i\in S}z_i\equiv0\pmod{2^m}.
\end{aligned}
\end{equation*}
For $\kOV$, the input consists of $k$ arrays of $n$ pairwise-distinct $D$-dimensional binary vectors, and a witness chooses one vector from each array; by the distinctness promise, each array is simply a set of $n$ vectors listed in some order. For $\kXOR$ and $\kSUM$, the input is a single indexed list of size $n$ whose entries are $m$-dimensional binary vectors ($\kXOR$) or $m$-bit integers ($\kSUM$), and a witness chooses $k$ distinct indices; different selected indices may carry equal values. The formal definitions and input conventions appear in \Cref{sec:target-interfaces}.

Fine-grained complexity refines coarse running-time classifications by asking for the correct exponent of the running time and by using reductions that preserve the significance of an exponent improvement~\cite{ImpagliazzoPaturi01,ImpagliazzoPaturiZane01,VW18}. We ask the analogous quantitative question for nonuniform constant-depth computation: how does the exponent of the required circuit size scale with $k$ and with the dimension, row count, or bit width of the instance? This is an analogy in granularity, not a transfer of a running-time hypothesis.

We answer this circuit analogue by projecting from colored subgraph isomorphism rather than by importing a time-complexity hypothesis. A common family of depth-zero projections yields fixed-$k$ lower bounds for all three targets and an unconditional depth-versus-$k$ frontier when $k$ grows. The exponent is linear in $k$ at depth two\ifhidedepththree.\else{} and for top-disjunction depth-three circuits.\fi{} A precise source-side conjecture would extend that exponent \ifhidedepththree to depth three\else to the missing top-conjunction depth-three orientation\fi{} and to every fixed depth $d\ge4$.

\subsection{Problems, fine-grained complexity, and constant-depth circuits}\label{sec:intro-background}

\paragraph{The three target problems.} The targets share a selection skeleton but use three different verification algebras. Orthogonality asks that every coordinate avoid the all-one pattern among the selected vectors. XOR asks that independent linear constraints over $\F_2$ vanish. Modular SUM asks that an integer sum vanish in $\Z_{2^m}$, where carries couple the bit positions. Orthogonal Vectors is a canonical hub of fine-grained complexity, and its two-vector form has a substantial equivalence class under truly subquadratic reductions~\cite{VW18,ChenWilliams19}; the $k$-set formulation studied here exposes the number of selected vectors as an additional quantitative parameter. The $\kSUM$ and $\kXOR$ problems have their own algorithmic, average-case, and generalized-birthday lineages, discussed in \Cref{sec:related-work}. We do not place all three targets under one algorithmic conjecture. They are grouped here because the same source witness can be realized through each of their three algebras.

\paragraph{Our sense of ``fine-grained'' in this paper.} Here \emph{fine-grained} refers to the dependence of the circuit-size exponent on $k$ and to the preservation of that dependence under our projections. Theorems~A--\ifhidedepththree D\else E\fi{} in \Cref{sec:intro-results} are unconditional. Theorem~\conditionalresultletter{} assumes only Pattern-Uniform LRR, a circuit lower-bound conjecture for colored subgraph isomorphism; it is neither SETH, OVC, nor any running-time hypothesis for the target problems. The fine-grained literature therefore motivates the quantitative question and the choice of targets, but it supplies no hardness assumption used in our unconditional proofs.

\paragraph{Constant-depth circuits.} The class $\AC^0$ consists of nonuniform families of polynomial-size, constant-depth Boolean circuits with unbounded-fan-in AND and OR gates and negations at the inputs. To state quantitative lower bounds, we allow arbitrary size at each fixed depth and let $\size_d(f)$ denote the minimum number of non-input gates in a depth-at-most-$d$ circuit computing $f$, with negations normalized to input leaves. Throughout the paper, $\Omega(\cdot)$ hides an absolute positive constant, whereas $\Omega_d(\cdot)$ may hide a constant depending on the fixed depth $d$; in general, a subscript lists every parameter on which the hidden constant may depend. Foundational work of Ajtai, Furst, Saxe, Sipser, Yao, and H\aa stad established strong limitations of this model, most famously through lower bounds for PARITY and related explicit functions~\cite{Ajtai83,FSS84,Yao85,Hastad86}.

Those classical results usually express complexity as a function of the total input length. Our question is parameter-sensitive: for structured witness-selection functions, how precisely can the circuit-size exponent scale with the number $k$ of selected objects and with the target's constraint budget, namely dimension, row count, or bit width?

\paragraph{Time lower bounds do not answer the circuit question.} SETH- and OVC-based lower bounds concern the running time of uniform algorithms~\cite{ImpagliazzoPaturi01,ImpagliazzoPaturiZane01,VW18,ChenWilliams19}. They do not by themselves rule out nonuniform constant-depth circuits, whose circuit may be chosen separately for each parameter tuple. Our route therefore starts with unconditional constant-depth lower bounds for a source Boolean function and transfers them by depth-zero projections. Such a projection replaces every input of a target circuit by a constant, a source input bit, or its negation; composition preserves depth and introduces no new gates. The resulting bounds are direct worst-case circuit lower bounds for the target functions.

\paragraph{The resulting viewpoint.} This function-level bridge yields a modular methodology rather than three unrelated reductions. \ifhidedepththree Three\else Four\fi{} proved source statements and one conjecture feed a common projection interface and then branch into orthogonality, addition over $\F_2$, and modular integer addition. Quantitatively, the outcome depends jointly on circuit depth, the witness parameter $k$, the target budget (dimension for $\kOV$, row count for $\kXOR$, or bit width for $\kSUM$), and the way the target universe packs the source candidates.

Structurally, the direct $\kXOR$ qualification arises while enforcing exact selection over $\F_2$: linear equations over $\F_2$ count selected entries only modulo two, so they cannot by themselves exclude a spurious selection that takes an even number of entries from every group of the constructed list; such a selection has even size, so an odd $k$ excludes it while an even $k$ does not. Signed integer selectors give $\kSUM$ no analogous parity restriction. Both list targets nevertheless remain PARITY-hard after a simple restriction that fixes the $k$ selected indices. Quantitatively, for a fixed support $S\in\binom{[n]}k$, its $\kXOR$ predicate has a depth-two verifier of size $m\cdot 2^{O(k)}$, whereas the direct $\kSUM$ verifier costs $2^{\Theta(km)}$; at the bit-width scale $m=\Theta(k\log(\mathrm e n/k))$ supplied by our projection, a block-carry depth-three circuit removes the extra factor of $m$ in the latter exponent. We next state the resulting depth-versus-$k$ frontier before turning to the underlying proof architecture\ignore{ that produces it}.

\subsection{Main results and the depth-versus-\texorpdfstring{$k$}{k} frontier}\label{sec:intro-results}

\begin{table}[t]
\centering
\small
\renewcommand{\arraystretch}{1.18}
\setlength{\tabcolsep}{4pt}
\begin{tabular}{@{}
  >{\raggedright\arraybackslash}p{0.40\textwidth}
  >{\raggedright\arraybackslash}p{0.27\textwidth}
  >{\raggedright\arraybackslash}p{0.22\textwidth}
@{}}
\toprule
Regime and circuit class & Common displayed lower bound & Status \\
\midrule
Sufficiently large fixed $k$, every fixed depth & $n^{\Omega(k)}$ for $\kOV$; $(n/k)^{\Omega(k)}$ for $\kXOR$ and $\kSUM$ & Unconditional; \hyperref[thm:B]{Theorem B} \\
\addlinespace
Growing $k=n^{o(1)}$, every fixed depth $d$ & $n^{\Omega_d(\min\{\sqrt{k},\log n\})}$ & Unconditional; \hyperref[thm:C]{Theorem C} \\
\addlinespace
Growing $k=n^{o(1)}$, depth two, both orientations & $n^{\Omega(k)}$ & Unconditional; \hyperref[thm:D]{Theorem D} \\
\addlinespace
\ifhidedepththree\else
Growing $k=n^{o(1)}$, depth three with top disjunction & $n^{\Omega(k)}$ & Unconditional; \hyperref[thm:E]{Theorem E} \\
\addlinespace
\fi
Growing $k=n^{o(1)}$, \ifhidedepththree depth three with both orientations\else top-conjunction depth three\fi; every fixed depth $d\ge4$ & $n^{\Omega_d(k)}$ & Conditional on Pattern-Uniform LRR; \hyperref[thm:F]{Theorem \conditionalresultletter} \\
\bottomrule
\end{tabular}
\caption{The depth-versus-$k$ frontier. In the growing-$k$ rows the common base-$n$ display uses $k=n^{o(1)}$; the exact natural base for $\kXOR$ and $\kSUM$ is $n/k$, as in \Cref{tab:intro-target-interface}. Here $\Omega(\cdot)$ hides an absolute constant and $\Omega_d(\cdot)$ a depth-dependent one. The direct $\kXOR$ statements are for odd $k$.}
\label{tab:intro-frontier}
\end{table}

We refer to the principal results as Theorems~A--\conditionalresultletter\ignore{; each paragraph heading below links to its formal statement}. As a cue to the notation used below, $\gamma$ denotes a source-host exponent rate and $\beta$ or $c$ a target-size rate; \Cref{sec:structured-source} gives the full convention and its exceptions. Throughout this subsection, the direct $\kXOR$ statements concern odd $k$; the even-$k$ lift is summarized after Theorem~\conditionalresultletter.

\paragraph{\texorpdfstring{\hyperref[thm:A]{Theorem~A}: one projection interface for three targets.}{Theorem A: one projection interface for three targets.}} Let $P$ be a simple loopless pattern with exactly $k-1$ edges. In colored $P$-subgraph isomorphism at host size $N$, written $P\text{-}\SUB_N$ and formalized in \Cref{def:psub}, the host has one color class $\{v\}\times[N]$ of $N$ vertices for every $v\in V(P)$, and the input specifies the possible host edges between the color classes corresponding to edges of $P$. The source accepts when one can choose one vertex from each color class so that every pattern edge is present; the coloring fixes which host vertex plays each pattern role. Under the projection, an accepting target witness consists of one candidate object for each of the $k-1$ selected host edges together with one anchor object, and therefore has size $k$. For $\kOV$ the anchor merely supplies the $k$th array, whereas for the two list targets it is also the reference against which the constraints force exactly one candidate per pattern edge. Writing $\le_{\proj}$ for a depth-zero projection, we obtain $P\text{-}\SUB_N\le_{\proj}\kOV$, $P\text{-}\SUB_N\le_{\proj}\kXOR$ for odd $k$, and $P\text{-}\SUB_N\le_{\proj}\kSUM$ for both parities, in each case without increasing circuit depth or size and while preserving gate orientation. The exact host capacities and $O(k\log n)$-scale budgets appear in \Cref{tab:intro-target-interface}. The $\kOV$ construction stores each edge group in a separate array and therefore permits $N=\Theta(\sqrt n)$, whereas the two list targets $\kXOR$ and $\kSUM$ store all $k-1$ groups together and permit only $N=\Theta(\sqrt{n/k})$; this packing difference produces the two natural lower-bound bases, $n$ for $\kOV$ and $n/k$ for the list targets. Consequently, any lower bound for a suitable colored-subgraph source transfers through the same three target constructions, and source hardness and target realization can be selected independently. In particular, a new source family or an improved source exponent rate upgrades all three target bounds automatically, the source rate carrying to the target base with a factor-two loss, because each target base is of order $N^2$.

\Cref{tab:intro-frontier} summarizes the resulting circuit frontier. Its growing-$k$ rows use the common simplification $k=n^{o(1)}$, under which $n/k=n^{1-o(1)}$ and the list-target lower bounds can also be displayed in base $n$. The exact target bases and constraint budgets appear in \Cref{tab:intro-target-interface} and in the formal theorem statements.

\begin{table}[t]
\centering
\small
\renewcommand{\arraystretch}{1.18}
\setlength{\tabcolsep}{4pt}
\begin{tabular}{@{}
  >{\raggedright\arraybackslash}p{0.10\textwidth}
  >{\raggedright\arraybackslash}p{0.20\textwidth}
  >{\raggedright\arraybackslash}p{0.25\textwidth}
  >{\raggedright\arraybackslash}p{0.40\textwidth}
@{}}
\toprule
Target & Source host size & Target budget & Natural base and qualification \\
\midrule
$\kOV$ & $N=\lfloor\sqrt n\rfloor$ & $D\ge\COV k\log n$ & base $n$; both parities \\
\addlinespace
$\kXOR$ & $N=\left\lfloor\sqrt{\frac{n-1}{k-1}}\right\rfloor$ & $m\ge\Cxor k\log(\mathrm e n/k)$ & base $n/k$; direct projection for odd $k$ \\
\addlinespace
$\kSUM$ & $N=\left\lfloor\sqrt{\frac{n-1}{k-1}}\right\rfloor$ & $m\ge\Csum k\log(\mathrm e n/k)$ & base $n/k$; both parities; \ifhidedepththree fixed-$k$ \fi{}top-disjunction depth-three exponent-scale tight when $m=\Theta(k\log(\mathrm e n/k))$ \\
\bottomrule
\end{tabular}
\caption{Target packing and principal constraint budgets. We call the displayed dimension, row-count, and bit-width thresholds the \emph{principal budgets}: they are the $O(k\log n)$-scale budgets supplied directly by the projections of \Cref{thm:A}. The universal constants $\COV,\Cxor,\Csum>0$ are fixed in \Cref{thm:A,lem:host-substitutions}, and these budgets are used by \Cref{thm:B,thm:D}\ifhidedepththree\else, \Cref{thm:E}\fi, and \Cref{thm:F}. \Cref{thm:C} instead uses the more demanding $\kSUM$ width $m\ge\Csum k\log n$, because $O(k\log(kN))$ is not uniformly $O(k\log(n/k))$ throughout its range $k\le n^{1-\varepsilon}$.}
\label{tab:intro-target-interface}
\end{table}
\FloatBarrier

The depth-two theorem is proved in the larger ranges $k\le\xi_D n^{1/4}$ for $\kOV$ and $k\le\xi_D n^{1/5}$ for $\kXOR$ and $\kSUM$, where $\xi_D>0$ is a universal range constant.\ifhidedepththree\else{} The depth-three unconditional row concerns only $\Sigma_3=\mathsf{OR}\circ\mathsf{AND}\circ\mathsf{OR}$ circuits; it gives no lower bound for the top-conjunction orientation $\Pi_3$ and hence no lower bound for unrestricted depth-three circuits, where either orientation is allowed.\fi

\paragraph{\texorpdfstring{\hyperref[thm:B]{Theorem~B}: a universal fixed-$k$ rate.}{Theorem B: a universal fixed-k rate.}} Let $k_0\in\N$ be the universal threshold in \Cref{lem:expander-pattern-family} and for each $k\ge k_0$,  let $P^{\mathrm{exp}}_k$ denote the padded expander pattern supplied there. Let $\COV,\Cxor,\Csum$ be the universal construction constants fixed by \Cref{thm:A}. There exists a universal target-side exponent rate $\beta>0$, independent of $k$, the depth $d$, and the instance size $n$. For every fixed depth $d$ and every fixed $k\ge k_0$, once $n$ exceeds a threshold depending on $(d,k)$, every depth-$d$ circuit for $\kOV$ has size at least $n^{\beta(k-1)}$, while every depth-$d$ circuit for $\kXOR$ or $\kSUM$ has size at least $(n/k)^{\beta(k-1)}$, at the budgets in \Cref{tab:intro-target-interface}. The exponent-rate constant is therefore uniform in both $d$ and $k$; only the onset threshold is not. At the principal budgets, the top-disjunction depth-three specializations for all three targets match elementary upper bounds at the exponent scale. This comparison makes no tightness claim at depth two or for the top-conjunction depth-three orientation.

\paragraph{\texorpdfstring{\hyperref[thm:C]{Theorem~C}: an unconditional growing-$k$ floor at every fixed depth.}{Theorem C: an unconditional growing-k floor at every fixed depth.}} When $k$ grows, the fixed-pattern source theorem provides no uniform control of the onset threshold, so we replace the padded expander source by a clique core padded with a matching, whose source hardness is Beame's small-clique lower bound~\cite{Beame90} and remains uniform as the clique order grows. For every fixed depth $d$ and every subpolynomial $k=n^{o(1)}$ tending to infinity, the resulting lower bound for each target has the common scale
\begin{equation*}
n^{\Omega_d(\min\{\sqrt{k},\log n\})}.
\end{equation*}
More generally, the theorem holds throughout $k\le n^{1-\varepsilon}$ with the exact base-$n$ and base-$(n/k)$ expressions in \Cref{thm:C}; its $\kSUM$ line uses the larger width $m\ge\Csum k\log n$ recorded in \Cref{tab:intro-target-interface}. The exponent first grows as $\sqrt{k}$ because a clique using at most $k-1$ pattern edges has only $\Theta(\sqrt{k})$ vertices. Beame's bound, however, covers only cliques of order at most $\log N$, so beyond $k=\Theta(\log^2 n)$ the usable clique order stops growing with $k$ and the exponent levels off at $\Omega_d(\log n)$. Where this transition occurs does not depend on $d$; only the level at which the exponent settles may depend on it.

\paragraph{\texorpdfstring{\hyperref[thm:D]{Theorem~D}: both depth-two orientations.}{Theorem D: both depth-two orientations.}} At depth two, direct counting restores an exponent linear in $k$. The DNF branch is the standard minimal-positive-input count: every colored copy is a minimal positive input of the source and therefore forces a distinct accepting term. The source-specific CNF branch constructs an explicit exponentially large family of minimal transversals of the colored-copy hypergraph, each of which forces a clause; a single pattern vertex of degree at least two already generates the required family. Transferring the resulting source bound yields $n^{\Omega(k)}$ for $\kOV$ when $k\le\xi_D n^{1/4}$ and $(n/k)^{\Omega(k)}$ for the two list targets when $k\le\xi_D n^{1/5}$. Thus both the top-conjunction and top-disjunction orientations are covered.

\ifhidedepththree\else
\paragraph{\texorpdfstring{\hyperref[thm:E]{Theorem~E}: top-disjunction depth three.}{Theorem E: top-disjunction depth three.}} The same padded bipartite expander family also supports an unconditional $\Sigma_3$ lower bound with exponent linear in growing $k$, without any restriction on bottom fan-in or literal polarity. For every constant $c$ with $0<c<1/16$, choose any $\gamma$ with $2c<\gamma<1/8$. Then there is a polynomial-threshold exponent $C'(\gamma,c)>1$ such that, for all sufficiently large $k$ and every $n\ge k^{C'(\gamma,c)}$, every $\Sigma_3$ circuit computing $\kOV$ has size at least $n^{ck}$, while every such circuit computing $\kXOR$ or $\kSUM$ has size at least $(n/k)^{ck}$. The main new ingredient is one support maximum, fixed before the relevant planted labels are revealed, that controls every state of an adaptive commitment process and avoids a factor-$k$ loss from union-bounding over clause histories. The top disjunction selects one accepting middle CNF; the argument provides no corresponding $\Pi_3$ or $\size_3$ lower bound.
\fi

\paragraph{\texorpdfstring{\hyperref[thm:F]{Theorem~\conditionalresultletter}: a conditional completion.}{Theorem \conditionalresultletter: a conditional completion.}} Pattern-Uniform LRR asserts that, for each fixed depth $d$, there are constants $\gamma_d,c'_d>0$ such that the padded expander family $(P^{\mathrm{exp}}_k)_{k\ge k_0}$ satisfies $\size_d(P^{\mathrm{exp}}_k\text{-}\SUB_N)\ge N^{\gamma_d k}$ for every $k\ge k_0$ and every $N\ge k^{c'_d}$. Assuming this source-side statement, the unchanged projections of Theorem A give $n^{\Omega_d(k)}$ lower bounds throughout the subpolynomial regime for all three targets. At depth three this supplies \ifhidedepththree the full linear-in-$k$ exponent for both orientations\else the missing top-conjunction orientation\fi, and for every fixed depth $d\ge4$ it supplies the full linear-in-$k$ exponent. We use Pattern-Uniform LRR only as a sufficient condition and do not claim that it is necessary. It strengthens the fixed-pattern Li--Razborov--Rossman theorem by requiring one polynomial host threshold uniformly along the growing family, and it is not known to follow from that theorem.

Two target-specific qualifications complete the frontier. First, the direct $\kXOR$ projection enforces exact selection through equations over $\F_2$ and is therefore clean for odd $k$. For an even target parameter $K$, \Cref{lem:xor-lift} gives a black-box lift from any supplied odd parameter $r$ in an admissible decomposition $K=tr+s$; \Cref{cor:fixed-even-xor}\ifhidedepththree\else, \Cref{cor:depth-three-even-xor},\fi{} and \Cref{cor:conditional-even-xor} record the corresponding quantitative consequences. Second, the $\kSUM$ projection works for both parities because signed integer coordinates enforce exact selection counts. Although the direct fixed-support depth-two verifier costs $2^{\Theta(km)}$, \Cref{prop:sum-brute-force} gives a top-disjunction depth-three block-carry circuit of size $(n/k)^{O(k)}$ at the principal width $m=\Theta(k\log(\mathrm e n/k))$. At fixed $k$, this makes the top-disjunction depth-three specialization of \Cref{thm:B} exponent-scale tight.\ifhidedepththree\else{} Together with \Cref{thm:E}, the same conclusion holds for growing $k$ in the stated polynomial-host regime.\fi{} This does not settle the optimal size at depth two or for the opposite depth-three orientation. Nor does it make fixed-support verification polynomial-size: \Cref{prop:sum-parity-corner} gives a two-point PARITY restriction for $\kSUM$, and the same one-line restriction applies to $\kXOR$. Rather, block carries remove the extra factor of $m$ in the exponent of the direct $\kSUM$ verifier.

\subsection{Proof architecture and technical ideas}\label{sec:intro-architecture}

\Cref{fig:intro-architecture} records the proof dependencies. \ifhidedepththree Four\else Five\fi{} source statements feed two pattern families, each padded to exactly $k-1$ edges, and every resulting colored subgraph source passes through the same target interface. In particular, the conditional path changes only the source statement: it introduces no new reduction to any target.

\begin{figure}[t]
\centering
\begin{tikzpicture}[
  x=1cm,
  y=1cm,
  >=Latex,
  every node/.style={align=center,font=\scriptsize},
  source/.style={draw,rounded corners,minimum height=1.05cm,text width=2.35cm,inner sep=3pt},
  family/.style={draw,rounded corners,minimum height=0.95cm,text width=3.15cm,inner sep=3pt},
  interface/.style={draw,rounded corners,minimum height=0.9cm,text width=6.0cm,inner sep=3pt},
  target/.style={draw,rounded corners,minimum height=0.75cm,text width=2.2cm,inner sep=3pt}
]
\ifhidedepththree
\node[source] (fixed) at (-5,3.4) {\textbf{B:} Fixed-pattern LRR\\proved, cited};
\node[source] (d2) at (-2.1,3.4) {\textbf{D:} Depth-two counting\\standard DNF; explicit CNF};
\node[source,dashed] (uniform) at (0.9,3.4) {\textbf{\conditionalresultletter:} Pattern-Uniform LRR\\conjectural};
\node[source] (beame) at (4.2,3.4) {\textbf{C:} Beame's small-clique bound\\proved, cited};
\node[family] (expander) at (-2.2,1.65) {Colored $P\text{-}\SUB_N$\\bipartite $4$-regular expander core plus matching};
\node[family] (clique) at (2.2,1.65) {Colored $P\text{-}\SUB_N$\\clique plus matching};
\node[interface] (interface) at (0,-0.05) {\textbf{Theorem A:} one depth-zero projection interface\\selection, guards, consistency, dummy purge, and domain validity};
\node[target] (ov) at (-3.4,-1.65) {$\kOV$};
\node[target] (xor) at (0,-1.65) {odd-$k$ $\kXOR$};
\node[target] (sum) at (3.4,-1.65) {$\kSUM$};
\draw[->] (fixed.south) -- ([xshift=-10mm]expander.north);
\draw[->] (d2.south) -- (expander.north);
\draw[->,dashed] (uniform.south) -- ([xshift=10mm]expander.north);
\else
\node[source] (fixed) at (-5.6,3.4) {\textbf{B:} Fixed-pattern LRR\\proved, cited};
\node[source] (d2) at (-2.8,3.4) {\textbf{D:} Depth-two counting\\standard DNF; explicit CNF};
\node[source] (minterm) at (0,3.4) {\textbf{E:} Single-CNF minterm bound\\proved here};
\node[source,dashed] (uniform) at (2.8,3.4) {\textbf{\conditionalresultletter:} Pattern-Uniform LRR\\conjectural};
\node[source] (beame) at (5.6,3.4) {\textbf{C:} Beame's small-clique bound\\proved, cited};
\node[family] (expander) at (-1.4,1.65) {Colored $P\text{-}\SUB_N$\\bipartite $4$-regular expander core plus matching};
\node[family] (clique) at (3.8,1.65) {Colored $P\text{-}\SUB_N$\\clique plus matching};
\node[interface] (interface) at (1.2,-0.05) {\textbf{Theorem A:} one depth-zero projection interface\\selection, guards, consistency, dummy purge, and domain validity};
\node[target] (ov) at (-2.2,-1.65) {$\kOV$};
\node[target] (xor) at (1.2,-1.65) {odd-$k$ $\kXOR$};
\node[target] (sum) at (4.6,-1.65) {$\kSUM$};
\draw[->] (fixed.south) -- ([xshift=-12mm]expander.north);
\draw[->] (d2.south) -- ([xshift=-4mm]expander.north);
\draw[->] (minterm.south) -- ([xshift=4mm]expander.north);
\draw[->,dashed] (uniform.south) -- ([xshift=12mm]expander.north);
\fi
\draw[->] (beame.south) -- (clique.north);
\draw[->] (expander) -- (interface);
\draw[->] (clique) -- (interface);
\draw[->] (interface) -- (ov);
\draw[->] (interface) -- (xor);
\draw[->] (interface) -- (sum);
\end{tikzpicture}
\caption{Proof architecture. Solid boxes and arrows denote proved ingredients; the dashed box and arrow denote Pattern-Uniform LRR. The expander family carries \ifhidedepththree B, D, and \conditionalresultletter\else B, D, E, and \conditionalresultletter\fi, while the clique family carries C; see \Cref{tab:source-menu}. All \ifhidedepththree four\else five\fi{} paths use the same target interface in Theorem A.}
\label{fig:intro-architecture}
\end{figure}

\paragraph{One witness skeleton, three verification algebras.} A source witness selects one host edge for each edge of the pattern. It is valid exactly when every selected edge is present and all edges incident to the same pattern vertex assign that vertex the same host label. The target instances represent each proposed host edge by a candidate object, add one anchor, and pad each list or array to length $n$ with dummies, while coordinates or rows enforce selection, presence, consistency, dummy exclusion, and, only for $\kOV$, pairwise distinctness within each array. The algebra changes but the semantics do not: a $\kOV$ witness already chooses one vector from each array, $\kXOR$ uses equations over $\F_2$ to enforce the group pattern, and $\kSUM$ uses exact signed coordinates that are packed into one residue modulo a power of two. Because the target constructions are depth-zero projections, all circuit lower-bound work remains on the colored-subgraph side.

\paragraph{Why two pattern families appear.} The strongest available source statement depends on the circuit regime. The common bounded-degree family has a bipartite $4$-regular expander core. Its expansion gives $\kappa(P^{\mathrm{exp}}_k)=\Omega(k)$ for the fixed-pattern Li--Razborov--Rossman theorem; its $\Theta(k)$ vertices, absence of isolated vertices, and degree-$4$ vertex support the depth-two counting argument; and \ifhidedepththree its bipartition is not otherwise used in the unconditional results\else one bipartition side supplies $(1-o(1))k/4$ vertices with pairwise disjoint incident-edge blocks for the top-disjunction depth-three argument\fi. A clique core instead supports Beame's lower bound~\cite{Beame90} uniformly while the clique size grows. The edge allowance limits its order to $\Theta(\!\sqrt{k})$, while Beame's depth-independent range requires $r\le\log N$, yielding the source scale $\Theta(\min\{\sqrt{k},\log N\})$ used in \hyperref[thm:C]{Theorem~C}. Substituting the source host size $N$ in terms of the target size $n$, as in \Cref{tab:intro-target-interface}, gives $\log N=\Theta(\log n)$ for $\kOV$ and $\log N=\Theta(\log(n/k))$ for the list targets; both are $\Theta(\log n)$ in the subpolynomial-$k$ regime. Disjoint matching edges pad both families to exactly $k-1$ edges without weakening the relevant source lower bound.

\paragraph{Exact counting at depth two.} After flattening adjacent gates with the same connective, a depth-two circuit is either a DNF, an OR of AND-terms, or a CNF, an AND of OR-clauses. Here an edge variable is the Boolean source input recording one possible colored host edge. For the DNF orientation, consider the minimal positive input consisting exactly of the edge variables of one source witness, or colored copy. Deleting any selected variable removes the only present edge from one pattern-edge group, so the input becomes rejecting. Any DNF term accepting this input must therefore contain every selected edge variable positively. Because the pattern has no isolated vertices and has $\Theta(k)$ vertices, distinct source maps give distinct colored copies and force $N^{\Omega(k)}$ distinct terms. For the CNF orientation, monotonicity lets us remove negative literals, after which each clause is a set of edge variables meeting every colored copy; call such a set a transversal. For each inclusion-minimal transversal, any CNF for the function must contain a clause included in it, and minimality forces equality. To generate many of them, choose a pattern vertex $a$ of degree $\delta\ge2$. Independently for each host label $i\in[N]$, choose one incident pattern edge $h(i)$ and include every variable on $h(i)$ that assigns label $i$ to $a$. Every colored copy assigns some label $i$ to $a$ and therefore meets this set. Conversely, for any included variable $x$, fixing the endpoint labels encoded by $x$ and choosing all other labels arbitrarily produces a colored copy whose unique intersection with the set is $x$, proving minimality. The $\delta^N\ge2^N$ choices of $h$ give distinct minimal transversals, and $2^N\ge N^{\Omega(k)}$ in the theorem's range $N\ge k^2$. Thus both depth-two orientations require size $N^{\Omega(k)}$.

\ifhidedepththree\else
\paragraph{The single-CNF depth-three mechanism.} A $\Sigma_3$ circuit is an OR of middle CNFs. A random experiment forms a sparse background $G$ and adds one uniformly planted colored copy $F^*$. One bipartition side $I$ of the common expander core has size $L=(1/4-o(1))k$, and the planted edges incident to its vertices form pairwise disjoint blocks. With probability $1-o(1)$, no planted edge was already present and $F^*$ is the unique copy in $G\cup F^*$, so deleting any planted edge makes the input rejecting. Correctness and the top disjunction then select one middle CNF $T$ that accepts $G\cup F^*$ but rejects every such deletion: $T$ must turn from false to true exactly when the planted copy is completed. Call this the minterm event $\Mcal(T)$.

To see why that is unlikely, add the blocks of $I$ one vertex at a time, in a random order. At every stage before the last, some clause of $T$ is still false, so the vertex added next must be one whose planted block satisfies that clause. Only a few vertices can, because each pairing of a vertex of $I$ with a label it might receive carries its own chance $p$ of already appearing in the background: a clause that many such pairs could satisfy would, with high probability, have been satisfied by the background alone. Accordingly, fix the planted labels outside $I$ but leave $G$ random, and let $Z$ be the largest number of such pairs still available to satisfy a clause, in the worst case over all $2^L$ stages and all $s$ clauses. Then $\Pr[Z\ge t]\le2^Ls\mathrm e^{-pt}$, and hence $\|Z\|_L:=(\E[Z^L])^{1/L}=O((L+\log(\mathrm e s))/p)$. Only afterward are $G$ and the labels inside $I$ exposed. With $q$ vertices left to add, at most $Z$ of the $qN$ available pairs satisfy the false clause, so the process advances with probability at most $Z/(qN)$; on $\Mcal(T)$ some vertex must work at every stage, which happens with probability at least $1/q$. Multiplying over the $L$ stages and canceling the two $L!$ factors yields
\begin{equation*}
\Pr[\Mcal(T)]\le\left(\frac{C_0\bigl(L+\log(\mathrm e s)\bigr)}{pN}\right)^L.
\end{equation*}

Two union bounds should be distinguished here. Because one $Z$ serves every clause at every stage, the argument never records which clause is satisfied at which stage; union-bounding over those ordered histories instead would cost a factor $k$ in the exponent. The only union bound taken over the circuit is the outer one, over its middle CNFs: by \Cref{eq:sigma3-normal-form}, a size-$S$ source circuit has at most $S+2M_{\mathrm{src}}$ of them, each with that many clauses, and in the polynomial host range $M_{\mathrm{src}}\le kN^2$ is absorbed into the contradicted size bound. Taking $pN=N^{1/2-o(1)}$ and union-bounding over at most $2N^{\gamma k}$ middle CNFs gives $o(1)$ whenever $\gamma<1/8$, contradicting the good event's $1 - o(1)$. The quadratic host substitutions then convert the source coefficient into every target coefficient $c<1/16$ in \hyperref[thm:E]{Theorem~E}.
\fi

\paragraph{Where parity and carries enter.} The two target effects arise at different logical stages. For $\kXOR$, the obstruction appears while enforcing selection: equations over $\F_2$ count the selected candidates in each group only modulo two, so they cannot by themselves exclude the branch in which the anchor is unselected and every group contributes an even number of candidates. The selected candidates are then even in number, so an odd $k$ forces an odd number of the $k$ chosen indices onto dummies, and a single anti-dummy row, which is $1$ exactly on the dummies, rejects any such selection. An even $k$ leaves the branch open. For $\kSUM$, signed integer coordinates enforce the desired support exactly for both parities, so the projection itself loses nothing. Once a support $S\in\binom{[n]}k$ is fixed, that is, once the $k$ selected indices have been chosen, the remaining $\kXOR$ or $\kSUM$ predicate checks only whether those entries satisfy the corresponding zero-XOR or zero-sum relation. Under a two-point restriction, either fixed-support predicate becomes the complement of PARITY, and hence remains PARITY-hard at every fixed depth as $k$ grows. The quantitative distinction is that $\kXOR$ has a depth-two fixed-support verifier of size $m\cdot 2^{O(k)}$, whereas the direct $\kSUM$ verifier costs $2^{\Theta(km)}$; the top-disjunction depth-three circuit of \Cref{prop:sum-brute-force} guesses block carries and removes the extra factor of $m$ in the exponent at the principal width $m=\Theta(k\log(\mathrm e n/k))$. In summary, the odd-$k$ loss is a selection phenomenon specific to $\kXOR$, while the carry overhead is a verification-cost phenomenon specific to $\kSUM$. \Cref{sec:synthesis} develops this distinction further after the formal proofs.

\subsection{Related work}\label{sec:related-work}

\paragraph{Direct lower bounds for Orthogonal Vectors.} Kane and R.~Ryan Williams prove lower bounds for Orthogonal Vectors in branching programs, Boolean formulas over bounded-fan-in complete bases, and formulas whose unbounded-fan-in gates compute arbitrary symmetric functions; in their main regimes these bounds are within $\widetilde O(D)$ factors of the natural upper bounds~\cite[Thms.~2, 3, and~5]{KaneWilliams19}. Choudhury, Limaye, Sreenivasaiah, and Srinivasan later sharpen the bounded-fan-in formula and branching-program bounds in the regime $D>c\log n$, where $c>1$ is constant, obtaining $\Omega(n^2D)$ for formulas and $\Omega(n^2D/\log(nD))$ for branching programs~\cite[Thms.~6 and~25]{ChoudhuryLimayeSreenivasaiahSrinivasan26}. Kane and Williams's lower-bound argument uses a non-product restriction. Separately, for each fixed bias $p\in(0,1)$, they construct $\AC^0$ formulas of size $n^{2-\varepsilon_p}$, for some $\varepsilon_p>0$, that are correct on a $1-o(1)$ fraction of instances under the $p$-biased product distribution~\cite[Thm.~6]{KaneWilliams19}. They also formulate a depth-three $\AC^0$ circuit lower bound for worst-case Orthogonal Vectors as a conjecture~\cite[Conj.~10]{KaneWilliams19}. These formula and branching-program models differ from general unbounded-fan-in $\AC^0$ circuits, so the results are precursors and motivation rather than statements subsumed by ours.

Choudhury and Sreenivasaiah study layered top-disjunction depth-three circuits through disjunctions of restricted bottom functions. They prove that, for $k\le D$, a disjunction of $t$-CNFs computing the \emph{unpromised} $\kOV$ function, in which vectors may repeat within an array, has top fan-in, and hence size under our convention, $\Omega((n/t)^k)$; more generally, the bottom gates may be unate functions of unbounded fan-in provided that at most $t$ inputs have negative polarity~\cite[Thms.~1.1, 1.2, and~5.12]{ChoudhurySreenivasaiah26}. They also prove a parameterized lower bound for the opposite depth-three orientation with no bottom-fan-in restriction, yielding a $2^{\Omega(n)}$ lower bound for the unpromised $2$-OV function when $D=\Omega(n^2)$~\cite[Thm.~1.3]{ChoudhurySreenivasaiah26}. When $k=O(1)$ and $k\le t=O(1)$, their main bound matches the $n^k$ exponent of the brute-force construction; for general $t$, their paper records a remaining gap in the top-/bottom-fan-in tradeoff. The promise obstruction differs between the two arguments. In the top-disjunction argument, every maxterm has at most one vector other than $\vec{1}$ in each array, so no maxterm satisfies pairwise distinctness once $n\ge3$~\cite[Prop.~4.4 and proof of Thm.~5.12]{ChoudhurySreenivasaiah26}. In the opposite-orientation argument, the image of the projection contains arrays with repeated vectors, so a circuit required to be correct only on promised inputs need not decide the entire projected source function. Thus neither proof by itself lower-bounds $\size_3(\kOV_{n,D,k})$ under the distinctness promise of \Cref{def:target-problems}. \ifhidedepththree These results concern circuit models and parameter regimes different from the unconditional lower bounds proved here.\else Our top-disjunction theorem instead treats ordinary $\Sigma_3$ circuits with no restriction on clause width or literal signs, allows $k$ to grow, and transfers to the promised $\kOV$ function, and hence also to its unpromised array variant, as well as to odd-$k$ $\kXOR$ and $\kSUM$. The two results concern complementary and incomparable circuit models and parameter regimes.\fi

The Choudhury--Limaye--Sreenivasaiah--Srinivasan work cited above also proves quantitatively stronger lower bounds for unrestricted fixed-depth circuits in a different dimension regime. Its main constant-depth theorem shows that, for fixed constants $k,d\ge2$ and $\varepsilon>0$, every depth-$d$ circuit computing the monotone complement of $\kOV$ at dimension $D=n^\varepsilon$ has size at least $n^{k-\varepsilon}$; because the circuit model is nonmonotone, they observe that the same statement may be phrased directly for $\kOV$~\cite[Thm.~4 and Footnote~4]{ChoudhuryLimayeSreenivasaiahSrinivasan26}. They also prove the monotone-circuit analogue of OVC for the complement of the unpromised two-set Orthogonal Vectors function at logarithmic dimension; in the proof's $0$-distribution, every unchosen position is filled with $1^D$, and Lemma~20 uses that filler directly~\cite[Thm.~5; construction of $\mathcal D_0$, Step~4; and Lem.~20]{ChoudhuryLimayeSreenivasaiahSrinivasan26}. Their $1$-distribution is pairwise distinct with high probability, so the promise sensitivity enters specifically through the $0$-distribution. Their transfer is projection-based: a split-and-list construction maps $\ell$-hyperclique in a $u$-uniform hypergraph on $v$ vertices to the target by listing all $\ell/k$-subsets and paying dimension $O(v^u)$~\cite[Thm.~11]{ChoudhuryLimayeSreenivasaiahSrinivasan26}. The listed vectors are pairwise distinct within each array, so this transfer is compatible with the promise in \Cref{def:target-problems}. This gives a near-maximal exponent at every fixed depth for fixed $k$, while their discussion records a barrier at $D=\Omega((\log n)^2)$ for the underlying reduction framework~\cite[Rem.~16]{ChoudhuryLimayeSreenivasaiahSrinivasan26}. Our results instead work at $O(k\log n)$-scale budgets, allow $k$ to grow, and cover $\kXOR$ and $\kSUM$ through the same interface.

\paragraph{Broader Orthogonal Vectors--circuit interfaces.} Abboud, Bringmann, Dell, and Nederlof show that falsifying the moderate-dimensional Orthogonal Vectors Conjecture would yield unexpected speedups for weighted hypergraph clique and sparse $\mathsf{TC}^1$-circuit satisfiability~\cite{AbboudBringmannDellNederlof18}. R.~Ryan Williams develops a different win--win connection between OVC and nonuniform lower bounds~\cite{RRW25}: one branch gives lower bounds for exact-threshold and equality-matrix representations, while the other yields nearly linear algorithms for logarithmic-dimensional Orthogonal Vectors together with further nonuniform consequences. These works establish important interfaces between fine-grained complexity and circuit complexity, but they do not give direct worst-case lower bounds for the present target functions in standard $\AC^0$. Our route instead transfers unconditional source-function lower bounds through depth-zero projections and does not assume OVC.

\paragraph{Constant-depth lower bounds for subgraph isomorphism.} The source side belongs to the line of constant-depth lower bounds for detecting small subgraphs: Beame's small-clique bound, Rossman's $k$-Clique theorem, Amano's lower bounds for uncolored general graph and hypergraph patterns (extending a weaker-exponent form of Rossman's method), and the structural theory of colored subgraph isomorphism developed by Li, Razborov, and Rossman~\cite{Beame90,Rossman08,Amano10,LRR17}. Li, Razborov, and Rossman identify the colored fixed-pattern exponent within a logarithmic factor of treewidth and prove minor-monotonicity of $\kappa_{\mathrm{col}}$; on the function side, their graph-minor reductions for structured colored subgraph isomorphism are linear-size monotone projections. Rosenthal subsequently closes their factor-two gap in the average-case exponent~\cite[Thm.~1.3]{Rosenthal19}, proves that the embedding parameter $\operatorname{emb}(G)$ underlying Marx's ETH-hardness result is $O(\kappa(G))$~\cite[Thm.~1.5]{Rosenthal19}, and separates $\kappa$ from treewidth: for the $t$-dimensional hypercube $Q_t$, one has $\kappa(Q_t)=\Theta\!\left(\tw(Q_t)/\sqrt{\log \tw(Q_t)}\right)$~\cite[Thm.~1.4 and the discussion following it]{Rosenthal19}. Our source menu uses the fixed-pattern Li--Razborov--Rossman theorem for \hyperref[thm:B]{Theorem~B} and Beame's clique lower bound for \hyperref[thm:C]{Theorem~C}, while \ifhidedepththree the depth-two source theorem is proved here\else the depth-two and top-disjunction depth-three source theorems are proved here\fi. The fixed-pattern quantifiers are essential: Pattern-Uniform LRR strengthens the host-size uniformity across a growing pattern family and is not known to follow from the published theorems.

\paragraph{Fine-grained algorithms and Orthogonal Vectors.} The ETH and SETH formulations of Impagliazzo and Paturi and the sparsification and reduction framework of Impagliazzo, Paturi, and Zane underlie the modern fine-grained program~\cite{ImpagliazzoPaturi01,ImpagliazzoPaturiZane01}. Vassilevska Williams surveys the resulting exponent-preserving methodology, and Chen and R.~Ryan Williams identify a broad equivalence class around two-set Orthogonal Vectors under truly subquadratic reductions~\cite{VW18,ChenWilliams19}. This literature motivates both the choice of Orthogonal Vectors and our insistence on preserving exponent dependence, but its conditional running-time lower bounds are logically separate from the nonuniform circuit lower bounds proved here.

\paragraph{Algorithmic and average-case work on \texorpdfstring{$\kSUM$}{k-SUM} and \texorpdfstring{$\kXOR$}{k-XOR}.} Wagner's generalized-birthday algorithm initiated a central algorithmic line for finding short zero-sum and XOR relations in dense groups~\cite{Wagner02}. Dinur, Keller, and Klein establish essentially tight conditional bounds for dense average-case $\kSUM$ and $\kXOR$ for $k\le5$, and in restricted parameter ranges for larger $k$~\cite{DinurKellerKlein24}, while Agrawal, Saha, Schwartzbach, Vanukuri, and Vasudevan study sparse and planted regimes, hardness amplification, reductions, and cryptographic applications~\cite{AgrawalEtAl24}. Dalirrooyfard, Lincoln, Saha, and Vassilevska Williams establish conditional average-case hardness for parity-counting variants of $\kOV$, $\kSUM$, and $\kXOR$ under explicit distributions~\cite{DLSVW25}. This literature motivates the algebraic targets and their sensitivity to parameters, but it concerns uniform algorithms and distributional or counting problems with different success criteria and computational models. The present results are worst-case nonuniform $\AC^0$ lower bounds and do not imply security for any cryptographic construction.

\paragraph{Comparison of regimes.} Existing direct results give strong Orthogonal-Vectors-related lower bounds in four distinct settings: structurally restricted top-disjunction depth-three circuits when $k\le D$, the opposite depth-three orientation without a bottom-fan-in restriction at dimension $D=\Omega(n^2)$, unrestricted fixed-depth circuits for fixed $k$ at polynomial dimension, and monotone circuits for the unpromised two-set complement at logarithmic dimension. Our results occupy a complementary region: $O(k\log n)$-scale constraint budgets, growing $k$, three target functions, and a common depth-zero transfer\ifhidedepththree, with an unconditional exponent linear in $k$ at depth two\else, together with a top-disjunction depth-three lower bound having unrestricted clause width and literal signs\fi. The resource units in these comparisons are not uniform: Kane--Williams use formula and branching-program size in some results~\cite[Thms.~2 and~3]{KaneWilliams19} and count wires in their symmetric-gate theorem and depth-three conjecture~\cite[Thm.~5; Conj.~10]{KaneWilliams19}; Choudhury--Limaye--Sreenivasaiah--Srinivasan use formula or branching-program size~\cite{ChoudhuryLimayeSreenivasaiahSrinivasan26}; and Choudhury--Sreenivasaiah count gates or middle CNFs~\cite{ChoudhurySreenivasaiah26}, whereas $\size_d$ here counts non-input gates. The comparisons above are therefore model-and-regime comparisons, not numerical identifications across complexity measures. The \ifhidedepththree depth-three and higher-depth linear-in-$k$ frontier is\else missing top-conjunction orientation and the higher-depth linear-in-$k$ frontier are\fi{} stated conditionally rather than folded into the unconditional claims.

\subsection{Organization}\label{sec:organization}

\Cref{sec:sources} presents colored subgraph isomorphism, the source lower bounds, and the two pattern families. \Cref{sec:framework} isolates the depth-zero projection interface, and \Cref{sec:instantiations} realizes it for $\kOV$, odd-$k$ $\kXOR$, and $\kSUM$. \Cref{sec:fixed-k} derives the fixed-$k$ consequences. \Cref{sec:growing-k} proves the unconditional growing-$k$ frontier, including the depth-two\ifhidedepththree{} result\else{} and top-disjunction depth-three results\fi, and records the complementary row-rich PARITY route together with its projection-saturation limit. \Cref{sec:conjecture} states Pattern-Uniform LRR and derives the conditional completion, and \Cref{sec:synthesis} synthesizes the parity and carry phenomena and records the open problems. \ifhidedepththree \Crefrange{app:source-formalities}{app:upper-bounds}\else \Crefrange{app:source-formalities}{app:depth-three}\fi{} contain the source formalities, pattern constructions, full projection proofs, parameter completions, supplementary $\kXOR$ results,\ifhidedepththree and\else\fi{} elementary upper bounds\ifhidedepththree.\else, and depth-three probability estimates.\fi

\section{Colored Subgraph Isomorphism and the Source Menu}\label{sec:sources}

Throughout, $k$ denotes the number of target objects selected by a witness, and every corresponding source pattern has exactly $k-1$ edges. The source problem is a structured colored form of subgraph isomorphism in which the color classes are built into the input coordinates.

\subsection{The structured source and the circuit model}\label{sec:structured-source}

\paragraph{The source function.} Let $P$ be a finite simple loopless undirected graph. Write $v(P):=|V(P)|$, $e(P):=|E(P)|$, and $\Delta(P)$ for its maximum degree, and fix an orientation $f=(a_f,b_f)$ of every edge $f\in E(P)$.\ignore{ We use $f$ for an individual edge because $e(P)$ is reserved for the edge count.} The orientation is bookkeeping only and doesn't imply any directed structure.

\begin{definition}[Colored $P$-subgraph isomorphism]\label{def:psub}
For $N\ge1$, the input to $P\text{-}\SUB_N$ consists of variables
\begin{equation}\label{eq:psub-coordinates}
X_{f,u,w}\in\{0,1\},
\qquad
f=(a_f,b_f)\in E(P),
\qquad
u,w\in[N].
\end{equation}
The variable $X_{f,u,w}$ records whether the colored host contains the edge between $(a_f,u)$ and $(b_f,w)$. The function accepts exactly when there is a map $\phi:V(P)\to[N]$ such that
\begin{equation}\label{eq:psub-acceptance}
X_{f,\phi(a_f),\phi(b_f)}=1
\qquad\text{for every }f\in E(P).
\end{equation}
Equivalently, for each pattern edge $f=(a_f,b_f)$, one may select a host pair $(u_f,w_f)\in[N]^2$ with $X_{f,u_f,w_f}=1$. These selections define an accepting map precisely when all edges incident to the same non-isolated pattern vertex assign that vertex the same host index; any isolated pattern vertices may then be assigned arbitrary indices.
\end{definition}

The host vertex set is $V(P)\times[N]$, with one color class $\{a\}\times[N]$ for each $a\in V(P)$. The map $\phi$ need not be injective as a map into $[N]$: if $a\ne b$, then $(a,\phi(a))$ and $(b,\phi(b))$ are distinct host vertices even when $\phi(a)=\phi(b)$. \Cref{app:source-formalities} proves that this function is exactly the structured problem $\mathrm{SUBGRAPH}_{\mathrm{col},N}(P)$ used by Li, Razborov, and Rossman~\cite[Section~5, p.~960]{LRR17} under a bijection of input coordinates. Their host-size parameter is written $n$; throughout this paper it is renamed $N$, while $n$ is reserved for the target universe size.

\ignore{The present-and-consistent one-edge-per-pattern-edge formulation in \Cref{def:psub} is the interface used by the generic framework and the three target instantiations in \Cref{sec:framework,sec:instantiations}.}

\ignore{Whenever} If a pattern $P$ is obtained from a core $F$ by adjoining disjoint matching edges, we first fix an orientation of $F$, then fix an orientation of $P$ whose restriction to $F$ is the chosen orientation of $F$, and orient the padding edges arbitrarily. \ignore{This convention makes every later core-to-pattern projection coordinatewise unambiguous.}

\paragraph{Circuit conventions.} Throughout the paper, $\log$ denotes the base-two logarithm. We normalize circuits so that negations occur only at input leaves, and we count only non-input gates. Thus $\size_d(g)$ is the minimum number of non-input gates in an unbounded-fan-in AND/OR circuit of depth at most $d$ computing $g$. At depth three, write $\Sigma_3:=\mathsf{OR}\circ\mathsf{AND}\circ\mathsf{OR}$ and $\Pi_3:=\mathsf{AND}\circ\mathsf{OR}\circ\mathsf{AND}$, with
\begin{equation}\label{eq:depth-three-size}
\size_3(g)=\min\{\size_{\Sigma_3}(g),\size_{\Pi_3}(g)\}.
\end{equation}
\ignore{After flattening adjacent equal gates, every depth-at-most-three circuit has one of these two orientations without increasing its number of non-input gates, which justifies \Cref{eq:depth-three-size}.}
\ifhidedepththree\else
For a CNF $T$, let $s(T)$ denote its number of clauses. After also deleting repeated literal inputs, a $\Sigma_3$ circuit of size at most $S$ on $M$ input variables can be written as $\bigvee_{i=1}^{t}T_i$, where
\begin{equation}\label{eq:sigma3-normal-form}
t\le S+2M
\qquad\text{and}\qquad
s(T_i)\le S+2M\quad\text{for every }i\in[t].
\end{equation}
The dual statement holds for $\Pi_3$. Indeed, the non-input children are counted among the $S$ gates, while at most $2M$ distinct literals can bypass a layer. For $P\text{-}\SUB_N$, the source-variable count is $M_{\mathrm{src}}=e(P)N^2$. These bounds will be used explicitly in \Cref{sec:growing-depth-three}.
\fi

\paragraph{Exponent rates, construction constants, and thresholds.} When a scalar denotes an exponent rate, we generally use $\gamma$, possibly with a subscript, for a rate measured in the source host size, and $\beta$ or $c$ for a rate measured in the target size. The inherited LRR edge weighting $\beta:E(P)\to[0,2]$ in \Cref{app:source-formalities}, the structural constant $c_\kappa$ in \Cref{lem:expander-pattern-family}, and Beame's clique rate $a_d$ in \Cref{fact:beame}\ifhidedepththree\else, and the derived even-$K$ rate $a_E$ in \Cref{cor:depth-three-even-xor}\fi{} keep their own letters. A capital $C$ with one of the target subscripts $\mathrm{OV}$, $\oplus$, or $\Sigma$ denotes a universal construction-budget constant; a $k$ with a subscript denotes an onset threshold on $k$; and symbols such as $n_1$, $N_0$, $L_\oplus$, and $L_\Sigma$ denote onset thresholds on the universe, host, or workable-width parameter. The symbols $c'_d$\ifhidedepththree\else, $C_E(\gamma)$\fi, $C_{\conditionalresultletter}(d)$\ifhidedepththree\else, and $C'(\gamma,c)$\fi{} are exponents in polynomial host or universe thresholds such as $N,n\ge k^C$, not circuit-size rates.

\subsection{Fixed-pattern hardness}\label{sec:fixed-pattern-hardness}

The fixed-pattern source theorem and its expander corollary use the colored parameter $\kappa_{\mathrm{col}}(P)$ of Li, Razborov, and Rossman~\cite[Def.~2.12(iii)]{LRR17}. We abbreviate it to $\kappa(P)$, following Rosenthal~\cite[Footnote~1]{Rosenthal19}; the uncolored parameter denoted by the same letter in parts of~\cite{LRR17} is never used in this paper. For this paper, the operational role of $\kappa(P)$ is as the source-host exponent in the fixed-pattern lower bound below; \Cref{thm:kappa-properties} supplies the structural estimates used to lower-bound it.

\begin{theorem}[Li--Razborov--Rossman fixed-pattern lower bound~\cite{LRR17}]\label{thm:lrr}
For every fixed simple loopless pattern $P$ with at least one edge and no isolated vertices\footnote{We state the import only for patterns without isolated vertices. Li--Razborov--Rossman do permit a single vertex as a base element of a union sequence~\cite[Def.~2.11]{LRR17}, so this is a conservative draft-side scope rather than an empty-sequence convention; it matches every application below and avoids relying on an unused isolated-vertex edge case in the $\alpha\equiv1$ citation assembly.}, every fixed depth $d$, and every constant $\eta>0$, there is a threshold $N_0(P,d,\eta)$ such that
\begin{equation}\label{eq:lrr-quantified}
\size_d(P\text{-}\SUB_N)+2e(P)N^2+2\ge N^{\kappa(P)-\eta}
\end{equation}
for every $N\ge N_0(P,d,\eta)$.\footnote{The additive term in \Cref{eq:lrr-quantified} makes the import independent of whether the source's phrase ``number of gates'' includes variable, constant, or input-negation nodes; no result derived from \Cref{cor:fixed-source-rate} carries it, while the quantifier display in \Cref{sec:conjecture} restates the theorem in full.} If $\kappa(P)-\eta>2$, then the threshold can be chosen so that the additive convention-bridge term is absorbed and $\size_d(P\text{-}\SUB_N)\ge N^{\kappa(P)-\eta}$.\footnote{No connectedness hypothesis is imposed: the two padded pattern families are cores together with disjoint matchings, and~\cite[Rem.~2.7]{LRR17} computes the relevant parameters for a disconnected pattern.}
\end{theorem}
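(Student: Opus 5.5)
The plan is to import the statement from Li--Razborov--Rossman rather than reprove it. The work is to match their hypotheses, quantifiers and size conventions with ours.

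\textbf{Step 1: identify the functions.} By \Cref{app:source-formalities}, $P\text{-}\SUB_N$ coincides with their $\mathrm{SUBGRAPH}_{\mathrm{col},N}(P)$ up to a bijection of input coordinates. Renaming variables changes neither depth nor gate count, so any depth-$d$ circuit for one yields a depth-$d$ circuit of the same size for the other.

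\textbf{Step 2: invoke their theorem.} I will use their main colored lower bound~\cite{LRR17}: for fixed $P$ and fixed $d$, depth-$d$ circuits solving $\mathrm{SUBGRAPH}_{\mathrm{col},N}(P)$ need $N^{\kappa_{\mathrm{col}}(P)-o(1)}$ gates. This is applied with the trivial weighting $\alpha\equiv1$, so that their threshold parameter specializes to $\kappa(P)$ as defined in \cite[Def.~2.12(iii)]{LRR17}. I will unfold the $o(1)$ into the explicit quantifier form: for each $\eta>0$ there is $N_0(P,d,\eta)$ beyond which the bound holds with exponent $\kappa(P)-\eta$. Isolated vertices are excluded by hypothesis. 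Disconnected patterns, such as cores plus matchings, are covered by \cite[Rem.~2.7]{LRR17}, so no connectedness hypothesis is needed.

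\textbf{Step 3: reconcile the counting conventions.} Their gate count may also include input nodes, negated inputs and constants. Our $\size_d$ counts only non-input gates, with negations pushed to the leaves. A circuit with $S$ non-input gates therefore has at most $S+2e(P)N^2+2$ nodes under any such convention: at most $e(P)N^2$ variables, as many negations, and two constants. Hence their lower bound applied to that node count gives \Cref{eq:lrr-quantified}.

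\textbf{Step 4: absorb the additive term.} When $\kappa(P)-\eta>2$, pick $\eta'\in(0,\eta)$ with $\kappa(P)-\eta'>2$ and apply Step 3 with $\eta'$. This gives $\size_d+2e(P)N^2+2\ge N^{\kappa(P)-\eta'}$. For large $N$, the additive term $2e(P)N^2+2$ is at most $N^{\kappa(P)-\eta'}-N^{\kappa(P)-\eta}$, because $N^{\kappa(P)-\eta'}=\omega(N^2)$ and $N^{\kappa(P)-\eta'}$ dominates $N^{\kappa(P)-\eta}$. Enlarging $N_0$ accordingly yields $\size_d(P\text{-}\SUB_N)\ge N^{\kappa(P)-\eta}$.

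\textbf{Main obstacle.} The delicate part is Step 2: extracting the exact quantifier order and the exponent $\kappa_{\mathrm{col}}$ from the published statement. In particular, I must verify that their bound is for worst-case $\AC^0$ depth $d$ with the threshold depending only on $(P,d,\eta)$, and not an average-case formulation needing a separate transfer. If their theorem is phrased average-case, I would note that an average-case lower bound implies the worst-case one immediately, since a correct circuit is in particular correct on average.
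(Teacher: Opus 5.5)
Your Steps 1, 3 and 4 coincide with the paper's assembly. You use the same coordinate bijection and the same convention bridge $S+2e(P)N^2+2$. You also absorb the additive term the same way when $\kappa(P)-\eta>2$; the paper uses $\eta/2$ where you use $\eta'$.

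The gap is in Step 2. You rightly flag it as the obstacle, but then settle it by an assertion that is not justified as stated. The applicable lower bound in~\cite{LRR17}, their type-I Thm.~3.12, is an average-case statement indexed by a colored threshold pair $(\alpha,\beta)$. It gives exponent $\kappa_{\alpha,\beta}(P)$ on the random graph $G_{\alpha,\beta}(N)$, whereas $\kappa_{\mathrm{col}}(P)$ is the optimum over such pairs. Setting $\alpha\equiv1$ does not by itself make the exponent ``specialize to $\kappa(P)$''. You must also choose the edge weighting $\beta$. The fact that some $\beta:E(P)\to[0,2]$ gives $(\mathbf{1},\beta)\in\theta_{\mathrm{col}}(P)$ with $\kappa_{\mathbf{1},\beta}(P)=\kappa_{\mathrm{col}}(P)$ is a separate result, \cite[Cor.~4.2]{LRR17}, which the paper cites for exactly this purpose.

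The choice $\alpha\equiv1$ is also what makes your average-to-worst-case remark legitimate. A circuit correct on every input of $P\text{-}\SUB_N$ has zero error only under distributions supported on that input space. Only at $\alpha\equiv1$ does $G_{\mathbf{1},\beta}(N)$ have vertex set exactly $V(P)\times[N]$, with possible edges exactly the variables $X_{f,u,w}$ under the bijection. So the missing chain is:
\begin{enumerate}
\item apply Cor.~4.2 to fix $\beta$;
\item check that the support of $G_{\mathbf{1},\beta}(N)$ is the structured input space;
\item only then invoke the type-I theorem.
\end{enumerate}

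Two further hypotheses of that theorem are left unchecked. First, type-I $\AC^0$ in~\cite{LRR17} presupposes polynomial size. The paper therefore first disposes of the case $S+2e(P)N^2+2\ge N^{\kappa(P)+1}$, where the bound is immediate. Second, their depth convention may count an input or leaf-negation layer. The paper therefore applies their theorem at depth $d+2$, which is harmless because it holds at every fixed depth.

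Alternatively, you could cite a worst-case corollary of~\cite{LRR17} directly, as the paper notes one may. Then the $\alpha\equiv1$ sentence is beside the point. But you must add their Sec.~5 equivalence between the structured and unstructured colored functions, since your Step 1 only identifies $P\text{-}\SUB_N$ with the structured one.
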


The theorem is pointwise in the pattern: both the loss absorbed by $\eta$ and the onset threshold may depend on $P$. This distinction is harmless for fixed $k$, where the selected pattern is fixed as $N\to\infty$, but becomes the precise issue isolated by the pattern-uniform conjecture in \Cref{sec:conjecture}. \Cref{app:source-formalities} gives the model, measure, notation, and citation assembly proving \Cref{thm:lrr} for the structured function of \Cref{def:psub}.

\begin{definition}[Edge expansion]\label{def:edge-expansion}
For a graph $P$ with at least two vertices, let
\begin{equation}\label{eq:edge-expansion}
h(P):=\min_{\substack{\varnothing\ne S\subseteq V(P)\\|S|\le v(P)/2}}\frac{|E_P(S,V(P)\setminus S)|}{|S|}.
\end{equation}
\end{definition}

Equivalently, \Cref{eq:edge-expansion} is the minimum of $|E_P(S,V(P)\setminus S)|/\min\{|S|,|V(P)\setminus S|\}$ over all nonempty proper sets $S\subsetneq V(P)$, which is the normalization used in~\cite[Thm.~4.9]{LRR17}.

\begin{theorem}[$\kappa$ from expansion, treewidth, and minors]\label{thm:kappa-properties}
For every simple loopless graph $P$ with at least three vertices, at least one edge, and no isolated vertices,
\begin{equation}\label{eq:kappa-expansion}
\kappa(P)\ge\frac{v(P)h(P)}{3\Delta(P)}.
\end{equation}
Furthermore, $\kappa(P)\le\tw(P)+1$. Finally, $\kappa$ is minor-monotone: if $F$ is a minor of $P$, then $\kappa(F)\le\kappa(P)$.
\end{theorem}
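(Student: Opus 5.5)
This theorem is an import from Li--Razborov--Rossman~\cite{LRR17}, so the plan is a citation assembly rather than a new argument. The work is to match the three clauses to the corresponding statements there, under the identification $\kappa=\kappa_{\mathrm{col}}$ of~\cite[Def.~2.12(iii)]{LRR17}.

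\textbf{Expansion bound.} For \Cref{eq:kappa-expansion} I would invoke~\cite[Thm.~4.9]{LRR17}, which lower-bounds $\kappa_{\mathrm{col}}$ in terms of edge expansion, vertex count and maximum degree. Two things need checking.
\begin{enumerate}
\item \emph{Normalization of $h$.} The remark after \Cref{def:edge-expansion} shows that the minimum over $|S|\le v(P)/2$ equals the minimum of $|E(S,\bar S)|/\min\{|S|,|\bar S|\}$ over all nonempty proper $S$. This is the quantity used in~\cite{LRR17}, so the cited inequality transfers verbatim.
\item \emph{Hypotheses.} The cited theorem may assume connectedness or at least three vertices. If $P$ is disconnected, then $h(P)=0$ and \Cref{eq:kappa-expansion} is trivial, since $\kappa\ge0$. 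Otherwise $P$ is connected with at least three vertices and no isolated vertices, which matches the cited hypotheses.
\end{enumerate}
If the constant in~\cite{LRR17} is phrased differently, for example through an $\alpha\equiv1$ weighting with a denominator $2\Delta$ or $3\Delta$, I would restate their bound and weaken it to the displayed constant $1/3$.

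\textbf{Treewidth upper bound.} I would cite the theorem of~\cite{LRR17} stating $\kappa_{\mathrm{col}}(P)\le\tw(P)+1$. This follows from their upper-bound construction: a tree decomposition yields a circuit of size $O(N^{\tw+1})$, and the parameter is shown to be at most $\tw+1$. The main care point is to confirm that the quoted bound concerns the colored parameter and not the uncolored one that shares the letter $\kappa$ in parts of~\cite{LRR17}.

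\textbf{Minor-monotonicity.} I would cite the minor-monotonicity theorem for $\kappa_{\mathrm{col}}$ in~\cite{LRR17}; Rosenthal~\cite{Rosenthal19} also records it. If a self-contained check is wanted, the underlying reason is that deleting and contracting edges do not increase the parameter in its defining min-max over union sequences. Isolated vertices created by edge deletions inside $F$ do no harm, since the statement needs no hypotheses on $F$ beyond those used in the citation.

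\textbf{Main obstacle.} The only step I expect to need real attention is the bookkeeping around the expansion constant and the connectedness hypothesis in~\cite[Thm.~4.9]{LRR17}. I would handle it by the disconnected/connected case split described above and by tracking exactly which normalization of $h$ and $\Delta$ the cited statement uses.
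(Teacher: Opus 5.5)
Your plan is correct and matches the paper's proof. The paper's proof is likewise a pure citation assembly of \cite[Thm.~4.9]{LRR17}, \cite[Prop.~4.3]{LRR17}, and \cite[Thm.~5.1]{LRR17}, and it makes the same observation that $h(P)=0$ renders the expansion bound vacuous for disconnected $P$; the one extra point it records is that the cited Thm.~4.9 is stated without $v(P)\ge3$ and fails at $K_2$, where $\kappa(K_2)=0<2/3$, which is why that hypothesis appears here, and your hedge about the cited hypotheses already accommodates this.
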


\begin{proof}
The three assertions are~\cite[Thm.~4.9]{LRR17}, \cite[Prop.~4.3]{LRR17}, and \cite[Thm.~5.1]{LRR17}, respectively. The first citation states the expansion estimate without the condition $v(P)\ge3$, but $K_2$ is a boundary counterexample under the cited definitions: the one-term union sequence consisting of its only edge has cost zero, so $\kappa(K_2)=0$, whereas the right-hand side of \Cref{eq:kappa-expansion} is $2/3$. We therefore include the corrected boundary condition. The no-isolated-vertices condition also avoids the undefined degree ratios that arise for isolated vertices in the cited proof. None of the three results requires connectedness as a formal hypothesis, although $h(P)=0$ makes the expansion estimate vacuous whenever $P$ is disconnected. The stated scope is the only regime in which we invoke this theorem and is also the standing scope of Rosenthal's corresponding restatements~\cite[Thm.~3.8(ii),(iii)]{Rosenthal19}; Footnote~2 of~\cite{Rosenthal19} identifies those restatements with the corresponding Li--Razborov--Rossman results. In particular, every expander core and every ambient padded graph in the minor-monotonicity applications has at least three vertices and no isolated vertices. A bounded-degree family with $h(P)=\Omega(1)$ therefore has $\kappa(P)=\Omega(v(P))$~\cite[Cor.~3.9(i)]{Rosenthal19}.
\end{proof}

The theorem supplies all structural information about $\kappa$ needed below. On a bounded-degree expander core $F$, \Cref{eq:kappa-expansion} gives $\kappa(F)=\Omega(v(F))$. Adjoining a disjoint matching need not preserve expansion, but $F$ remains a minor of the resulting padded pattern, so minor-monotonicity transfers this lower bound. The treewidth upper bound in \Cref{thm:kappa-properties} situates $\kappa$ below the familiar treewidth exponent; \Cref{rem:kappa-versus-treewidth} records that the two parameters have the same linear scale on the expander cores used here. For $t\ge0$, let $J_t$ denote a matching of $t$ pairwise disjoint edges on fresh vertices, with $J_0$ empty; both pattern families used in this paper are disjoint unions $F\mathbin{\dot\cup}J_t$ of a core $F$ with such a matching.

\begin{lemma}[Padded bipartite expander patterns]\label{lem:expander-pattern-family}
There are universal constants $k_0\in\N$ and $c_\kappa>0$ such that, for every integer $k\ge k_0$, there is a simple loopless graph
\begin{equation}\label{eq:expander-family-decomposition}
P^{\mathrm{exp}}_k=F^{\mathrm{exp}}_k\mathbin{\dot\cup}J_{t_k}
\end{equation}
with exactly $k-1$ edges, no isolated vertices, and maximum degree $4$, where $F^{\mathrm{exp}}_k$ is a simple connected bipartite $4$-regular expander and $0\le t_k<4$. Fixing either bipartition side as $I_k$, one has
\begin{equation}\label{eq:expander-family-independent-set}
e(F^{\mathrm{exp}}_k)>k-5,
\qquad
|I_k|=\left\lfloor\frac{k-1}{4}\right\rfloor>\frac{k-1}{4}-1,
\end{equation}
and
\begin{equation}\label{eq:expander-pattern-kappa}
\kappa(P^{\mathrm{exp}}_k)\ge c_\kappa(k-1).
\end{equation}
One may take $c_\kappa=1/147$; throughout, $k_0$ denotes a corresponding construction threshold enlarged, if necessary, so that $c_\kappa(k_0-1)>4$.
\end{lemma}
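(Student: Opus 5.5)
We construct $P^{\mathrm{exp}}_k$ from an explicit or existential family of bipartite $4$-regular expanders with a uniform edge-expansion constant. Then we pad with a short matching to hit exactly $k-1$ edges, and read off $\kappa$ from \Cref{thm:kappa-properties}.

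\textbf{Step 1: the core.} Set $L:=\lfloor (k-1)/4\rfloor$. We want a simple connected bipartite $4$-regular graph $F$ with both sides of size $L$, so $e(F)=4L$ and $v(F)=2L$, and with $h(F)\ge h_0$ for a universal $h_0>0$. A uniformly random bipartite $4$-regular graph (a union of four random perfect matchings between two $L$-sets, conditioned on simplicity) works for all large $L$. Its simplicity probability is bounded below by a constant, and its edge expansion exceeds a positive constant with probability $1-o(1)$. This follows by the standard first-moment count over sets $S$ with $|S|\le L$ whose boundary is small. Alternatively, one could take the bipartite double cover of an explicit $4$-regular expander on $L$ vertices (for example a Margulis--Gabber--Galil type family). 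That route has the drawback that explicit families exist only for special sizes, whereas we need every $L\ge L_0$. So I would use the random model, or Bollobás's and Friedman's known results on random regular bipartite graphs giving spectral gap, together with the Cheeger inequality $h\ge (4-\lambda_2)/2$. Connectedness follows from $h(F)>0$.

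\textbf{Step 2: padding and counting.} Let $t_k:=(k-1)-4L\in\{0,1,2,3\}$ and $P:=F\mathbin{\dot\cup}J_{t_k}$. Then:
\begin{itemize}
\item $e(P)=k-1$.
\item $P$ has no isolated vertices, since $F$ is $4$-regular and the matching vertices have degree $1$.
\item $\Delta(P)=4$.
\item $e(F)=4L>k-1-4$, and $|I_k|=L$ by construction.
\end{itemize}

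\textbf{Step 3: the $\kappa$ bound.} \Cref{eq:kappa-expansion} gives $\kappa(F)\ge 2L\,h_0/12=L h_0/6$. Since $F$ is a minor of $P$ (delete the matching) and both graphs have at least three vertices and no isolated vertices, minor-monotonicity yields $\kappa(P)\ge Lh_0/6$. For $k\ge k_0$ we have $L\ge (k-1)/4-1\ge (k-1)/8$, so $\kappa(P)\ge h_0(k-1)/48$. To reach $c_\kappa=1/147$ concretely, it suffices to have $h_0\ge 48/147\approx 0.33$, which random $4$-regular bipartite graphs comfortably satisfy. Finally, enlarge $k_0$ so that $c_\kappa(k_0-1)>4$.

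\textbf{Main obstacle.} The hard step is Step 1: certifying a uniform expansion constant $h_0$ of the required size for \emph{every} large $L$ rather than along a sparse sequence. The first-moment calculation over small and medium sets $S$, handled in the bipartite configuration model with the simplicity conditioning, is where the actual work lies. I would first try to quote an existing spectral-gap theorem for random regular bipartite graphs and convert it via Cheeger. If that is not citable cleanly, I would do the direct union bound, splitting into $|S|\le \epsilon L$ and $|S|>\epsilon L$.
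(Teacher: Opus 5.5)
Your argument is correct in outline, but you obtain the core differently from the paper. The paper uses no random graphs. It takes an explicit simple $4$-regular graph $Q_k$ on exactly $r_k=\lfloor (k-1)/4\rfloor$ vertices. This graph comes from Alon's exact-size transformation applied to the Mohanty--O'Donnell--Paredes lift construction, with simplicity checked by hand. All nontrivial eigenvalues of $Q_k$ have absolute value at most $\lambda_0=2\sqrt3+1/10$, so $Q_k$ is connected and non-bipartite. Its bipartite double cover therefore has $\lambda_2\le\lambda_0$, and Cheeger gives $h\ge(4-\lambda_0)/2>1/5$.

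So the obstacle you raise against the explicit route, that explicit families exist only for special sizes, is exactly what Alon's theorem removes. Your fallback of taking a double cover of an explicit expander is in fact the paper's route. Your probabilistic construction is also legitimate: four random perfect matchings, conditioned on simplicity (which has probability bounded below), with expansion holding asymptotically almost surely. It suffices because the later reductions are nonuniform and only existence of the core matters. It trades the paper's explicitness and its careful simplicity verification for an expansion statement about the random bipartite model, which you must either cite or prove by a first-moment count.

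The one real flaw is in your constant bookkeeping. You replace $L\ge (k-1)/4-1$ by the cruder $L\ge (k-1)/8$, which loses a factor of two and forces you to require $h_0\ge 48/147\approx 0.33$. If you obtain $h_0$ the way you say you would try first, via a near-Ramanujan spectral bound plus Cheeger, you get at most $(4-2\sqrt3)/2\approx 0.27$, which falls short. Your claim that random graphs ``comfortably'' satisfy the requirement is therefore unsupported; a direct first-moment count might reach $0.33$, but you have not done it.

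The fix is simply to keep $L\ge (k-1)/4-1$. Then $\kappa(P)\ge Lh_0/6\ge h_0(k-1)/24-h_0/6$, which exceeds $(k-1)/147$ for every $h_0>24/147$ once $k$ is moderately large. This is exactly what the paper does: with $h\ge 1/5$ it gets $\kappa\ge r_k/30\ge (k-1)/147$ for $k\ge 600$. The existence of some universal $c_\kappa>0$ is unaffected by your slip; only the specific value $1/147$ is.
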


\begin{proof}
\Cref{app:expander-patterns} fixes $\Delta=4$ and $\varepsilon=1/10$, applies Alon's exact-size transformation~\cite{AlonExplicit} to the simple near-Ramanujan source graphs of Mohanty, O'Donnell, and Paredes~\cite{MOP20}, and then takes the bipartite double cover of the resulting exact-size $4$-regular graph. The cover preserves the required spectral gap, has two independent sides of the displayed size, and uses all but fewer than four of the available edges. The core and every padding edge have positive degree, so the padded graph has no isolated vertices. The discrete Cheeger inequality in \Cref{fact:discrete-cheeger}, \Cref{thm:kappa-properties}, and matching padding then give the stated $\kappa$ bound and exact edge count.
\end{proof}

\begin{corollary}[Universal fixed-$k$ source rate]\label{cor:fixed-source-rate}
Fix any constant $\gamma$ with $0<\gamma<c_\kappa$. For every fixed depth $d$ and every fixed integer $k\ge k_0$, there is a threshold $N_0(\gamma,d,k)$ such that
\begin{equation}\label{eq:fixed-source-rate}
\size_d(P^{\mathrm{exp}}_k\text{-}\SUB_N)\ge N^{\gamma(k-1)}
\end{equation}
for every $N\ge N_0(\gamma,d,k)$.
\end{corollary}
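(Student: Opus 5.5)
The plan is to combine \Cref{thm:lrr} with the $\kappa$ bound of \Cref{lem:expander-pattern-family}, choosing the slack $\eta$ so that the fixed-pattern exponent still exceeds $\gamma(k-1)$ and also exceeds $2$. Fix $\gamma\in(0,c_\kappa)$, a depth $d$, and an integer $k\ge k_0$. The pattern $P:=P^{\mathrm{exp}}_k$ is a fixed graph once $k$ is fixed. By \Cref{lem:expander-pattern-family} it is simple and loopless, has exactly $k-1\ge1$ edges, has no isolated vertices, and satisfies $\kappa(P)\ge c_\kappa(k-1)$. So \Cref{thm:lrr} applies to $P$ at depth $d$ for any constant $\eta>0$.

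Next I choose $\eta$. Set
\begin{equation*}
\eta:=\tfrac12(c_\kappa-\gamma)(k-1)>0.
\end{equation*}
Then
\begin{equation*}
\kappa(P)-\eta\ge c_\kappa(k-1)-\eta=\tfrac12(c_\kappa+\gamma)(k-1)>\gamma(k-1).
\end{equation*}
The exponent must also exceed $2$ so that the additive bridge term in \Cref{eq:lrr-quantified} can be absorbed. Two cases arise.
\begin{itemize}
\item If $\tfrac12(c_\kappa+\gamma)(k-1)>2$, the last clause of \Cref{thm:lrr} gives $\size_d(P\text{-}\SUB_N)\ge N^{\kappa(P)-\eta}\ge N^{\gamma(k-1)}$ for all $N\ge N_0(P,d,\eta)$, using $N\ge1$ for the final monotonicity step.
\item Otherwise, I do not use the absorption clause. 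I take $\eta$ smaller, namely $\eta:=\tfrac12\bigl(\kappa(P)-\gamma(k-1)\bigr)$, which is positive since $\kappa(P)\ge c_\kappa(k-1)>\gamma(k-1)$.
\end{itemize}

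The main obstacle is this second case. Its exponent $\kappa(P)-\eta$ may be at most $2$, so \Cref{eq:lrr-quantified} only bounds $\size_d+2(k-1)N^2+2$ from below. That is not directly a bound on $\size_d$ of order $N^{\gamma(k-1)}$ if $\gamma(k-1)\le 2$.

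To handle it, I would use the convention $c_\kappa(k_0-1)>4$ from \Cref{lem:expander-pattern-family}. This gives $\kappa(P)\ge c_\kappa(k-1)>4$ for every $k\ge k_0$. I would therefore instead choose
\begin{equation*}
\eta:=\min\Bigl\{\tfrac12\bigl(\kappa(P)-\gamma(k-1)\bigr),\ \tfrac12\bigl(\kappa(P)-2\bigr)\Bigr\}>0.
\end{equation*}
This guarantees both $\kappa(P)-\eta>2$ and $\kappa(P)-\eta>\gamma(k-1)$. With this choice the absorption clause of \Cref{thm:lrr} applies in all cases, and the first bullet's argument goes through unchanged.

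Since $P$ and $\eta$ depend only on $(\gamma,k)$, setting $N_0(\gamma,d,k):=N_0(P^{\mathrm{exp}}_k,d,\eta)$ gives \Cref{eq:fixed-source-rate}. This threshold depends on $(\gamma,d,k)$, which the corollary permits. The rate $\gamma$ itself is independent of $d$ and $k$.
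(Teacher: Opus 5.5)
Your argument is correct and is essentially the paper's proof: apply \Cref{thm:lrr} to the fixed pattern $P^{\mathrm{exp}}_k$, use $\kappa(P^{\mathrm{exp}}_k)\ge c_\kappa(k-1)$, and invoke the absorption clause via $c_\kappa(k_0-1)>4$. The paper's choice of $\eta$ differs from yours: it fixes one $k$-independent loss $\eta_\gamma=(c_\kappa-\overline\gamma)(k_0-1)$ with $\overline\gamma=(c_\kappa+\gamma)/2$, while your $\eta$ grows with $k$, which is harmless because the threshold may depend on $k$ anyway. Your case split is also unnecessary: since $c_\kappa(k-1)>4$, your first choice already gives $\kappa(P)-\eta\ge\tfrac12(c_\kappa+\gamma)(k-1)>2$, so the second case never occurs.
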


\begin{proof}
The onset threshold $k_0$ fixed in \Cref{lem:expander-pattern-family} satisfies $c_\kappa(k_0-1)>4$. Put $\overline\gamma:=(c_\kappa+\gamma)/2$ and set $\eta_\gamma:=(c_\kappa-\overline\gamma)(k_0-1)>0$. Fix $d$ and $k\ge k_0$. By \Cref{lem:expander-pattern-family}, the pattern $P^{\mathrm{exp}}_k$ is simple and loopless, has at least one edge, and has no isolated vertices, so it satisfies the hypotheses of \Cref{thm:lrr}. Moreover,
\begin{equation*}
\kappa(P^{\mathrm{exp}}_k)-\eta_\gamma
\ge c_\kappa(k-1)-(c_\kappa-\overline\gamma)(k_0-1)
\ge\overline\gamma(k-1)
\ge\frac{c_\kappa}{2}(k_0-1)>2.
\end{equation*}
The pure non-input-gate consequence in \Cref{thm:lrr} therefore gives, for every sufficiently large $N$,
\begin{equation*}
\size_d(P^{\mathrm{exp}}_k\text{-}\SUB_N)\ge N^{\kappa(P^{\mathrm{exp}}_k)-\eta_\gamma}\ge N^{\overline\gamma(k-1)}\ge N^{\gamma(k-1)}.
\end{equation*}
Absorbing all preceding thresholds into $N_0(\gamma,d,k)$ proves the claim.
\end{proof}

The threshold $N_0(\gamma,d,k)$ in \Cref{cor:fixed-source-rate} is a threshold on the host size $N$. Once $\gamma$ is fixed, the proof uses one loss parameter $\eta_\gamma$ for the whole family, but the onset threshold imported from \Cref{thm:lrr} may still depend arbitrarily on the fixed pattern $P^{\mathrm{exp}}_k$, so $N_0(\gamma,d,k)$ retains a dependence on $k$; \Cref{sec:conjecture} isolates this remaining nonuniformity. A later target-side threshold $n_1(d,k)$ is obtained only after substituting a host size $N=N(n,k)$ that tends to infinity with the target universe; \Cref{lem:host-substitutions} records the target-side substitutions and estimates used for this step.

\subsection{The source menu}\label{sec:source-menu}

The paper uses two source families because the strongest available source lower bound depends on the circuit regime; these are the two pattern-family branches in \Cref{fig:intro-architecture}. \Cref{tab:source-menu} records where each lower-bound statement or proof and each pattern construction appears.

\Cref{thm:D} is attached to the expander family because that family is already available with exactly $k-1$ edges and the required elementary structure. Its DNF branch uses $v(P^{\mathrm{exp}}_k)=\Theta(k)$ together with the absence of isolated vertices, whereas its CNF branch uses only a core vertex of degree $4$; neither branch uses expansion or $\kappa$.\ifhidedepththree\else{} The depth-three proof uses a different feature of the same core, namely one bipartition side whose incident edge blocks are pairwise disjoint; it likewise uses neither expansion nor $\kappa$.\fi{} The clique family is not bounded-degree when $r$ grows, so the two rows do not share a degree hypothesis.

\FloatBarrier
\begin{table}[ht!]
\centering
\caption{The source families used in the lower-bound transfers.}
\label{tab:source-menu}
\small
\renewcommand{\arraystretch}{1.2}
\setlength{\tabcolsep}{4pt}
\begin{tabular}{@{}>{\raggedright\arraybackslash}p{0.18\textwidth}>{\raggedright\arraybackslash}p{0.35\textwidth}>{\raggedright\arraybackslash}p{0.22\textwidth}>{\raggedright\arraybackslash}p{0.17\textwidth}@{}}
\toprule
Source family & Property used & Detailed locations & Results carried \\
\midrule
bipartite $4$-regular expander core plus matching & $\kappa(P^{\mathrm{exp}}_k)=\Omega(k)$ for the fixed-pattern\ignore{ and conditional transfers}; $v(P^{\mathrm{exp}}_k)=\Theta(k)$ and no isolated vertices for the depth-two DNF branch; a degree-$4$ vertex for the depth-two CNF branch\ifhidedepththree\else; an independent core side of size $(1-o(1))k/4$ for the $\Sigma_3$ branch\fi & \Cref{lem:expander-pattern-family,sec:growing-depth-two,conj:dbd}; construction in \Cref{app:expander-patterns}\ifhidedepththree\else; depth-three use in \Cref{sec:growing-depth-three,app:depth-three}\fi & \ifhidedepththree\Cref{thm:B,thm:D,thm:F}\else\Cref{thm:B,thm:D,thm:E,thm:F}\fi \\
\addlinespace
$K_r$ plus matching & Beame's small-clique lower bound~\cite{Beame90} & \Cref{fact:beame}; construction and projections in \Cref{app:clique-patterns} & \Cref{thm:C} \\
\bottomrule
\end{tabular}
\end{table}
\FloatBarrier

\section{The Projection Framework}\label{sec:framework}

The source function asks for one present host edge for each edge of the pattern, with agreement at shared endpoints. This section isolates that anatomy from the algebra used to enforce it. The resulting interface has one common universe skeleton, four semantic roles, and one domain-validity role; \Cref{sec:instantiations} realizes them separately through orthogonality, addition over $\F_2$, and addition over the integers.

\subsection{Targets and depth-zero projections}\label{sec:target-interfaces}

We first fix the input conventions for the three targets, including the pairwise-distinctness promise for $\kOV$ and the indexed-list semantics used by $\kXOR$ and $\kSUM$.

\begin{definition}[The three target problems]\label{def:target-problems}
The target families are defined as follows.
\begin{enumerate}[label=(\roman*),itemsep=2pt]
\item $\kOV$: For $n\le 2^D$, an input to $\kOV_{n,D,k}$ consists of $k$ ordered arrays, each containing $n$ vectors in $\bits^D$; this is a $knD$-bit encoding. Write $U_i=(u_{i,1},\ldots,u_{i,n})$, and impose the promise that the vectors within each array are pairwise distinct. The function accepts exactly when there are indices $j_1,\ldots,j_k\in[n]$ such that $\prod_{i=1}^k u_{i,j_i}(t)=0$ for every $t\in[D]$.
\item $\kXOR$: An input to $\kXOR_{n,m,k}$ is an indexed list $(a_1,\ldots,a_n)\in(\F_2^m)^n$. The function accepts exactly when there is a set $S\in\binom{[n]}{k}$ such that $\bigoplus_{i\in S}a_i=0^m$.
\item $\kSUM$: An input to $\kSUM_{n,m,k}$ is an indexed list $(z_1,\ldots,z_n)\in(\Z_{2^m})^n$. The function accepts exactly when there is a set $S\in\binom{[n]}{k}$ such that $\sum_{i\in S}z_i\equiv0\pmod{2^m}$.
\end{enumerate}
For the promised $\kOV$ function, $\size_d$ denotes the minimum size of a circuit that is correct on every promised input.
\end{definition}

The promise formulation of $\kOV$ is equivalent to presenting each class as a set: the order of an array is irrelevant to the function value. A lower bound for this promised function automatically applies to the unrestricted array variant, because every circuit correct when repetitions are allowed is also correct on pairwise-distinct arrays. By contrast, the indices, rather than the values, are the objects selected in $\kXOR$ and $\kSUM$, so distinct indices may carry equal vectors or equal residues. This asymmetry is essential: the two list reductions may reuse a single value at several padding indices, whereas the $\kOV$ construction must land inside the pairwise-distinct promise.

\begin{definition}[Projection]\label{def:projection}
For possibly promised Boolean functions $F$ and $G$, write $F\le_{\proj}G$ if there is a map $R$ from the promise domain of $F$ into the promise domain of $G$ such that $F(x)=G(R(x))$ for every promised input $x$ and every output bit of $R$ is a constant, an input bit, or the negation of an input bit.
\end{definition}

We call such a map a \emph{depth-zero projection}: it substitutes constants and literals for target inputs and therefore adds neither gates nor circuit depth.

\begin{lemma}[Projection monotonicity]\label{lem:projection-monotonicity}
If $F\le_{\proj}G$, then $\size_d(G)\ge\size_d(F)$ for every depth $d$. The same implication holds for each fixed circuit orientation, such as $\Sigma_3$ or $\Pi_3$.
\end{lemma}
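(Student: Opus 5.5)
The proof composes a circuit for $G$ with the projection $R$. Fix a depth $d$ and a depth-at-most-$d$ circuit $C$ of size $s=\size_d(G)$ that computes $G$ on its promise domain, with negations only at the input leaves. Build $C'$ on the inputs of $F$ by relabeling each input leaf of $C$: a leaf reading the target coordinate $y_j$ becomes whatever $R$ places in that coordinate, namely a constant, a source variable $x_i$, or its negation $\neg x_i$. A leaf reading $\neg y_j$ becomes the complement of that output: the opposite constant, $\neg x_i$, or $x_i$. Every leaf of $C'$ is therefore again a constant or a source literal, so negations stay normalized at the leaves. The set of non-input gates is unchanged, so $C'$ has exactly $s$ non-input gates and the same depth.

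For correctness, take any promised input $x$ of $F$. Each leaf of $C'$ evaluated at $x$ equals the corresponding leaf of $C$ evaluated at $R(x)$. By induction up the gates, $C'(x)=C(R(x))$. Since $R(x)$ lies in the promise domain of $G$, where $C$ is correct, we get $C(R(x))=G(R(x))=F(x)$. So $C'$ computes $F$ on the promise domain of $F$, and $\size_d(F)\le s=\size_d(G)$.

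The only technical point is the constant leaves. The circuit model counts only non-input gates, with literals at the leaves, so a constant should be treated as an input-leaf node that is not counted. If the model forbids constants altogether, I would propagate them upward: an AND with a $0$-child becomes $0$, and a $1$-child is simply deleted from an AND; the dual rules apply to OR. This step only deletes gates or replaces them by constants, so neither size nor depth increases. If the output itself becomes constant, $F$ is constant on its promise and has size at most $1$ under any convention, which the paper's additive convention-bridge footnote already absorbs.

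The fixed-orientation claim follows the same way. The relabeling keeps every gate type and layer, so a $\Sigma_3$ or $\Pi_3$ circuit maps to one of the same orientation. Constant propagation only removes gates or children and never changes a gate's connective, so the orientation is preserved there too.
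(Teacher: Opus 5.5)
Your proof is correct and follows the paper's argument. You substitute the projection's constants and literals at the input leaves of a circuit for $G$, note that this adds no gates or depth and keeps every gate's connective, and use that $R$ maps promised inputs of $F$ into the promise domain of $G$. The constant-propagation fallback is unnecessary, because the paper's model already treats constants as uncounted leaves (its own proof substitutes them directly). For the same reason the appeal to the convention-bridge footnote, which concerns only the LRR import, can be dropped.
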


\begin{proof}
Substitute the constants and literals of the projection for the input leaves of a circuit computing $G$ on its promise domain. Because the projection maps every promised input of $F$ into that domain, the resulting circuit computes $F$ on its promise domain. The substitution adds no gates and no depth, and it preserves the orientation of every internal gate.
\end{proof}

\subsection{The common universe skeleton}\label{sec:projection-skeleton}

Fix an integer $k\ge2$ and a simple loopless pattern $P$ with $e(P)=k-1$, together with the orientation $f=(a_f,b_f)$ of each edge fixed in \Cref{sec:structured-source}.

By \Cref{def:psub}, a source witness chooses one present host edge for each of the $k-1$ pattern edges, and the choices must agree on the host label assigned to every shared endpoint. A target witness instead chooses $k$ objects from lists or classes of prescribed length $n$. To capture and encode these accurately, we adopt a common universe skeleton that creates one \emph{candidate} object for every possible edge choice, adds one \emph{anchor} to guarantee valid selection as well as bridge the gap from $k-1$ edge choices to $k$ selected objects, and uses \emph{dummy} objects solely to pad each target list or class to length $n$\ignore{; role~(R4) below keeps this padding out of every accepting witness}. To enforce endpoint consistency, the skeleton picks one reference incident edge for every non-isolated pattern vertex and compares the label proposed by each non-reference incident edge with those by the reference edge.

For every $f = (a_f, b_f) \in E(P)$ and $(u,w)\in[N]^2$, introduce a candidate object $c_{f,u,w}$ representing the proposal that the pattern edge $f$ is realized by the host edge from $(a_f,u)$ to $(b_f,w)$. We also introduce one anchor object $\alpha$. Thus the genuine object collection have the form
\begin{equation}\label{eq:genuine-object-skeleton}
\{\alpha\}\mathbin{\dot\cup}\bigcup_{f\in E(P)}\{c_{f,u,w}:(u,w)\in[N]^2\}.
\end{equation}
For $\kXOR$ and $\kSUM$, these objects occupy one indexed list and number $1+e(P)N^2$; under $1+e(P)N^2\le n$, dummies fill the list to length $n$. For $\kOV$, the anchor and the $e(P)$ edge groups become $k$ separate classes: the anchor class contains one genuine object and each edge class contains $N^2$ genuine objects; under $N^2\le n$, dummies fill every class to length $n$.

The anchor compensates for the one-object gap between the $k-1$ edge groups and a target witness of size $k$. In the additive targets it is also the pivot of the selector equations: comparing every group count with the anchor indicator forces one candidate per group rather than zero candidates per group. For $\kOV$, selection already chooses one vector from each class, so the anchor serves only as the additional class needed to reach $k$ classes.

We next formalize how the skeleton determines the reference edges and how the comparison of proposed labels is encoded.
For a pattern vertex $a$ incident to an oriented edge $f=(a_f,b_f)$, define the label proposed by a candidate by
\begin{equation}\label{eq:candidate-label}
\lab_a(f,u,w):=
\begin{cases}
u,&a=a_f,\\
w,&a=b_f.
\end{cases}
\end{equation}
Put $\ell_N:=\lceil\log N\rceil$. Clearly, one can use the $\ell_N$-bit binary encoding of $x - 1$ for each label $x\in[N]$. And we write $\operatorname{bit}_t(x)$ for its $t$-th bit, where $t\in[\ell_N]$.

At a non-isolated pattern vertex $a$, comparing every incident edge with one reference edge is sufficient to ensure consistency because equality is transitive. Accordingly, fix one reference incident edge $\rho(a)$ for each such $a$, and let
\begin{equation}\label{eq:comparison-incidences}
\ignore{\Ical(P):=\{(a,g):a\in V(P),\ \deg_P(a)\ge1,\ g\ni a,\ g\ne\rho(a)\}.}\ignore{Should \deg_P(a)\ge1 be \deg_P(a) > 1? If a vertex has degree 1, then it has no non-reference incident edges, correct?}
\Ical(P):=\{(a,g):a\in V(P),\ \deg_P(a) >1,\ g\ni a,\ g\ne\rho(a)\}.
\end{equation}
Thus $\Ical(P)$ indexes one comparison for every non-reference incidence, and
\begin{equation}\label{eq:comparison-count}
|\Ical(P)|=\sum_{a:\deg_P(a)\ge1}(\deg_P(a)-1)\le\sum_{a\in V(P)}\deg_P(a)=2e(P).
\end{equation}

An implementation of the skeleton must enforce the following roles, summarized in \Cref{tab:projection-roles}.
\begin{enumerate}[label=(R\arabic*),itemsep=2pt]
\item \textbf{Selection rigidity.} The witness selects the anchor and exactly one candidate from each edge group.
\item \textbf{Guard.} A selected candidate $c_{f,u,w}$ is admissible only when $X_{f,u,w}=1$.
\item \textbf{Consistency.} For each pattern vertex $a$, all selected candidates from edges incident to $a$ propose the same value under $\lab_a$.
\item \textbf{Purge.} No dummy belongs to an accepting witness.
\item \textbf{Promise validity for $\kOV$.} Within each $\kOV$ class, all $n$ vectors are pairwise distinct, without changing which one-per-class selections are orthogonal.
\end{enumerate}

The first four roles identify accepting witnesses with the present-and-consistent edge selections described in \Cref{def:psub}. The fifth does not alter that witness semantics, but it is essential for the image of the $\kOV$ map to lie in the promise domain of \Cref{def:target-problems}. Obtaining it with only $O(\log n)$ additional coordinates is therefore a strength of the construction rather than a cosmetic representation choice.

\begin{table}[t]
\centering
\caption{How the three target constructions realize the common witness roles.}
\label{tab:projection-roles}
\small
\renewcommand{\arraystretch}{1.2}
\setlength{\tabcolsep}{3pt}
\begin{tabular}{@{}>{\raggedright\arraybackslash}p{0.08\textwidth}>{\raggedright\arraybackslash}p{0.13\textwidth}>{\raggedright\arraybackslash}p{0.15\textwidth}>{\raggedright\arraybackslash}p{0.18\textwidth}>{\raggedright\arraybackslash}p{0.16\textwidth}>{\raggedright\arraybackslash}p{0.15\textwidth}@{}}
\toprule
Target & Selection rigidity (R1) & Guard and purge (R2, R4) & Consistency (R3) & Domain condition (R5) & Constructed budget \\
\midrule
$\kOV$ & built into the one-vector-per-class witness & one coordinate per class & two coordinates per compared label bit & inert identifiers enforce pairwise distinctness & $D_0=O(k\log(2n))$ \\
$\kXOR$ & $e(P)$ selector rows, for odd $k$ & compressed separating labels and one anti-dummy row & one row per compared label bit & total indexed-list domain & $m_\oplus=O(k\log(2N))$ \\
$\kSUM$ & exact signed group coordinates & one nonnegative guard and one dummy coordinate & one signed coordinate per comparison & total indexed-list domain & $m_\Sigma=O(k\log(2kN))$ \\
\bottomrule
\end{tabular}
\end{table}

The shared source and the three complete target instances in \Cref{fig:toy-source,fig:toy-ov,fig:toy-xor,fig:toy-sum} instantiate \Cref{tab:projection-roles} row by row. The next section uses them as running examples, so the algebraic differences can be read directly from the coordinates that realize the same semantic role.

Three target-side effects determine the final parameter statements. First, each edge group contains $N^2$ candidates. Since $\kOV$ stores different groups in different classes, it permits $N^2\le n$ and later yields a base-$n$ bound; $\kXOR$ and $\kSUM$ store all groups in one list, require $1+e(P)N^2\le n$, and later yield a base-$(n/k)$ bound. Second, the $\F_2$ selector sees cardinalities only modulo two, which creates the odd-$k$ restriction in the direct $\kXOR$ projection. Third, the $\kSUM$ construction packs $O(k)$ signed coordinates into one residue, producing the bit-width cost $O(k\log(kN))$. Constants such as $c_\kappa$ and the onset threshold in the source lower bound are inherited from \Cref{sec:sources} and are logically separate from these three effects.

\subsection{Master projections and host substitutions}\label{sec:master-projections}

The next theorem records exactly what the three implementations provide. Its bounds are construction budgets, not asymptotic lower bounds; source hardness enters only after a pattern family and a host size have been chosen.

\begin{theorem}[A: master projections]\label{thm:A}
There are universal constants $\COV,\Cxor,\Csum>0$ such that the following holds. Let $k\ge2$, let $P$ be a simple loopless pattern with $e(P)=k-1$, and fix an orientation of its edges. Then the following depth-zero projections exist.
\begin{enumerate}[label=(\roman*),itemsep=2pt]
\item If $N^2\le n$ and $D\ge\Dzero(P,N,n)$, then $P\text{-}\SUB_N\le_{\proj}\kOV_{n,D,k}$, where $\Dzero(P,N,n)\le\COV k\log(2n)$. Moreover, $\Dzero(P,N,n)\ge\lceil\log n\rceil$, so the hypothesis on $D$ automatically ensures $n\le2^D$.
\item If $k$ is odd, $1+e(P)N^2\le n$, and $m\ge\mxor(P,N)$, then $P\text{-}\SUB_N\le_{\proj}\kXOR_{n,m,k}$, where $\mxor(P,N)\le\Cxor k\log(2N)$. The compressed guard labels are fixed nonuniformly, once for the pair $(P,N)$ and before the source input is read.
\item If $1+e(P)N^2\le n$ and $m\ge\msum(P,N)$, then $P\text{-}\SUB_N\le_{\proj}\kSUM_{n,m,k}$, where $\msum(P,N)\le\Csum k\log(2kN)$. This projection has no parity restriction.
\end{enumerate}
\end{theorem}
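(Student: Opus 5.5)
The plan is to build, separately for each target, an explicit instance over the common skeleton (\Cref{eq:genuine-object-skeleton}) in which each coordinate, row, or bit of a candidate $c_{f,u,w}$ is a constant or the single literal $\neg X_{f,u,w}$. I then check roles (R1)--(R4), and (R5) for $\kOV$, by a direct two-way argument. In each case the forward direction takes an accepting map $\phi$ to the witness consisting of $\alpha$ and $c_{f,\phi(a_f),\phi(b_f)}$ for all $f$. The backward direction shows, in order, that a witness selects the anchor and one candidate per group (R1), contains no dummy (R4), uses only present edges (R2), and is consistent (R3). By the remark after \Cref{def:psub}, such a witness then yields an accepting $\phi$. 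Budgets are counted at the end, using $|\Ical(P)|\le 2(k-1)$ from \Cref{eq:comparison-count} and $\ell_N\le\log(2N)$.

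\textbf{For $\kOV$} I use one class for the anchor and one class per edge $f$. All entries default to $1$, since a coordinate with entries $1$ is inert for orthogonality. The coordinates are as follows.
\begin{itemize}
\item \emph{Guard coordinates.} For each $f$ there is a guard coordinate on which $c_{f,u,w}$ has entry $\neg X_{f,u,w}$ and the dummies of class $f$ have entry $1$. The anchor class gets one analogous coordinate, with entry $0$ on $\alpha$ and $1$ on its dummies. This realizes (R2) and (R4) together.
\item \emph{Consistency coordinates.} For each $(a,g)\in\Ical(P)$ and $t\in[\ell_N]$ there are two coordinates. On the first, candidates of $\rho(a)$ carry $\operatorname{bit}_t(\lab_a)$ and candidates of $g$ carry $1-\operatorname{bit}_t(\lab_a)$. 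The second coordinate swaps these roles. The two products both vanish exactly when the two bits agree.
\item \emph{Identifier coordinates.} To obtain (R5), I append a block of $\lceil\log n\rceil$ coordinates in which the $j$th vector of each edge class carries the binary code of $j$, while the anchor class has $0$ throughout. A second block does the same for the anchor class, with $0$ on all edge classes. Every product over these blocks then vanishes automatically, the vectors within each class are distinct, and $\Dzero\ge\lceil\log n\rceil$.
\end{itemize}
The total is $\Dzero=O(k)+4|\Ical(P)|\ell_N+2\lceil\log n\rceil\le\COV k\log(2n)$, using $N^2\le n$.

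\textbf{For $\kXOR$} I use the following rows.
\begin{itemize}
\item \emph{Selector rows.} There is one selector row per edge $f$, with entry $1$ on $\alpha$ and on the candidates of group $f$.
\item \emph{Anti-dummy row.} This row has entry $1$ exactly on the dummies, which are $0$ elsewhere.
\item \emph{Consistency rows.} For each $(a,g,t)$ there is one row in which the candidates of both $\rho(a)$ and $g$ carry $\operatorname{bit}_t(\lab_a)$.
\item \emph{Guard block.} The guard block has width $r=O(k\log(2N))$. Candidate $c_{f,u,w}$ gets $\neg X_{f,u,w}\cdot\lambda_{f,u,w}$, coordinatewise either $0$ or the literal $\neg X_{f,u,w}$.
\end{itemize}
The labels $\lambda\in\F_2^r$ are chosen nonuniformly so that no nonempty family taking at most one label per group XORs to $0$. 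There are at most $(N^2+1)^{k-1}$ such families, so a union bound over uniformly random labels works once $r\ge 2k\log(2N)+1$.

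Selection then runs as follows. If $\alpha\in S$, every group count is odd. The dummy count is even because of the anti-dummy row, so $k=|S|\ge1+(k-1)$ forces exactly one candidate per group and no dummies. If $\alpha\notin S$, all group counts and the dummy count are even, so $|S|$ is even, which contradicts odd $k$. Given (R1), the guard block XORs to the sum of the labels $\lambda$ of the absent selected edges, which is nonzero unless that set is empty. The consistency rows give (R3).

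\textbf{For $\kSUM$} I use signed integer fields, each of $B=\lceil\log(4kN)\rceil+2$ bits, with $O(k)$ fields in total.
\begin{itemize}
\item \emph{Selector fields.} One field per $f$, with $-1$ on $\alpha$ and $+1$ on the candidates of $f$.
\item \emph{Dummy field.} One field with $+1$ on dummies and $0$ elsewhere.
\item \emph{Consistency fields.} One field per $(a,g)\in\Ical(P)$, with $+\lab_a$ on the candidates of $\rho(a)$ and $-\lab_a$ on those of $g$.
\item \emph{Guard field.} One lowest field holding $\neg X_{f,u,w}$ on candidates.
\end{itemize}
The residue is $z=\sum_j v_j2^{Bj}\bmod 2^m$. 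Because every field other than the guard field is a multiple of $2^B$, the low $B$ bits of $z$ consist of the single bit $\neg X_{f,u,w}$ and constant zeros, and all higher bits are constants determined by $(f,u,w)$. Hence $z$ is a depth-zero projection.

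For any $k$ selected items, each field sum has absolute value below $2^{B-1}$. So $\sum\equiv0\pmod{2^m}$ implies, by peeling fields from the bottom, that every field sums to exactly $0$. This gives the following chain.
\begin{itemize}
\item The dummy field gives no dummies.
\item If $\alpha\notin S$, the selector fields give every group count $0$, so $S=\varnothing$, which is impossible.
\item If $\alpha\in S$, each group count is exactly $1$.
\item The nonnegative guard field gives $X=1$ for every selected candidate.
\item The consistency fields give $\lab$-equality.
\end{itemize}
This argument uses no parity condition, and the width is $m_\Sigma=O(k)\cdot B\le\Csum k\log(2kN)$.

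I expect the main obstacle to be the $\kSUM$ packing, namely making signed fields compatible with the requirement that each output bit be a literal. Placing the only variable field lowest, with all other fields constant multiples of $2^B$, and proving the carry-free peeling lemma, is the step I would write out most carefully. A secondary issue is the counting behind the $\kXOR$ separating labels.
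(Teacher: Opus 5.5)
Your proposal is correct and follows the paper's proof. It uses the same skeleton and the same role-by-role realization: class guards, two-coordinate orthogonality tests per label bit, selector and anti-dummy rows with the odd-$k$ parity argument, randomly chosen separating guard labels, and exact signed fields packed in a carry-free balanced base.

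The deviations are harmless local variants.
\begin{itemize}
\item Your two-block identifier scheme is the unbalanced special case of the paper's owner-class lemma, using $2\lceil\log n\rceil\ge s_{k,n}$ coordinates.
\item Reducing the exact signed packed integer directly modulo $2^m$ sign-extends it. This is a valid substitute for the paper's top-bit scaling $a_j=2^{m-\msum(P,N)}\widehat a_j$.
\item The literal property you single out as the main obstacle is automatic, because each packed candidate depends only on the single bit $X_{f,u,w}$.
\end{itemize}
You should also state that $D>\Dzero$ and $m>\mxor$ are reached by appending all-zero coordinates or rows.
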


\begin{proof}
Part~(i) is \Cref{thm:ov-master-projection}, part~(ii) is \Cref{thm:xor-master-projection}, and part~(iii) is \Cref{thm:sum-master-projection}. The exact definitions of $\Dzero(P,N,n)$, $\mxor(P,N)$, and $\msum(P,N)$ appear in the corresponding subsections, and \Cref{app:projection-proofs} contains the complete correctness and projection proofs. The absolute estimates follow from \Cref{eq:comparison-count,eq:ov-dimension-budget,eq:ov-dimension-asymptotic,eq:xor-row-budget,eq:sum-vector-dimension,eq:sum-packing-parameters}, after fixing the three displayed constants sufficiently large.
\end{proof}

The theorem deliberately leaves the host size $N$ free. In particular, it does not impose the common substitution $N=\Theta(\sqrt{n/k})$, because that would discard the stronger packing available for $\kOV$.

\begin{lemma}[Standard host substitutions]\label{lem:host-substitutions}
Let $k\ge2$ and $n\ge16k$, and define the $\kOV$ host size and the common list-target host size by
\begin{equation}\label{eq:fixed-k-host-substitutions}
N_{\mathrm{OV}}:=\lfloor\sqrt n\rfloor,
\qquad
N_{\oplus,\Sigma}:=\left\lfloor\sqrt{\frac{n-1}{k-1}}\right\rfloor.
\end{equation}
For every simple loopless pattern $P$ with $e(P)=k-1$, these choices satisfy
\begin{equation}\label{eq:host-substitution-capacities}
N_{\mathrm{OV}}^2\le n,
\qquad
1+(k-1)N_{\oplus,\Sigma}^2\le n,
\end{equation}
and
\begin{align}
N_{\mathrm{OV}}&\ge\frac12\sqrt n, & \log N_{\mathrm{OV}}&\ge\frac14\log n,\label{eq:ov-host-estimates}\\
N_{\oplus,\Sigma}&\ge\frac12\sqrt{\frac nk}, & \log N_{\oplus,\Sigma}&\ge\frac14\log\frac nk.\label{eq:list-host-estimates}
\end{align}
The constants in \Cref{thm:A} may be fixed so that, for the same substitutions,
\begin{align}
\Dzero(P,N_{\mathrm{OV}},n)&\le\COV k\log n,\label{eq:host-budget-ov}\\
\mxor(P,N_{\oplus,\Sigma})&\le\Cxor k\log(\mathrm e n/k),\label{eq:host-budget-xor}\\
\msum(P,N_{\oplus,\Sigma})&\le\Csum k\log n.\label{eq:host-budget-sum-general}
\end{align}
If in addition $n\ge k^3$, then the last estimate sharpens to
\begin{equation}\label{eq:host-budget-sum-sharp}
\msum(P,N_{\oplus,\Sigma})\le\Csum k\log(\mathrm e n/k).
\end{equation}
\end{lemma}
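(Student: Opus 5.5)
The lemma is elementary bookkeeping, and I would prove it in three passes: capacities, host-size estimates, and then the budgets obtained by substituting into the absolute bounds of \Cref{thm:A}.

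\textbf{Capacities.} These are immediate from the floor definitions. We have $N_{\mathrm{OV}}^2\le(\sqrt n)^2=n$. Similarly $(k-1)N_{\oplus,\Sigma}^2\le n-1$, since $k\ge2$ makes the quotient well defined. Together these give \Cref{eq:host-substitution-capacities}.

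\textbf{Host-size estimates.} I would use $\lfloor x\rfloor\ge x/2$ for $x\ge1$.

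\begin{itemize}
\item For $\kOV$: since $n\ge16k\ge32$, we get $\lfloor\sqrt n\rfloor\ge\frac12\sqrt n$. Then $\log N_{\mathrm{OV}}\ge\frac12\log n-1\ge\frac14\log n$, because $\log n\ge4$.
\item For the list targets: first $(n-1)/(k-1)\ge n/k$, which is equivalent to $n\ge k$. Also $n/k\ge16$, so $\sqrt{n/k}\ge4\ge1$ and the floor estimate gives $N_{\oplus,\Sigma}\ge\frac12\sqrt{n/k}$. The logarithm bound $\frac12\log(n/k)-1\ge\frac14\log(n/k)$ again uses $\log(n/k)\ge4$.
\end{itemize}

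\textbf{Budgets.} I would plug the host sizes into the bounds of \Cref{thm:A}, enlarging constants where needed.

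\begin{itemize}
\item \Cref{eq:host-budget-ov}: \Cref{thm:A}(i) gives $\Dzero\le\COV k\log(2n)$, and $\log(2n)\le2\log n$ for $n\ge2$. Doubling $\COV$ absorbs the factor.
\item \Cref{eq:host-budget-xor}: from $\mxor\le\Cxor k\log(2N)$ and $N_{\oplus,\Sigma}\le\sqrt{n/(k-1)}\le\sqrt{2n/k}$, we get $\log(2N)\le1+\frac12\log(2n/k)$. This is $O(\log(\mathrm e n/k))$ because $\log(\mathrm e n/k)\ge\log\mathrm e>1$.
\item \Cref{eq:host-budget-sum-general}: here $\log(2kN)\le\log(2k\sqrt{2n/k})=\log(2\sqrt{2nk})\le O(\log n)$, using $k\le n$.
\item \Cref{eq:host-budget-sum-sharp}: if $n\ge k^3$, then $k\le(n/k)^{1/2}$, so $\log(2kN)\le O(1)+\log k+\frac12\log(n/k)\le O(\log(\mathrm e n/k))$.
\end{itemize}

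The constants only need to be enlarged by absolute factors, and since \Cref{thm:A} already permits fixing them sufficiently large, one choice covers all four displays at once.

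\textbf{Main point of care.} There is no real obstacle. The only thing to watch is the uniformity of the constants, especially in the $\kSUM$ case. Without $n\ge k^3$, the term $\log k$ is not $O(\log(n/k))$ when $k$ is close to $n$, which is exactly why \Cref{eq:host-budget-sum-general} uses $\log n$. The $k^3$ hypothesis is what converts $\log k$ into $O(\log(n/k))$.
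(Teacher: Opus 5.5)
Your proposal is correct and follows the paper's proof essentially step for step. For the host sizes, both use the floor estimate $\lfloor x\rfloor\ge x/2$ together with $(n-1)/(k-1)\ge n/k$. For the budgets, both use the bound $N_{\oplus,\Sigma}\le\sqrt{2n/k}$, and both absorb the $\log k$ term via $n\ge k^3$; your $k\le\sqrt{n/k}$ is the paper's $kN_{\oplus,\Sigma}\le\sqrt2\,n/k$. The constants of \Cref{thm:A} are then enlarged once, as you describe.
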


\begin{proof}
The capacity inequalities are immediate from the definitions. Because $n/k\ge16$, both quantities inside the floors are at least $4$, and $\lfloor x\rfloor\ge x/2$ for $x\ge2$. Moreover, $(n-1)/(k-1)\ge n/k$ when $n\ge k$, which gives the two lower bounds on the host sizes. Taking logarithms yields $\log N_{\mathrm{OV}}\ge\frac12\log n-1\ge\frac14\log n$ and $\log N_{\oplus,\Sigma}\ge\frac12\log(n/k)-1\ge\frac14\log(n/k)$.

For the budgets, \Cref{thm:A} gives the native bounds in terms of $\log(2n)$, $\log(2N)$, and $\log(2kN)$. The first is $O(\log n)$. Also $N_{\oplus,\Sigma}\le\sqrt{n/(k-1)}\le\sqrt{2n/k}$, so $\log(2N_{\oplus,\Sigma})=O(\log(\mathrm e n/k))$ and $\log(2kN_{\oplus,\Sigma})=O(\log n)$ uniformly for $n\ge16k$. If $n\ge k^3$, then $kN_{\oplus,\Sigma}\le\sqrt{2kn}\le\sqrt2\,n/k$, and hence $\log(2kN_{\oplus,\Sigma})=O(\log(\mathrm e n/k))$. Enlarging the three universal constants once establishes all four displayed inequalities.
\end{proof}

The lemma is purely target-side: it packages the repeated floor and budget arithmetic but inserts no source lower bound. Each later theorem remains responsible for checking that its chosen host size lies beyond the onset required by its own source statement.

\section{Three Instantiations}\label{sec:instantiations}

We now implement the roles of \Cref{sec:projection-skeleton}. The constructions deliberately use the same candidate labels and the same reference-edge comparisons, so that their differences are confined to the algebra of the target primitive.\ignore{ To make those differences visible rather than merely formal, each subsection places a complete target instance near the start of the construction and then reads the general rows against that matrix.}

\paragraph{A shared toy source (\Cref{fig:toy-source}).} To make the common skeleton and the three target algebras directly comparable, all worked instances use $N=2$, $k=3$, and the oriented path $P$ on vertices $\{p_1,p_2,p_3\}$ with $f_1=(p_1,p_2)$ and $f_2=(p_2,p_3)$. This choice is deliberate. A one-edge pattern would not activate the consistency role~(R3). A triangle would have $e(P)=3$ and hence $k=e(P)+1=4$, outside the native odd-$k$ $\kXOR$ projection; the next compatible larger pattern has four edges and, already for $N=2$, doubles the candidate columns from eight to sixteen. For the bitwise $\kOV$ and $\kXOR$ comparisons, encode $x\in[2]$ by $\operatorname{bit}_1(x)=x-1$; the $\kSUM$ comparison instead uses the integer label $x$ itself. Choose $\rho(p_1)=f_1$, $\rho(p_2)=f_1$, and $\rho(p_3)=f_2$. Then $\ell_N=1$ and $\Ical(P)=\{(p_2,f_2)\}$, with $\lab_{p_2}(f_1,u,w)=w$ and $\lab_{p_2}(f_2,u,w)=u$. For compactness in the figures, write $X^i_{uw}:=X_{f_i,u,w}$. The source assignment is shown in \Cref{fig:toy-source}; it has the unique accepting colored copy $\phi$ given by $(\phi(p_1),\phi(p_2),\phi(p_3))=(2,1,2)$, while the additional present edge $X^2_{21}=1$ is inconsistent with the selected $f_1$-edge at the shared vertex~$p_2$.

\begin{figure}[htbp]
\centering
\begin{minipage}[c]{0.48\textwidth}
\centering
\begin{tikzpicture}[every node/.style={font=\small},host/.style={circle,draw,inner sep=1.5pt,minimum size=5mm},pattern/.style={circle,draw,inner sep=1.5pt,minimum size=6mm},copy edge/.style={very thick},near miss/.style={thick,dash pattern=on 5pt off 1.5pt on 1pt off 1.5pt},absent edge/.style={densely dashed,black!50}]
\node[font=\small,anchor=east] at (-0.45,2.45) {$P$};
\node[pattern] (p1) at (0,2.45) {$p_1$};
\node[pattern] (p2) at (2,2.45) {$p_2$};
\node[pattern] (p3) at (4,2.45) {$p_3$};
\draw (p1) -- node[above] {$f_1$} (p2);
\draw (p2) -- node[above] {$f_2$} (p3);
\node[font=\scriptsize] at (0,1.55) {color class $p_1$};
\node[font=\scriptsize] at (2,1.55) {color class $p_2$};
\node[font=\scriptsize] at (4,1.55) {color class $p_3$};
\node[host] (h11) at (0,0.85) {$1$};
\node[host] (h12) at (0,0.05) {$2$};
\node[host] (h21) at (2,0.85) {$1$};
\node[host] (h22) at (2,0.05) {$2$};
\node[host] (h31) at (4,0.85) {$1$};
\node[host] (h32) at (4,0.05) {$2$};
\draw[absent edge] (h11) -- (h21);
\draw[absent edge] (h11) -- (h22);
\draw[copy edge] (h12) -- (h21);
\draw[absent edge] (h12) -- (h22);
\draw[absent edge] (h21) -- (h31);
\draw[copy edge] (h21) -- (h32);
\draw[near miss] (h22) -- (h31);
\draw[absent edge] (h22) -- (h32);
\end{tikzpicture}
\end{minipage}\hfill
\begin{minipage}[c]{0.48\textwidth}
\centering
\renewcommand{\arraystretch}{1.2}
\begin{tabular}{@{}c cccc@{}}
\toprule
$X^i_{uw}$ & $11$ & $12$ & $21$ & $22$ \\
\midrule
$i=1$ & $0$ & $0$ & $\mathbf{1}$ & $0$ \\
$i=2$ & $0$ & $\mathbf{1}$ & $1$ & $0$ \\
\bottomrule
\end{tabular}

\medskip
\small \ignore{The entry in row $i$ and column $uw$ is the source bit $X^i_{uw}$. }The bold $1$'s are the two edges induced by the unique copy $\phi$, where $(\phi(p_1),\phi(p_2),\phi(p_3))=(2,1,2)$\ignore{; the unbold entry $X^2_{21}=1$ is the consistency near-miss}.
\end{minipage}
\caption{The shared source instance. \ignore{The upper path is the pattern $P$. }Dashed gray host edges are absent, solid heavy edges form the unique colored copy, and the dash-dotted edge is present but inconsistent with that copy at pattern vertex~$p_2$. The labels inside the host vertices are indices in $[2]$, not their one-bit encodings. The right-hand table gives the complete assignment to the $e(P)N^2=8$ source bits $X^i_{uw}$.}
\label{fig:toy-source}
\end{figure}
\FloatBarrier

The $\kOV$ display in \Cref{fig:toy-ov} uses the minimal class capacity $n=N^2=4$, whereas the indexed-list displays in \Cref{fig:toy-xor,fig:toy-sum} use $n=12$: the nine genuine indices leave exactly three dummies, yielding a full all-dummy $k$-support that isolates the parity purge in $\kXOR$ and the exact dummy counter in $\kSUM$. This deliberate difference reflects the capacities $N^2\le n$ and $1+e(P)N^2\le n$. In all three displays, an entry $\overline{X}^{\,i}_{uw}{}_{(b)}$ displays the projection literal $\overline{X}^{\,i}_{uw}$ together with its parenthesized value $b=1-X^i_{uw}$, as determined by the source-bit table in \Cref{fig:toy-source}; starred column headings mark the accepting target witness. Each subsection reads the starred witness and the diagnostic supports directly from its display. The role labels annotate how rows are used\ignore{ but do not claim a five-way partition: for $\kOV$, (R1) is structural and guard coordinates purge dummies whenever their classes require padding, while the two indexed-list targets need no coordinates for~(R5). For the sole comparison $(p_2,f_2)$, the three algebras use two orthogonality coordinates, one XOR row, and one signed integer coordinate, respectively: orthogonality needs one coordinate for each direction of disagreement, whereas a single additive equation detects both directions at once}.

\subsection{\texorpdfstring{$k$-Orthogonal Vectors}{k-Orthogonal Vectors}}\label{sec:inst-ov}

An orthogonality coordinate rejects a selected tuple exactly when every chosen vector has entry $1$ there. Selection rigidity is therefore automatic: a witness already chooses one vector from each class. For \textit{consistency} (R3), A single coordinate cannot enforce equality of two compared label bits: after all inactive entries are fixed to $1$, its product has the form $A(b_f)B(b_g)$ for functions $A,B\colon\bits\to\bits$. If this product were $1$ on both disagreement patterns $(1,0)$ and $(0,1)$, then $A(0)=A(1)=B(0)=B(1)=1$, so it would also be $1$ on the agreement patterns $(0,0)$ and $(1,1)$. We therefore use one coordinate for each ordered disagreement pattern; the two consistency rows of \Cref{fig:toy-ov} make the asymmetry between those directions explicit.

\begin{figure}[!t]
\centering
\begingroup
\newcommand{\toyconst}[1]{\cellcolor{black!14}\ensuremath{#1}}
\newcommand{\toyone}{\toyconst{1}}
\newcommand{\toyzero}{\ensuremath{0}}
\newcommand{\toylit}[3]{\begingroup\setlength{\fboxsep}{.3pt}\setlength{\fboxrule}{.25pt}\fbox{\ensuremath{\overline{X}^{\,#1}_{#2}\!{}_{\scriptscriptstyle(#3)}}}\endgroup}
\newcommand{\toywit}[1]{\ensuremath{\boldsymbol{#1}\!{}^\star}}
\setlength{\aboverulesep}{0pt}
\setlength{\belowrulesep}{0pt}
\setlength{\extrarowheight}{.75ex}
\setlength{\tabcolsep}{2.2pt}
\renewcommand{\arraystretch}{1.15}
\resizebox{\textwidth}{!}{%
\begin{tabular}{@{}c|l|cccc|cccc|cccc@{}}
\toprule
 & & \multicolumn{4}{c|}{$U_\alpha$} & \multicolumn{4}{c|}{$U_{f_1}$} & \multicolumn{4}{c}{$U_{f_2}$} \\
Role & coordinate & \toywit{v_\alpha} & $\delta^1$ & $\delta^2$ & $\delta^3$ & $v_1^{11}$ & $v_1^{12}$ & \toywit{v_1^{21}} & $v_1^{22}$ & $v_2^{11}$ & \toywit{v_2^{12}} & $v_2^{21}$ & $v_2^{22}$ \\
\midrule
R4 & $q_\alpha$ & \toyzero & \toyone & \toyone & \toyone & \toyone & \toyone & \toyone & \toyone & \toyone & \toyone & \toyone & \toyone \\
R2 & $q_{f_1}$ & \toyone & \toyone & \toyone & \toyone & \toylit{1}{11}{1} & \toylit{1}{12}{1} & \toylit{1}{21}{0} & \toylit{1}{22}{1} & \toyone & \toyone & \toyone & \toyone \\
R2 & $q_{f_2}$ & \toyone & \toyone & \toyone & \toyone & \toyone & \toyone & \toyone & \toyone & \toylit{2}{11}{1} & \toylit{2}{12}{0} & \toylit{2}{21}{0} & \toylit{2}{22}{1} \\
R3 & $C^{10}_{p_2,f_2,1}$ & \toyone & \toyone & \toyone & \toyone & \toyzero & \toyone & \toyzero & \toyone & \toyone & \toyone & \toyzero & \toyzero \\
R3 & $C^{01}_{p_2,f_2,1}$ & \toyone & \toyone & \toyone & \toyone & \toyone & \toyzero & \toyone & \toyzero & \toyzero & \toyzero & \toyone & \toyone \\
R5 & $\mathrm{id}_1$ & \toyzero & \toyzero & \toyzero & \toyzero & \toyzero & \toyzero & \toyone & \toyone & \toyzero & \toyzero & \toyone & \toyone \\
R5 & $\mathrm{id}_2$ & \toyzero & \toyzero & \toyone & \toyone & \toyzero & \toyzero & \toyzero & \toyzero & \toyzero & \toyone & \toyzero & \toyone \\
R5 & $\mathrm{id}_3$ & \toyzero & \toyone & \toyzero & \toyone & \toyzero & \toyone & \toyzero & \toyone & \toyzero & \toyzero & \toyzero & \toyzero \\
\bottomrule
\end{tabular}}

\medskip
\begin{tabular}{@{}l@{\qquad}l@{\qquad}l@{}}
\toprule
Selected tuple & diagnostic & outcome \\
\midrule
$\{v_\alpha,v_1^{21},v_2^{12}\}$ & every coordinate product is $0$ & accept \\
$\{v_\alpha,v_1^{11},v_2^{12}\}$ & product $1$ on $q_{f_1}$ & absent-edge rejection \\
$\{v_\alpha,v_1^{21},v_2^{21}\}$ & product $0$ on $C^{10}_{p_2,f_2,1}$ but $1$ on $C^{01}_{p_2,f_2,1}$ & consistency rejection \\
$\{\delta^1,v_1^{21},v_2^{12}\}$ & product $1$ on $q_\alpha$ & dummy rejection \\
\bottomrule
\end{tabular}
\endgroup
\caption{The complete $8\times12$ matrix for the toy $\kOV$ projection, where $v_i^{uw}:=v_{f_i,u,w}$. Gray cells are fixed nonzero constants, and boxed cells display the literal $\overline{X}^{\,i}_{uw}$ together with the parenthesized value $1-X^i_{uw}$ determined\ignore{ by the source-bit table} in \Cref{fig:toy-source}.\ignore{ Role~(R1) has no rows: the target semantics selects one column from each of the three classes. The starred column headings form the orthogonal witness.} The displayed consistency (R3) diagnostic realizes the disagreement $(0,1)$, detected only by $C^{01}_{p_2,f_2,1}$; the columns $v_1^{12}$ and $v_2^{12}$ realize the opposite disagreement on the consistency rows, although $v_1^{12}$ also fails its guard (R2). The two columns $v_1^{12}$ and $v_1^{22}$ coincide on all five structural coordinates and are separated by $\mathrm{id}_1$, illustrating why the inert identifiers are needed even when pattern vertex~$p_1$ participates in no comparison.}
\label{fig:toy-ov}
\end{figure}
\FloatBarrier

Put $\ell_n:=\lceil\log n\rceil$, and
\begin{equation}\label{eq:ov-dimension-budget}
s_{k,n}:=\left\lceil\frac{k\ell_n}{k-1}\right\rceil,
\qquad
\Dzero(P,N,n):=k+2|\Ical(P)|\ell_N+s_{k,n}.
\end{equation}
Here $\ell_N$ counts label bits, whereas $\ell_n$ counts identifier bits for \textit{promise validity} (R5). Since $k\ge2$, one has $s_{k,n}\le2\ell_n$; together with \Cref{eq:comparison-count}, the inequality $N^2\le n$ implies
\begin{equation}\label{eq:ov-dimension-asymptotic}
\Dzero(P,N,n)=O(k\log(2N)+\log(2n))=O(k\log(2n)).
\end{equation}

The $k$ target classes, i.e., the $k$ input arrays of \Cref{def:target-problems}, are an anchor class $U_\alpha$ and one edge class $U_f$ for every $f\in E(P)$. As explained in \Cref{sec:projection-skeleton}, the anchor is needed here only to supply the $k$th target class: the $e(P)=k-1$ edge groups provide the other classes, while one-vector-per-class selection is already built into $\kOV$. The anchor class contains one genuine vector $v_\alpha$, while $U_f$ contains the $N^2$ genuine vectors $v_{f,u,w}$ indexed by $(u,w)\in[N]^2$; each class is padded to cardinality $n$ with dummies. The three class blocks in \Cref{fig:toy-ov} make selection rigidity~(R1) visible: the starred witness chooses exactly one column from each block.\ignore{ Before defining the structural coordinates in general, we display their complete toy realization.}

\paragraph{Worked instance.} Specialize first to \Cref{fig:toy-source}. Taking $n=4$ gives $\ell_n=2$, $s_{3,4}=3$, and $\Dzero(P,2,4)=8$. Since $n=N^2$, the edge classes $U_{f_1}$ and $U_{f_2}$ are already full, so only the anchor class $U_\alpha$ contains dummies. Let the three identifier coordinates be owned by $U_\alpha,U_{f_1},U_{f_2}$, respectively, and assign the four two-bit strings on the two unpinned coordinates of each class. This is the tight balanced assignment from \Cref{lem:ov-inert-identifiers} below: every class owns one coordinate and is unpinned on exactly $\ell_n=2$ coordinates. \ignore{In \Cref{fig:toy-ov}, the starred columns $v_\alpha,v_1^{21},v_2^{12}$ encode the unique source copy; the first five rows implement guard, purge, and consistency, the last three rows implement promise validity, and the diagnostic table changes one selected column at a time to expose the corresponding failure.}

\ignore{With this matrix as a running instance, we now define each role for general $(P,N,n)$.}

\paragraph{Guards and dummy purge (R2, R4).} There is one coordinate $q_h$ for every class $h\in\{\alpha\}\cup E(P)$. On $q_\alpha$, the genuine anchor has entry $0$ and every other vector has entry $1$. On $q_f$, a genuine vector $v_{f,u,w}$ has entry $\overline{X}_{f,u,w}$, while every vector in every other class and every dummy in $U_f$ has entry $1$. Thus a genuine edge vector satisfies its guard exactly when the proposed edge is present, and any dummy causes the guard of its own class to fail. In \Cref{fig:toy-ov}, the starred present vectors $v_1^{21}$ and $v_2^{12}$ contribute $0$ on their own edge guards, whereas replacing $v_1^{21}$ by the absent candidate $v_1^{11}$ leaves product $1$ on $q_{f_1}$. Likewise, replacing the anchor by $\delta^1$ leaves product $1$ on $q_\alpha$. Every dummy is set to $1$ on every guard and consistency coordinate. Zero-vector padding would instead make every selected tuple containing a dummy automatically orthogonal, so we use the all-ones structural choice; \Cref{fact:ov-zero-dummy-failure} gives the one-line failure proof.

\paragraph{Consistency (R3).} Fix $(a,g)\in\Ical(P)$, write $f:=\rho(a)$, and let $t\in[\ell_N]$. On the coordinate $C^{10}_{a,g,t}$, a candidate in class $f$ carries $\operatorname{bit}_t(\lab_a(f,u,w))$, a candidate in class $g$ carries $1-\operatorname{bit}_t(\lab_a(g,u,w))$, and every vector in every other class has entry $1$. The coordinate $C^{01}_{a,g,t}$ complements the two active entries in the opposite way. For selected label bits $b_f,b_g\in\bits$, the two coordinate products on the selected tuple are $b_f(1-b_g)$ and $(1-b_f)b_g$, so both vanish exactly when $b_f=b_g$. In \Cref{fig:toy-ov}, the starred candidates both propose label $1$ at $p_2$, and both consistency products vanish. Replacing $v_2^{12}$ by the present near-miss $v_2^{21}$ changes the second proposal to $2$: the product remains $0$ on $C^{10}_{p_2,f_2,1}$ but becomes $1$ on $C^{01}_{p_2,f_2,1}$. The entries of $v_1^{12}$ and $v_2^{12}$ show the reverse bit disagreement, which is detected by $C^{10}_{p_2,f_2,1}$. Ranging over all $t$ and all $(a,g)\in\Ical(P)$ enforces full-label agreement at every non-isolated pattern vertex.

\paragraph{Promise-valid identifiers (R5).} The structural coordinates need not distinguish all vectors within a class. In \Cref{fig:toy-ov}, for example, $v_1^{12}$ and $v_1^{22}$ coincide on all five structural coordinates but are separated by $\mathrm{id}_1$. The ownership pattern is also visible: $U_\alpha$ is pinned to $0$ on $\mathrm{id}_1$, $U_{f_1}$ on $\mathrm{id}_2$, and $U_{f_2}$ on $\mathrm{id}_3$, so every one-vector-per-class tuple has a zero factor on every identifier coordinate; in particular, the starred witness remains orthogonal. Simply appending unpinned binary identifiers need not have this inertness property, as \Cref{fact:ov-shared-id-failure} records. We therefore use identifiers that are inert for every selected tuple.

\begin{lemma}[Inert identifiers]\label{lem:ov-inert-identifiers}
There are $s_{k,n}$ identifier coordinates such that every coordinate is pinned to $0$ throughout one owner class, while every class is unpinned on at least $\ell_n$ coordinates. The $n$ vectors of each class can consequently be assigned pairwise distinct identifiers, and the product over every selected tuple is identically zero on each identifier coordinate.
\end{lemma}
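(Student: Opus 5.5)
We distribute the $s:=s_{k,n}=\lceil k\ell_n/(k-1)\rceil$ identifier coordinates among the $k$ classes as evenly as possible. Each coordinate gets exactly one owner class, and each class owns either $\lfloor s/k\rfloor$ or $\lceil s/k\rceil$ coordinates. A class $h$ is pinned to $0$ on the coordinates it owns and is free on the remaining $s-o_h$ coordinates, where $o_h$ is the number it owns. The first step is to check the counting: we need $s-\lceil s/k\rceil\ge\ell_n$ for every class.

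\textbf{Counting step.} We need $\lceil s/k\rceil\le s-\ell_n$. Since $s\ge k\ell_n/(k-1)$, we have $s-\ell_n\ge s/k$. The integer $s-\ell_n$ is therefore at least $s/k$, hence at least $\lceil s/k\rceil$. Every class is thus unpinned on at least $\ell_n$ coordinates. The toy case $k=3$, $n=4$ gives $s=3$, one owned coordinate per class and $2=\ell_n$ free coordinates, which matches this.

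\textbf{Distinct identifiers.} A class has at least $\ell_n$ free coordinates and $2^{\ell_n}\ge n$. We therefore assign the $n$ vectors of the class (genuine ones and dummies) pairwise distinct binary strings on $\ell_n$ of its free coordinates. The other free coordinates are set to $0$, as are its owned coordinates. Two vectors in the same class then differ on some identifier coordinate, so they are distinct as full $D_0$-vectors whatever their structural coordinates are. This gives the promise in \Cref{def:target-problems}.

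\textbf{Inertness.} Fix an identifier coordinate $t$ with owner class $h$. Any one-vector-per-class tuple contains a vector from $h$, and that vector has entry $0$ at $t$, so the product at $t$ is $0$. The identifier coordinates therefore never cause a tuple to fail orthogonality, and orthogonality of a selected tuple is decided entirely by the structural coordinates. Because the structural coordinates were built so that every vector (dummies included) gets a fixed value there, adding these coordinates leaves the set of orthogonal one-per-class selections unchanged.

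The only step that needs care is the counting inequality for the most loaded class. The choice of $s$ in \Cref{eq:ov-dimension-budget} is set up to make that inequality hold, so we expect no real obstacle.
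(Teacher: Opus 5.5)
Your proposal is correct and follows essentially the same argument as the paper. The paper likewise balances ownership so no class owns more than $\lceil s_{k,n}/k\rceil$ coordinates, derives $\lceil s_{k,n}/k\rceil\le s_{k,n}-\ell_n$ from $(k-1)s_{k,n}\ge k\ell_n$ by integrality, injects each class's $n$ vectors into codes on its unpinned coordinates using $2^{\ell_n}\ge n$, and gets inertness because every selected tuple contains a vector from the owner class.
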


The value $s_{k,n}$ is tight for this ownership scheme. With $s$ identifier coordinates, each class can own at most $s-\ell_n$ of them if it is to remain unpinned on at least $\ell_n$ coordinates; assigning every coordinate an owner therefore requires $s\le k(s-\ell_n)$, whose least integer solution is $s=s_{k,n}$. The complete proof and an explicit balanced ownership assignment appear in \Cref{app:ov-projection-proofs}. Combining the structural coordinates with these identifiers gives genuine promised inputs with $n$ vectors in each class, without changing which selected tuples are orthogonal.

\begin{theorem}[$\kOV$ projection]\label{thm:ov-master-projection}
Let $k\ge2$ and let $P$ be a simple loopless pattern with $e(P)=k-1$. For every $N,n$ with $N^2\le n$ and every $D\ge\Dzero(P,N,n)$, the construction above gives a depth-zero projection
\begin{equation}\label{eq:ov-master-projection}
P\text{-}\SUB_N\le_{\proj}\kOV_{n,D,k}.
\end{equation}
Equivalently, the constructed $\kOV$ instance has a one-vector-per-class orthogonal witness if and only if the source instance contains a colored copy of $P$. Since $s_{k,n}\ge\ell_n$, the hypothesis $D\ge\Dzero(P,N,n)$ also implies $n\le2^D$, so the projection lands in the promise domain of $\kOV_{n,D,k}$.
\end{theorem}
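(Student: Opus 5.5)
The plan is to check the three properties a projection needs: each coordinate is a literal or a constant, the image satisfies the promise, and acceptance is preserved in both directions.

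\emph{Form of the coordinates.} The only input-dependent entries are the guard entries $\overline{X}_{f,u,w}$ on $q_f$. Every other entry is fixed in advance: the anchor guard, the consistency coordinates (determined by $(f,u,w)$ through $\lab_a$ and $\operatorname{bit}_t$), the all-ones dummy entries, and the identifiers of \Cref{lem:ov-inert-identifiers}. If $D>\Dzero(P,N,n)$, append extra coordinates that are $0$ on every vector of the anchor class and $1$ elsewhere. Each such coordinate has product zero on every one-per-class tuple, so it changes no selection's orthogonality and preserves distinctness.

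\emph{Promise validity.} By \Cref{lem:ov-inert-identifiers}, every class is unpinned on at least $\ell_n$ identifier coordinates. We can therefore give the $n$ vectors of each class pairwise distinct binary strings there, which makes them distinct as full vectors. Since $\Dzero\ge s_{k,n}\ge\ell_n=\lceil\log n\rceil$, we get $n\le2^D$.

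\emph{Soundness (source accepts $\Rightarrow$ target accepts).} Given an accepting $\phi$, select $v_\alpha$ and, for each edge $f$, the vector $v_{f,\phi(a_f),\phi(b_f)}$. We verify the coordinate products one by one:
\begin{itemize}
\item $q_\alpha$ vanishes through $v_\alpha$.
\item $q_f$ vanishes because $\overline{X}_{f,\phi(a_f),\phi(b_f)}=0$.
\item For each comparison $(a,g)$ with reference edge $f=\rho(a)$, both selected vectors propose the label $\phi(a)$ at $a$. So $b_f=b_g$ on every bit $t$, and $b_f(1-b_g)=(1-b_f)b_g=0$.
\item Identifier coordinates vanish by inertness.
\end{itemize}

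\emph{Completeness (target accepts $\Rightarrow$ source accepts).} Take any orthogonal one-per-class tuple.
\begin{itemize}
\item \textbf{No dummies.} A dummy in class $h$ has entry $1$ on $q_h$. Every other class also has entry $1$ on $q_h$, except the genuine vector of class $h$ itself, which was not chosen. The product on $q_h$ is then $1$, so no dummy is selected.
\item \textbf{Anchor and guards.} The anchor class has a single genuine vector, so it is $v_\alpha$. Each edge class $f$ contributes some $v_{f,u_f,w_f}$. On $q_f$ all other selected vectors have entry $1$, so orthogonality forces $\overline{X}_{f,u_f,w_f}=0$, i.e. $X_{f,u_f,w_f}=1$.
\item \textbf{Consistency.} On $C^{10}_{a,g,t}$ and $C^{01}_{a,g,t}$, only the candidates from classes $\rho(a)$ and $g$ are active. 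Their products vanish exactly when the bits agree, so $\lab_a$ agrees bitwise and hence as a label for every $(a,g)\in\Ical(P)$.
\end{itemize}

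Transitivity through the reference edge makes all incident edges agree at each non-isolated vertex $a$. Define $\phi(a)$ as this common label, and let isolated vertices take arbitrary labels. Then \Cref{def:psub} shows that $\phi$ is accepting.

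The main obstacle is the inert-identifier lemma. It needs an ownership assignment of $s_{k,n}$ coordinates to $k$ classes in which each class owns at most $s_{k,n}-\ell_n$ coordinates. Balanced round-robin ownership does this: each class owns at most $\lceil s_{k,n}/k\rceil$ coordinates, and
\[
\lceil s_{k,n}/k\rceil\le s_{k,n}-\ell_n \iff s_{k,n}\ge k\ell_n/(k-1),
\]
which holds by the definition of $s_{k,n}$. After that, everything else is coordinate bookkeeping.
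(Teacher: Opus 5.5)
Your proposal is correct and follows the paper's proof essentially step for step. It uses the same literal/constant bookkeeping, the same promise argument via the inert identifiers and $s_{k,n}\ge\ell_n$, the same two directions via dummy rejection on the class guards, the edge guards, and the two-coordinate consistency test, and the same balanced-ownership inequality $\lceil s_{k,n}/k\rceil\le s_{k,n}-\ell_n$ for the identifier lemma. The only difference is cosmetic: for $D>\Dzero(P,N,n)$ you pad with coordinates that are zero on the anchor class and one elsewhere, where the paper appends all-zero coordinates; both choices preserve within-class distinctness and keep every one-per-class product zero.
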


The proof in \Cref{app:ov-projection-proofs} follows the five roles directly, in the same order displayed by \Cref{fig:toy-ov}. \ignore{A source embedding selects the anchor and one present, label-consistent candidate from each edge class; conversely, the class guards remove every dummy, the edge guards certify presence, and the two-coordinate comparisons recover one common host label at each pattern vertex. The inert identifiers place the output inside the pairwise-distinct promise. }Every output bit is a constant or a negated source variable. Dimensions above $\Dzero(P,N,n)$ are obtained by appending all-zero coordinates.

\subsection{\texorpdfstring{Odd $\kXOR$}{Odd k-XOR}}\label{sec:inst-xor}

Retain $\ell_N$ and the binary label encoding fixed in \Cref{sec:projection-skeleton}. For $\kXOR$, all objects lie in one indexed list, so selection rigidity must be enforced algebraically. The construction has one anchor column, one candidate column $c_{f,u,w}$ for each $f\in E(P)$ and $(u,w)\in[N]^2$, and dummy columns padding the list to length $n$. Repeated column values are permitted because the target selects indices; the three identical dummy columns in \Cref{fig:toy-xor} therefore require no analogue of role~(R5).

\ignore{\paragraph{Worked instance.} Specialize to the source assignment in \Cref{fig:toy-source} and take $n=12$, so the list consists of the anchor, eight candidates, and three dummies. Together with two selector rows, one consistency row, and the anti-dummy row, this gives the complete $11\times12$ target matrix in \Cref{fig:toy-xor}.}

We now define the general rows and use the displayed support sums to track what each role enforces.

\paragraph{Selection and dummy purge (R1, R4).} For each $f \in E(P)$, an selector row (R1) is assigned to its edge group. As shown in \Cref{fig:toy-xor}, each selector row contains two $1$'s on the starred support, while the anti-dummy row (R4) contains none, so all three sums vanish. The all-dummy support also vanishes on both selectors but has anti-dummy sum $3\equiv1\pmod2$\ignore{, isolating the parity mechanism}. In general, for a selected support $S$, let $\zeta\in\bits$ indicate whether the anchor is selected, let $c_f$ be the number of selected candidates from group $f$, and let $d$ be the number of selected dummies. A selector row for $f$ has entry $1$ on the anchor and every group-$f$ candidate and entry $0$ elsewhere, so a zero sum of this row imposes
\begin{equation}\label{eq:xor-selector-equation}
\zeta+c_f\equiv0\pmod2.
\end{equation}
The anti-dummy row has entry $1$ on every dummy and $0$ on every genuine column. Together with the selector equations, the support-size identity is
\begin{equation}\label{eq:xor-support-size}
\zeta+\sum_{f\in E(P)}c_f+d=k=e(P)+1.
\end{equation}
These equations prove support rigidity when $k$ is odd. If $\zeta=1$, every $c_f$ is a positive odd integer by~(\ref{eq:xor-selector-equation}). Then~(\ref{eq:xor-support-size}) forces $c_f=1$ for every $f$ and $d=0$. If $\zeta=0$, every $c_f$ is even and the zero-sum anti-dummy row makes $d$ even, so the support size is even, contradicting that $k$ is odd. Thus every zero-sum $k$-support contains the anchor, exactly one candidate from each edge group, and no dummies.\ignore{ The all-dummy diagnostic in \Cref{fig:toy-xor} is the smallest concrete instance of the excluded $\zeta=0$ branch.}

\begin{figure}[!t]
\centering
\begingroup
\newcommand{\toyconst}[1]{\cellcolor{black!14}\ensuremath{#1}}
\newcommand{\toyone}{\toyconst{1}}
\newcommand{\toyzero}{\ensuremath{0}}
\newcommand{\toylit}[3]{\begingroup\setlength{\fboxsep}{.3pt}\setlength{\fboxrule}{.25pt}\fbox{\ensuremath{\overline{X}^{\,#1}_{#2}\!{}_{\scriptscriptstyle(#3)}}}\endgroup}
\newcommand{\toywit}[1]{\ensuremath{\boldsymbol{#1}\!{}^\star}}
\setlength{\aboverulesep}{0pt}
\setlength{\belowrulesep}{0pt}
\setlength{\extrarowheight}{.75ex}
\setlength{\tabcolsep}{2.0pt}
\renewcommand{\arraystretch}{1.13}
\resizebox{\textwidth}{!}{%
\begin{tabular}{@{}c|l|c|cccc|cccc|ccc@{}}
\toprule
Role & row & \toywit{\alpha} & $c_1^{11}$ & $c_1^{12}$ & \toywit{c_1^{21}} & $c_1^{22}$ & $c_2^{11}$ & \toywit{c_2^{12}} & $c_2^{21}$ & $c_2^{22}$ & $\delta_1$ & $\delta_2$ & $\delta_3$ \\
\midrule
R1 & $S_{f_1}$ & \toyone & \toyone & \toyone & \toyone & \toyone & \toyzero & \toyzero & \toyzero & \toyzero & \toyzero & \toyzero & \toyzero \\
R1 & $S_{f_2}$ & \toyone & \toyzero & \toyzero & \toyzero & \toyzero & \toyone & \toyone & \toyone & \toyone & \toyzero & \toyzero & \toyzero \\
R2 & $H_1$ & \toyzero & \toyzero & \toyzero & \toyzero & \toyzero & \toylit{2}{11}{1} & \toylit{2}{12}{0} & \toylit{2}{21}{0} & \toylit{2}{22}{1} & \toyzero & \toyzero & \toyzero \\
R2 & $H_2$ & \toyzero & \toylit{1}{11}{1} & \toylit{1}{12}{1} & \toylit{1}{21}{0} & \toylit{1}{22}{1} & \toyzero & \toyzero & \toyzero & \toyzero & \toyzero & \toyzero & \toyzero \\
R2 & $H_3$ & \toyzero & \toyzero & \toyzero & \toylit{1}{21}{0} & \toylit{1}{22}{1} & \toyzero & \toyzero & \toylit{2}{21}{0} & \toylit{2}{22}{1} & \toyzero & \toyzero & \toyzero \\
R2 & $H_4$ & \toyzero & \toyzero & \toylit{1}{12}{1} & \toyzero & \toylit{1}{22}{1} & \toyzero & \toylit{2}{12}{0} & \toyzero & \toylit{2}{22}{1} & \toyzero & \toyzero & \toyzero \\
\textemdash & $H_5$ & \toyzero & \toyzero & \toyzero & \toyzero & \toyzero & \toyzero & \toyzero & \toyzero & \toyzero & \toyzero & \toyzero & \toyzero \\
\textemdash & $H_6$ & \toyzero & \toyzero & \toyzero & \toyzero & \toyzero & \toyzero & \toyzero & \toyzero & \toyzero & \toyzero & \toyzero & \toyzero \\
\textemdash & $H_7$ & \toyzero & \toyzero & \toyzero & \toyzero & \toyzero & \toyzero & \toyzero & \toyzero & \toyzero & \toyzero & \toyzero & \toyzero \\
R3 & $C_{p_2,f_2,1}$ & \toyzero & \toyzero & \toyone & \toyzero & \toyone & \toyzero & \toyzero & \toyone & \toyone & \toyzero & \toyzero & \toyzero \\
R4 & $A_{\mathrm{dum}}$ & \toyzero & \toyzero & \toyzero & \toyzero & \toyzero & \toyzero & \toyzero & \toyzero & \toyzero & \toyone & \toyone & \toyone \\
\bottomrule
\end{tabular}}

\medskip
\begin{tabular}{@{}l@{\qquad}l@{\qquad}l@{}}
\toprule
Selected support & unique nonzero row sum & reading \\
\midrule
$\{\alpha,c_1^{21},c_2^{12}\}$ & none & accept \\
$\{\alpha,c_1^{11},c_2^{12}\}$ & $H_2=1$ & one absent edge \\
$\{\alpha,c_1^{21},c_2^{21}\}$ & $C_{p_2,f_2,1}=1$ & inconsistent shared label \\
$\{\delta_1,\delta_2,\delta_3\}$ & $A_{\mathrm{dum}}=1$ & odd dummy support \\
\bottomrule
\end{tabular}
\endgroup
\caption{An worked instance specialized to the source assignment in \Cref{fig:toy-source}. The complete $11\times12$ matrix for the toy $\kXOR$ projection with odd $k$, where $c_i^{uw}:=c_{f_i,u,w}$. Gray cells are fixed nonzero constants, and boxed cells display the literal $\overline{X}^{\,i}_{uw}$ together with the parenthesized value $1-X^i_{uw}$ determined in \Cref{fig:toy-source}. The starred column headings form the zero-sum witness. \ignore{The first four guard rows realize the explicit labels in \Cref{eq:toy-xor-labels}; $H_5,H_6,H_7$ are zero padding and therefore carry no role.} Each rejected diagnostic support has exactly the displayed nonzero row sum.\ignore{ In particular, the all-dummy support isolates the parity purge: every selector, guard, and consistency row vanishes, while the anti-dummy sum is $3\equiv1\pmod2$. For even $k$, an all-dummy $k$-support would also vanish on the anti-dummy row, exactly explaining the obstruction for even-$k$\ignore{ discussed in \Cref{sec:inst-xor}}.}}
\label{fig:toy-xor}
\end{figure}
\FloatBarrier

\paragraph{Compressed guards (R2).} The guard must reject every support containing at least one absent edge. Assigning private guard rows to every columns would do so but would cost $e(P)N^2$ rows. Instead, a standard random-labelling argument with a union bound compresses them into
\begin{equation}\label{eq:xor-guard-dimension}
r_{\mathrm{guard}}(P,N):=\left\lceil2e(P)\log N+e(P)+1\right\rceil
\end{equation}
rows.

\begin{lemma}[Separating guard labels]\label{lem:xor-separating-labels}
There are labels $\lambda_{f,u,w}\in\F_2^{r_{\mathrm{guard}}(P,N)}$ such that, for every choice $(u_f,w_f)\in[N]^2$ for each $f\in E(P)$ and every nonempty $T\subseteq E(P)$,
\begin{equation}\label{eq:xor-separating-property}
\bigoplus_{f\in T}\lambda_{f,u_f,w_f}\ne0.
\end{equation}
The labels depend only on $(P,N)$ and may be fixed nonuniformly before the source input is read.
\end{lemma}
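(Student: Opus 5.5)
The plan is the probabilistic method: draw every label $\lambda_{f,u,w}$ independently and uniformly from $\F_2^{r}$, where $r:=r_{\mathrm{guard}}(P,N)$, and show by a union bound that the separating property \Cref{eq:xor-separating-property} holds with positive probability. Some fixed label assignment then works. It depends only on $(P,N)$ and can be hard-wired nonuniformly before any source input is read.

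Next I will identify the bad events. For a partial choice of pairs, namely a nonempty $T\subseteq E(P)$ together with $(u_f,w_f)_{f\in T}\in([N]^2)^T$, the event is that $\bigoplus_{f\in T}\lambda_{f,u_f,w_f}=0$. It suffices to treat these events, because the condition for a full choice and a given $T$ depends only on the pairs indexed by $T$.

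For a fixed event, the summands are labels attached to distinct edges $f$. They are therefore distinct independent uniform vectors. Since $T\ne\varnothing$, conditioning on all summands but one shows that the XOR is uniform on $\F_2^r$. Hence the event has probability exactly $2^{-r}$.

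To count the events, note that the number of pairs $(T,(u_f,w_f)_{f\in T})$ is
\[
\sum_{T\neq\varnothing}N^{2|T|}\le 2^{e(P)}N^{2e(P)}.
\]
The union bound therefore gives failure probability at most $2^{e(P)+2e(P)\log N-r}$. Because $r\ge 2e(P)\log N+e(P)+1$, this is at most $1/2<1$, which finishes the argument.

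I expect no real obstacle. The one point needing care is that independence holds because each edge group contributes at most one term. For that reason the union bound runs over sets $T$ paired with one candidate per edge, not over arbitrary multisets of columns.
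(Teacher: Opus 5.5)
Your proposal is correct and follows the paper's proof: uniform random labels, probability exactly $2^{-r}$ for each bad event because the summands come from distinct edge groups, and a union bound giving failure probability at most $N^{2e(P)}2^{e(P)}2^{-r}\le\frac12$. The only cosmetic difference is that you union-bound over nonempty $T$ paired with partial choices $(u_f,w_f)_{f\in T}$, whereas the paper unions over full tuples and nonempty $T$; both give the same bound.
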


The probabilistic proof is in \Cref{app:xor-projection-proofs}. At the integer value in \Cref{eq:xor-guard-dimension}, its union bound is
\begin{equation}\label{eq:xor-union-bound-preview}
N^{2e(P)}2^{e(P)}2^{-r_{\mathrm{guard}}(P,N)}\le\frac12.
\end{equation}

Using these labels, write $r_g:=r_{\mathrm{guard}}(P,N)$. For each $h\in[r_g]$, the entry of candidate $c_{f,u,w}$ is $\overline{X}_{f,u,w}\lambda_{f,u,w}(h)$, and the anchor and all dummies have entry $0$. Present candidates contribute $0$, while the selected absent candidates contribute the XOR of their labels, which is nonzero by \Cref{eq:xor-separating-property}.

In \Cref{fig:toy-xor}, we use the following explicit seven-bit guard labels
\begin{equation}\label{eq:toy-xor-labels}
\lambda_{f_1,u,w}:=(0,1,u-1,w-1,0,0,0),
\qquad
\lambda_{f_2,u,w}:=(1,0,u-1,w-1,0,0,0).
\end{equation}
Both of the two labels are nonzero, and the XOR of one label from each edge group begins with $(1,1)$, so every nonempty subset of a selected two-group tuple has nonzero label XOR. For the active rows $H_1,\ldots,H_4$ in \Cref{fig:toy-xor}, the starred present candidates vanish throughout them, whereas the absent candidate $c_1^{11}$ contributes $1$ on $H_2$, giving the displayed guard rejection. The rows $H_5,H_6,H_7$ are zero padding, allowing the example to retain the $r_g = 7$ general guard budget while displaying only four active guard rows.

\paragraph{Consistency (R3).} For every $(a,g)\in\Ical(P)$, with $f:=\rho(a)$, and every $t\in[\ell_N]$, one row places $\operatorname{bit}_t(\lab_a(\cdot))$ on every candidate in groups $f$ and $g$ and places $0$ elsewhere. Once the selectors have chosen one candidate from each group, the row sum is the XOR of the two proposed label bits and vanishes exactly when they agree. In \Cref{fig:toy-xor}, $c_1^{21}$ and $c_2^{12}$ both propose bit $0$ at $p_2$, whereas replacing $c_2^{12}$ by the present candidate $c_2^{21}$ changes that row sum to $1$. Unlike orthogonality, one $\F_2$ equation forbids both disagreement patterns.

The exact constructed row budget is
\begin{equation}\label{eq:xor-row-budget}
\mxor(P,N):=e(P)+r_{\mathrm{guard}}(P,N)+|\Ical(P)|\ell_N+1=O(k\log(2N)),
\end{equation}
where the four terms are the selectors, compressed guards, consistency rows, and anti-dummy row, respectively.

\begin{theorem}[$\kXOR$ projection for odd $k$]\label{thm:xor-master-projection}
Let $k\ge3$ be odd, and let $P$ be a simple loopless pattern with $e(P)=k-1$. For every $N,n$ with $1+e(P)N^2\le n$ and every $m\ge\mxor(P,N)$, the construction above gives a depth-zero projection
\begin{equation}\label{eq:xor-master-projection}
P\text{-}\SUB_N\le_{\proj}\kXOR_{n,m,k}.
\end{equation}
\end{theorem}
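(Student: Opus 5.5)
The plan is to verify the three ingredients of a depth-zero projection: syntactic form, correct length and row count, and equivalence of acceptance. The map $R$ sends the source assignment $X$ to the list consisting of the anchor, the $e(P)N^2$ candidates and $n-1-e(P)N^2\ge0$ dummies. Each column is a vector in $\F_2^{\mxor(P,N)}$ whose rows are the $e(P)$ selectors, the $r_g$ compressed guards, the $|\Ical(P)|\ell_N$ consistency rows and the anti-dummy row. The remaining $m-\mxor(P,N)$ rows are padded with $0$ on every column. Every entry is either a constant or of the form $\overline{X}_{f,u,w}\lambda_{f,u,w}(h)$. Because the labels from \Cref{lem:xor-separating-labels} are fixed once for $(P,N)$, the latter is either the constant $0$ or the literal $\overline{X}_{f,u,w}$. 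So $R$ is a depth-zero projection in the sense of \Cref{def:projection}, and \Cref{lem:projection-monotonicity} then applies verbatim. The target domain is total, so no promise needs checking.

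\textbf{Completeness.} Suppose $\phi$ is an accepting map. I would select $S=\{\alpha\}\cup\{c_{f,\phi(a_f),\phi(b_f)}:f\in E(P)\}$, which has size $1+e(P)=k$. On each selector row $S_f$ the support contributes the anchor plus exactly one group-$f$ candidate, so the sum is $0$. The anti-dummy row sees no dummies. Every selected candidate is present, so each guard entry is $\overline{X}\lambda=0$. On each consistency row for $(a,g)$, both selected candidates propose $\operatorname{bit}_t(\phi(a))$, so the XOR vanishes.

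\textbf{Soundness.} Let $S$ be a zero-sum $k$-support. The argument following \Cref{eq:xor-support-size} already gives rigidity for odd $k$. If $\zeta=0$, the selector equations make every $c_f$ even and the anti-dummy row makes $d$ even, so $|S|$ is even, which is impossible. Hence $\zeta=1$, each $c_f$ is odd and positive, and the size identity forces $c_f=1$ and $d=0$. Write the unique selected candidate in group $f$ as $c_{f,u_f,w_f}$.

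\textbf{Guards.} The guard rows sum to $\bigoplus_{f\in T}\lambda_{f,u_f,w_f}$, where $T$ is the set of groups whose selected candidate is absent. By \Cref{lem:xor-separating-labels} this sum is nonzero whenever $T\neq\varnothing$, so every selected edge is present.

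\textbf{Consistency.} Each consistency row then forces $\lab_a(\rho(a),u_{\rho(a)},w_{\rho(a)})$ and $\lab_a(g,u_g,w_g)$ to agree bit by bit, and hence to be equal, for every $(a,g)\in\Ical(P)$. By transitivity through the reference edge, all incident edges at each non-isolated vertex propose one common label. I would define $\phi(a)$ to be this label, and assign isolated vertices arbitrarily. This $\phi$ satisfies \Cref{eq:psub-acceptance}, using the equivalent formulation in \Cref{def:psub}.

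\textbf{Main obstacle.} The main obstacle is the rigidity step, and it is already resolved by the parity argument above. The only remaining care is to confirm that the guard analysis is applied after rigidity. Rigidity is what guarantees exactly one candidate per group, which in turn puts the chosen tuple $(u_f,w_f)_f$ and the subset $T$ in the form that \Cref{eq:xor-separating-property} quantifies over.
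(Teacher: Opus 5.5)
Your proposal is correct and takes essentially the same route as the paper's proof. Odd-$k$ support rigidity comes from the selector and anti-dummy rows, then the separating-label guard argument and the bitwise consistency rows recover $\phi$. The projection property follows because each guard entry $\overline{X}_{f,u,w}\lambda_{f,u,w}(h)$ is either $0$ or the literal $\overline{X}_{f,u,w}$, and larger row counts are reached by appending all-zero rows.
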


The complete proof is in \Cref{app:xor-projection-proofs}. It shows that a zero-sum support uses only present candidates and is label-consistent, and it checks that every matrix entry is a constant or a negated source variable. \ignore{The complete matrix in \Cref{fig:toy-xor} exhibits those implications one row family at a time. }Larger row counts are reached by appending all-zero rows.

The oddness condition is an artifact of exact selection over $\F_2$, not of colored subgraph isomorphism. The all-dummy diagnostic in \Cref{fig:toy-xor} pinpoints the issue\ignore{: its anti-dummy sum is nonzero only because the displayed support has odd size. When $k$ is even, the branch $\zeta=0$ makes every $c_f$ even and also leaves an even number of dummies, so the anti-dummy row no longer distinguishes it; w}. Whenever at least $k$ dummy indices are available, the all-dummy support already satisfies every row. The black-box lifting projection is \Cref{lem:xor-lift}; \Cref{cor:fixed-even-xor} gives the quantitative consequence for any supplied admissible odd source parameter. The body-level bound is \Cref{eq:fixed-even-xor-consequence}.

\subsection{\texorpdfstring{$\kSUM$}{k-SUM}}\label{sec:inst-sum}

The $\kSUM$ construction works natively over the integers. A black-box additive, that is, group-homomorphic, encoding of the $\F_2$ construction into $\Z_{2^m}$ cannot preserve all zero relations; \Cref{prop:sum-two-torsion} gives the formal obstruction. We instead encode selection, consistency, presence, and dummy rejection by exact signed coordinates and then pack those coordinates into one residue.

The indexed list again contains the anchor, all $e(P)N^2$ candidates, and repeated dummy residues. Because the target selects indices rather than distinct values, those repetitions are legal and no analogue of role~(R5) is needed. Associate with every index $j$ an integer vector $\vecop(j)\in\Z^{R(P)}$, where
\begin{equation}\label{eq:sum-vector-dimension}
R(P):=e(P)+|\Ical(P)|+2\le3e(P)+2=O(k).
\end{equation}
For a selected support $S$, let $\zeta\in\bits$ indicate whether the anchor is selected, let $c_f$ count selected candidates from group $f$, and let $d$ count selected dummies. We first display the resulting toy instance, then define the coordinates in general and read them against it.

\paragraph{Worked instance.} Specialize again to the source assignment in \Cref{fig:toy-source} and take $n=12$. Unlike the bitwise comparisons in the first two instantiations, the consistency coordinate now uses the label values $1,2$ themselves, accounting for the entries $\pm1,\pm2$. Choose the coordinate order
\begin{equation}\label{eq:toy-sum-coordinate-order}
(G_{f_1},G_{f_2},C_{p_2,f_2},Q,Q_{\mathrm{dum}}).
\end{equation}

We now define the exact coordinates in general and read each one against the corresponding row of the figure.

\paragraph{Group coordinates (R1).} We assign one group coordinate $G_f$ for each $f\in E(P)$, where it has $-1$ on the anchor, $+1$ on every group-$f$ candidate, and $0$ elsewhere. On a selected support it equals $c_f-\zeta$, so vanishing enforces the exact equality $c_f=\zeta$ over $\Z$. In the first two rows of \Cref{fig:toy-sum}, the starred anchor contributes $-1$ and the unique starred candidate from the corresponding group contributes $+1$. By contrast, the selection-violating support $\{c_1^{11},c_1^{12},\delta_1\}$ has $G_{f_1}$-sum $2$, so it is rejected even before the remaining nonzero coordinates are inspected.

\paragraph{Consistency coordinates (R3).} For $(a,g)\in\Ical(P)$, write $f:=\rho(a)$. The coordinate $C_{a,g}$ assigns $+\lab_a(g,u,w)$ to a group-$g$ candidate, $-\lab_a(f,u,w)$ to a group-$f$ candidate, and $0$ elsewhere. Once one candidate is selected from each group, it is the signed difference of the two complete labels and vanishes exactly when they agree. In \Cref{fig:toy-sum}, the starred columns contribute $-1$ from $c_1^{21}$ and $+1$ from $c_2^{12}$, while the consistency near-miss $c_2^{21}$ contributes $+2$ and leaves sum $1$. Thus one integer coordinate replaces the $\Theta(\log N)$ bitwise comparisons used by $\kOV$ and $\kXOR$, although the carry-safe packing in \Cref{eq:sum-packing-parameters} later allocates $\Theta(\log(2kN))$ bits per integer coordinate.

\begin{figure}[!t]
\centering
\begingroup
\scriptsize
\newcommand{\toyconst}[1]{\cellcolor{black!14}\ensuremath{#1}}
\newcommand{\toyone}{\toyconst{1}}
\newcommand{\toyzero}{\ensuremath{0}}
\newcommand{\toylit}[3]{\begingroup\setlength{\fboxsep}{.25pt}\setlength{\fboxrule}{.25pt}\fbox{\ensuremath{\overline{X}^{\,#1}_{#2}\!{}_{\scriptscriptstyle(#3)}}}\endgroup}
\newcommand{\toywit}[1]{\ensuremath{\boldsymbol{#1}\!{}^\star}}
\setlength{\aboverulesep}{0pt}
\setlength{\belowrulesep}{0pt}
\setlength{\extrarowheight}{.75ex}
\setlength{\tabcolsep}{1.7pt}
\renewcommand{\arraystretch}{1.13}
\begin{tabular}{@{}c|l|c|cccc|cccc|ccc@{}}
\toprule
Role & coordinate & \toywit{\alpha} & $c_1^{11}$ & $c_1^{12}$ & \toywit{c_1^{21}} & $c_1^{22}$ & $c_2^{11}$ & \toywit{c_2^{12}} & $c_2^{21}$ & $c_2^{22}$ & $\delta_1$ & $\delta_2$ & $\delta_3$ \\
\midrule
R1 & $G_{f_1}$ & \toyconst{-1} & \toyone & \toyone & \toyone & \toyone & \toyzero & \toyzero & \toyzero & \toyzero & \toyzero & \toyzero & \toyzero \\
R1 & $G_{f_2}$ & \toyconst{-1} & \toyzero & \toyzero & \toyzero & \toyzero & \toyone & \toyone & \toyone & \toyone & \toyzero & \toyzero & \toyzero \\
R3 & $C_{p_2,f_2}$ & \toyzero & \toyconst{-1} & \toyconst{-2} & \toyconst{-1} & \toyconst{-2} & \toyone & \toyone & \toyconst{2} & \toyconst{2} & \toyzero & \toyzero & \toyzero \\
R2 & $Q$ & \toyzero & \toylit{1}{11}{1} & \toylit{1}{12}{1} & \toylit{1}{21}{0} & \toylit{1}{22}{1} & \toylit{2}{11}{1} & \toylit{2}{12}{0} & \toylit{2}{21}{0} & \toylit{2}{22}{1} & \toyzero & \toyzero & \toyzero \\
R4 & $Q_{\mathrm{dum}}$ & \toyzero & \toyzero & \toyzero & \toyzero & \toyzero & \toyzero & \toyzero & \toyzero & \toyzero & \toyone & \toyone & \toyone \\
\bottomrule
\end{tabular}

\medskip
\setlength{\tabcolsep}{4pt}
\begin{tabular}{@{}c|ccccc@{}}
\toprule
standard word, most to least significant & \shortstack{bits\\$19{:}16$} & \shortstack{bits\\$15{:}12$} & \shortstack{bits\\$11{:}8$} & \shortstack{bits\\$7{:}4$} & \shortstack{bits\\$3{:}0$} \\
\midrule
power used by $\Phi$ & $B^4$ & $B^3$ & $B^2$ & $B^1$ & $B^0$ \\
coordinate carrying that weight & $Q_{\mathrm{dum}}$ & $Q$ & $C_{p_2,f_2}$ & $G_{f_2}$ & $G_{f_1}$ \\
\bottomrule
\end{tabular}

\medskip
\setlength{\tabcolsep}{1.7pt}
\begin{tabular}{@{}l|c|cccc|cccc|ccc@{}}
\toprule
standard residue & \toywit{\alpha} & $c_1^{11}$ & $c_1^{12}$ & \toywit{c_1^{21}} & $c_1^{22}$ & $c_2^{11}$ & \toywit{c_2^{12}} & $c_2^{21}$ & $c_2^{22}$ & $\delta_1$ & $\delta_2$ & $\delta_3$ \\
\midrule
hexadecimal & $\mathtt{FFFEF}$ & $\mathtt{00F01}$ & $\mathtt{00E01}$ & $\mathtt{FFF01}$ & $\mathtt{00E01}$ & $\mathtt{01110}$ & $\mathtt{00110}$ & $\mathtt{00210}$ & $\mathtt{01210}$ & $\mathtt{10000}$ & $\mathtt{10000}$ & $\mathtt{10000}$ \\
\bottomrule
\end{tabular}

\medskip
\setlength{\tabcolsep}{3.5pt}
\begin{tabular}{@{}l@{\qquad}l@{\qquad}l@{}}
\toprule
Selected support & exact signed sum $y_S$ & hexadecimal residue \\
\midrule
$\{\alpha,c_1^{21},c_2^{12}\}$ & $(0,0,0,0,0)$ & $\mathtt{00000}$ \\
$\{\alpha,c_1^{21},c_2^{21}\}$ & $(0,0,1,0,0)$ & $\mathtt{00100}$ \\
$\{\alpha,c_1^{11},c_2^{12}\}$ & $(0,0,0,1,0)$ & $\mathtt{01000}$ \\
$\{\delta_1,\delta_2,\delta_3\}$ & $(0,0,0,0,3)$ & $\mathtt{30000}$ \\
$\{c_1^{11},c_1^{12},\delta_1\}$ & $(2,0,-3,2,1)$ & $\mathtt{11D02}$ \\
\bottomrule
\end{tabular}
\endgroup
\caption{The full toy $\kSUM$ construction. The first table is the primary signed-coordinate object; gray cells are fixed nonzero constants, while boxed cells display the literal $\overline{X}^{\,i}_{uw}$ together with the parenthesized value $1-X^i_{uw}$ determined\ignore{ by the source-bit table} in \Cref{fig:toy-source}. The second table fixes how the signed coordinates are weighted, and the third lists all twelve target residues exactly as five hexadecimal digits, hence as twenty bits; these are standard residue digits, not the signed coordinate values themselves. The starred column headings form a witness. The diagnostic rows respectively show acceptance, a pure consistency violation, a pure guard violation, dummy rejection, and a selection violation witnessed already by $G_{f_1}=2$. Different roles can fail simultaneously, so the diagnostics identify a decisive nonzero coordinate rather than claiming a partition of failures.}
\label{fig:toy-sum}
\end{figure}
\FloatBarrier

\paragraph{Guard and dummy coordinates (R2, R4).} The guard coordinate $Q$ assigns $\overline{X}_{f,u,w}\in\{0,1\}$ to candidate $c_{f,u,w}$ and $0$ to the anchor and dummies. Its selected sum is nonnegative and vanishes exactly when every selected edge is present. In \Cref{fig:toy-sum}, replacing $c_1^{21}$ by the absent candidate $c_1^{11}$ leaves the pure guard sum $Q=1$. The dummy coordinate $Q_{\mathrm{dum}}$ is $1$ on dummies and $0$ on genuine objects, so it counts selected dummies exactly; the all-dummy diagnostic has sum $3$ in that row and zero in every lower signed coordinate.

These exact coordinates already force the desired support shape. If all coordinate sums vanish, then the dummy coordinate gives $d=0$ and every group coordinate gives $c_f=\zeta$, so the support-size identity becomes $k=\zeta+\sum_{f\in E(P)}c_f+d=k\zeta$. Hence $\zeta=1$, every $c_f=1$, and no dummy is selected.

To pack these signed vectors, we use the standard large-base positional-encoding device introduced in Karp's reduction from \textsc{Exact Cover} to \textsc{Knapsack}~\cite{Karp72}, where the base is chosen to exceed the largest attainable digit so that no carry propagates between coordinates. We use it in signed form, with digits in $[-kN,kN]$ and a power-of-two base, and correct for the modular ambient group in \Cref{eq:sum-top-bit-scaling}. Set
\begin{equation}\label{eq:sum-packing-parameters}
q:=\left\lceil\log(2kN+1)\right\rceil,
\qquad
B:=2^q,
\qquad
\msum(P,N):=qR(P).
\end{equation}
Fix an ordering of the $R(P)$ coordinates and define
\begin{equation}\label{eq:sum-packing-map}
\Phi(z_0,\ldots,z_{R(P)-1}):=\sum_{i=0}^{R(P)-1}z_iB^i.
\end{equation}
Every $k$-support sum $y_S:=\sum_{j\in S}\vecop(j)$ satisfies $\|y_S\|_\infty\le kN$, while $B>2kN$. Consequently the balanced base-$B$ representation is faithful:
\begin{equation}\label{eq:sum-faithful-packing-preview}
\Phi(y_S)\equiv0\pmod{2^{\msum(P,N)}}
\quad\Longleftrightarrow\quad
y_S=0.
\end{equation}
The proof is a digit induction given in \Cref{lem:sum-faithful-packing}.

For each index $j$, let $\widehat a_j:=\Phi(\vecop(j))\bmod2^{\msum(P,N)}$ be the standard residue in $[0,2^{\msum(P,N)})$. When the requested bit-width is $m\ge\msum(P,N)$, define
\begin{equation}\label{eq:sum-top-bit-scaling}
a_j:=2^{m-\msum(P,N)}\widehat a_j\in[0,2^m).
\end{equation}
Simply viewing $\widehat a_j$ as a low-bit $m$-bit integer is not sound: negative exact packed values wrap modulo $2^{\msum(P,N)}$, and a true witness can leave a nonzero multiple of that smaller modulus. The top-bit scaling in \Cref{eq:sum-top-bit-scaling} preserves divisibility exactly, because $2^m\mid2^{m-\msum(P,N)}Z$ if and only if $2^{\msum(P,N)}\mid Z$.

\ignore{In \Cref{fig:toy-sum}, the first table displays every signed column, the second fixes the positional weights, the third gives the resulting standard residues, and the final diagnostics expose one decisive nonzero coordinate for each failed role.}

In the worked instance shown in \Cref{fig:toy-sum}, $R(P)=5$, $q=\lceil\log 13\rceil=4$, $B=16$, and $\msum(P,2)=20$. Thus $G_{f_1}$ is the least significant signed coordinate and $Q_{\mathrm{dum}}$ is the most significant one. The lower panels of \Cref{fig:toy-sum} make the packing and modular wraparound explicit. In hexadecimal notation, the starred residues satisfy
\begin{equation}\label{eq:toy-sum-witness-wrap}
\mathtt{FFFEF}_{16}+\mathtt{FFF01}_{16}+\mathtt{00110}_{16}=\mathtt{200000}_{16}=2\cdot2^{20}\equiv0\pmod{2^{20}}.
\end{equation}
Thus a valid exact zero need not sum to the integer $0$ after each signed column is replaced by its standard residue; it sums to a multiple of the target modulus, as required.

The candidate $c_1^{11}$ also shows why the signed table must not be identified with the standard digit blocks. Its signed vector is $(1,0,-1,1,0)$, but
\begin{equation}\label{eq:toy-sum-borrow}
\Phi(1,0,-1,1,0)=3841=\mathtt{00F01}_{16},
\end{equation}
whose base-$16$ digits from least to most significant are $(1,0,15,0,0)$: the negative consistency coordinate borrows from the guard coordinate. Finally, for a requested width $m>20$, \Cref{eq:sum-top-bit-scaling} multiplies every displayed residue by $2^{m-20}$; merely prefixing the twenty-bit words with zeros is the unsound low-bit extension ruled out by \Cref{rem:sum-low-bit-failure}.

\begin{theorem}[$\kSUM$ projection]\label{thm:sum-master-projection}
Let $k\ge2$ and let $P$ be a simple loopless pattern with $e(P)=k-1$. For every $N,n$ with $1+e(P)N^2\le n$ and every $m\ge\msum(P,N)$, the construction above gives a depth-zero projection
\begin{equation}\label{eq:sum-master-projection}
P\text{-}\SUB_N\le_{\proj}\kSUM_{n,m,k}.
\end{equation}
Moreover, $\msum(P,N)=O(k\log(2kN))$, and the projection holds for both parities of $k$.
\end{theorem}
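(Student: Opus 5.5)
The plan is to verify the three defining properties of a depth-zero projection in turn: every output bit is a constant or a literal; the list has length exactly $n$ with entries in $\Z_{2^m}$; and the target accepts exactly when the source does. The literal property follows from the construction. In every coordinate of $\vecop(j)$, only the guard coordinate $Q$ depends on the input, and there it equals $\overline X_{f,u,w}\in\{0,1\}$. I will handle this dependence by writing $\vecop(c_{f,u,w})=v^{(0)}_{f,u,w}+\overline X_{f,u,w}e_Q$, where $v^{(0)}_{f,u,w}$ is a constant vector and $e_Q$ is the unit vector for $Q$. Then $\widehat a_j$ is one of the two fixed residues $\Phi(v^{(0)})\bmod 2^{\msum}$ or $\Phi(v^{(0)}+e_Q)\bmod 2^{\msum}$, selected by the single bit $X_{f,u,w}$. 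Consequently each bit of $a_j$ is constant when the two residues agree in that position, and equals $X_{f,u,w}$ or $\overline X_{f,u,w}$ when they differ. The anchor and the dummies are constants. Capacity holds because $1+e(P)N^2\le n$, and the dummies pad the list to length $n$.

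For correctness, I will first prove the faithful-packing claim \Cref{eq:sum-faithful-packing-preview} (\Cref{lem:sum-faithful-packing}). Let $z=y_S$, so $\|z\|_\infty\le kN$. The entry bounds are as follows: $|G_f|\le k$; $|C_{a,g}|\le kN$, since labels lie in $[1,N]$ and at most $k$ indices are selected; $0\le Q\le k$; and $0\le Q_{\mathrm{dum}}\le k$. Suppose $\Phi(z)\equiv 0 \pmod{B^{R}}$, where $2^{\msum}=B^R$. Reducing modulo $B$ gives $z_0\equiv 0\pmod B$, and $|z_0|<B/2$ then forces $z_0=0$. Dividing by $B$ and inducting on the digit yields $z_i=0$ for all $i<R$. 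Combined with the top-bit scaling equivalence $2^m\mid 2^{m-\msum}Z \iff 2^{\msum}\mid Z$, this shows that a $k$-support $S$ is a zero-sum in the target if and only if $y_S=0$ over $\Z$.

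It then remains to show that some $k$-support has $y_S=0$ if and only if the source has a colored copy. For the forward direction, take $\phi$ and select the anchor together with $c_{f,\phi(a_f),\phi(b_f)}$ for each $f$. The group coordinates cancel, the consistency coordinates cancel because $\lab_a$ equals $\phi(a)$ on both edges, $Q=0$ because every selected edge is present, and $Q_{\mathrm{dum}}=0$. The support size is $1+(k-1)=k$.

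For the converse, the support-shape argument already given applies: $Q_{\mathrm{dum}}=0$ gives $d=0$, the condition $G_f=0$ gives $c_f=\zeta$, and $k=k\zeta$ gives $\zeta=1$. So $S$ consists of the anchor and exactly one candidate $(u_f,w_f)$ per group. Since the $Q$ sum is a sum of nonnegative terms, $Q=0$ forces every chosen edge to be present. Each consistency coordinate $C_{a,g}=\lab_a(g,\cdot)-\lab_a(\rho(a),\cdot)=0$ gives agreement with the reference edge. Defining $\phi(a):=\lab_a(\rho(a),u_{\rho(a)},w_{\rho(a)})$, and choosing $\phi$ arbitrarily on isolated vertices, gives a valid map by \Cref{def:psub}.

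The budget bound $\msum=qR(P)$ with $q=O(\log(2kN))$ and $R(P)\le 3e(P)+2$ gives $O(k\log(2kN))$. No step uses the parity of $k$.

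The step I expect to require care is the literal-ness of the bits after packing. A borrow from a negative coordinate can change many high bits at once, so it is not a priori clear that each bit depends on the input only through a literal. The single-variable-per-column observation above resolves this, and I would state it explicitly as the main technical point of the proof.
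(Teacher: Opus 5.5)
Your proposal is correct and follows the paper's proof essentially step by step. The shared steps are the exact support-shape argument from the group and dummy coordinates, the bound $\|y_S\|_\infty\le kN$ with the balanced-digit induction for faithful packing, the top-bit scaling equivalence, guard nonnegativity with reference-edge consistency, and the observation that each packed candidate depends on the single source bit $X_{f,u,w}$, so every output bit is a constant or a literal.
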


The complete proof is in \Cref{app:sum-projection-proofs}. The exact support argument above leaves one candidate from every group and no dummies; consistency and the nonnegative guard then recover a colored embedding. Although the packing uses ordinary integer carries, every packed candidate depends on only the single bit $X_{f,u,w}$, so each output bit is one of two constants as that bit varies and is therefore a constant, $X_{f,u,w}$, or $\overline{X}_{f,u,w}$.\ignore{ No parity condition appears anywhere in the argument. The signed-coordinate and residue panels of \Cref{fig:toy-sum} keep the exact coordinates separate from the standard base-$B$ digits and concretely display the carries and borrows handled by the proof.}

\section{Unconditional Lower Bounds for Fixed \texorpdfstring{$k$}{k}}\label{sec:fixed-k}

For fixed $k$, the strongest target consequences use the standard host substitutions of \Cref{lem:host-substitutions}: $N_{\mathrm{OV}}=\lfloor\sqrt n\rfloor$ for $\kOV$ and $N_{\oplus,\Sigma}=\lfloor\sqrt{(n-1)/(k-1)}\rfloor$ for the two list targets. The first is available because every edge group occupies its own $\kOV$ class, whereas the second is forced because all $k-1$ edge groups share one indexed list in $\kXOR$ and $\kSUM$. Thus the base-$n$ versus base-$(n/k)$ distinction below is a target-packing effect, not a difference in source hardness.

Although the depth-two result \Cref{thm:D} already applies to sufficiently large fixed $k$\ifhidedepththree\else, and the top-disjunction depth-three result \Cref{thm:E} does so as well\fi, \Cref{thm:B} is not subsumed by \ifhidedepththree it\else these later results\fi. Its distinctive content is an exponent linear in $k$ at every fixed depth, with one target-side rate uniform in both $d$ and $k$.

\begin{theorem}[B: unconditional lower bounds for fixed $k$]\label{thm:B}
Let $k_0\in\N$ be the universal threshold of \Cref{lem:expander-pattern-family}, and let $\COV,\Cxor,\Csum$ be the universal construction constants fixed in \Cref{thm:A,lem:host-substitutions}. There is a universal constant $\beta>0$, independent of $k$, $d$, and $n$, such that for every fixed depth $d$ and every fixed integer $k\ge k_0$ there is a threshold $n_1(d,k)$ for which the following hold for every $n\ge n_1(d,k)$:
\begin{align}
\size_d(\kOV_{n,D,k})&\ge n^{\beta(k-1)} &&\text{for every }D\ge\COV k\log n,\label{eq:B-ov}\\
\size_d(\kXOR_{n,m,k})&\ge(n/k)^{\beta(k-1)} &&\text{for odd }k\text{ and every }m\ge\Cxor k\log(\mathrm e n/k),\label{eq:B-xor}\\
\size_d(\kSUM_{n,m,k})&\ge(n/k)^{\beta(k-1)} &&\text{for every }m\ge\Csum k\log(\mathrm e n/k).\label{eq:B-sum}
\end{align}
The same rate $\beta$ works simultaneously for every fixed $d$ and every fixed $k\ge k_0$; only the onset threshold depends on $(d,k)$.
\end{theorem}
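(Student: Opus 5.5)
The plan is to compose three earlier results: the fixed-$k$ source rate of \Cref{cor:fixed-source-rate}, the projections of \Cref{thm:A}, and the host substitutions of \Cref{lem:host-substitutions}. Projection monotonicity (\Cref{lem:projection-monotonicity}) then transfers the source bound to each target. Concretely, fix once and for all $\gamma:=c_\kappa/2$, which is admissible in \Cref{cor:fixed-source-rate} because $0<\gamma<c_\kappa$. We then aim to take $\beta:=\gamma/4=c_\kappa/8$. This choice is universal, since it depends on neither $d$, $k$, nor $n$.

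Fix $d$ and an integer $k\ge k_0$, and put $P:=P^{\mathrm{exp}}_k$, which has $e(P)=k-1$ edges by \Cref{lem:expander-pattern-family}.

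For $\kOV$, take $N:=N_{\mathrm{OV}}=\lfloor\sqrt n\rfloor$. By \Cref{eq:host-substitution-capacities} we have $N^2\le n$. By \Cref{eq:host-budget-ov}, every $D\ge\COV k\log n$ satisfies $D\ge\Dzero(P,N,n)$, so \Cref{thm:A}(i) gives $P\text{-}\SUB_N\le_{\proj}\kOV_{n,D,k}$. Projection monotonicity and \Cref{cor:fixed-source-rate} then yield
\[
\size_d(\kOV_{n,D,k})\ge N^{\gamma(k-1)}
\]
once $N\ge N_0(\gamma,d,k)$. Applying \Cref{eq:ov-host-estimates}, which gives $\log N\ge\frac14\log n$, we get $N^{\gamma(k-1)}\ge n^{\gamma(k-1)/4}=n^{\beta(k-1)}$.

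For $\kXOR$ (with $k$ odd) and for $\kSUM$, take $N:=N_{\oplus,\Sigma}$ instead. The capacity bound $1+(k-1)N^2\le n$ from \Cref{eq:host-substitution-capacities} and the row budget from \Cref{eq:host-budget-xor} feed \Cref{thm:A}(ii). For $\kSUM$, \Cref{thm:A}(iii) is fed by the sharpened bit-width budget \Cref{eq:host-budget-sum-sharp}, which requires $n\ge k^3$. Then \Cref{eq:list-host-estimates} gives
\[
N^{\gamma(k-1)}\ge (n/k)^{\beta(k-1)}.
\]

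It remains to choose the threshold $n_1(d,k)$. Since $k$ is fixed, we take $n_1(d,k)$ large enough to satisfy four conditions simultaneously:
\begin{itemize}
\item $n\ge 16k$, as required by \Cref{lem:host-substitutions};
\item $n\ge k^3$, for the sharp $\kSUM$ budget;
\item $\frac12\sqrt n\ge N_0(\gamma,d,k)$;
\item $\frac12\sqrt{n/k}\ge N_0(\gamma,d,k)$.
\end{itemize}
The last two conditions make both host sizes exceed the source onset threshold, using the lower bounds $N\ge\frac12\sqrt n$ and $N\ge\frac12\sqrt{n/k}$ from \Cref{lem:host-substitutions}. Such an $n_1(d,k)$ exists because each host size tends to infinity with $n$ for fixed $k$. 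All dependence on $(d,k)$ enters through $N_0$ and the bookkeeping conditions, while $\beta$ stays fixed, which gives the uniformity claim.

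I expect the main obstacle to be bookkeeping rather than new ideas. The step to get right is matching the budget hypotheses of \Cref{thm:B} to those of \Cref{thm:A}. The $\kSUM$ line is the delicate case: its hypothesis is stated at width $\Csum k\log(\mathrm e n/k)$, which is only justified through \Cref{eq:host-budget-sum-sharp} under $n\ge k^3$. For fixed $k$ this condition is harmless and can be absorbed into $n_1$. A second point to verify is that the $\kXOR$ bound needs $k\ge3$ odd, and this holds automatically because $k\ge k_0$.
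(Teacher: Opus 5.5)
Your proposal is correct and follows essentially the paper's proof: the same composition of \Cref{cor:fixed-source-rate}, \Cref{thm:A}, \Cref{lem:host-substitutions}, and projection monotonicity, with the same onset bookkeeping ($n\ge\max\{16k,k^3\}$ and host size at least $N_0(\gamma,d,k)$). The only difference is cosmetic. You absorb the floor loss through the logarithmic estimates $\log N\ge\frac14\log n$ and $\log N\ge\frac14\log(n/k)$, which fixes $\beta=c_\kappa/8$. The paper instead uses $N\ge\frac12\sqrt{n}$ and $N\ge\frac12\sqrt{n/k}$ with slack $2\beta<\gamma<c_\kappa$ and one extra threshold condition, which permits any $\beta<c_\kappa/2$.
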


\begin{proof}
Let $c_\kappa$ be the constant of \Cref{lem:expander-pattern-family}, and choose universal constants
\begin{equation}\label{eq:B-rate-slack}
0<2\beta<\gamma<c_\kappa.
\end{equation}
Apply \Cref{cor:fixed-source-rate} with the source rate $\gamma$, so that for every fixed $(d,k)$ and all $N\ge N_0(\gamma,d,k)$,
\begin{equation}\label{eq:B-source-rate}
\size_d(P^{\mathrm{exp}}_k\text{-}\SUB_N)\ge N^{\gamma(k-1)}.
\end{equation}

For $\kOV$, put $N=N_{\mathrm{OV}}$. Enlarge $n_1(d,k)$ so that $n\ge16k$, $N\ge N_0(\gamma,d,k)$, and $n^{\gamma/2-\beta}\ge4^{\gamma/2}$. By \Cref{eq:host-substitution-capacities,eq:ov-host-estimates,eq:host-budget-ov}, the $\kOV$ projection is admissible at every $D\ge\COV k\log n$, and \Cref{lem:projection-monotonicity,eq:B-source-rate} give
\begin{equation*}
\size_d(\kOV_{n,D,k})\ge N^{\gamma(k-1)}\ge(n/4)^{\gamma(k-1)/2}\ge n^{\beta(k-1)}.
\end{equation*}

For $\kXOR$ and $\kSUM$, put $N=N_{\oplus,\Sigma}$. Enlarge $n_1(d,k)$ further so that $n\ge\max\{16k,k^3\}$, $N\ge N_0(\gamma,d,k)$, and $(n/k)^{\gamma/2-\beta}\ge4^{\gamma/2}$. Then \Cref{eq:host-substitution-capacities,eq:list-host-estimates,eq:host-budget-xor,eq:host-budget-sum-sharp} make the respective projections admissible at the displayed budgets. The odd-$k$ projection for $\kXOR$ and the parity-free projection for $\kSUM$, followed by \Cref{eq:B-source-rate,eq:list-host-estimates}, yield
\begin{equation*}
\size_d(Q_{n,m,k})\ge N^{\gamma(k-1)}\ge\left(\frac{n}{4k}\right)^{\gamma(k-1)/2}\ge(n/k)^{\beta(k-1)},
\qquad Q\in\{\kXOR,\kSUM\},
\end{equation*}
with the oddness qualification in the $\kXOR$ case. Enlarging the single threshold $n_1(d,k)$ to satisfy all preceding requirements proves the theorem.
\end{proof}

The substantive uniformity assertion in \Cref{thm:B} is its quantifier order. Suppressing only the constraint parameters and retaining the odd-$k$ qualification for $\kXOR$, it is
\begin{equation}\label{eq:B-quantifiers}
\begin{aligned}
&\exists\,\beta>0,\ k_0\quad\forall\,d\quad\forall\,k\ge k_0\quad\exists\,n_1(d,k)\quad\forall\,n\ge n_1(d,k):\\
&\qquad
\size_d(\kOV)\ge n^{\beta(k-1)},\qquad
\size_d(\kSUM)\ge(n/k)^{\beta(k-1)},\\
&\qquad\qquad
k\text{ odd}\Longrightarrow\size_d(\kXOR)\ge(n/k)^{\beta(k-1)}.
\end{aligned}
\end{equation}
In particular, the constant hidden by the customary notation $n^{\Omega(k)}$ is not allowed to deteriorate with either $k$ or the depth. The proof obtains this target rate $\beta$ from the universal source rate $\gamma$ under the explicit slack condition $2\beta<\gamma$; \Cref{rem:fixed-source-quantifiers} records the corresponding source-side quantifier form. The factor $\mathrm e$ inside $\log(\mathrm e n/k)$ merely keeps the displayed budget positive in small boundary regimes; it has no asymptotic effect once $n/k\to\infty$.

The direct $\kXOR$ projection, and hence \Cref{eq:B-xor}, is restricted to odd $k$. The odd-to-even projection in \Cref{lem:xor-lift} applies whenever an even target parameter $K$ is written as
\begin{equation*}
K=tr+s,
\qquad
t\ge2,
\qquad
r\ge k_0\text{ odd},
\qquad
0\le s<2t,
\qquad
s=0\text{ or }t\nmid s.
\end{equation*}
For every such supplied decomposition, \Cref{cor:fixed-even-xor} gives a universal constant $\Cxorlift>0$ such that every fixed depth $d$ admits a threshold $n'_1(d,K,t,r,s)$ for which
\begin{equation}\label{eq:fixed-even-xor-consequence}
\size_d(\kXOR_{n',m',K})\ge(n'/K)^{\beta(r-1)/2}
\end{equation}
for all $n'\ge n'_1(d,K,t,r,s)$ and every $m'\ge\Cxorlift K\log n'$, where $\beta$ is the universal rate in \Cref{thm:B}.

\Cref{thm:xor-general-rows,thm:sum-general-width} give the corresponding lower bounds away from the principal $O(k\log n)$-scale row and width budgets of \Cref{thm:B}. At the principal budgets, the elementary circuits of \Cref{prop:ov-brute-force,prop:xor-brute-force,prop:sum-brute-force} give the lower-vs-upper comparison in \Cref{tab:fixed-k-comparison}.

\begin{table}[t]
\centering
\caption{Same-depth exponent-scale comparisons at the principal fixed-$k$ budgets. Each lower bound is specialized to top-disjunction depth three, and all elementary upper bounds are $\Sigma_3$ circuits; the $\kXOR$ row uses the direct odd-$k$ statement.\ignore{ No claim is made about leading constants in the exponent.}}
\label{tab:fixed-k-comparison}
\small
\renewcommand{\arraystretch}{1.18}
\setlength{\tabcolsep}{4pt}
\begin{tabular}{@{}>{\raggedright\arraybackslash}p{0.08\textwidth}>{\raggedright\arraybackslash}p{0.22\textwidth}>{\raggedright\arraybackslash}p{0.25\textwidth}>{\raggedright\arraybackslash}p{0.34\textwidth}@{}}
\toprule
Target & Transferred lower bound & Elementary $\Sigma_3$ upper bound & Exponent-scale verdict \\
\midrule
$\kOV$ & $n^{\beta(k-1)}$ & $n^{k+o(1)}$ & same-orientation match at $\Sigma_3$ and scale $n^{\Theta(k)}$ \\
$\kXOR$ & $(n/k)^{\beta(k-1)}$ & $(n/k)^{k+o(1)}$ & same-orientation match at $\Sigma_3$ and scale $(n/k)^{\Theta(k)}$ \\
$\kSUM$ & $(n/k)^{\beta(k-1)}$ & $(n/k)^{O(k)}$ & same-orientation match at $\Sigma_3$; scale $(n/k)^{\Theta(k)}$, with only $O(k)$ on the upper side \\
\bottomrule
\end{tabular}
\end{table}

\Cref{prop:regular-core-floor} shows that the specific $\kappa$-based regular-core-plus-matching route used for \Cref{thm:B} remains separated from the brute-force leading exponent by a factor greater than the core degree; \Cref{rem:regular-core-floor-scope} delimits this as a limitation of that route rather than a general barrier.

For $\kSUM$, \Cref{prop:sum-brute-force} retains the depth-two DNF of size $2^{\Theta(km)}$ but also gives a top-disjunction depth-three block-carry circuit. At the principal width $m=\Theta(k\log(\mathrm e n/k))$ and in the asymptotic fixed-$k$ regime, the block-carry circuit has size $(n/k)^{O(k)}$ and closes, for $\Sigma_3$, the exponent-scale gap created by the direct $2^{\Theta(km)}$ verifier. It does not provide a same-depth upper bound at depth two or for $\Pi_3$. The general-width estimates in \Cref{eq:F-sum-upper-scale,eq:F-sum-exponent-ratio} separate the residual support-enumeration gap at smaller widths from the carry overhead at very large widths. The parity and carry mechanisms underlying these comparisons are synthesized in \Cref{sec:synthesis}.

\section{Unconditional Lower Bounds for Growing \texorpdfstring{$k$}{k}}\label{sec:growing-k}

The fixed-pattern theorem used in \Cref{sec:fixed-k} does not control its onset threshold as the pattern grows. This section therefore changes the source lower bound while leaving every target projection unchanged. A padded clique pattern gives an unconditional floor at every fixed depth. On the padded expander family already used for fixed $k$, direct DNF-term and CNF-clause counting give an exponent linear in $k$ at depth two\ifhidedepththree\else, while a single-CNF minterm argument gives such an exponent for the top-disjunction orientation at depth three\fi. The section closes by summarizing the unconditional frontier and comparing it with a complementary row-rich PARITY route.

\subsection[A clique fallback at every fixed depth]%
  {A clique fallback at every fixed depth\footnote{The use of Beame's small-clique lower bound~\cite{Beame90} as the source statement for this section was suggested by the AI language model GPT-5.5; the authors independently verified its applicability and developed the resulting gate-count consequence (Fact~\ref{fact:beame}) and Theorem~\ref{thm:C}.}}%
\label{sec:growing-clique}

A clique packs many pattern edges into few pattern vertices. This vertex deficiency weakens the exponent from linear in $k$ to order $\sqrt{k}$, but it also lets us invoke a source lower bound that is uniform while the clique size grows. The only other restriction is the small-clique range of Beame's theorem, which produces the plateau at $k=\Theta(\log^2 n)$.

\paragraph{The imported PRAM statement and notation.} Beame works with PRIORITY CRCW PRAMs\footnote{In this model, many processors read from and write to a shared pool of memory cells concurrently; under the PRIORITY convention, when several processors write to the same cell in one step, the one of least index succeeds.} on an $N$-node graph represented by $\binom N2$ Boolean input cells, one per unordered vertex pair; every non-input cell is initially $0$, and the answer must occupy a designated cell $C_1$ after the final step. No computability condition is imposed on the transition functions, so the model is nonuniform. His function $\mathit{Clique}^N_r$ is the total monotone function accepting exactly the $N$-node graphs containing an $r$-clique, hence it is the function $K_r\text{-}\CLIQUE_N$ of \Cref{def:clique-pattern}. For a depth-$d$ circuit with $G$ non-input gates, the pointwise dictionary is
\begin{equation*}
\begin{aligned}
 n_{\mathrm{Bea90}}&=N, & m_{\mathrm{Bea90}}&=\binom N2, & k_{\mathrm{Bea90}}&=r,\\
 T_{\mathrm{Bea90}}&=T\le d+2, & c(n)_{\mathrm{Bea90}}&=G+\binom N2+O(1), & \mathit{Clique}^n_k&=K_r\text{-}\CLIQUE_N.
\end{aligned}
\end{equation*}
Throughout~\cite[Sec.~2]{Beame90}, $n$ is the number of graph vertices and $m=\binom n2$ is the number of input variables; the phrase ``$n$ inputs'' in Corollary~3.8 is therefore a notational slip. Theorem~3.1(b) of~\cite{Beame90} states the memory-cell bound $c(N)\ge N^{r/(43T^3)}$, even with infinitely many processors, for $r\le\log N$, for $r$ above one absolute threshold, and for all sufficiently large $N$; its proof also uses $c(N)\ge N$. We quote this PRAM statement and, for the constant, the three terminal cases of its proof. We do not use~\cite[Cor.~3.8]{Beame90}: that corollary does not define its circuit ``size,'' and its one-sentence simulation does not resolve the gate-versus-wire convention.

\begin{fact}[Gate-count consequence of Beame's small-clique lower bound]\label{fact:beame}
For every fixed depth $d$, there are a constant $a_d>0$ and integers $r_d,N_d\ge1$ such that
\begin{equation}\label{eq:beame-small-clique}
\size_d(K_r\text{-}\CLIQUE_N)\ge N^{a_dr}
\end{equation}
where $K_r\text{-}\CLIQUE_N$ is the total monotone function on $\binom N2$ edge variables that accepts exactly when the represented graph contains an $r$-clique; see \Cref{def:clique-pattern}. The bound holds whenever $N\ge N_d$ and $r_d\le r\le\log N$. In particular, the upper cap $r\le\log N$ is depth-independent; only the exponent rate and the lower onset may depend on $d$. The proof permits $a_d:=1/(2000(d+2)^3)$ and $r_d\ge3000(d+2)^3$, after enlarging $r_d$ to absorb Beame's absolute threshold for $r$.
\end{fact}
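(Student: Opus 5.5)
The plan is to convert a small depth-$d$ circuit into a PRIORITY CRCW PRAM, apply Beame's Theorem~3.1(b), and then solve the resulting inequality for the gate count $G$. Fix $d$ and a depth-$d$ circuit with $G$ non-input gates computing $K_r\text{-}\CLIQUE_N$, with negations normalized to input leaves.

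\textbf{Simulation.} Assign one memory cell to each non-input gate, plus the designated output cell $C_1$. Evaluate the circuit in about $T\le d+2$ PRAM steps, one per gate layer plus $O(1)$ steps for negated literals and the final copy to $C_1$. Use infinitely many processors, which Beame's bound allows. An OR gate is computed by initializing its cell to $0$ and having one processor per child write $1$ when that child is $1$. An AND gate is computed by computing the negation cell and writing $0$ on any false child. Under PRIORITY, concurrent writes of the same value are harmless. Input negations are read directly by the processors, so no extra cells are needed. This gives the dictionary already displayed: $c(N)\le G+\binom N2+O(1)$ and $T\le d+2$.

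\textbf{Applying Beame.} For $r\le\log N$, $r$ above Beame's absolute threshold, and $N$ large, Theorem~3.1(b) gives
\[
G+\tbinom N2+O(1)\ge N^{r/(43(d+2)^3)}.
\]

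\textbf{Absorbing the input-cell term.} This is the main obstacle: $\binom N2\approx N^2$ must be negligible next to the right-hand side, so the exponent $r/(43(d+2)^3)$ must exceed $2$ with room to spare. I would therefore impose $r\ge r_d:=3000(d+2)^3$ (enlarged past Beame's threshold). Then the exponent is at least $3000/43>69$, so
\[
\tbinom N2+O(1)\le\tfrac12N^{r/(43(d+2)^3)}
\]
for $N\ge N_d$. Hence
\[
G\ge\tfrac12N^{r/(43(d+2)^3)}\ge N^{r/(2000(d+2)^3)},
\]
with the factor $\tfrac12$ swallowed by the large slack between $43$ and $2000$ once $N\ge2$ and $r\ge1$. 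This yields $a_d=1/(2000(d+2)^3)$.

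\textbf{Remaining check.} I would also check one subtlety: whether Beame's memory count includes cells written only by processors. If that convention is unclear, the generous constants $43\to2000$ absorb any constant-factor blowup, for instance doubling the cells for the AND negations. That is the reason for choosing a loose $a_d$.
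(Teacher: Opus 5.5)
Your architecture is the same as the paper's. You simulate the circuit by a PRIORITY CRCW PRAM with $c(N)=G+\binom N2+O(1)$ cells and $T\le d+2$ steps, apply Beame's Theorem~3.1(b), take $r_d$ of order $(d+2)^3$ so that $\binom N2$ is negligible, and halve the exponent.

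\textbf{The step that fails as written.} You apply Theorem~3.1(b) with its published exponent $r/(43T^3)$. Write $a=\log_N c(N)$. One of the three terminal cases of Beame's proof, namely $T\ge(3/10)(r/a)^{1/3}-2$, only yields $a\ge27r/(1000(T+2)^3)$. This reaches $r/(43T^3)$ only when $T\ge40$, whereas here $T=d+2$ may be a small fixed integer. What the proof establishes uniformly for all $T\ge1$ is $c(N)\ge N^{r/(1000T^3)}$, and that is what the paper uses.

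\textbf{Why your constants still hold.} They survive this correction, but not for the reason you give. Set $b_d:=1/(1000(d+2)^3)$. The choice $r\ge3000(d+2)^3=3/b_d$ makes $b_dr\ge3$. That is exactly what is needed for $\binom N2+O(1)\le\frac12N^{b_dr}$ at large $N$. Then $G\ge\frac12N^{b_dr}\ge N^{b_dr/2}$, which gives $a_d=1/(2000(d+2)^3)$. So the room between $43$ and $2000$ is not spare slack for cell-count blowups. The passage from $43$ to $1000$ is consumed by the corrected exponent, and only the final factor $2$ is genuine slack.

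\textbf{The simulation must respect Beame's model.} In that model every non-input cell starts at $0$, and every processor writes somewhere at every step. This forces three repairs.
\begin{itemize}
\item An AND gate must be stored as its complement in a $0$-initialized cell. Triggered processors should write $1$ when a child is false, not $0$ as you wrote.
\item An untriggered processor must send its write to a separate junk cell. If it wrote $0$ into the gate cell, that write could win the PRIORITY resolution over a simultaneous $1$ and destroy the OR.
\item $C_1$ is itself one of the $\binom N2$ input cells, so it starts at an edge variable's value and cannot serve as the accumulator. The output gate needs its own cell. Your final copy step, with a polarity flip for an AND output, is required for this reason; $C_1$ is not an extra cell you add.
\end{itemize}
With these repairs your argument coincides with the paper's proof.
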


\begin{proof}
The plan is to simulate a circuit for $K_r\text{-}\CLIQUE_N$ by a PRIORITY CRCW PRAM whose memory cells are the circuit's input variables and gates, so that Beame's memory-cell bound becomes a gate-count bound.
Let $C$ be a depth-$d$ circuit for $K_r\text{-}\CLIQUE_N$ with $G$ non-input gates. Propagate constant gates and flatten every edge joining gates of the same type, neither of which increases the depth or $G$. Give each of the $\binom N2$ input variables and each non-input gate one memory cell, together with one junk cell, and use one processor per circuit wire. The unrestricted processor count is legitimate because Beame's theorem explicitly permits infinitely many processors. Every non-input cell starts at $0$, as required by his model. An OR-gate cell stores the gate value, whereas an AND-gate cell stores its complement. Because adjacent gate types alternate, this convention makes every gate-to-gate wire act uniformly: when evaluating a layer, its processor is triggered writes $1$ to its gate cell exactly when it reads a $0$ in the predecessor cell, which is exactly when its predecessor makes the stored disjunction true. Indeed, an AND-gate cell holds $0$ exactly when that gate is true, which makes a target disjunction true, and an OR-gate cell holds $0$ exactly when that gate is false, which makes a target conjunction false and hence its stored complement true. Since the model is nonuniform, an input-wire processor can handle positive and negated literals by its local transition rule, so no cells are needed for negations. Beame's model has every processor write somewhere at each step, so a nontriggered or idle processor writes $0$ to the junk cell; writing $0$ to the gate cell could overwrite a $1$ written simultaneously by a triggered processor and would destroy the PRIORITY implementation of OR. Thus every processor writing to a gate cell writes the same value $1$, so PRIORITY resolves the concurrent write correctly; if no processor writes there, the initial $0$ is the correct stored value. Beame's designated output cell $C_1$ is itself an input cell, so it starts at the value of the corresponding edge variable rather than at the $0$ that the accumulation above requires; it therefore cannot serve as an accumulator, and the output gate must be evaluated in a cell of its own. Evaluating one circuit layer per PRAM step and then using one final step to copy the output into $C_1$, with a polarity flip when the output gate is an AND gate, takes at most $T\le d+2$ steps. Hence $C$ yields a PRIORITY CRCW PRAM with $c(N):=G+\binom N2+O(1)$ memory cells. Moreover, $c(N)\ge N$ for $N\ge3$, as used in Beame's memory-cell argument.

Set $a:=\log_N c(N)$. Beame states the exponent $r/(43T^3)$, but we use the weaker $r/(1000T^3)$ because it is what the three terminal cases in the proof of~\cite[Thm.~3.1(b)]{Beame90} establish uniformly for all $T\ge1$. If $a\ge r/6$, then $a\ge r/(1000T^3)$ because $T\ge1$. If $T\ge(3/10)(r/a)^{1/3}-2$, then
\begin{equation*}
a\ge\frac{27r}{1000(T+2)^3}\ge\frac{r}{1000T^3};
\end{equation*}
where the second inequality uses $T\ge1$. This branch reaches the published constant only once $T>39$ (that is, $T\ge40$ for integer $T$), whereas our $T=d+O(1)$ may be a small fixed integer. In the remaining terminal case, Beame's displayed inequality $343T^3a/(8r)\ge343/344$ gives $a\ge r/(43T^3)$\ignore{ and hence the weaker bound}. Consequently, the common conclusion of all three cases is
\begin{equation*}
c(N)\ge N^{r/(1000T^3)}\ge N^{b_dr},
\qquad
b_d:=\frac{1}{1000(d+2)^3},
\end{equation*}
throughout the range $r\le\log N$, once the absolute clique-order threshold and the sufficiently-large-$N$ condition are met.

Let $r_\star$ be the absolute clique-order threshold in Beame's theorem, set $a_d:=b_d/2$, and choose
\begin{equation*}
r_d:=\max\left\{r_\star,\left\lceil\frac{3}{b_d}\right\rceil\right\}=\Theta(d^3).
\end{equation*}
After enlarging $N_d$ if necessary, for every $r\ge r_d$ the preceding memory bound and the definition of $c(N)$ give
\begin{equation*}
G\ge N^{b_dr}-\binom N2-O(1)\ge\frac12N^{b_dr}\ge N^{a_dr},
\end{equation*}
which proves \Cref{eq:beame-small-clique} under the paper's non-input-gate size convention.
\end{proof}

The explicit choices in \Cref{fact:beame} permit $a_d=\Theta(d^{-3})$ and $r_d=\Theta(d^3)$. Hence the interval $r_d\le r\le\log N$ can be nonempty only when $d=O((\log N)^{1/3})$; a convenient sufficient asymptotic regime is $d=o((\log N)^{1/3})$, subject also to $N\ge N_d$. All results below keep $d$ fixed.

The upper range $r\le\log N$ is the feature needed below. Rossman's stronger exponent is proved for every fixed clique order $r$, whereas a superconstant-$r$ extension is presented only as a possibility~\cite[Thm.~1.2; Sec.~6, final paragraph before ``Open Questions'']{Rossman08}.\ignore{ In that same paragraph, Rossman's phrase ``as is the case with the lower bounds of Lynch and Beame'' records Beame's result as already covering superconstant clique order.} Rossman's theorem therefore supplies no uniform onset threshold for the choice $r=r(n)$ used here.

We combine \Cref{fact:beame} with the padded clique pattern $P^{\mathrm{clq}}_{k,r}$ of \Cref{def:clique-pattern}. By \Cref{lem:clique-source-chain}, ordinary clique projects to colored clique because completeness forces the color-respecting map to be injective, and colored clique projects to the padded pattern by setting every padding-edge variable to $1$.

\begin{theorem}[C: growing $k$ at every fixed depth]\label{thm:C}
Fix a constant $\varepsilon>0$. For every fixed depth $d$, the following hold along every sequence $k=k(n)$ satisfying $k\to\infty$ and $k\le n^{1-\varepsilon}$, for all sufficiently large $n$. Define
\begin{equation}\label{eq:C-rhos}
\rho_{\mathrm{OV}}:=\min\{\sqrt{k-1},\log n\},
\qquad
\rho:=\min\{\sqrt{k-1},\log(n/k)\}.
\end{equation}
Then
\begin{align}
\size_d(\kOV_{n,D,k})&\ge\exp\bigl(\Omega_d(\rho_{\mathrm{OV}}\log n)\bigr) &&\text{for every }D\ge\COV k\log n,\label{eq:C-ov}\\
\size_d(\kXOR_{n,m,k})&\ge\exp\bigl(\Omega_d(\rho\log(n/k))\bigr) &&\text{for odd }k\text{ and every }m\ge\Cxor k\log(\mathrm e n/k),\label{eq:C-xor}\\
\size_d(\kSUM_{n,m,k})&\ge\exp\bigl(\Omega_d(\rho\log(n/k))\bigr) &&\text{for every }m\ge\Csum k\log n.\label{eq:C-sum}
\end{align}
The implied constants in the three lower bounds depend only on $d$.
\end{theorem}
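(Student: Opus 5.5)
The plan is to chain three depth-zero projections and then optimize the clique order. The chain is: ordinary clique $K_r\text{-}\CLIQUE_N$, then the padded clique pattern $P^{\mathrm{clq}}_{k,r}$-$\SUB_N$ via \Cref{lem:clique-source-chain}, then each target via \Cref{thm:A}. We apply \Cref{fact:beame} at the source and transfer the bound through \Cref{lem:projection-monotonicity}. The host size is $N=N_{\mathrm{OV}}$ for $\kOV$ and $N=N_{\oplus,\Sigma}$ for the list targets, as in \Cref{lem:host-substitutions}. Its hypothesis $n\ge16k$ holds for large $n$ because $k\le n^{1-\varepsilon}$. The budgets \Cref{eq:host-budget-ov,eq:host-budget-xor,eq:host-budget-sum-general} make the projections admissible at exactly the widths displayed in the theorem. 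This is why the $\kSUM$ line uses $\Csum k\log n$: the sharpened \Cref{eq:host-budget-sum-sharp} needs $n\ge k^3$, which is unavailable here.

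The clique order is chosen as follows. The padded pattern needs $\binom r2\le k-1$, so we take
\begin{equation*}
r:=\min\Bigl\{\bigl\lfloor\sqrt{k-1}\bigr\rfloor,\ \lfloor\log N\rfloor\Bigr\},
\end{equation*}
and pad the core $K_r$ with $k-1-\binom r2$ matching edges. \Cref{fact:beame} additionally needs $r\ge r_d$ and $N\ge N_d$. Since $k\to\infty$, we have $\sqrt{k-1}\ge r_d$ eventually. Since $N\ge\frac12(n/k)^{1/2}\ge\frac12 n^{\varepsilon/2}$, we also get $\log N\to\infty$ and $N\ge N_d$ for large $n$. This gives $\size_d(\text{target})\ge N^{a_d r}=\exp(a_d r\log N)$.

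It remains to compare $r\log N$ with the displayed exponents, which is the only delicate bookkeeping.

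\textbf{$\kOV$.} By \Cref{eq:ov-host-estimates}, $\log N\ge\frac14\log n$, and $\log N\le\frac12\log n$. Hence $r\ge\frac12\min\{\sqrt{k-1},\frac14\log n\}=\Omega(\rho_{\mathrm{OV}})$, where the floors cost at most a factor of two once both quantities exceed $2$. Therefore $r\log N=\Omega(\rho_{\mathrm{OV}}\log n)$.

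\textbf{List targets.} By \Cref{eq:list-host-estimates}, $\log N\ge\frac14\log(n/k)$, and $\log N\le\frac12\log(2n/k)$. This gives $r=\Omega(\rho)$ and $r\log N=\Omega(\rho\log(n/k))$. The $\kXOR$ line keeps the odd-$k$ restriction inherited from \Cref{thm:A}(ii).

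The implied constant is $a_d$ times absolute factors, so it depends only on $d$. All thresholds depend on $d$ and $\varepsilon$ through $r_d$, $N_d$ and $n\ge16k$.

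The main obstacle I expect is checking that \Cref{lem:clique-source-chain} really provides a projection uniform in $(k,r)$ at the same host size $N$. In particular, the padded pattern must have exactly $k-1$ edges and no hidden constraint such as $r\ge3$ or a vertex-count condition. A second point is that Beame's cap $r\le\log N$ must be applied with the source $N$, not with $n$. This is handled by the $\Theta(\log n)$ or $\Theta(\log(n/k))$ comparisons above, which rely on $k\le n^{1-\varepsilon}$ to keep $\log(n/k)=\Theta_\varepsilon(\log n)$ positive and growing.
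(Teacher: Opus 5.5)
Your proposal is correct and follows the paper's proof essentially step for step. It uses the same chain $K_r\text{-}\CLIQUE_N\le_{\proj}P^{\mathrm{clq}}_{k,r}\text{-}\SUB_N\le_{\proj}$ target, the same host substitutions and budgets (including the width $\Csum k\log n$ for $\kSUM$, since $n\ge k^3$ is unavailable), and Beame's bound with $r\ge r_d$, $N\ge N_d$, and $r\le\log N$. The only cosmetic difference is that you cap the clique order directly by $\lfloor\log N\rfloor$, whereas the paper takes $r=\lfloor\rho_{\mathrm{OV}}/4\rfloor$ (resp.\ $\lfloor\rho/4\rfloor$); your choice makes Beame's upper range immediate, at the cost of one extra floor in the comparison $r=\Omega(\rho)$.
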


\begin{proof}
Fix $d$, and let $a_d,r_d,N_d$ be supplied by \Cref{fact:beame}. We use the same argument for the three targets, with base parameter
\begin{equation*}
B:=
\begin{cases}
n,&Q=\kOV,\\
n/k,&Q\in\{\kXOR,\kSUM\},
\end{cases}
\end{equation*}
and host size $N=N_{\mathrm{OV}}$ or $N=N_{\oplus,\Sigma}$ from \Cref{eq:fixed-k-host-substitutions}, respectively. Since $k\le n^{1-\varepsilon}$, one has $B\to\infty$ in every case, and eventually $n\ge16k$. The host estimates in \Cref{eq:ov-host-estimates,eq:list-host-estimates} therefore give
\begin{equation}\label{eq:C-host-log}
\log N\ge\frac14\log B.
\end{equation}

Let $\rho_Q$ denote $\rho_{\mathrm{OV}}$ for $Q=\kOV$ and $\rho$ otherwise, and set
\begin{equation}\label{eq:C-clique-choice}
r:=\left\lfloor\frac{\rho_Q}{4}\right\rfloor.
\end{equation}
Because $k\to\infty$ and $B\to\infty$, we have $\rho_Q\to\infty$, so $r\ge r_d$, $N\ge N_d$, and $r\ge\rho_Q/8$ for all sufficiently large $n$. The pattern-edge allowance follows from $r\le\sqrt{k-1}$:
\begin{equation*}
\binom r2\le\frac{r^2}{2}\le k-1.
\end{equation*}
Beame's upper-range condition also holds, since $\rho_Q\le\log B$ and therefore $r\le\rho_Q/4\le\tfrac14\log B\le\log N$ by \Cref{eq:C-host-log}. Thus \Cref{lem:clique-source-chain,fact:beame} imply
\begin{equation}\label{eq:C-source-lower}
\size_d(P^{\mathrm{clq}}_{k,r}\text{-}\SUB_N)\ge\size_d(K_r\text{-}\CLIQUE_N)\ge N^{a_dr}.
\end{equation}

The pattern $P^{\mathrm{clq}}_{k,r}$ has exactly $k-1$ edges, so the appropriate projection in \Cref{thm:A} applies. Its maximum degree is $r-1$, but this causes no budget loss: by \Cref{eq:comparison-count}, the consistency count is governed by $|\Ical(P)|\le2e(P)=2(k-1)$ rather than by $\Delta(P)$. With the universal constants fixed in \Cref{thm:A,lem:host-substitutions}, \Cref{eq:host-budget-ov,eq:host-budget-xor} give the displayed $\kOV$ and $\kXOR$ budgets, while \Cref{eq:host-budget-sum-general} gives the $\kSUM$ width $m\ge\Csum k\log n$. This is why \Cref{eq:C-sum} requires the larger width $\Csum k\log n$ rather than the sharper fixed-$k$ budget.

Composing the target projection with \Cref{eq:C-source-lower} and using $r\ge\rho_Q/8$ and \Cref{eq:C-host-log}, we obtain
\begin{equation*}
\log\size_d(Q)\ge a_dr\log N\ge\frac{a_d}{32}\rho_Q\log B.
\end{equation*}
This is exactly \Cref{eq:C-ov,eq:C-xor,eq:C-sum}, with the oddness condition inherited only by $\kXOR$.
\end{proof}

The restriction $k\le n^{1-\varepsilon}$ ensures that the list-target host size grows and that $\log(n/k)$ is a meaningful asymptotic base. The larger $\kSUM$ width requirement in \Cref{eq:C-sum} is also genuine: at $N=\Theta(\sqrt{n/k})$, the exact requirement $\msum(P,N)=O(k\log(kN))$ is not uniformly $O(k\log(n/k))$ when $k$ approaches $n$. The sharper estimate in \Cref{eq:host-budget-sum-sharp} applies whenever $n\ge k^3$, including the parameter ranges used in \Cref{thm:B,thm:D}\ifhidedepththree\else, \Cref{thm:E}\fi, and \Cref{thm:F}.

\begin{corollary}[The growing-$k$ frontier and its plateau]\label{cor:plateau}
For every fixed depth $d$ and constant $\varepsilon>0$, the budgets of \Cref{thm:C} imply, along every sequence $k=k(n)\to\infty$ with $k\le n^{1-\varepsilon}$,
\begin{equation}\label{eq:C-common-epsilon-form}
\log\size_d\ge\Omega_{d,\varepsilon}\bigl(\min\{\sqrt{k},\log n\}\cdot\log n\bigr)
\end{equation}
for all three targets, with $k$ odd for $\kXOR$. In the subpolynomial regime $k=n^{o(1)}$, this sharpens to the common form
\begin{equation}\label{eq:C-subpolynomial-form}
\size_d\ge n^{\Omega_d(\min\{\sqrt{k},\log n\})}.
\end{equation}
Consequently the exponent in base $n$ grows as $\sqrt{k}$ until $k=\Theta(\log^2 n)$ and then plateaus at order $\log n$; at $k=\Theta(\log^2 n)$ the lower bound is $\exp(\Omega_d(\log^2 n))$.
\end{corollary}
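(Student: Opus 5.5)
The corollary is a reparametrization of \Cref{thm:C}. The plan is to convert the mixed-base bounds \Cref{eq:C-ov,eq:C-xor,eq:C-sum} into the common base $n$, using only the range $k\le n^{1-\varepsilon}$.

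\textbf{Step 1: the $\kOV$ line.} Here $\rho_{\mathrm{OV}}=\min\{\sqrt{k-1},\log n\}$. Since $k\ge2$, we have $\sqrt{k-1}\ge\sqrt{k}/\sqrt2$, so
\begin{equation*}
\rho_{\mathrm{OV}}\ge\tfrac1{\sqrt2}\min\{\sqrt k,\log n\},
\end{equation*}
and \Cref{eq:C-ov} already gives \Cref{eq:C-common-epsilon-form}.

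\textbf{Step 2: the list targets.} Because $k\le n^{1-\varepsilon}$, we have $\log(n/k)\ge\varepsilon\log n$. Hence
\begin{equation*}
\rho\ge\min\{\sqrt{k}/\sqrt2,\ \varepsilon\log n\}\ge\min\{1/\sqrt2,\varepsilon\}\cdot\min\{\sqrt k,\log n\},
\end{equation*}
and the multiplier $\rho\log(n/k)$ is at least $\varepsilon\,\rho\log n$. Substituting into \Cref{eq:C-xor,eq:C-sum} yields \Cref{eq:C-common-epsilon-form}, with a constant depending on $(d,\varepsilon)$. For $\kXOR$ the oddness of $k$ is inherited from \Cref{thm:C}. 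The $\kSUM$ line uses the larger width $m\ge\Csum k\log n$, matching ``the budgets of \Cref{thm:C}''.

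\textbf{Step 3: the subpolynomial regime.} If $k=n^{o(1)}$, then for every fixed $\varepsilon$ (say $\varepsilon=1/2$) eventually $k\le n^{1/2}$, so Step 2 applies with a fixed $\varepsilon$ and the constant depends only on $d$. More sharply, $\log(n/k)=(1-o(1))\log n$. Exponentiating $\log\size_d\ge\Omega_d(\min\{\sqrt k,\log n\}\log n)$ gives \Cref{eq:C-subpolynomial-form}.

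\textbf{Step 4: the plateau.} The minimum $\min\{\sqrt k,\log n\}$ equals $\sqrt k$ exactly when $k\le\log^2 n$, and equals $\log n$ otherwise. So the base-$n$ exponent grows like $\sqrt k$ up to $k=\Theta(\log^2 n)$ and is $\Theta(\log n)$-order from there on. At $k=\Theta(\log^2 n)$ the bound is $\exp(\Omega_d(\log^2 n))$.

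The plateau statement should be read as a statement about this lower bound only, not a claim that the true complexity plateaus. The only delicate point is quantifier bookkeeping. The $\varepsilon$-dependence is unavoidable in \Cref{eq:C-common-epsilon-form} because $\log(n/k)$ can be as small as $\varepsilon\log n$. It disappears in \Cref{eq:C-subpolynomial-form} because fixing $\varepsilon=1/2$ there is legitimate. No step poses a real obstacle.
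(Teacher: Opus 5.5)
Your proposal is correct and follows the paper's proof essentially line for line: both use $\sqrt{k-1}=\Theta(\sqrt{k})$, the bound $\log(n/k)\ge\varepsilon\log n$ under $k\le n^{1-\varepsilon}$ for the list targets, and $\log(n/k)=(1-o(1))\log n$ (or, as you also note, a fixed choice such as $\varepsilon=1/2$) to remove the $\varepsilon$-dependence in the subpolynomial regime before exponentiating. Your explicit constant $\min\{1/\sqrt2,\varepsilon\}$ and your remark that the plateau describes only this lower bound are harmless additions that do not change the argument.
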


\begin{proof}
Since $k\to\infty$, one has $\sqrt{k-1}=\Theta(\sqrt{k})$. Under $k\le n^{1-\varepsilon}$, one has $\log(n/k)\ge\varepsilon\log n$, so the list-target expressions in \Cref{thm:C} are bounded below by the right-hand side of \Cref{eq:C-common-epsilon-form} up to a constant depending on $\varepsilon$; the $\kOV$ expression already has that form. If $k=n^{o(1)}$, then $\log(n/k)=(1-o(1))\log n$, so the dependence on $\varepsilon$ disappears and exponentiating gives \Cref{eq:C-subpolynomial-form}. The two terms in the minimum balance when $\sqrt{k}=\Theta(\log n)$.
\end{proof}

The two branches have a simple source-side interpretation. The edge allowance permits a clique on only $\Theta(\sqrt{k})$ vertices, because $K_r$ uses $\binom r2$ of the available $k-1$ edges; in contrast, the expander source has $\Theta(k)$ vertices. The source exponent tracks the clique's vertices and therefore rises only as $\sqrt{k}$. Once $r$ reaches $\Theta(\log N)$, Beame's upper range becomes the binding cap, and the lower bound remains at $\exp(\Omega_d(\log^2 n))$ even as $k$ continues to grow.

\subsection{Depth two: an exponent linear in \texorpdfstring{$k$}{k}}\label{sec:growing-depth-two}

At depth two, no switching argument is needed. The DNF orientation uses the standard minimal-positive-input principle: each minimal positive input forces a distinct accepting term. The source-specific CNF orientation constructs exponentially many minimal transversals of the hypergraph of colored copies, each of which must appear as a distinct clause. We use the padded expander family $P^{\mathrm{exp}}_k$ of \Cref{lem:expander-pattern-family}, but only because it has no isolated vertices, has $\Theta(k)$ vertices, and contains a vertex of degree at least two; expansion and $\kappa$ play no role.

For a map $\phi:V(P)\to[N]$, let
\begin{equation}\label{eq:depth-two-copy}
E_\phi:=\{X_{f,\phi(a_f),\phi(b_f)}:f=(a_f,b_f)\in E(P)\}.
\end{equation}
The assignment with positive support exactly $E_\phi$ is a minimal positive input of $P\text{-}\SUB_N$, and
\begin{equation}\label{eq:psub-monotone-dnf}
P\text{-}\SUB_N=\bigvee_{\phi:V(P)\to[N]}\ \bigwedge_{x\in E_\phi}x.
\end{equation}

\begin{theorem}[D: depth two, both orientations]\label{thm:D}
There are universal constants $k_D\ge k_0$, $\gamma_D>0$, and $\xi_D>0$ such that the following hold.
\begin{enumerate}[label=(\roman*),itemsep=2pt]
\item For every $k\ge k_D$ and every $N\ge k^2$,
\begin{equation}\label{eq:D-source}
\size_2(P^{\mathrm{exp}}_k\text{-}\SUB_N)\ge N^{\gamma_Dk}.
\end{equation}
\item For every $k\ge k_D$, the following target bounds hold whenever the indicated range condition is satisfied:
\begin{enumerate}[label=(\alph*),itemsep=2pt]
\item if $k\le\xi_D n^{1/4}$, then
\begin{equation}\label{eq:D-target-ov}
\size_2(\kOV_{n,D,k})\ge n^{\Omega(k)}
\end{equation}
for every $D\ge\COV k\log n$;
\item if $k\le\xi_D n^{1/5}$, then
\begin{align}
\size_2(\kXOR_{n,m,k})&\ge(n/k)^{\Omega(k)} &&\text{for odd }k\text{ and every }m\ge\Cxor k\log(\mathrm e n/k),\label{eq:D-target-xor}\\
\size_2(\kSUM_{n,m,k})&\ge(n/k)^{\Omega(k)} &&\text{for every }m\ge\Csum k\log(\mathrm e n/k).\label{eq:D-target-sum}
\end{align}
\end{enumerate}
\end{enumerate}
All implied constants are universal.
\end{theorem}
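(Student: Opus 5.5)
The plan is to prove part~(i) separately for the two depth-two orientations, after flattening adjacent gates of the same connective. Then part~(ii) follows by composing with the projections of \Cref{thm:A}, using \Cref{lem:projection-monotonicity} and the host substitutions of \Cref{lem:host-substitutions}. Write $P=P^{\mathrm{exp}}_k$. By \Cref{lem:expander-pattern-family} it has $e(P)=k-1$, no isolated vertices and maximum degree $4$, so $v(P)\ge (k-1)/2$, and it has a core vertex $a$ of degree $4$.

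\emph{DNF orientation.} I use the minimal positive inputs $E_\phi$ of \Cref{eq:depth-two-copy}. Since $P$ has no isolated vertices, $\phi$ is recoverable from $E_\phi$, so there are $N^{v(P)}$ distinct such inputs. Fix a DNF computing the function and a term $T$ accepting $\1_{E_\phi}$. The term $T$ contains no negated variable outside $E_\phi$ (those are $0$) and no negated variable inside. It must contain every $x\in E_\phi$ positively. Otherwise $\1_{E_\phi\setminus\{x\}}$ would satisfy $T$, yet it is rejecting: the group of $x$'s pattern edge then has no present edge. Hence $T\supseteq E_\phi$. A term containing $E_\phi\cup E_\psi$ with $\phi\ne\psi$ would accept $\1_{E_\phi}$ only if $E_\psi\subseteq E_\phi$, which is impossible since each group of $E_\phi$ has exactly one variable and $E_\psi\ne E_\phi$. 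So distinct $\phi$ force distinct terms, giving size at least $N^{v(P)}\ge N^{(k-1)/2}$.

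\emph{CNF orientation.} First I delete negative literals. Monotonicity of the function implies that dropping negated literals from clauses preserves correctness: a monotone function's CNF can be replaced by its positive parts, since any clause falsified on a rejecting input $y$ remains falsified on the all-zero-extension argument. I would phrase this via the standard fact that every prime implicate of a monotone function is positive and each must be implied by some clause. The cleaner route is this: for each minimal transversal $H$, the input $\1_{\overline H}$ (all variables outside $H$ set to $1$) is rejecting, so some clause $K$ is falsified there. Then every positive literal of $K$ lies in $H$, and every negative literal lies outside $H$. Restricted to positive literals, $K^+\subseteq H$. Moreover $K^+$ must itself be a transversal: otherwise some copy $E_\phi$ avoids $K^+$, and the input $\1_{E_\phi}$ augmented appropriately falsifies $K$. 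I expect this to be the main obstacle, namely checking that negative literals in $K$ cannot rescue it. I would handle it by evaluating $K$ on $\1_{E_\phi}$: its negative literals outside $H$ may be satisfied there, so instead I evaluate on $\1_{\overline H\cup E_\phi}$ minus nothing. Since the negative literals of $K$ lie in $\overline H$, they remain false, while the positive literals meeting $E_\phi$ must be excluded. So I take the input $\1_{\overline{H}}\wedge$ the restriction. This is still accepting if $E_\phi\subseteq\overline H$, contradicting that $H$ is a transversal. The correct argument is therefore: $K$ is falsified on $\1_{\overline H}$, and for each $x\in H$, minimality gives a copy $E_\phi$ with $E_\phi\cap H=\{x\}$. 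Then $\1_{\overline H\cup\{x\}}\supseteq\1_{E_\phi}$ is accepting, so $K$ is satisfied there, which forces $x\in K$. Hence $H\subseteq K^+\subseteq H$, and distinct minimal transversals force distinct clauses.

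It remains to produce many minimal transversals via the choice functions $h:[N]\to\{\text{edges at }a\}$, as in the introduction. Set $H_h=\{X_{f,u,w}: f=h(i),\ \lab_a(f,u,w)=i\}$. Every copy assigns some $i$ to $a$ and uses edge $h(i)$ with label $i$, so $H_h$ is a transversal. For minimality, given $x\in H_h$ on edge $h(i)$, fix the endpoint labels encoded by $x$ (which forces $\phi(a)=i$); then the copy meets $H_h$ only in $x$, since other edges at $a$ are not $h(i)$. Distinct $h$ give distinct sets, so the CNF size is at least $4^N\ge 2^N\ge N^{\sqrt N/\log N}$. Since $N\ge k^2$, we have $2^N\ge N^{k}$ for large $k$, and hence $\size_2\ge N^{\gamma_D k}$ with, say, $\gamma_D=1/3$ and $k_D\ge k_0$ large.

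For part~(ii) I take $N=N_{\mathrm{OV}}$ or $N_{\oplus,\Sigma}$ and need $N\ge k^2$. For $\kOV$ we have $N\ge\frac12\sqrt n\ge k^2$ when $k\le\xi_Dn^{1/4}$ with $\xi_D\le 2^{-1/2}$. For the list targets we have $N\ge\frac12\sqrt{n/k}\ge k^2$ when $k^5\le n/4$, which holds for $k\le\xi_Dn^{1/5}$. This range also gives $n\ge\max\{16k,k^3\}$, so the budgets \Cref{eq:host-budget-ov,eq:host-budget-xor,eq:host-budget-sum-sharp} apply. Then \Cref{eq:ov-host-estimates,eq:list-host-estimates} convert $N^{\gamma_Dk}$ into $n^{\Omega(k)}$ and $(n/k)^{\Omega(k)}$ respectively, with $k$ odd for $\kXOR$.
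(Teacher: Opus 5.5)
Your proposal is correct and follows the paper's proof. It uses the same minimal-positive-input count $N^{v(P)}$ for DNFs, the same minimal transversals $W_h$ built from a degree-$4$ vertex for CNFs (giving $2^N\ge N^{k}$ once $N\ge k^2$ and $k$ is large), and the same host substitutions for part~(ii). Your CNF argument differs slightly: it pins down $K^{+}=H$ directly from the inputs $\1_{\overline H}$ and $\1_{\overline H\cup\{x\}}$, whereas the paper first normalizes to a monotone CNF; this is valid, so the garbled first paragraph and the abandoned transversality attempt can be deleted. In the DNF paragraph, ``no negated variable outside $E_\phi$'' should read ``no \emph{positive} variable outside $E_\phi$''.
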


\begin{proof}
Fix $P=P^{\mathrm{exp}}_k$. We first lower-bound both possible top-gate orientations of a depth-two source circuit.

\paragraph{DNF orientation.} Let $\mathcal D$ be a DNF, with arbitrary input negations, computing $P\text{-}\SUB_N$. For every $\phi$, the minimal positive input $E_\phi$ satisfies some term $T_\phi$ of $\mathcal D$. Every positive literal of $T_\phi$ lies in $E_\phi$. If the positive part omitted an edge variable $x\in E_\phi$, then the assignment setting exactly the positive literals of $T_\phi$ to $1$ and every other variable to $0$ would still satisfy the term\ignore{, including all its negative literals}, but its positive support would be contained in $E_\phi\setminus\{x\}$.

The set $E_\phi\setminus\{x\}$ contains no variable from the pattern-edge group to which $x$ belongs, so it cannot support any colored copy. This contradicts correctness of $\mathcal D$, and therefore the positive part of $T_\phi$ is exactly $E_\phi$. Distinct maps give distinct colored edge sets: if $\phi'\ne\phi$, choose a vertex $a$ on which they differ; because $P$ has no isolated vertices, an edge incident to $a$ contributes different input coordinates to $E_\phi$ and $E_{\phi'}$. Hence the map $\phi\mapsto T_\phi$ is injective, so every such DNF has at least
\begin{equation}\label{eq:D-dnf-count}
N^{v(P)}=N^{\Omega(k)}
\end{equation}
terms, where the maximum-degree bound gives $v(P)\ge2e(P)/4=(k-1)/2$.

\paragraph{CNF orientation.} By \Cref{eq:psub-monotone-dnf}, the function $P\text{-}\SUB_N$ is monotone. Let $\mathcal C$ be a CNF computing it. First discard every tautological clause, and then remove negative literals without increasing the number of clauses. Write a clause as $C=P_C\vee\neg N_C$, where $P_C$ is its positive part. Every $1$-input satisfies $P_C$: otherwise, raising all variables in $N_C$ to $1$ would preserve acceptance by monotonicity while falsifying $C$. Every $0$-input still falsifies some positive part, because any original clause falsified on that input already has all its positive literals equal to $0$. Thus replacing each clause by its positive part gives an equivalent monotone CNF.

In any monotone CNF computing the function, every clause is a transversal of the hypergraph $\{E_\phi:\phi:V(P)\to[N]\}$: a clause disjoint from some $E_\phi$ would be falsified by the $1$-input with positive support $E_\phi$. Moreover, every minimal transversal must occur as a clause. To see this, let $W$ be minimal and set the variables in $W$ to $0$ and all others to $1$. Since $W$ meets every colored copy, this is a $0$-input, so some clause $C$ is falsified and hence satisfies $C\subseteq W$. The clause $C$ is itself a transversal, and minimality of $W$ forces $C=W$.

Choose a pattern vertex $a$ of degree $\delta\ge2$, and let $E_P(a)$ be its incident pattern edges. For every function $h:[N]\to E_P(a)$, define
\begin{equation}\label{eq:D-transversal}
W_h:=\{X_{f,u,w}:\text{there is }i\in[N]\text{ with }f=h(i)\text{ and }\lab_a(f,u,w)=i\}.
\end{equation}
Every colored copy meets $W_h$: if $\phi(a)=i$, then its variable for the edge $h(i)$ belongs to $W_h$. The transversal is minimal: for a variable $x\in W_h$ belonging to edge $f=h(i)=\{a,b\}$ and assigning the opposite endpoint the label $j$, choose a map $\phi$ with $\phi(a)=i$, $\phi(b)=j$, and arbitrary labels elsewhere. Its copy meets $W_h$ exactly in $x$, so deleting $x$ destroys the transversal property. Distinct functions $h$ give distinct transversals, and therefore every CNF has at least
\begin{equation}\label{eq:D-cnf-count}
\delta^N\ge2^N
\end{equation}
clauses. The function $x/\log x$ is increasing for all sufficiently large $x$, so for $N\ge k^2$ one has $N/\log N\ge k^2/\log(k^2)$. After choosing a universal $\gamma_D>0$ and then increasing $k_D$, this implies $N\ge\gamma_Dk\log N$, or equivalently $2^N\ge N^{\gamma_Dk}$.

The DNF bound in \Cref{eq:D-dnf-count} also has the form $N^{\gamma_Dk}$ after decreasing $\gamma_D$ if necessary. Flattening every child gate that has the same connective as the output, treating literals as width-one terms or clauses, and deleting unsatisfiable terms or tautological clauses normalizes every depth-at-most-two circuit to a DNF or a CNF without increasing its size. Each counted term or clause has fan-in at least two once $k_D$ is large, and therefore contributes a distinct non-input gate. This proves \Cref{eq:D-source}.

\paragraph{Transfer to the targets.} Use the host substitutions in \Cref{eq:fixed-k-host-substitutions}. If $k\le\xi_D n^{1/4}$ and $\xi_D$ is sufficiently small, then \Cref{eq:ov-host-estimates} gives $N_{\mathrm{OV}}\ge k^2$ after accounting for the floor. If $k\le\xi_D n^{1/5}$, then \Cref{eq:list-host-estimates} similarly gives $N_{\oplus,\Sigma}\ge k^2$. Thus the source bound applies in each case.

After increasing $k_D$ and decreasing $\xi_D$ if necessary, the range hypotheses imply $n\ge16k$, and the list-target range also implies $n\ge k^3$. Hence \Cref{eq:host-budget-ov,eq:host-budget-xor,eq:host-budget-sum-sharp} give exactly the displayed projection budgets. The same host estimates give $N_{\mathrm{OV}}\ge n^{1/4}$ and $N_{\oplus,\Sigma}\ge(n/k)^{1/4}$, so \Cref{eq:D-source} transfers to \Cref{eq:D-target-ov,eq:D-target-xor,eq:D-target-sum}. Every reduction in \Cref{thm:A} is depth zero\ignore{. In particular, the packed integer associated with a $\kSUM$ candidate depends on a single source bit, so its internal carries add no circuit depth}; the conclusion is genuinely at depth exactly two.
\end{proof}

\begin{remark}[Why a degree-$2$ vertex is necessary]\label{rem:D-matching-necessity}
The CNF argument fails for a matching pattern, and the hypothesis cannot be removed. If $P$ is a matching, then
\begin{equation*}
P\text{-}\SUB_N=\bigwedge_{f\in E(P)}\ \bigvee_{u,w\in[N]}X_{f,u,w}
\end{equation*}
is an explicit polynomial-size monotone CNF. The padded expander family contains a degree-$4$ core vertex, so the necessary degree condition is automatic in the application.
\end{remark}

\begin{corollary}[Subpolynomial form of Theorem D]\label{cor:D}
Along every sequence $k=k(n)\to\infty$ with $k=n^{o(1)}$, the range conditions of \Cref{thm:D} eventually hold and all three target lower bounds simplify to
\begin{equation}\label{eq:D-subpolynomial}
\size_2\ge n^{\Omega(k)}=\exp\bigl(\Omega(k\log n)\bigr),
\end{equation}
with $k$ odd for $\kXOR$.
\end{corollary}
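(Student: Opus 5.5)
This is a direct specialization of \Cref{thm:D}(ii), so the plan is to check its range hypotheses and then simplify the exponents. Fix a sequence $k=k(n)\to\infty$ with $k=n^{o(1)}$.

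\textbf{Range conditions.} Because $k\to\infty$, we eventually have $k\ge k_D$. Because $k=n^{o(1)}$, we have $k\le n^{1/10}$ for all large $n$. Hence
\[
k\le\xi_D n^{1/5}\le\xi_D n^{1/4}
\]
as soon as $n^{1/10}\ge1/\xi_D$, where $\xi_D$ is the universal constant of \Cref{thm:D}. So both range conditions (a) and (b) hold eventually, and \Cref{thm:D} gives:
\begin{itemize}
\item $\size_2\ge n^{\Omega(k)}$ for $\kOV$ at every $D\ge\COV k\log n$;
\item $\size_2\ge(n/k)^{\Omega(k)}$ for $\kSUM$ at every $m\ge\Csum k\log(\mathrm e n/k)$;
\item the same bound as for $\kSUM$ for $\kXOR$ with $k$ odd, at every $m\ge\Cxor k\log(\mathrm e n/k)$.
\end{itemize}
All hidden constants are universal.

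\textbf{Simplifying the list-target base.} Since $k=n^{o(1)}$, we have $\log k=o(\log n)$. Therefore
\[
\log(n/k)=(1-o(1))\log n\ge\tfrac12\log n
\]
for all large $n$. It follows that
\[
(n/k)^{\Omega(k)}=\exp\bigl(\Omega(k\log(n/k))\bigr)\ge\exp\bigl(\Omega(k\log n)\bigr)=n^{\Omega(k)},
\]
and the constant changes only by the universal factor $1/2$. The $\kOV$ bound is already in this form. This yields \Cref{eq:D-subpolynomial}, with the oddness qualification carried only by $\kXOR$.

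There is no real obstacle here. The only point to state explicitly is that the hidden constants stay universal. The subpolynomial hypothesis is used exactly twice: once to meet the polynomial range conditions, and once to replace $\log(n/k)$ by $\Theta(\log n)$.
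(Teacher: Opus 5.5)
Your proposal is correct and follows the paper's argument: you use $k\to\infty$ and $k=n^{o(1)}$ to meet $k\ge k_D$ and the polynomial range conditions of Theorem~D. You then use $\log(n/k)=(1-o(1))\log n$ to rewrite the list-target base $(n/k)^{\Omega(k)}$ as $n^{\Omega(k)}$, at the cost of only a universal constant factor.
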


\begin{proof}
One has $n/k=n^{1-o(1)}$, so the list-target base $(n/k)^{\Omega(k)}$ equals $n^{\Omega(k)}$. The polynomial range conditions follow from $k=n^{o(1)}$.
\end{proof}

At $k=\Theta(\log^2 n)$, \Cref{eq:D-subpolynomial} becomes $\exp(\Omega(\log^3 n))$. This is stronger than the all-depth clique fallback because depth two admits exact counting of the required DNF terms and CNF clauses; the argument does not use, and therefore does not advance, the higher-depth source machinery.

\ifhidedepththree\else
\subsection[Depth three: an exponent linear in \texorpdfstring{$k$}{k} with a top disjunction]%
  {Depth three: an exponent linear in \texorpdfstring{$k$}{k} with a top disjunction\footnote{While we were attempting to extend the depth-two counting argument, the AI language model GPT-5.6 Sol suggested combining two known ingredients: the adaptive commitment process used below is a conceptual descendant of the decision-tree querying in the algorithmic proofs of the switching lemma due to Razborov and to Beame, and its minterm-on-a-random-background formulation follows the planted-subgraph method of Rossman and of Li, Razborov, and Rossman. We verified the suggestion independently and then developed the depth-three results here for our padded expander family.
}}%
\label{sec:growing-depth-three}

The depth-two counting argument does not survive one more layer: an accepting middle CNF need not expose a single full-copy witness whose positive literals can be counted directly. At the $\Sigma_3$ orientation, however, the top disjunction selects one middle CNF on every accepting input. We show that a fixed CNF is exceedingly unlikely to become true for the first time exactly when all edges of a planted copy are present, and then union-bound over the middle CNFs. This unrestricted-CNF statement is complementary to the Choudhury--Sreenivasaiah top-fan-in bound~\cite{ChoudhurySreenivasaiah26} for disjunctions of bounded-width CNFs computing the unpromised $\kOV$ function when $k\le D$: their bottom layer is structurally restricted, whereas ours has unrestricted clause width and literal signs. The precise comparison appears in \Cref{sec:related-work}.

The same padded expander family now carries a third source argument. Let $F^{\mathrm{exp}}_k$, $P^{\mathrm{exp}}_k$, and one bipartition side $I_k$ be supplied by \Cref{lem:expander-pattern-family}. The core is bipartite and $4$-regular, the padded pattern has exactly $k-1$ edges, and, writing $L_k:=|I_k|$,
\begin{equation}\label{eq:depth-three-core-parameters}
e(F^{\mathrm{exp}}_k)>k-5,
\qquad
L_k>\frac{k-1}{4}-1.
\end{equation}
No expansion or $\kappa$ estimate is used below; the proof sees only a regular core containing all but $O(1)$ of the edge allowance (\Cref{cor:degree-four-family}: $t_k < 4$) and an independent bipartition side whose vertices have pairwise disjoint incident-edge sets. The choice of degree $4$ balances the two probabilistic constraints: the uniqueness estimate requires $\theta>2/\Delta$ for the background density $p=N^{-\theta}$ of \Cref{eq:depth-three-background-density}, while one side of a bipartite $\Delta$-regular core with $k-O(1)$ edges has $(1/\Delta-o(1))k$ vertices. Since \Cref{thm:mintermbound} contributes one factor $1/(pN)=N^{-(1-\theta)}$ per such vertex, the resulting limiting coefficient $(1/\Delta)(1-2/\Delta)$ is maximized over integers $\Delta\ge3$ at $\Delta=4$, which is the cap $\gamma<1/8$ in \Cref{thm:E}.

We first isolate the probabilistic statement for one CNF. Fix an oriented simple loopless graph $F$, a nonempty independent set $I\subseteq V(F)$ whose vertices all have positive degree, and write $L:=|I|$. Identify an input to $F\text{-}\SUB_N$ with the set of variables assigned value $1$. Let $G$ be a random background in which every variable is present independently with probability $p$, and choose planted labels $\phi^*(v)\in[N]$ independently and uniformly for all $v\in V(F)$. The planted copy is
\begin{equation}\label{eq:depth-three-planted-copy}
F^*:=\{X_{f,\phi^*(a_f),\phi^*(b_f)}:f=(a_f,b_f)\in E(F)\}.
\end{equation}
For $v\in I$ and $a\in[N]$, define the block
\begin{equation}\label{eq:depth-three-block}
B_{v,a}:=\{X_{f,a,\phi^*(u)}:f=(v,u)\in E(F)\}\cup\{X_{f,\phi^*(u),a}:f=(u,v)\in E(F)\}.
\end{equation}
Because $I$ is independent, every neighbor of every $v\in I$ lies outside $I$. Consequently, once the labels outside $I$ are fixed, all blocks are determined. They are pairwise disjoint for distinct pairs $(v,a)$: independence of $I$ prevents one pattern edge from belonging to two vertices of $I$, and changing $a$ changes the label at the $v$-endpoint.

Let $F^*_{\overline I}$ be the planted edges not incident to $I$. Since $I$ is independent, every planted edge belongs either to $F^*_{\overline I}$ or to exactly one planted block. For $A\subseteq I$, set
\begin{equation}\label{eq:depth-three-partial-input}
H_A:=G\cup F^*_{\overline I}\cup\bigcup_{v\in A}B_{v,\phi^*(v)},
\end{equation}
so that $H_I=G\cup F^*$. For a CNF $T$, let $\Mcal_I(T)$ be the event
\begin{equation}\label{eq:depth-three-I-minterm}
T(H_I)=1
\qquad\text{and}\qquad
T(H_A)=0\quad\text{for every }A\subsetneq I.
\end{equation}
Let $\Mcal(T)$ be the full minterm event that $T(G\cup F^*)=1$ while $T(G\cup R)=0$ for every proper subset $R\subsetneq F^*$. Thus $G\cup F^*$ is a minimal $1$-input of $T$ within the subcube obtained by deleting planted edges. Because every vertex of $I$ has positive degree and the planted blocks are disjoint, $\Mcal(T)$ implies $\Mcal_I(T)$.

\begin{theorem}[Single-CNF minterm bound]\label{thm:mintermbound}
In the setting of \Cref{eq:depth-three-planted-copy,eq:depth-three-block,eq:depth-three-partial-input,eq:depth-three-I-minterm}, with $L=|I|$, background density $p$, and $\Pr$ taken over the background $G$ and the planted labels $\phi^*$, there is an absolute constant $C_0>0$ such that, for every CNF $T$ with $s:=s(T)\ge1$ clauses,
\begin{equation}\label{eq:mintermbound}
\Pr[\Mcal_I(T)]\le\left(\frac{C_0\bigl(L+\log(\mathrm e s)\bigr)}{pN}\right)^L.
\end{equation}
Consequently, the same bound holds for $\Pr[\Mcal(T)]$. The theorem requires neither an implicant hypothesis nor any circuit-correctness assumption. If $s=0$, then $T\equiv1$ and both minterm events are empty.
\end{theorem}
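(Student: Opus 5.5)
The plan is to condition in two stages and then run an adaptive commitment process whose per-step success probability is controlled by one random variable $Z$. That variable is fixed before any label inside $I$ is revealed.

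\textbf{Stage one: define $Z$.} First condition on the planted labels $\phi^*(u)$ for $u\notin I$. By the discussion after \Cref{eq:depth-three-block}, this fixes all blocks $B_{v,a}$ for $v\in I$, $a\in[N]$. These blocks are pairwise disjoint, and they are independent of the labels inside $I$. For each clause $C$ of $T$, let $S_C$ be the set of pairs $(v,a)\in I\times[N]$ such that $B_{v,a}$ contains the variable of some positive literal of $C$. Negative literals can only be falsified by adding planted edges, so only positive literals matter for a clause to turn from false to true. Call $C$ \emph{live} if no positive literal of $C$ lies in $G\cup F^*_{\overline I}$, and set
\begin{equation*}
Z:=\max\{|S_C|: C\text{ a live clause of }T\},\qquad Z:=0\text{ if no clause is live}.
\end{equation*}
Fix a clause $C$ with $|S_C|\ge t$. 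Choose one witnessing variable in $B_{v,a}$ for each pair $(v,a)\in S_C$. These variables are distinct by disjointness of the blocks. For $C$ to be live, each of them must be absent from $G$, which happens with probability at most $(1-p)^t\le\mathrm e^{-pt}$. A union bound over the $s$ clauses gives $\Pr[Z\ge t]\le s\,\mathrm e^{-pt}$; one may also allow a $2^L$ factor if one prefers to index by stages. Integrating this tail yields $\E[Z^L]\le\bigl(C_0(L+\log(\mathrm e s))/p\bigr)^L$. This is a routine moment computation: split at $t_0=\log(\mathrm e s)/p$, and bound the remaining gamma-type integral by $(O(L/p))^L$.

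\textbf{Stage two: the commitment process.} Now condition additionally on $G$, so $Z$ is fixed. Reveal the labels in $I$ along a uniformly random ordering $\pi$ of $I$. For each stage $A\subsetneq I$, let $C_A$ be the first clause of $T$ that is false on $H_A$. This clause depends only on $G$, the outside labels, and the labels in $A$. Since $H_A\supseteq G\cup F^*_{\overline I}$, the clause $C_A$ is live. Say the process \emph{advances} at stage $A$ if the next vertex $v=\pi(|A|+1)$ satisfies $(v,\phi^*(v))\in S_{C_A}$.

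We first bound the probability that the process advances. With $q=L-|A|$ vertices remaining, the next vertex is uniform among them. Its label is uniform on $[N]$ and independent of everything revealed so far. So the process advances with conditional probability at most $|S_{C_A}|/(qN)\le Z/(qN)$. Multiplying over all $L$ stages gives
\begin{equation*}
\Pr_{\pi,\phi^*_I}[\text{advances at every stage}]\le \frac{Z^L}{L!\,N^L}.
\end{equation*}

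\textbf{Stage three: comparing with $\Mcal_I(T)$.} On $\Mcal_I(T)$, at every proper stage $A$ the clause $C_A$ is false on $H_A$ but true on $H_I$. Hence some positive literal of $C_A$ lies in a block $B_{v,\phi^*(v)}$ with $v\in I\setminus A$. So at least one of the $q$ remaining vertices would make the process advance. Whatever the stage, the uniform choice of $\pi$ picks such a vertex with probability at least $1/q$. Therefore, conditional on $\Mcal_I(T)$ and on all labels, the process advances at every stage with probability at least $1/L!$.

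Combining the two bounds gives $\Pr[\Mcal_I(T)]/L!\le \E[Z^L]/(L!\,N^L)$. The $L!$ factors cancel, and the moment bound from stage one yields \eqref{eq:mintermbound}. The implication $\Mcal(T)\Rightarrow\Mcal_I(T)$, noted before the theorem, transfers the bound to $\Pr[\Mcal(T)]$.

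\textbf{Main obstacle.} The delicate point is the adaptivity in stage two. The clause $C_A$ depends on labels already revealed inside $I$, so one cannot union-bound over clause histories without losing a factor of order $k$ in the exponent. The plan avoids this by making the single quantity $Z$ a function of $G$ and the outside labels only. It then bounds every adaptive step by $Z/(qN)$, using only that each fresh label is independent and uniform given the past. I would check carefully that no step of the process ever needs the labels in $I$ to be independent of $G$ in any way beyond this.
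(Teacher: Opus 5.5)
Your proposal is correct and follows essentially the same route as the paper. Both arguments use a single support maximum $Z$ determined by $G$ and the outside labels, a commitment process that advances with probability at most $Z/(qN)$ per step but at least $1/q$ on $\Mcal_I(T)$, cancellation of the $L!$ factors, and an $L$th-moment bound on $Z$. The differences are minor: your clause-level liveness, used in place of the paper's state-indexed eligible pairs $(A,C)$, validly drops the harmless $2^L$ factor from the tail bound; and you should state explicitly that the process fails whenever $T(H_A)=1$ for some $A\subsetneq I$, so that $C_A$ is always defined.
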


The proof, given in \Cref{app:depth-three-minterm}, fixes the labels outside $I$ and, with the background still random, maximizes the number of remaining vertex--label pairs whose blocks contain a positive literal of a given clause, over every clause and every state $A\subseteq I$ for which the canonical representatives of those pairs are absent from the background. Only after taking the $L$th moment of this single support maximum $Z$ does it expose the labels in $I$ through a randomized commitment process. The process has success probability at most $Z^L/(L!N^L)$ for a fixed $Z$, but at least $1/L!$ on $\Mcal_I(T)$; the factorials cancel, and the moment bound for $Z$ gives \Cref{eq:mintermbound}. This single maximum, rather than a union bound over ordered clause histories, is what preserves an exponent linear in $k$.

We now specialize to $F=F^{\mathrm{exp}}_k$ and set
\begin{equation}\label{eq:depth-three-background-density}
p:=N^{-\theta},
\qquad
\frac12<\theta<1.
\end{equation}
By \Cref{lem:depth-three-good-event}, applied with $\Delta=4$, there is a constant $C_{4,\theta}$ such that, uniformly for $N\ge k^{C_{4,\theta}}$, the following good event has probability $1-o(1)$ as $k\to\infty$: no planted edge already belongs to $G$, and $F^*$ is the unique copy of $F^{\mathrm{exp}}_k$ in $G\cup F^*$. On this event, every $G\cup R$ with $R\subsetneq F^*$ is a $0$-input of $F^{\mathrm{exp}}_k\text{-}\SUB_N$.

\begin{theorem}[E: depth three, top disjunction, exponent linear in $k$]\label{thm:E}
For every constant $\gamma$ with $0<\gamma<1/8$, there are $k_E(\gamma)\ge k_0$ and $C_E(\gamma)>0$ such that
\begin{equation}\label{eq:E-source}
\size_{\Sigma_3}(P^{\mathrm{exp}}_k\text{-}\SUB_N)\ge\size_{\Sigma_3}(F^{\mathrm{exp}}_k\text{-}\SUB_N)>N^{\gamma k}
\end{equation}
for every $k\ge k_E(\gamma)$ and every $N\ge k^{C_E(\gamma)}$. For every constant $c$ with $0<c<\gamma/2$, there are $k_1(\gamma,c)\in\N$ and $C'(\gamma,c)>1$ such that, for every $k\ge k_1(\gamma,c)$ and every $n\ge k^{C'(\gamma,c)}$,
\begin{align}
\size_{\Sigma_3}(\kOV_{n,D,k})&\ge n^{ck} &&\text{for every }D\ge\COV k\log n,\label{eq:E-ov}\\
\size_{\Sigma_3}(\kXOR_{n,m,k})&\ge(n/k)^{ck} &&\text{for odd }k\text{ and every }m\ge\Cxor k\log(\mathrm e n/k),\label{eq:E-xor}\\
\size_{\Sigma_3}(\kSUM_{n,m,k})&\ge(n/k)^{ck} &&\text{for every }m\ge\Csum k\log(\mathrm e n/k).\label{eq:E-sum}
\end{align}
Every fixed $c<1/16$ is therefore attainable. The theorem concerns $\Sigma_3$ only and implies no lower bound on $\size_{\Pi_3}$ or on $\size_3$.
\end{theorem}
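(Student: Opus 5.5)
We prove the source bound \Cref{eq:E-source} by contradiction, then transfer it with \Cref{thm:A,lem:host-substitutions,lem:projection-monotonicity}.

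\textbf{Source reduction and normal form.} The first inequality in \Cref{eq:E-source} is the padding projection: setting every padding-edge variable of $P^{\mathrm{exp}}_k$ to $1$ gives $F^{\mathrm{exp}}_k\text{-}\SUB_N\le_{\proj}P^{\mathrm{exp}}_k\text{-}\SUB_N$. For the second, suppose a $\Sigma_3$ circuit of size $S\le N^{\gamma k}$ computes $F^{\mathrm{exp}}_k\text{-}\SUB_N$. By \Cref{eq:sigma3-normal-form}, it is $\bigvee_{i\le t}T_i$ with $t,s(T_i)\le S+2M$, where $M=e(F)N^2\le kN^2$. Once $N\ge k^{C_E}$, the quantity $S+2M$ is at most $2N^{\gamma k}$, and $\log(\mathrm e s)=O(\gamma k\log N)$.

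\textbf{Planting experiment.} Fix $\theta$ with $1/2<\theta<1$, to be chosen, and set $p=N^{-\theta}$. Take $I=I_k$, so $L=L_k\ge (k-1)/4-1$. By \Cref{lem:depth-three-good-event}, with probability $1-o(1)$ no planted edge lies in $G$ and $F^*$ is the unique copy in $G\cup F^*$. On this good event, $G\cup F^*$ is accepted, while every $G\cup R$ with $R\subsetneq F^*$ is rejected. Hence some $T_i$ accepts $G\cup F^*$. Monotonicity along the subcube is not needed: every $T_i$ must reject all $G\cup R$, since the OR does. So the good event implies $\bigcup_i \Mcal(T_i)$.

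\textbf{Union bound.} By \Cref{thm:mintermbound},
\[
\sum_i \Pr[\Mcal(T_i)]\le 2N^{\gamma k}\left(\frac{C_0(L+O(\gamma k\log N))}{N^{1-\theta}}\right)^{L}.
\]
With $L\approx k/4$ and $\log N$ factors absorbed into $N^{o(1)}$ once $N\ge k^{C}$ (note $k\le N^{1/C}$), the right side is
\[
N^{\gamma k-(1-\theta-o(1))k/4}.
\]
This is $o(1)$ when $\gamma<(1-\theta)/4$. The good event needs $\theta>2/\Delta=1/2$, so choosing $\theta$ slightly above $1/2$ gives any $\gamma<1/8$. This contradicts the good event's probability $1-o(1)$, proving \Cref{eq:E-source}.

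The main obstacle is bookkeeping so that every error factor is $N^{o(k)}$ uniformly in $k$. The $\log(\mathrm e s)\approx\gamma k\log N$ term contributes $(\log N)^{L}$, which is absorbed only because $N\ge k^{C_E}$ forces $\log N=N^{o(1)}$ and $k=N^{o(1)}$. The good-event threshold likewise needs the polynomial host range. I would make $C_E(\gamma)$ large in terms of the slack $1/8-\gamma$ and the choice of $\theta$.

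\textbf{Transfer to targets.} Use $N_{\mathrm{OV}}$ and $N_{\oplus,\Sigma}$ from \Cref{lem:host-substitutions}, with $n\ge k^{C'}$ and $C'$ large enough that $N\ge k^{C_E}$, $n\ge16k$, and $n\ge k^3$. The budgets are then \Cref{eq:host-budget-ov,eq:host-budget-xor,eq:host-budget-sum-sharp}. Projections preserve the $\Sigma_3$ orientation by \Cref{lem:projection-monotonicity}. Then $N^{\gamma k}\ge (n/4)^{\gamma k/2}\ge n^{ck}$ for $c<\gamma/2$ and $k$ large, using $\log 4\ll\log n$; the list targets are analogous with base $n/k$. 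Finally, $\gamma\uparrow1/8$ makes every $c<1/16$ attainable.
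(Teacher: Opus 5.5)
Your proposal is correct and follows the paper's proof essentially step for step. The steps are the padding projection, the $\Sigma_3$ normal form with at most $2N^{\gamma k}$ middle CNFs, the planted good event of \Cref{lem:depth-three-good-event}, and the observation that the selected $T_i$ must reject every proper deletion, so $\Mcal(T_i)$ occurs. These are followed by \Cref{thm:mintermbound} with a union bound, the condition $\gamma<(1-\theta)/4$ with $\theta$ just above $1/2$, and the standard host substitutions for the transfer.

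The only thing to tighten is your ``$N^{o(1)}$'' absorption, since for fixed $C_E$ it is really a fixed slack. The paper chooses $\eta>0$ with $\gamma<(1/4-\eta)(1-\theta-\eta)$ and $C_E(\gamma)>2/\eta$. This gives $k\le N^{\eta/2}$ and $C_0\bigl(L+\log(\mathrm e\widehat S)\bigr)\le N^{\eta}$, which is exactly what your remark about choosing $C_E$ in terms of the slack amounts to.
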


\begin{proof}
Fix $0<\gamma<1/8$, and choose
\begin{equation}\label{eq:E-theta-choice}
\frac12<\theta<1-4\gamma.
\end{equation}
The choice of $\theta$ implies $\gamma<(1-\theta)/4$. By continuity, choose $\eta>0$ such that
\begin{equation}\label{eq:E-slack}
\eta<\min\left\{\frac14,1-\theta\right\}
\qquad\text{and}\qquad
\gamma<\left(\frac14-\eta\right)(1-\theta-\eta).
\end{equation}
By \Cref{eq:depth-three-core-parameters}, after increasing the threshold in $k$ we have $L:=|I_k|\ge(1/4-\eta)k$.

Suppose that a $\Sigma_3$ circuit $\mathcal C=\bigvee_{i=1}^{t}T_i$ of size $S\le N^{\gamma k}$ computes $F^{\mathrm{exp}}_k\text{-}\SUB_N$. Put $M_{\mathrm{src}}:=e(F^{\mathrm{exp}}_k)N^2$ and $\widehat S:=S+2M_{\mathrm{src}}$. By \Cref{eq:sigma3-normal-form}, one has $t\le\widehat S$ and $s(T_i)\le\widehat S$ for every $i$. On the good event, correctness gives $\mathcal C(G\cup F^*)=1$ and $\mathcal C(G\cup R)=0$ for every proper $R\subsetneq F^*$. Some $T_i$ therefore accepts $G\cup F^*$, and the same $T_i$ rejects every proper deletion, so $\Mcal(T_i)$ occurs. By \Cref{thm:mintermbound} and a union bound,
\begin{equation}\label{eq:E-circuit-union}
\Pr[\text{the good event holds and some }\Mcal(T_i)\text{ occurs}]\le \widehat S\left(\frac{C_0\bigl(L+\log(\mathrm e\widehat S)\bigr)}{pN}\right)^L.
\end{equation}

Choose $C_E(\gamma)$ large enough to imply the good-event threshold $N\ge k^{C_{4,\theta}}$ in \Cref{lem:depth-three-good-event}, and to satisfy $C_E(\gamma)>2/\eta$, so that $N\ge k^{C_E(\gamma)}$ forces $k\le N^{\eta/2}$. Since $e(F^{\mathrm{exp}}_k)\le k-1$, after increasing $k_E(\gamma)\ge k_0$ one has $\widehat S\le2N^{\gamma k}$ whenever $k\ge k_E(\gamma)$ and $N\ge k^{C_E(\gamma)}$. In the same range, $L=O(k)$, $\log(\mathrm e\widehat S)=O(k\log N)$, and, after another increase of $k_E(\gamma)$, $C_0(L+\log(\mathrm e\widehat S))\le N^\eta$. Since $pN=N^{1-\theta}$, the right-hand side of \Cref{eq:E-circuit-union} is at most
\begin{equation}\label{eq:E-source-union-decay}
2N^{\gamma k-(1-\theta-\eta)L}\le2N^{[\gamma-(1-\theta-\eta)(1/4-\eta)]k}=o(1).
\end{equation}
The exponent is negative by \Cref{eq:E-slack}. This contradicts the fact that the good event has probability $1-o(1)$\ignore{ and, under circuit correctness, is contained in the union of the minterm events}. Hence $\size_{\Sigma_3}(F^{\mathrm{exp}}_k\text{-}\SUB_N)>N^{\gamma k}$. The projection in \Cref{eq:expander-core-padding-projection} and orientation-preserving projection monotonicity give the first inequality in \Cref{eq:E-source}.

Fix $0<c<\gamma/2$. Choose $C'(\gamma,c)>\max\{2C_E(\gamma)+3,3\}$, and then increase $k_1(\gamma,c)$ as needed. Use $N=N_{\mathrm{OV}}$ for $\kOV$ and $N=N_{\oplus,\Sigma}$ for $\kXOR$ and $\kSUM$, as in \Cref{eq:fixed-k-host-substitutions}. The condition $n\ge k^{C'(\gamma,c)}$ makes both host sizes at least $k^{C_E(\gamma)}$ for all sufficiently large $k$ and ensures $n\ge\max\{16k,k^3\}$. Thus \Cref{eq:host-budget-ov,eq:host-budget-xor,eq:host-budget-sum-sharp} make the three projections admissible with the universal constants fixed in \Cref{thm:A,lem:host-substitutions}.

By \Cref{eq:ov-host-estimates,eq:list-host-estimates}, after increasing $k_1(\gamma,c)$ so that $n^{\gamma/2-c}\ge2^\gamma$ and $(n/k)^{\gamma/2-c}\ge2^\gamma$ throughout the range, one has
\begin{equation*}
N_{\mathrm{OV}}^{\gamma k}\ge n^{ck},
\qquad
N_{\oplus,\Sigma}^{\gamma k}\ge(n/k)^{ck}.
\end{equation*}
The source lower bound, \Cref{thm:A}, and orientation-preserving projection monotonicity yield \Cref{eq:E-ov,eq:E-xor,eq:E-sum}. Given any fixed $c<1/16$, choose $\gamma$ with $2c<\gamma<1/8$.
\end{proof}

\Cref{cor:depth-three-even-xor} combines \Cref{thm:E} with the even-$k$ lift; because the lift is a projection between $\kXOR$ instances, it preserves the circuit orientation. In the subpolynomial regime $k=n^{o(1)}$, the polynomial relation in \Cref{thm:E} eventually holds and $n/k=n^{1-o(1)}$, so the three bounds have the common form $n^{\Omega(k)}=\exp(\Omega(k\log n))$, with odd $k$ in the principal $\kXOR$ statement. \ignore{At $k=\Theta(\log^2 n)$, this becomes $\exp(\Omega(\log^3 n))$.}

At the principal budgets in the polynomial-host range of \Cref{thm:E}, the elementary circuits in \Cref{prop:ov-brute-force,prop:xor-brute-force,prop:sum-brute-force} have sizes $n^{k+o(k)}$, $(n/k)^{k+o(k)}$, and $(n/k)^{O(k)}$, respectively, with odd $k$ in the direct $\kXOR$ comparison. Thus the theorem matches the elementary upper bounds at the linear-in-$k$ exponent scale in its own orientation.

The top disjunction is used only to select one accepting middle CNF. A top conjunction has a dual selection statement on a maximal negative input: one middle DNF rejects that input and must accept every prescribed restoration. What is missing is a hard distribution on such maximal negative inputs together with a one-DNF anti-concentration theorem. Accordingly, the present argument gives no $\Pi_3$ or $\size_3$ lower bound.
\fi

\ignore{\subsection{The unconditional depth-versus-\texorpdfstring{$k$}{k} frontier}\label{sec:growing-frontier}

The preceding subsections occupy \ifhidedepththree two\else three\fi{} different parameter ranges and consume different structural features. \Cref{tab:growing-unconditional-frontier} records those ranges together with the common subpolynomial-regime simplification; the exact target bases and constraint budgets remain those of \Cref{thm:C,thm:D}\ifhidedepththree\else{} and \Cref{thm:E}\fi.}

\begin{table}[t]
\centering
\caption{The unconditional growing-$k$ frontier, including the full theorem ranges and the common simplification for $k=n^{o(1)}$. The direct $\kXOR$ statements require odd $k$.}
\label{tab:growing-unconditional-frontier}
\small
\renewcommand{\arraystretch}{1.18}
\setlength{\tabcolsep}{3pt}
\begin{tabular}{@{}>{\raggedright\arraybackslash}p{0.20\textwidth}>{\raggedright\arraybackslash}p{0.31\textwidth}>{\raggedright\arraybackslash}p{0.18\textwidth}>{\raggedright\arraybackslash}p{0.25\textwidth}@{}}
\toprule
Circuit class & Range in which the theorem holds & \ignore{Common subpolynomial}$k=n^{o(1)}$ regime bound & Structural feature consumed \\
\midrule
Every fixed depth $d$ & $k\to\infty$ and $k\le n^{1-\varepsilon}$ for fixed $\varepsilon>0$ & $n^{\Omega_d(\min\{\sqrt{k},\log n\})}$ & growing clique; edge allowance and Beame cap \\
Depth $2$, both orientations & $k\ge k_D$; $k\le\xi_D n^{1/4}$ for $\kOV$ and $k\le\xi_D n^{1/5}$ for the list targets & $n^{\Omega(k)}$ & exact DNF-term and CNF-clause counting \\
\ifhidedepththree\else
Depth $3$, top disjunction & $0<\gamma<1/8$, $0<c<\gamma/2$, and $n\ge k^{C'(\gamma,c)}$ for $k\ge k_1(\gamma,c)$ & $n^{\Omega(k)}$ & one support maximum over all adaptive states \\
\fi
\bottomrule
\end{tabular}
\end{table}

\ignore{At $k=\Theta(\log^2 n)$, the all-depth row gives $2^{\Omega_d(\log^2 n)}$, while \ifhidedepththree the depth-two row gives\else the two linear-in-$k$ rows give\fi{} $2^{\Omega(\log^3 n)}$. The strength of the displayed bounds does not vary monotonically with the depth label, because the rows \ifhidedepththree consume different source structures and proof mechanisms\else cover different orientations and consume different source structures and proof mechanisms\fi. At depth two, both orientations yield to exact DNF-term and CNF-clause counting; \ifhidedepththree at depth three and higher\else at depth three, only the top-disjunction orientation yields, through a different mechanism, and at depth four\fi{} an exponent linear in $k$ is not known unconditionally. Thus the table records which structural feature each proof consumes, not a general principle that lower bounds become progressively weaker with depth.}

\FloatBarrier

\subsection{The row-rich PARITY comparison}\label{sec:parity-comparison}

\paragraph{The imported PARITY bound.} H\aa stad's fixed-depth theorem counts AND/OR gates and places negations at the input leaves, so its size measure agrees with our non-input-gate convention; his depth parameter counts the AND/OR layers and therefore equals our $d$. More explicitly, there is an absolute constant $\ell_0>1$ such that, for every $d\ge2$ and every $\ell>\ell_0^d$, every depth-$d$ circuit for $\mathsf{PARITY}_{\ell}$ has at least
\begin{equation*}
2^{(1/10)^{d/(d-1)}\ell^{1/(d-1)}}=\exp\bigl(\Omega_d(\ell^{1/(d-1)})\bigr)
\end{equation*}
AND/OR gates~\cite{Hastad86}. Our circuits are already normalized to input literals, so no model conversion is needed. The range condition is depth-dependent and is absorbed only because every application fixes $d$ before sending the number $\ell$ of free variables to infinity; no growing-depth claim is made.

There is a complementary route for $\kXOR$ through this bound. Write $\mathsf{PARITY}_{\ell}(x):=\bigoplus_{i=1}^{\ell}x_i$. The route is row-rich: it operates at $m=\Theta(n)$ rows rather than at the principal $O(k\log n)$-scale budget, but it is useful for comparing the depth dependence of two qualitatively different anchors.

\begin{theorem}[Amplified row-rich PARITY projection]\label{thm:parity-route}
Let $k\ge3$ be odd and let $n\ge4k$. Put
\begin{equation}\label{eq:parity-route-parameters}
b:=\left\lfloor\log\frac{n}{2k}\right\rfloor,
\qquad
B:=(k-1)b+1,
\qquad
\ell:=(k-1)B.
\end{equation}
There is an integer $m_0\le2n$ and a projection
\begin{equation}\label{eq:parity-route-projection}
\mathsf{PARITY}_{\ell}\le_{\proj}\kXOR_{n,m_0,k}.
\end{equation}
Consequently, for every fixed depth $d\ge2$ and every $m\ge m_0$, once $\ell$ exceeds the depth-dependent onset in H\aa stad's theorem,
\begin{equation}\label{eq:parity-route-bound}
\size_d(\kXOR_{n,m,k})
\ge
\exp\!\left(\Omega_d\!\left((k^2\log(n/k))^{1/(d-1)}\right)\right).
\end{equation}
In particular, the bound holds in the row-rich regime $m=\Theta(n)$ whenever the implicit constant is at least $2$.
\end{theorem}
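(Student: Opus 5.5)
The plan has two parts: build the projection \eqref{eq:parity-route-projection} explicitly, then read off \eqref{eq:parity-route-bound} from H\aa stad's theorem via \Cref{lem:projection-monotonicity}. The second part is routine. Since $n\ge4k$, one has $b\ge1$, and $\ell=(k-1)\bigl((k-1)b+1\bigr)=\Theta(k^2\log(n/k))$. Fix $d$ and compose a hypothetical depth-$d$ circuit for $\kXOR_{n,m,k}$ with the projection. Rows above $m_0$ are appended as all-zero rows, as in \Cref{thm:xor-master-projection}. The result is a depth-$d$ circuit for $\mathsf{PARITY}_\ell$ of the same size. H\aa stad's bound $\exp(\Omega_d(\ell^{1/(d-1)}))$ then gives the claim once $\ell>\ell_0^d$. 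The target computes $\mathsf{PARITY}_\ell$ or its complement, which cost the same, so the output polarity is immaterial: negating one input literal flips it.

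For the construction I will reuse the selection skeleton of \Cref{sec:inst-xor}: one anchor, $k-1$ groups and anti-dummy padding. Odd $k$ makes the selector rows together with the anti-dummy row force the anchor, exactly one column per group, and no dummies, by the argument around \eqref{eq:xor-selector-equation}--\eqref{eq:xor-support-size}. Group $j\in[k-1]$ will contain $2^{b+1}\le n/k$ columns, indexed by a pair $(\sigma,y)$. Here $y\in\bits^{b}$ is a guessed local string and $\sigma\in\bits$ a guessed running parity. The total is then at most $1+(k-1)n/k\le n$, so dummies pad the list. The $\ell$ input bits are split into $k-1$ blocks of $B$ bits each. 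Each block is handled by rows whose entries are single literals on the columns of one or two groups. Summed over the selected columns, each such row should state one equation of the form ``the prefix parity guessed in group $j$ agrees with the prefix parity guessed in group $j-1$ plus the relevant literals''. The guessed strings $y$ act as $b$ intermediate parity bits per group, and each interacts with all $k-1$ groups. This is what amplifies the free-variable count from $\Theta(k)$ to $(k-1)b+1$ per block. The anchor column will fix the initial parity $0$, and one final row will compare the last running parity with the constant $0$. Every row touches at most one input literal per column, so every entry is a constant or a literal, and the count of rows is $O(k^2 b+k)\le2n$ because $b\le\log n$ and $k\le n/4$.

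Correctness is proved as in \Cref{thm:xor-master-projection}. First, exact selection leaves exactly one column per group. Second, each chaining row forces the guessed parities to equal the true prefix parities of the input, so the final row vanishes if and only if $\mathsf{PARITY}_\ell(x)=0$. Conversely, when the parity is $0$, the honest guesses give a zero-sum $k$-support.

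The main obstacle is the amplification step: attaining $\ell=(k-1)\bigl((k-1)b+1\bigr)$ rather than $\Theta(kb)$ while each matrix entry stays a single literal. A row's sum over a $k$-support involves at most $k$ literals. A chain therefore has to route each of the $\Theta(kb)$ guessed bits through $\Theta(k)$ rows, each consuming one fresh input variable, without letting an adversarial choice of $(\sigma,y)$ make the rows cancel dishonestly. I would first try making each guessed bit of group $j$ a pivot appearing in one row per other group $j'$, with that row carrying a fresh variable on group $j'$'s columns, and then verify soundness by Gaussian elimination over the selected columns. If this fails, I would fall back to the weaker $\ell=\Theta(kb)$ and see how much of \eqref{eq:parity-route-bound} survives.
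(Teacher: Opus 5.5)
\paragraph{A genuine gap: the projection is never constructed.} Your transfer from a projection to \eqref{eq:parity-route-bound} is fine, including the polarity remark and the zero-row padding. But the projection is the whole substance of the theorem, and you do not build it. You name the amplification to $\ell=(k-1)\bigl((k-1)b+1\bigr)$ as the main obstacle and leave it as something to try. Your fallback $\ell=\Theta(kb)$ would give only $\exp(\Omega_d((k\log(n/k))^{1/(d-1)}))$, which is weaker than the stated bound.

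\paragraph{Where your design goes wrong.} The obstacle comes from a misconception: you assume each row consumes one fresh input variable, so each guessed bit must be routed through $\Theta(k)$ rows. That design has three problems.
\begin{itemize}
\item It needs $\Theta(k^2b)$ rows, so your estimate $O(k^2b+k)\le 2n$ fails for large $k$. At $k=n/4$ one has $b=1$, and $k^2b=\Theta(n^2)$.
\item As described, a row carrying only a pivot $u_{j,i}$ and one fresh variable on group $j'$ forces that variable to equal the pivot. The rows for different $j'$ would then force equalities among inputs, not a parity.
\item The running-parity chain and the extra guess bit $\sigma$ are not needed.
\end{itemize}

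\paragraph{The missing idea.} A single row can carry a \emph{different} fresh variable on every group. Put $x_{r,t}$ on all columns of group $t$, XORed with a constant that depends on the column index. Then, on any support consisting of the anchor plus one column per group, row $r$ sums to $\bigoplus_t x_{r,t}$ plus a function of the selected indices. So $B$ rows absorb all $(k-1)B$ variables. Concretely:
\begin{itemize}
\item Index the columns of group $t$ by $u\in\bits^b$ only, giving $2^b\le n/(2k)$ columns per group.
\item Use $B=(k-1)b+1$ data rows.
\item In row $r=(t_r-1)b+j_r\le(k-1)b$, XOR the constant $u_{j_r}$ into the entries of group $t_r$.
\item In row $B$, put $1$ on the anchor and $x_{B,t}\oplus\bigoplus_j u_j$ on each column $(t,u)$.
\end{itemize}
Let $y(u)$ be the concatenation $u_1\cdots u_{k-1}$ followed by the bit that makes the total weight odd. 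Write $z_r:=\bigoplus_t x_{r,t}$. On a support of the forced shape, row $r$ vanishes iff $z_r=y(u)_r$. Since $y$ is a bijection onto the odd-weight strings in $\bits^B$, a witness exists iff $\bigoplus_{r,t}x_{r,t}=1$, that is, iff $\mathsf{PARITY}_\ell(x)=1$.

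\paragraph{Budget.} Every entry is a constant or a literal. The structural rows number $(k-1)+B$, with $B\le n/2+1$. Dummies cost one anti-dummy row with your odd-$k$ selection argument, or private guard rows as in the paper. Either way the total is within $2n$, and the construction attains exactly $\ell=(k-1)B$.
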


\begin{proposition}[Upper bound for projection-based PARITY reductions]\label{prop:parity-projection-saturation}
Let $n\ge k\ge1$, and suppose that a projection $R:\bits^{\ell}\to(\F_2^m)^n$ satisfies
\begin{equation}\label{eq:parity-saturation-hypothesis}
\kXOR_{n,m,k}(R(x))=\mathsf{PARITY}_{\ell}(x)
\qquad
\text{for every }x\in\bits^{\ell}.
\end{equation}
Then
\begin{equation}\label{eq:parity-saturation-bound}
\ell\le k^2\log(\mathrm e n/k)+k.
\end{equation}
\end{proposition}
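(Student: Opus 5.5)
The plan is a covering-and-counting argument over affine subspaces of $\F_2^{\ell}$.

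First I would use the projection hypothesis to put each candidate support into linear-algebraic form. Every output bit of $R$ is a constant, a source literal $x_j$, or $\overline{x}_j=1\oplus x_j$. So for each index $i\in[n]$ the map $x\mapsto R(x)_i\in\F_2^m$ is affine over $\F_2$: $R(x)_i=c_i\oplus M_ix$, where $M_i\in\F_2^{m\times\ell}$ has at most one $1$ in each row. Fix a support $S\in\binom{[n]}{k}$. Then $\bigoplus_{i\in S}R(x)_i=c_S\oplus M_Sx$, where $M_S:=\sum_{i\in S}M_i$ has every row of Hamming weight at most $k$. The set
\begin{equation*}
A_S:=\Bigl\{x:\bigoplus_{i\in S}R(x)_i=0^m\Bigr\}
\end{equation*}
is therefore either empty or an affine subspace with direction space $\ker M_S$. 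By \Cref{eq:parity-saturation-hypothesis}, the odd-weight vectors are exactly $\bigcup_S A_S$. In particular, every nonempty $A_S$ lies inside the odd-weight set. (If $\ell=0$, the bound is trivial, so assume $\ell\ge1$.)

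Second, I would bound the size of each nonempty $A_S$. Since parity is constant on $A_S$, every direction $y\in\ker M_S$ has even weight, i.e. $\langle\mathbf 1,y\rangle=0$. Hence $\ker M_S\subseteq\mathbf 1^{\perp}$, so the all-ones vector $\mathbf 1$ lies in the row space of $M_S$.

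Choose an inclusion-minimal set of rows of $M_S$ whose sum is $\mathbf 1$. These rows are linearly independent: a nontrivial subset summing to zero could be deleted, contradicting minimality. Each row has weight at most $k$ and together they must cover all $\ell$ coordinates, so there are at least $\ell/k$ of them. Thus $\operatorname{rank}M_S\ge\ell/k$, and consequently $|A_S|\le2^{\ell-\ell/k}$.

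Third, I would count. There are at most $\binom nk\le(\mathrm e n/k)^k$ supports, and the odd-weight set has $2^{\ell-1}$ points, so
\begin{equation*}
2^{\ell-1}\le(\mathrm e n/k)^k\,2^{\ell-\ell/k}.
\end{equation*}
This gives $\ell/k\le k\log(\mathrm e n/k)+1$, which is exactly \Cref{eq:parity-saturation-bound}.

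The only step needing care is the rank bound in the second step. It uses two facts: the one-literal-per-bit structure of a projection caps row weights of $M_S$ at $k$, and a minimal representation of $\mathbf 1$ consists of independent rows. The remaining steps are routine.
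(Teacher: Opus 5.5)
Your proposal is correct and follows essentially the same argument as the paper. The paper also writes each support's zero-set $V_S$ as an affine subspace cut out by $k$-sparse equations, places the all-ones vector in the row space of $M_S$ to obtain $\ell\le k\cdot\operatorname{rank} M_S$, and then covers the $2^{\ell-1}$ odd inputs by at most $\binom nk$ such subspaces. The only cosmetic difference is how the rank bound is reached: the paper first reduces to a maximal linearly independent set of rows, whereas you take an inclusion-minimal set of rows summing to $\mathbf 1$ and note that it is independent. Both routes give $\operatorname{rank} M_S\ge\ell/k$.
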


The construction in \Cref{thm:parity-route} matches the general projection upper bound of \Cref{prop:parity-projection-saturation} up to constant factors and is therefore order-optimal among projections from PARITY. Thus, since the bound depends on the construction only through $\ell$, the difference between the PARITY route and the subgraph route cannot be removed merely by packing a larger parity instance into the same $\kXOR$ target.

The proofs of \Cref{thm:parity-route,prop:parity-projection-saturation} appear in \Cref{app:parity-route}. At $k=\Theta(\log^2 n)$, one has $\log(n/k)=\Theta(\log n)$, so \Cref{thm:parity-route} gives exponent $\Omega_d(\log^{5/(d-1)}n)$ in the logarithm of the circuit-size lower bound. \Cref{tab:parity-comparison} compares this with the subgraph route for the same circuit class at $m=\Theta(n)$.

\begin{table}[t]
\centering
\caption{Depth-by-depth comparison of the PARITY and subgraph routes for $\kXOR$, with odd $k=\Theta(\log^2 n)$ and $m=\Theta(n)$.}
\label{tab:parity-comparison}
\small
\renewcommand{\arraystretch}{1.18}
\setlength{\tabcolsep}{4pt}
\begin{tabular}{@{}>{\raggedright\arraybackslash}p{0.18\textwidth}>{\raggedright\arraybackslash}p{0.24\textwidth}>{\raggedright\arraybackslash}p{0.34\textwidth}>{\raggedright\arraybackslash}p{0.14\textwidth}@{}}
\toprule
Circuit class & PARITY route & Subgraph route & Comparison \\
\midrule
Depth $2$ & $\exp(\Omega_2(\log^5 n))$ & $\exp(\Omega(\log^3 n))$ by \Cref{thm:D} & PARITY stronger \\
\ifhidedepththree
$\Sigma_3$ & $\exp(\Omega_3(\log^{5/2}n))$ & $\exp(\Omega_3(\log^2 n))$ unconditionally by \Cref{thm:C} & PARITY stronger unconditionally \\
\else
$\Sigma_3$ & $\exp(\Omega_3(\log^{5/2}n))$ & $\exp(\Omega(\log^3 n))$ by \Cref{thm:E} & PARITY weaker \\
\fi
$\Pi_3$ & $\exp(\Omega_3(\log^{5/2}n))$ & $\exp(\Omega_3(\log^2 n))$ unconditionally by \Cref{thm:C} & PARITY stronger unconditionally \\
Fixed $d\ge4$ & $\exp(\Omega_d(\log^{5/(d-1)}n))$ & $\exp(\Omega_d(\log^2 n))$ by \Cref{thm:C} & PARITY weaker \\
\bottomrule
\end{tabular}
\end{table}

The comparison is within one row-rich regime: the subgraph-route lower bounds hold for every row count above their displayed budgets and therefore remain valid at $m=\Theta(n)$. The PARITY route is complementary rather than superseded. Under \Cref{conj:dbd}, the subgraph route gives $\exp(\Omega_d(\log^3 n))$ \ifhidedepththree for both depth-three orientations and every fixed $d\ge4$\else also for $\Pi_3$ and every fixed $d\ge4$\fi, and hence dominates the PARITY route in every unresolved row. The order-optimal embedded dimension $\ell=\Theta(k^2\log(n/k))$ from \Cref{prop:parity-projection-saturation} also explains why the PARITY route cannot reproduce the polynomial host parameter $N=\Theta(\sqrt{n/k})$ exploited by the subgraph projection.
\FloatBarrier

\section{The Pattern-Uniform Conjecture and the Conditional Completion}\label{sec:conjecture}

The fixed-pattern lower bound of \Cref{thm:lrr} is pointwise in the pattern: once $P$ and the depth are fixed, its exponent loss vanishes as the host size tends to infinity, but both that loss and the onset threshold may depend arbitrarily on $P$. When the pattern is $P^{\mathrm{exp}}_k$ and grows with $k$, the quantifier order (\ref{eq:lrr-fixed-quantifiers}) does not provide the polynomial host threshold needed by the target reductions. The following conjecture isolates exactly the uniform source statement used in this section.

\begin{conjecture}[Pattern-Uniform LRR, depth-by-depth form]\label{conj:dbd}
For the padded expander family $(P^{\mathrm{exp}}_k)_{k\ge k_0}$, for every fixed depth $d$ there are constants $\gamma_d>0$ and $c'_d>0$ such that
\begin{equation}\label{eq:uniform-lrr-conjecture}
\size_d(P^{\mathrm{exp}}_k\text{-}\SUB_N)\ge N^{\gamma_d k}
\end{equation}
for every $k\ge k_0$ and every $N\ge k^{c'_d}$.
\end{conjecture}

Equivalently, the conjecture asks for two forms of uniformity at once: a source exponent linear in $k$ and an onset host size polynomial in $k$, with constants allowed to depend on the fixed depth.

The fixed-pattern theorem of \Cref{thm:lrr} has the quantifier order
\begin{equation}\label{eq:lrr-fixed-quantifiers}
\begin{aligned}
&\forall\,\text{simple loopless }P\text{ with }e(P)\ge1\text{ and no isolated vertices}\ \forall d\ \forall\eta>0\\
&\qquad\exists N_0(P,d,\eta)\ \forall N\ge N_0(P,d,\eta):\\
&\qquad\qquad\size_d(P\text{-}\SUB_N)+2e(P)N^2+2\ge N^{\kappa(P)-\eta}.
\end{aligned}
\end{equation}
whereas \Cref{conj:dbd} asks for
\begin{equation}\label{eq:lrr-uniform-quantifiers}
\forall d\ \exists\gamma_d,c'_d>0\ \forall k\ge k_0\ \forall N\ge k^{c'_d}:
\qquad
\size_d(P^{\mathrm{exp}}_k\text{-}\SUB_N)\ge N^{\gamma_d k}.
\end{equation}
When $\kappa(P)-\eta>2$, \Cref{thm:lrr} also gives the pure non-input-gate form $\size_d(P\text{-}\SUB_N)\ge N^{\kappa(P)-\eta}$; this is the form used in \Cref{cor:fixed-source-rate}. The threshold in \Cref{eq:lrr-fixed-quantifiers} may depend arbitrarily on $P^{\mathrm{exp}}_k$, and hence on $k$; the conjecture does not reorder these quantifiers but requires that dependence to be polynomially bounded.

Tracking the switching estimate by~\cite{LRR17} reveals the same obstruction. Even if the restriction parameter $\delta>0$ of \Cref{eq:lrr-switching-expression} is fixed along the whole family and the desired source rate $\gamma>0$ is fixed, \Cref{lem:lrr-switching-scale} shows that the estimate becomes useful only when $\log N=\Omega_{d,\delta,\gamma}(k)$, rather than at the polynomial scale $N=k^{O_d(1)}$ required here. The published analysis also permits $d=d(N)=o(\log N/\log\log N)$ for each fixed pattern~\cite[p.~938]{LRR17}; this is a growing-depth allowance on a different uniformity axis and does not control the growing family $P^{\mathrm{exp}}_k$. \Cref{rem:lrr-switching-scope} records the scope of this limitation.

The depth-two instance is already proved, up to the harmless choice of the universal starting threshold: \Cref{thm:D}(i) is \Cref{eq:uniform-lrr-conjecture} with $\gamma_2=\gamma_D$ and $c'_2=2$ after replacing $k_0$ by $\max\{k_0,k_D\}$.

\ifhidedepththree
At depth three, no polynomial-host source lower bound with exponent linear in $k$ is presently proved for the growing expander family. Thus the $d=3$ case of \Cref{conj:dbd}, and all cases $d\ge4$, remain open.
\else
At depth three, \Cref{thm:E} proves the polynomial-threshold $\Sigma_3$ lower bound for the same family $P^{\mathrm{exp}}_k$, and therefore establishes one orientation of the $d=3$ frontier in \Cref{conj:dbd}. It does not establish the conjecture itself: by \Cref{eq:depth-three-size}, a lower bound on $\size_3$ must cover both orientations.\ignore{ The proof uses the top disjunction to select one middle CNF that accepts the planted copy and must reject every proper deletion. A top conjunction has a dual selection statement on a maximal negative input, which selects one rejecting middle DNF that must accept every prescribed restoration; the missing ingredient is a hard distribution on such maximal negative inputs together with a one-DNF anti-concentration theorem.} Thus the $\Pi_3$ orientation at $d=3$, the resulting $d=3$ case of \Cref{conj:dbd}, and all cases $d\ge4$ remain open.
\fi

\begin{remark}[Universal-rate strengthening]\label{rem:universal}
A stronger form of \Cref{conj:dbd} asserts that the constants $\gamma_d$ may all be bounded below by one universal constant $\gamma_0>0$, independent of the depth. The depth-by-depth form is the minimal hypothesis needed below and yields depth-dependent constants in the target exponents. The universal-rate form would replace every $\Omega_d$ in \Cref{thm:F} by a universal $\Omega$, paralleling the universal fixed-$k$ rate of \Cref{thm:B}. Both formulations are sufficient source-side statements; neither is claimed to be necessary for the target lower bounds.
\end{remark}

\begin{theorem}[\conditionalresultletter: conditional all-depth completion]\label{thm:F}
Assume \Cref{conj:dbd}. For every fixed depth $d$ there are integers $k_{\conditionalresultletter}(d)\ge k_0$ and constants $C_{\conditionalresultletter}(d)>1$ such that, for every $k\ge k_{\conditionalresultletter}(d)$ and every $n\ge k^{C_{\conditionalresultletter}(d)}$,
\begin{align}
\size_d(\kOV_{n,D,k})&\ge\exp\bigl(\Omega_d(k\log n)\bigr) &&\text{for every }D\ge\COV k\log n,\label{eq:F-ov}\\
\size_d(\kXOR_{n,m,k})&\ge\exp\bigl(\Omega_d(k\log n)\bigr) &&\text{for odd }k\text{ and every }m\ge\Cxor k\log(\mathrm e n/k),\label{eq:F-xor}\\
\size_d(\kSUM_{n,m,k})&\ge\exp\bigl(\Omega_d(k\log n)\bigr) &&\text{for every }m\ge\Csum k\log(\mathrm e n/k).\label{eq:F-sum}
\end{align}
Consequently, the conclusion applies along every sequence $k=k(n)\to\infty$ with $k=n^{o(1)}$. No new target reduction is used: the projections are exactly those of \Cref{thm:A}.
\end{theorem}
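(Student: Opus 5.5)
The argument is a direct transfer of \Cref{conj:dbd} through the unchanged projections of \Cref{thm:A}. It follows the template of the target half of \Cref{thm:E}, with the conjectured source bound in place of \Cref{eq:E-source}.

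Fix $d$, and let $\gamma_d,c'_d$ be the constants supplied by \Cref{conj:dbd}. Set
\begin{equation*}
C_{\conditionalresultletter}(d):=\max\{2c'_d+3,\,4\},
\end{equation*}
and take $k_{\conditionalresultletter}(d)\ge k_0$ large. For $\kOV$ use the host $N=N_{\mathrm{OV}}$, and for the two list targets use $N=N_{\oplus,\Sigma}$, as in \Cref{eq:fixed-k-host-substitutions}.

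First, the hypothesis $n\ge k^{C_{\conditionalresultletter}(d)}$ gives $n\ge\max\{16k,k^3\}$ once $k$ is large. Hence the capacities \Cref{eq:host-substitution-capacities} and the budgets \Cref{eq:host-budget-ov,eq:host-budget-xor,eq:host-budget-sum-sharp} hold, and the three projections of \Cref{thm:A} are admissible at the displayed $D$ and $m$. The $\kXOR$ projection is used only for odd $k$.

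Second, check the conjecture's onset. By \Cref{eq:list-host-estimates}, $N_{\oplus,\Sigma}\ge\frac12\sqrt{n/k}\ge\frac12 k^{(C_{\conditionalresultletter}(d)-1)/2}\ge k^{c'_d}$ for large $k$. Similarly, $N_{\mathrm{OV}}\ge\frac12\sqrt n\ge k^{c'_d}$. So \Cref{conj:dbd} applies and gives $\size_d(P^{\mathrm{exp}}_k\text{-}\SUB_N)\ge N^{\gamma_d k}$.

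Third, apply \Cref{lem:projection-monotonicity}. It yields target size at least $N^{\gamma_d k}$, which is at least $\exp(\frac{\gamma_d}{4}k\log n)$ for $\kOV$ by \Cref{eq:ov-host-estimates}. For the list targets it is at least $\exp(\frac{\gamma_d}{4}k\log(n/k))$ by \Cref{eq:list-host-estimates}.

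The only step needing care is converting the list-target base $n/k$ to base $n$, since \Cref{eq:F-xor,eq:F-sum} are stated as $\exp(\Omega_d(k\log n))$. The polynomial range handles this: $k\le n^{1/C_{\conditionalresultletter}(d)}\le n^{1/4}$, so $\log(n/k)\ge\frac34\log n$, and the loss is only a constant factor.

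The subpolynomial consequence is then immediate. If $k\to\infty$ with $k=n^{o(1)}$, then eventually $k\ge k_{\conditionalresultletter}(d)$ and $n\ge k^{C_{\conditionalresultletter}(d)}$, because $C_{\conditionalresultletter}(d)\log k=o(\log n)$. No new reduction enters: the argument only composes \Cref{thm:A} with the assumed source bound. I expect no real obstacle beyond the threshold bookkeeping.
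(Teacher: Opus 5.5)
Your proposal is correct and follows the paper's proof essentially step for step: the same host substitutions, the same admissibility check via \Cref{lem:host-substitutions}, the same onset verification $N\ge k^{c'_d}$, and the same use of $C_{\conditionalresultletter}(d)\ge4$ to get $\log(n/k)\ge\frac34\log n$ for the list targets. The only difference is your threshold $\max\{2c'_d+3,4\}$ versus the paper's $\max\{2c'_d+4,4\}$, which is immaterial.
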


\begin{proof}
Fix the depth $d$, and let $\gamma_d,c'_d>0$ be supplied by \Cref{conj:dbd}. Choose $C_{\conditionalresultletter}(d)$ sufficiently large that
\begin{equation}\label{eq:F-threshold-choice}
C_{\conditionalresultletter}(d)\ge\max\{2c'_d+4,4\}.
\end{equation}
After increasing $k_{\conditionalresultletter}(d)$, every pair $k\ge k_{\conditionalresultletter}(d)$ and $n\ge k^{C_{\conditionalresultletter}(d)}$ satisfies $n\ge\max\{16k,k^3\}$, and both host sizes in \Cref{eq:fixed-k-host-substitutions} are at least $k^{c'_d}$. Indeed, \Cref{eq:ov-host-estimates,eq:list-host-estimates} give
\begin{equation}\label{eq:F-host-thresholds}
N_{\mathrm{OV}}\ge\frac12 k^{C_{\conditionalresultletter}(d)/2}\ge k^{c'_d},
\qquad
N_{\oplus,\Sigma}\ge\frac12 k^{(C_{\conditionalresultletter}(d)-1)/2}\ge k^{c'_d}
\end{equation}
for all sufficiently large $k$.

For $\kOV$, the projection in \Cref{thm:A}, projection monotonicity, and \Cref{conj:dbd} give
\begin{equation}\label{eq:F-ov-transfer}
\size_d(\kOV_{n,D,k})\ge N_{\mathrm{OV}}^{\gamma_d k}
\end{equation}
whenever $D\ge\COV k\log n$, where admissibility follows from \Cref{eq:host-budget-ov}. Since \Cref{eq:ov-host-estimates} gives $\log N_{\mathrm{OV}}=\Omega(\log n)$, this proves \Cref{eq:F-ov}.

For $\kXOR$ and $\kSUM$, \Cref{conj:dbd,thm:A,lem:projection-monotonicity} yield
\begin{align}
\size_d(\kXOR_{n,m,k})&\ge N_{\oplus,\Sigma}^{\gamma_d k} &&\text{for odd $k$ and every }m\ge\mxor(P^{\mathrm{exp}}_k,N_{\oplus,\Sigma}),\label{eq:F-list-transfer-xor}\\
\size_d(\kSUM_{n,m,k})&\ge N_{\oplus,\Sigma}^{\gamma_d k} &&\text{for every }m\ge\msum(P^{\mathrm{exp}}_k,N_{\oplus,\Sigma}).\label{eq:F-list-transfer-sum}
\end{align}
Because $C_{\conditionalresultletter}(d)\ge4$, one has $\log(n/k)\ge(3/4)\log n$. Together with \Cref{eq:list-host-estimates}, this gives $\log N_{\oplus,\Sigma}=\Omega(\log n)$, so both lower bounds are $\exp(\Omega_d(k\log n))$. Finally, \Cref{eq:host-budget-xor,eq:host-budget-sum-sharp} make the displayed hypotheses $m\ge\Cxor k\log(\mathrm e n/k)$ and $m\ge\Csum k\log(\mathrm e n/k)$ admissible, completing the proof.
\end{proof}

\Cref{cor:conditional-even-xor} combines \Cref{thm:F} with the even-$k$ lifting projection. For every fixed depth $d$ and every supplied decomposition $K=tr+s$ with $t\ge2$, odd $r\ge k_{\conditionalresultletter}^{\oplus}(d)$, $0\le s<2t$, and either $s=0$ or $t\nmid s$, it gives an exponent $\Omega_d(r\log n')$ under its own polynomial threshold in $n'$. In particular, any family of decompositions with $r=\Theta(K)$ preserves the linear-in-$K$ exponent scale.

\ignore{In the subpolynomial regime, the unconditional rows are those of \Cref{tab:growing-unconditional-frontier}, including the all-depth floor
\begin{equation}\label{eq:conditional-frontier-floor}
n^{\Omega_d(\min\{\sqrt{k},\log n\})}.
\end{equation}
Assuming \Cref{conj:dbd}, \Cref{thm:F} supplies the only conditional row: $n^{\Omega_d(k)}$ for \ifhidedepththree both depth-three orientations\else top-conjunction depth three\fi{} and for every fixed depth $d\ge4$. This completes the \ifhidedepththree three\else four\fi{} growing-$k$ rows summarized in \Cref{tab:intro-frontier}.}

\section{Synthesis and Open Problems}\label{sec:synthesis}

This section distills three cross-target lessons from the preceding proofs, i.e., how the targets enforce selection rigidity, when lower and upper bounds match at the exponent scale, and how verification cost depends on depth, and then separates the remaining source-side and target-side frontiers.

\subsection{Cross-target synthesis}\label{sec:synthesis-dichotomies}

The three targets share the same source and the same present-and-consistent edge-selection interface, but they differ in how selection rigidity is represented. Over $\F_2$, selection counts are visible only modulo two. The selector equations in \Cref{eq:xor-selector-equation}, together with the support-size identity in \Cref{eq:xor-support-size}, yield the odd-$k$ rigidity formalized in \Cref{lem:xor-support-rigidity}; when $k$ is even, the branch in which the anchor is unselected and every group contributes an even number of candidates survives, and it admits supports that satisfy every row without encoding a copy of $P$. This is an algebraic obstruction specific to parity-based selection, not a property of the source pattern.

The $\kSUM$ construction instead uses signed integer coordinates that count the selected candidates in each group exactly; \Cref{prop:sum-two-torsion} shows that this mechanism cannot arise from a group-homomorphic encoding of the $\F_2$ primitive into $\Z_{2^m}$. The $\kOV$ target builds one-vector-per-class selection into the witness semantics, as summarized in \Cref{tab:projection-roles}. These two selection mechanisms therefore work for both parities without an exponent loss.

For an even target parameter $K$ supplied with a decomposition $K=tr+s$ satisfying the hypotheses of \Cref{lem:xor-lift}, the black-box lift recovers the fixed-$K$\ifhidedepththree\else, top-disjunction depth-three,\fi{} and conditional lower bounds of \Cref{cor:fixed-even-xor}\ifhidedepththree\else, \Cref{cor:depth-three-even-xor},\fi{} and \Cref{cor:conditional-even-xor}. The inherited exponent is governed by the odd source parameter $r$. Being a depth-zero projection between $\kXOR$ instances, the lift composes with the other odd-$k$ $\kXOR$ lower bounds in the same way.

At the principal fixed-$k$ budgets, specializing the lower bounds to the top-disjunction depth-three orientation and comparing them with the elementary circuits of \Cref{prop:ov-brute-force,prop:xor-brute-force,prop:sum-brute-force} shows that the upper and lower exponents have the same asymptotic scale for all three targets; the $\kXOR$ comparison uses the direct odd-$k$ statement. This makes no tightness claim at depth two or for top-conjunction depth three, and it does not identify the leading constant: \Cref{prop:regular-core-floor,rem:regular-core-floor-scope} show that the specific degree-$4$ regular-core route remains separated from the brute-force leading exponent constant by a factor greater than $4$.

A separate comparison concerns the cost of checking a fixed choice of candidate indices. For $\kOV$ and $\kXOR$, the fixed-support predicates have depth-two verifiers of size $O(D)$ and $m2^{O(k)}$, respectively. For $\kSUM$, the direct depth-two verifier costs $2^{\Theta(km)}$ at the principal width, whereas the top-disjunction depth-three block-carry construction reduces the additional exponent to the scale in \Cref{eq:F-sum-upper-scale}; \Cref{eq:F-sum-exponent-ratio} records the remaining very-high-width overhead. The fixed-support restrictions in \Cref{eq:D-sum-parity-lower,eq:D-xor-parity-lower} impose a size floor $\exp\!\bigl(\Omega_d(k^{1/(d-1)})\bigr)$ on both list targets for every fixed depth $d\ge2$ as $k\to\infty$. The separation is therefore quantitative and depth-sensitive rather than the mere presence of a parity obstruction.

\subsection{Open problems}\label{sec:open-problems}

\begin{enumerate}[label=(\arabic*),itemsep=5pt]
\item \textbf{The unresolved depth frontier.} \Cref{conj:dbd} is proved at $d=2$ by \Cref{thm:D}(i). \ifhidedepththree At $d=3$, both orientations remain open at the required polynomial host scale; for every fixed $d\ge4$, the full conjecture remains open.\else At $d=3$, \Cref{thm:E} establishes the $\Sigma_3$ orientation for the conjecture's own padded expander family, leaving the $\Pi_3$ orientation and hence the $\size_3$ statement open; for every fixed $d\ge4$, the full conjecture remains open.\fi{} The target-side goal is weaker: without necessarily proving the source conjecture, obtain the missing lower bounds for \ifhidedepththree both depth-three orientations\else top-conjunction depth three\fi{} and every fixed depth $d\ge4$. \Cref{lem:lrr-switching-scale,rem:lrr-switching-scope} explain why the published switching estimate in~\cite{LRR17} does not reach the required polynomial host size.
\item \textbf{Even-$k$ XOR without an exponent loss.} The projection in \Cref{lem:xor-lift} transfers an odd-parameter lower bound through any supplied admissible decomposition $K=tr+s$, but the resulting exponent is governed by the odd source parameter $r$ rather than directly by $K$. Can one design a different linear lift that, for every even $K$, starts from an odd parameter $r=\Omega(K)$, or a non-parity selection mechanism that enforces exact cardinalities without this potential loss?
\item \textbf{The $\kSUM$ depth-and-width frontier.} At fixed $k$, the block-carry circuit of \Cref{prop:sum-brute-force} closes the same-orientation exponent-scale gap at the principal width for top-disjunction depth three, but the optimal bounds at depth two and \ifhidedepththree the growing-$k$ depth-three frontier\else top-conjunction depth three\fi{} remain open. \ifhidedepththree The upper-bound estimates in\else Even for the top-disjunction depth-three upper bound,\fi{} \Cref{eq:F-sum-upper-scale,eq:F-sum-exponent-ratio} leave two width regimes unresolved: support enumeration can dominate when the lower bound is capped by the bit width $m$, and the square-root carry overhead can grow at very large width. Determine which of these depth and width overheads are necessary, or strengthen the lower bound by exploiting the coupled structure of modular addition.
\end{enumerate}

Because the projection interface is fixed independently of the source, progress on the source side is inherited by all three targets through the same three constructions: a new colored-subgraph family, a better exponent rate, or an extension to further depths improves the target bounds without any change to \Cref{thm:A}. Planted and average-case variants, positioned in \Cref{sec:related-work}, require distributional lower bounds that do not follow from the present worst-case statements and lie outside the scope of this paper. A systematic theory of which fine-grained $\AC^0$ reductions preserve the correct lower-bound exponent is likewise deferred to separate work.

\ifanon\else
  \section*{Acknowledgments}
I am deeply grateful to my advisor, Professor Prashant Nalini Vasudevan, who encouraged me to study average-case planted $\kXOR$ lower bounds in $\AC^0$ and with whom I had many formative discussions throughout 2023. His guidance and support have been the foundation of this work. I also thank Professor Alexander Golovnev for a discussion in the summer of 2024 on possible routes to $\AC^0$ lower bounds for the average-case planted problem, and for suggesting candidate source problems such as the Coin Problem; that exchange prompted the broader search that led to colored subgraph isomorphism.
\fi
\section*{AI Disclosure}
\ifhidedepththree Five\else Six\fi{} specific technical contributions were aided by AI language models. We verified each of them independently before inclusion. GPT-5.5 pointed us to Beame's small-clique lower bound for CRCW PRAMs~\cite{Beame90} as a candidate unconditional source statement for the growing-$k$, every-fixed-depth regime; we verified its applicability and developed the resulting argument (Fact~\ref{fact:beame}, Theorem~\ref{thm:C}) in Section~\ref{sec:growing-clique}. \ifhidedepththree\else While we were attempting to extend our depth-two counting argument, GPT-5.6 Sol suggested combining two known ingredients for the depth-three result in Section~\ref{sec:growing-depth-three}: an adaptive commitment process, which is a conceptual descendant of the decision-tree querying in the algorithmic proofs of the switching lemma due to Razborov and to Beame~\cite{Razborov95,BeamePrimer94}, and the formulation of the target as a minterm event on a random background, which follows the planted-subgraph method of~\cite{Rossman08,LRR17}. We verified the suggestion independently and then developed the depth-three results for our padded expander family. \fi GPT-5.5 suggested and derived the Erd\H{o}s--Ginzburg--Ziv degeneracy argument of Proposition~\ref{prop:sum-egz-degeneracy} (Appendix~\ref{app:sum-small-modulus}). While we were attempting to derive a direct projection for even $k$, GPT-5.5 suggested the odd-to-even lifting approach recorded as Lemma~\ref{lem:xor-lift} (Appendix~\ref{app:even-xor-lift}); we verified the construction and its admissibility conditions before developing the consequences. GPT-5.5 suggested and derived the constant-factor floor argument of Proposition~\ref{prop:regular-core-floor} (Appendix~\ref{app:structural-floor}). GPT-5.5 also suggested and derived the depth-sensitive exponent comparison at the principal bit-width for $\kSUM$ in Appendix~\ref{app:sum-tightness-gap}, contrasting the depth-three block-carry construction against the direct depth-two verifier. In every case, the authors take full responsibility for the correctness of the final statements and proofs.

\bibliographystyle{alpha}
\bibliography{refs}

\clearpage
\appendix
\crefalias{section}{appendix}
\crefalias{subsection}{appendix}
\section{Source Formalities}\label{app:source-formalities}

Throughout this appendix, $N$ denotes the host-size parameter that Li, Razborov, and Rossman write as $n$; our $n$ is reserved for the target universe size and never denotes a source host parameter. This appendix verifies that the structured source of \Cref{def:psub} is exactly the colored function to which their lower bound applies, matches their circuit model and gate measure with ours, assembles the cited ingredients of \Cref{thm:lrr}, and finally records the switching-scale obstruction of \Cref{lem:lrr-switching-scale}, which explains why the published estimate does not yield \Cref{conj:dbd} at the polynomial-in-$k$ host size required by the target reductions. No reduction or target-specific construction is used here.

\subsection{Coordinate equivalence with the LRR structured problem}\label{app:source-equivalence}

\begin{proposition}[Coordinate-level equivalence]\label{prop:psub-equivalence}
For every simple loopless graph $P$ with one fixed orientation $f=(a_f,b_f)$ of each edge and every $N\ge1$, the function $P\text{-}\SUB_N$ of \Cref{def:psub} coincides, after a bijection of Boolean input coordinates, with $\mathrm{SUBGRAPH}_{\mathrm{col},N}(P)$ of~\cite[Sec.~5, p.~960]{LRR17}.
\end{proposition}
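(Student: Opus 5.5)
The plan is to recall the definition of $\mathrm{SUBGRAPH}_{\mathrm{col},N}(P)$ from~\cite[Sec.~5]{LRR17} and match it with \Cref{def:psub} coordinate by coordinate. As I read it, the LRR structured problem fixes the host vertex set $V(P)\times[N]$. Its input variables are indexed by pairs consisting of a pattern edge $\{a,b\}\in E(P)$ and labels $(i,j)\in[N]^2$, one variable for each potential host edge $\{(a,i),(b,j)\}$. The function accepts exactly when some map $\phi:V(P)\to[N]$ makes every $\{(a,\phi(a)),(b,\phi(b))\}$ present. Equivalently, some colored copy of $P$ exists, meaning one whose vertex for $a$ lies in the class $\{a\}\times[N]$. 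Since that problem only ever involves host edges between classes joined by a pattern edge, there are no further coordinates to dispose of. If the cited definition instead allows all pairs of host vertices as inputs, the extra coordinates would simply be dummy variables that the function ignores. In that case I would state the equivalence up to ignored coordinates, which is harmless for the circuit bounds.

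The bijection itself is direct. Using the fixed orientation $f=(a_f,b_f)$, send $X_{f,u,w}$ to the LRR variable for the unordered host pair $\{(a_f,u),(b_f,w)\}$. This map is well defined because $P$ is simple and loopless. Each unordered pattern edge therefore has exactly one orientation, and $a_f\ne b_f$, so the two endpoints lie in distinct color classes. It is injective because the pattern edge $f$ and the ordered labels $(u,w)$ can be read off from the unordered host pair: the colors identify which endpoint is $a_f$ and which is $b_f$. It is surjective because every coordinate of $\mathrm{SUBGRAPH}_{\mathrm{col},N}(P)$ arises from some pattern edge and some label pair.

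Finally I would check that the acceptance conditions agree under this bijection. A map $\phi$ witnesses \Cref{eq:psub-acceptance} if and only if, for every $f$, the host pair $\{(a_f,\phi(a_f)),(b_f,\phi(b_f))\}$ is present. That is verbatim the LRR witness condition for the same $\phi$. I would note explicitly that $\phi$ need not be injective into $[N]$ in either formulation, and that isolated vertices receive arbitrary labels in both.

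The main obstacle is notational fidelity: confirming from~\cite{LRR17} exactly how their input variables are indexed. If they use a homomorphism-type formulation with variables indexed by $E(P)\times[N]^2$, the bijection is trivially the identity up to orientation, and the proof is short bookkeeping. I would also flag the host-size renaming $n\mapsto N$ stated at the start of the appendix.
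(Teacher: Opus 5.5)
Your proposal is correct and matches the paper's proof essentially step for step. Like the paper, you identify the coordinates of the LRR structured problem with the host pairs $\{(a,u),(b,w)\}$ for $\{a,b\}\in E(P)$, use simplicity and the single fixed orientation to obtain the bijection $\{(a_f,u),(b_f,w)\}\mapsto X_{f,u,w}$, and observe that a properly colored copy is exactly $\{(a,\phi(a)):a\in V(P)\}$ for some map $\phi$, so the two acceptance conditions coincide. Your fallback about extra ignored coordinates is not needed: the structured problem on p.~960 (vertex set $V(P)\times[N]$ with first-coordinate coloring, as in Def.~2.8(v) at $\alpha\equiv1$) has exactly these cross-class pattern-edge coordinates, which is what the paper asserts.
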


\begin{proof}
The structured problem introduced in the unnumbered paragraph of~\cite[Sec.~5, p.~960]{LRR17} uses the host vertex set $V(P)\times[N]$ and first-coordinate coloring $\chi(a,u)=a$; this is the same coordinate space as the random colored graph of~\cite[Def.~2.8(v)]{LRR17} at vertex exponent $\alpha\equiv1$. Its Boolean input coordinates are the possible host edges
\begin{equation}\label{eq:lrr-host-coordinate}
\{(a,u),(b,w)\}
\qquad
\text{with }\{a,b\}\in E(P)\text{ and }u,w\in[N].
\end{equation}
Because $P$ is simple and every undirected edge is oriented exactly once, the map
\begin{equation}\label{eq:coordinate-bijection}
\{(a_f,u),(b_f,w)\}\longmapsto X_{f,u,w}
\end{equation}
is a bijection between the coordinates in \Cref{eq:lrr-host-coordinate} and those in \Cref{eq:psub-coordinates}.

A properly colored copy of $P$ in the structured host selects one vertex from each color class $\{a\}\times[N]$, and therefore has the form $\{(a,\phi(a)):a\in V(P)\}$ for a map $\phi:V(P)\to[N]$. Vertices of distinct colors are automatically distinct, even if their second coordinates agree. The selected vertices span every required pattern edge exactly when, for each $f=(a_f,b_f)\in E(P)$, the host edge $\{(a_f,\phi(a_f)),(b_f,\phi(b_f))\}$ is present. Under \Cref{eq:coordinate-bijection}, this condition is $X_{f,\phi(a_f),\phi(b_f)}=1$, which is precisely the acceptance condition in \Cref{eq:psub-acceptance}. Hence the two Boolean functions agree under the coordinate bijection.
\end{proof}

\ignore{\begin{remark}[Why the structured form is used]\label{rem:why-structured-source}
The structured form has a fixed Boolean input set, with the coloring encoded by coordinate names rather than supplied as part of the input. Consequently, the later maps to $\kOV$, $\kXOR$, and $\kSUM$ are literal projections from a fixed source cube. Moreover, the hard distributions used in~\cite{LRR17} are supported on this structured host space, so passing from their notation to $P\text{-}\SUB_N$ loses neither instances nor hardness.
\end{remark}}

\subsection{Citation assembly for the fixed-pattern lower bound}\label{app:lrr-assembly}

\begin{proof}[Proof of \Cref{thm:lrr}]
For a fixed $d$, let $P$ be a fixed simple loopless pattern $P$ with at least one edge and no isolated vertices. By \Cref{prop:psub-equivalence}, it suffices to prove the lower bound for $\mathrm{SUBGRAPH}_{\mathrm{col},N}(P)$ on the structured host $V(P)\times[N]$. We use $\theta_{\mathrm{col}}(P)$, $\kappa_{\alpha,\beta}(P)$, and $G_{\alpha,\beta}(N)$ only as the names of the objects defined in~\cite[Defs.~2.8 and~2.12]{LRR17}; no property of them beyond the statements cited below is used.

By~\cite[Cor.~4.2]{LRR17}, there is a function $\beta:E(P)\to[0,2]$ such that $(\1,\beta)\in\theta_{\mathrm{col}}(P)$ and
\begin{equation}\label{eq:optimal-colored-pair}
\kappa_{\1,\beta}(P)=\kappa_{\mathrm{col}}(P)=\kappa(P),
\end{equation}
where $\kappa_{\mathrm{col}}(P)$ is \ignore{the parameter defined in}from~\cite[Def.~2.12(iii)]{LRR17}; the first equality is the conclusion of the cited corollary, and the second is the notation fixed in \Cref{sec:fixed-pattern-hardness}.

Consider the colored random graph $G_{\1,\beta}(N)$ of~\cite[Def.~2.8(v)]{LRR17}. The displays of~\cite[Thms.~3.11 and~3.12]{LRR17} write $G_{\alpha,\beta}(P)$, but the definition and the surrounding proofs show that the argument is the host parameter; we therefore use the corrected notation $G_{\alpha,\beta}(N)$. Since the vertex exponent is identically one, its vertex set is
\begin{equation}\label{eq:lrr-random-host}
\{(a,u):a\in V(P),\ 1\le u\le\lfloor N^{1}\rfloor\}=V(P)\times[N],
\end{equation}
and its possible edges are exactly the coordinates of \Cref{eq:lrr-host-coordinate}, with the first-coordinate coloring. Thus every graph in the support of $G_{\1,\beta}(N)$ is an input to the structured function of \Cref{prop:psub-equivalence}. Any circuit computing $P\text{-}\SUB_N$ on every input has zero error under this distribution and therefore satisfies the average-case correctness hypothesis of the Li--Razborov--Rossman theorem.

Li, Razborov, and Rossman call the relevant model type-I in~\cite[Sec.~3.2, p.~948]{LRR17}: the standard definition of $\AC^0$\ignore{constant-depth circuits over unbounded-fan-in AND and OR together with NOT gates}. Their Theorem~3.11 is the type-II statement \ignore{for fan-in two and $O(1)$ alternations }and is not used for this import. \ignore{Our circuits, whose negations are already normalized to input leaves, form a subclass of the type-I model. }We also consider a conservative offset: if their depth convention counts the input layer or a leaf-negation layer separately, we apply their theorem at depth $d+2$; because the statement holds at every fixed depth, this changes neither the conclusion nor any later quantifier.

Footnote~4 of~\cite[p.~940]{LRR17} specifies that size is the number of gates, but the paper does not separately state whether variable nodes, constant nodes, or input NOT gates are included. Let $S:=\size_d(P\text{-}\SUB_N)$, and recall that $M_{\mathrm{src}}=e(P)N^2$. A normalized circuit with $S$ non-input AND/OR gates can be presented in their model using at most
\begin{equation*}
S+2M_{\mathrm{src}}+2
\end{equation*}
gates under every reading of that omission: at most $M_{\mathrm{src}}$ variable nodes, at most one shared input-negation node per variable, plus at most two constant nodes. Thus the source's gate lower bound implies a lower bound on this quantity, without assuming that its word ``gate'' already means our non-input gate. This also avoids relying on the proof-internal estimate $w=O(g^2)$ on p.~949 to infer which nodes are included in $g$.

To invoke their Thm.~3.12, suppose this additive quantity is smaller than $N^{\kappa(P)+1}$; otherwise the desired lower bound is immediate. Under this assumption the converted source-model circuit has this polynomial size in $N$, as required by the definition of type-I $\AC^0$ used in~\cite{LRR17}. Fix any constant accuracy requirement in their average-case formulation. The circuit under consideration has zero error on the support of $G_{\1,\beta}(N)$ and therefore satisfies that requirement. The type-I lower bound of~\cite[Thm.~3.12]{LRR17} now gives
\begin{equation}\label{eq:lrr-asymptotic-bound}
S+2M_{\mathrm{src}}+2\ge N^{\kappa_{\1,\beta}(P)-o_{P,d}(1)}=N^{\kappa(P)-o_{P,d}(1)},
\end{equation}
where the equality uses \Cref{eq:optimal-colored-pair}. The asymptotic is taken with $P$, $(\1,\beta)$, the adjusted fixed depth, and the accuracy requirement fixed; in particular it is not uniform in $P$ or $d$, and $\beta$ is determined by $P$ through \Cref{eq:optimal-colored-pair}. Choosing the $o_{P,d}(1)$ term to have magnitude at most $\eta$ proves \Cref{eq:lrr-quantified}. If $\kappa(P)-\eta>2$, instead choose it to have magnitude at most $\eta/2$. Then $2e(P)N^2+2=o(N^{\kappa(P)-\eta/2})$, and for all sufficiently large $N$,
\begin{equation*}
S\ge N^{\kappa(P)-\eta/2}-2e(P)N^2-2\ge N^{\kappa(P)-\eta},
\end{equation*}
which proves the pure non-input-gate consequence in \Cref{thm:lrr}.
\end{proof}

\begin{remark}[Alternative citation route]\label{rem:lrr-alternative-route}
One may instead use the equivalence between the structured and unstructured colored functions established in~\cite[Sec.~5, p.~960]{LRR17}, together with their worst-case corollary. The argument above is more direct for the present paper because it identifies the hard distribution with the exact fixed input space of \Cref{def:psub} before invoking the average-case theorem.
\end{remark}

\subsection{The switching-scale obstruction}\label{app:lrr-uniformity}

The quantifier orders of the fixed-pattern theorem and the pattern-uniform conjecture are displayed side by side in \Cref{eq:lrr-fixed-quantifiers,eq:lrr-uniform-quantifiers}. In particular, the fixed-pattern threshold may depend arbitrarily on $P^{\mathrm{exp}}_k$, and hence on $k$; no bound of the form $N_0(P^{\mathrm{exp}}_k,d,\eta)\le k^{O_d(1)}$ follows from \Cref{thm:lrr}. The published switching estimate exhibits a separate obstruction at the level of direct parameter tracking.

The relevant bad-event bound in~\cite[Appendix~C]{LRR17} has the form
\begin{equation}\label{eq:lrr-switching-expression}
B_{d,\delta}(S,N)
:=
S\bigl(5N^{-\delta/d}\delta\log N\bigr)^{\delta\log N}.
\end{equation}
Here and below, logarithms follow the paper's base-two convention; replacing another fixed logarithm base by base two changes only fixed $d$- and $\delta$-dependent constants in the scale estimates. The fixed-pattern analysis takes $S=N^{O(1)}$ and chooses the constant $\delta>0$ after fixing the pattern. The next lemma grants the favorable assumption that one fixed positive $\delta$ is available along the whole family and shows that the same estimate nevertheless does not handle size $N^{\Theta(k)}$ at polynomial host size.

\begin{lemma}[Scale limitation of the published switching estimate]\label{lem:lrr-switching-scale}
Fix constants $d\ge1$, $\gamma>0$, and $\delta>0$. Along every sequence with $k\to\infty$ and $N\to\infty$,
\begin{equation}\label{eq:lrr-switching-log}
\log B_{d,\delta}(N^{\gamma k},N)
=
\gamma k\log N
-
\Theta_{d,\delta}\bigl((\log N)^2\bigr)
+
O_{\delta}\bigl(\log N\log\log N\bigr).
\end{equation}
Consequently, for every fixed $C>0$,
\begin{equation}\label{eq:lrr-polynomial-vacuous}
B_{d,\delta}(N^{\gamma k},N)\longrightarrow\infty
\qquad
\text{as }k\to\infty\text{ along }N=k^C.
\end{equation}
Moreover, along every sequence with $k\to\infty$ and $N\to\infty$ for which $B_{d,\delta}(N^{\gamma k},N)=o(1)$, one necessarily has
\begin{equation}\label{eq:lrr-exponential-threshold}
\log N=\Omega_{d,\delta,\gamma}(k).
\end{equation}
\end{lemma}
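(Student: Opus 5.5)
The plan is to take logarithms in the definition \Cref{eq:lrr-switching-expression} with $S=N^{\gamma k}$ and expand every factor explicitly. Write $\lambda:=\log N$. Then
\begin{equation*}
\log B_{d,\delta}(N^{\gamma k},N)=\gamma k\lambda+\delta\lambda\Bigl(\log 5-\frac{\delta}{d}\lambda+\log\delta+\log\lambda\Bigr).
\end{equation*}

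Reading off the terms gives the decomposition in \Cref{eq:lrr-switching-log}:
\begin{itemize}
\item The leading term is $\gamma k\lambda$.
\item The quadratic term is $-(\delta^2/d)\lambda^2$, which is $-\Theta_{d,\delta}(\lambda^2)$ with an explicit constant.
\item The remainder is $\delta\lambda(\log\lambda+\log(5\delta))$, which is $O_\delta(\lambda\log\lambda)$ once $\lambda\ge2$. This covers the case $\log(5\delta)<0$, since that term is then dominated in absolute value by $\lambda$.
\end{itemize}
The only care needed is to confirm that the hidden constants depend only on $(d,\delta)$ and not on $k$ or $N$. This is immediate because these are the only parameters in the expression.

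For \Cref{eq:lrr-polynomial-vacuous}, substitute $N=k^C$, so $\lambda=C\log k$.
\begin{itemize}
\item The leading term is $\gamma Ck\log k$.
\item The negative quadratic term is only $O_{d,\delta,C}(\log^2 k)$.
\item The remainder is $O_{\delta,C}(\log k\log\log k)$.
\end{itemize}
The difference therefore tends to $+\infty$ as $k\to\infty$, so $B\to\infty$.

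For \Cref{eq:lrr-exponential-threshold}, argue by contrapositive. Suppose $B=o(1)$, so eventually $\log B<0$. Then \Cref{eq:lrr-switching-log} gives
\begin{equation*}
\gamma k\lambda\le\frac{\delta^2}{d}\lambda^2+O_\delta(\lambda\log\lambda)\le\Bigl(\frac{\delta^2}{d}+o(1)\Bigr)\lambda^2,
\end{equation*}
using $\log\lambda=o(\lambda)$ as $N\to\infty$. Dividing by $\lambda$ yields $\lambda\ge(\gamma d/\delta^2-o(1))\,k$. This means $\log N=\Omega_{d,\delta,\gamma}(k)$.

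There is no real obstacle in this argument. The only delicate point is uniformity: the $O_\delta(\lambda\log\lambda)$ remainder must be absorbed into $o(\lambda^2)$ using $N\to\infty$ alone, without any relation between $k$ and $N$. The sign of $\log(5\delta)$ should be handled by absolute values rather than assumed positive.
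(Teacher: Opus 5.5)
Your proposal is correct and follows the paper's proof essentially step for step: expand $\log B_{d,\delta}(N^{\gamma k},N)$ into the linear, quadratic, and $\lambda\log\lambda$ terms, substitute $N=k^C$, and rearrange $\log B<0$ and divide by $\log N$. The only cosmetic difference is in the last step. The paper drops the $-\delta\lambda\log\lambda$ term because its sign is favorable, whereas you absorb the whole remainder into $o(\lambda^2)$, which also gives you the explicit constant $\gamma d/\delta^2$.
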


\begin{proof}
Substituting $S=N^{\gamma k}$ into \Cref{eq:lrr-switching-expression} and taking logarithms gives
\begin{align}
\log B_{d,\delta}(N^{\gamma k},N)
&=\gamma k\log N+\delta\log N\left(\log(5\delta)+\log\log N-\frac{\delta}{d}\log N\right)\notag\\
&=\gamma k\log N-\frac{\delta^2}{d}(\log N)^2+\delta\log N\log\log N+O_\delta(\log N),\label{eq:lrr-switching-expanded}
\end{align}
which proves \Cref{eq:lrr-switching-log}. If $N=k^C$, the first term in \Cref{eq:lrr-switching-expanded} is $\Theta_{C,\gamma}(k\log k)$, whereas the negative quadratic term and the remaining positive terms have magnitude $O_{C,d,\delta}((\log k)^2)$ and $O_{C,\delta}(\log k\log\log k)$, respectively. The logarithm therefore tends to $+\infty$, proving \Cref{eq:lrr-polynomial-vacuous}.

Now consider a sequence with $k\to\infty$, $N\to\infty$, and $B_{d,\delta}(N^{\gamma k},N)=o(1)$. Writing $x:=\log N$, \Cref{eq:lrr-switching-expanded} and $\log B_{d,\delta}(N^{\gamma k},N)<0$ imply, after rearranging and dropping the nonpositive term $-\delta x\log x$, that for all sufficiently large terms of the sequence,
\begin{equation}\label{eq:lrr-switching-necessary}
\gamma kx\le\frac{\delta^2}{d}x^2+O_\delta(x).
\end{equation}
After dividing by $x>0$, the right-hand side is $O_{d,\delta}(x)$ because $x\to\infty$. If $x=o(k)$, this is $o(k)$ and contradicts the left-hand side $\gamma k$. Hence $x=\Omega_{d,\delta,\gamma}(k)$, proving \Cref{eq:lrr-exponential-threshold}.
\end{proof}

\begin{remark}[Scope of the switching-scale calculation]\label{rem:lrr-switching-scope}
The lemma limits direct parameter tracking of the published estimate; it does not rule out a refinement of the Li--Razborov--Rossman method, a different restriction scheme, or another proof of \Cref{conj:dbd}. At the host sizes used by the target reductions, the benchmark $k=\Theta(\log^2 n)$ gives $\log N=\Theta(\log n)=\Theta(\sqrt{k})$, whereas \Cref{eq:lrr-exponential-threshold} requires $\log N=\Omega_{d,\delta,\gamma}(k)$. Two additional pattern dependencies remain in the original argument: the auxiliary restriction parameter and the constants in the probabilistic estimates are chosen after the pattern is fixed and would also have to be controlled uniformly. By contrast, the hitting-set estimate $|\mathcal H|\le2^{|E(P)|}$ from the proof of~\cite[Thm.~3.11]{LRR17} is not itself an obstruction at polynomial host size, because $2^{O(k)}=N^{O(k/\log N)}=N^{o(k)}$ whenever $\log N\to\infty$, and such a factor can be absorbed into an exponent $\gamma_dk$.
\end{remark}

\section{Pattern Constructions and Constants}\label{app:patterns}

Recall from \Cref{sec:structured-source} that $J_t$ denotes a matching of $t$ pairwise disjoint edges on fresh vertices, with $J_0$ empty. Every padded pattern below is a disjoint union $F\mathbin{\dot\cup}J_t$. All core and padding orientations follow the convention fixed there.

\subsection{The bounded-degree expander family}\label{app:expander-patterns}

The bounded-degree expander core supplies $\kappa=\Omega(k)$, while a disjoint matching realizes every sufficiently large edge count exactly.

\begin{theorem}[Explicit simple near-Ramanujan graphs of every degree and size]\label{thm:alon-explicit}
For every fixed integer $\Delta\ge3$ and every $\varepsilon>0$, there is a finite threshold $n_{\mathrm A}(\Delta,\varepsilon)$ such that, for every $v\ge n_{\mathrm A}(\Delta,\varepsilon)$ with $v\Delta$ even, there is a simple $\Delta$-regular graph on $v$ vertices whose nontrivial adjacency eigenvalues, that is, all adjacency eigenvalues other than the Perron eigenvalue $\Delta$, have absolute value at most $2\sqrt{\Delta-1}+\varepsilon$.
\end{theorem}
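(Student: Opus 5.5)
The plan is to assemble \Cref{thm:alon-explicit} from two cited ingredients rather than prove it from scratch, which matches the paper's summary in \Cref{lem:expander-pattern-family}.

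\emph{Step 1: near-Ramanujan source graphs.} Fix $\Delta\ge3$ and $\varepsilon>0$. I would invoke Mohanty, O'Donnell, and Paredes~\cite{MOP20}. For every fixed $\Delta\ge3$ and $\varepsilon'>0$, they give explicit simple $\Delta$-regular graphs whose nontrivial eigenvalues have absolute value at most $2\sqrt{\Delta-1}+\varepsilon'$. Their construction exists on every sufficiently large number of vertices $v$ with $v\Delta$ even, or at least on a sequence of sizes whose consecutive ratios tend to $1$. Either form is enough for the next step. I would apply it with $\varepsilon'=\varepsilon/2$.

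\emph{Step 2: exact-size transformation.} If~\cite{MOP20} already gives every admissible $v$, the proof ends here with $n_{\mathrm A}$ taken to be their threshold. Otherwise I would use Alon's exact-size transformation~\cite{AlonExplicit}. That result takes a near-Ramanujan $\Delta$-regular graph on a nearby size and modifies it into a simple $\Delta$-regular graph on exactly $v$ vertices. The modification adds or deletes a small number of vertices and locally rewires edges so that simplicity and regularity survive. The spectrum moves by at most an additive $\varepsilon/2$ once $v$ is large. Combining the two $\varepsilon/2$ losses gives the bound $2\sqrt{\Delta-1}+\varepsilon$. I would then take $n_{\mathrm A}(\Delta,\varepsilon)$ to be the maximum of the two cited thresholds.

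\emph{Main obstacle.} The real risk is a hypothesis mismatch in the citation, not any calculation. I must confirm four things. First, Alon's theorem is stated for all $\Delta\ge3$, including odd $\Delta$, where the parity condition $v\Delta$ even is exactly the condition it assumes. Second, it outputs simple graphs rather than multigraphs. Third, its eigenvalue bound is two-sided, on absolute values, rather than only on the second-largest eigenvalue. Fourth, its input-size hypothesis is met by the family from~\cite{MOP20}.

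If Alon's statement controls only $\lambda_2$, my fallback is this. I would pass to the bipartite double cover later, as \Cref{lem:expander-pattern-family} does. The downstream Cheeger application needs only a one-sided bound on $\lambda_2$. So I would weaken the claim accordingly and record that this weaker version is all the paper uses.
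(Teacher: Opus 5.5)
\textbf{The gap is simplicity.} You list ``it outputs simple graphs rather than multigraphs'' as something to confirm in Alon's statement, but that check fails. \cite[Thm.~1.3]{AlonExplicit} is stated for $(n,d,\lambda)$-graphs, and that convention does not include simplicity. Citing it directly gives you regularity, the exact vertex count for every large $v$ with $v\Delta$ even, and the two-sided bound $2\sqrt{\Delta-1}+\varepsilon$, with no $\varepsilon/2$ bookkeeping needed. It does not give you the word ``simple'', which is the only part of the statement not already supplied by the citation.

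To close this you must open both constructions.
\begin{enumerate}
\item Take Alon's intermediate graph $H$ to be the lift-model output of \cite{MOP20}: a lift of $K_{\Delta+1}$ followed by successive $2$-lifts. That input must be near-Ramanujan and bicycle-free, not merely ``of nearby size''. Then prove that every lift of a simple graph is simple. A loop would project to a loop. Two lifted edges with the same endpoints would lie over the same base edge, and the permutation on that edge sends the fiber index at one endpoint to a unique index at the other, so the two edges coincide.
\item Replace your generic ``add or delete a few vertices and locally rewire'' with Alon's actual step. He deletes a set $U$ given by \cite[Lem.~3.1]{AlonExplicit} and adds a perfect matching on $W=N_H(U)$. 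Cycle-free $(r+1)$-neighbourhoods make each $N_H(u)$ independent. The distance bound $d_H(u,u')\ge2r+3$ makes the neighbourhoods of distinct $u,u'\in U$ disjoint, with no edges between them. Hence $W$ is independent in $H-U$, every vertex of $W$ loses exactly one edge, and $|W|=\Delta|U|$ is even by the parity of $v\Delta$. Any perfect matching on $W$ therefore adds only non-edges and creates no loop or parallel edge.
\end{enumerate}
Without these two arguments your proof yields only a possibly non-simple $\Delta$-regular multigraph.

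\textbf{Your fallback is also unsound.} You do not need it, since Alon's bound is two-sided, but the reasoning is wrong. The bipartite double cover has spectrum $\{\pm\lambda:\lambda\in\operatorname{spec}(A)\}$. So, for a connected base graph, the cover's second eigenvalue equals the largest nontrivial \emph{absolute} eigenvalue of the base. A one-sided bound on $\lambda_2$ of the base would not control the cover's $\lambda_2$ if the base had an eigenvalue near $-\Delta$. This is exactly why \Cref{cor:degree-four-family} uses the two-sided bound: it needs $Q_k$ to be non-bipartite and $\lambda_2$ of the cover to be at most $\lambda_0$.
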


\begin{proof}
\emph{Cited conclusion.} Regularity, the exact vertex count, and the two-sided spectral bound are the conclusion of~\cite[Thm.~1.3]{AlonExplicit}. Simplicity is not part of Alon's $(n,d,\lambda)$-graph convention, so it is verified separately below rather than attributed to that theorem.

\emph{Selected source graph.} Alon's exact-size proof invokes the near-Ramanujan, bicycle-free input of~\cite[Thm.~3.3 and Sec.~3]{AlonExplicit} for its intermediate graph $H$; from that point onward, the transformation uses only the regularity, vertex-count, spectral, and local-cycle properties stated there. Alon's restatement does not include simplicity, so we inspect the corrected full version underlying that input directly. Its lift-model branch constructs $G_0$ as a lift of the simple base graph $K_{\Delta+1}$ and then obtains $G_1,\ldots,G_t$ by successive $2$-lifts~\cite[full version, Def.~2.8 and Sec.~5, Steps~1--2]{MOP20}. The resulting lift-model output has all the properties required of $H$ and is therefore an admissible choice in Alon's transformation. Here $H$ is Alon's notation; Mohanty--O'Donnell--Paredes use the same letter in their overview for the smaller starting graph, not for the final lift output. Their construction produces a dense sequence of sizes, while Alon's transformation is the ingredient that reaches the prescribed exact size $v$.

\emph{Simplicity of the lift chain.} Every lift of a simple base graph is simple. Indeed, a loop in the lift would project to a loop in the base. If two lifted edges had the same unordered pair of endpoints, simplicity of the base would force them to lie over the same base edge $\{u,v\}$. Orient that edge from $u$ to $v$; the two lifted edges share an endpoint $(u,i)$, and the permutation attached to $\{u,v\}$ sends $i$ to a unique fiber index $j$, so both have other endpoint $(v,j)$ and coincide. Thus $G_0$ is simple, and a $2$-lift is a lift in the same sense, so induction makes every $G_i$, and hence $H$, simple.

\emph{Simplicity of Alon's exact-size modification.} Set $r:=\lceil2/\varepsilon\rceil$. In the notation of~\cite[Sec.~3]{AlonExplicit}, Alon deletes a set $U$ chosen as in~\cite[Lem.~3.1]{AlonExplicit} and restores regularity by adding a perfect matching on the deficient set $W:=N_H(U)$; Alon's $H'$ is our $H-U$. For every $u\in U$, \cite[Lem.~3.1(2)]{AlonExplicit} makes the $(r+1)$-neighborhood of $u$ cycle-free, so $N_H(u)$ is independent. For distinct $u,u'\in U$, \cite[Lem.~3.1(3)]{AlonExplicit} gives $d_H(u,u')\ge2r+3$, so the two neighborhoods are disjoint; moreover, for $x\in N_H(u)$ and $x'\in N_H(u')$ one has $d_H(x,x')\ge2r+1>1$, and hence no edge joins them. The disjointness gives $|W|=\Delta|U|$ and makes every vertex of $W$ lose exactly one incident edge when $U$ is deleted, while vertices outside $W$ lose none. Because $H$ is $\Delta$-regular and the final order $v=|V(H)|-|U|$ satisfies $v\Delta$ even, $|W|=|V(H)|\Delta-v\Delta$ is even; hence a perfect matching on $W$ exists and restores regularity. Finally, $W$ is independent in the simple graph $H-U$, so every perfect matching on $W$ creates neither a loop nor an edge parallel to one already present.
\end{proof}

\begin{remark}[The finite threshold]\label{rem:alon-threshold}
Alon's construction is deterministic polynomial-time for fixed $\Delta$ and $\varepsilon$, but we use only that $n_{\mathrm A}(\Delta,\varepsilon)$ is finite; no theorem in the paper depends on a closed numerical value of this threshold. The later reductions are nonuniform: for each parameter choice, the pattern and all literal substitutions are fixed before the source input is read, so no strongly explicit construction is required. Accordingly, the universal threshold $k_0$ below is finite but intentionally unoptimized.
\end{remark}

\begin{fact}[Discrete Cheeger inequality, adjacency form]\label{fact:discrete-cheeger}
Let $F$ be a $\Delta$-regular graph. With $h(F)$ normalized as in \Cref{def:edge-expansion} and $\lambda_2(F)$ denoting the second-largest eigenvalue of the adjacency matrix, one has
\begin{equation}\label{eq:discrete-cheeger}
h(F)\ge\frac{\Delta-\lambda_2(F)}{2}.
\end{equation}
\end{fact}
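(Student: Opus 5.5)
The plan is to prove \Cref{eq:discrete-cheeger} by the standard variational argument for the easy direction of Cheeger's inequality, written for the adjacency matrix of a regular graph. Set $L:=\Delta I-A$, where $A$ is the adjacency matrix of $F$. Then $L$ is positive semidefinite, $L\1=0$, and its second-smallest eigenvalue is $\Delta-\lambda_2(F)$. We use the Courant--Fischer form of that eigenvalue,
\begin{equation*}
\Delta-\lambda_2(F)=\min_{x\perp\1,\ x\ne0}\frac{x^{\top}Lx}{x^{\top}x},
\qquad
x^{\top}Lx=\sum_{\{u,v\}\in E(F)}(x_u-x_v)^2.
\end{equation*}
Regularity is what makes the adjacency spectrum interchangeable with the Laplacian spectrum. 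For a non-regular graph we would have to work with the Laplacian directly, which is why the fact is stated only for regular $F$.

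Next, fix a set $S$ attaining the minimum in \Cref{def:edge-expansion}, so that $0<|S|\le v/2$ with $v:=v(F)$, and $|E_F(S,\overline S)|=h(F)|S|$. Use the centered indicator $x:=\1_S-(|S|/v)\1$ as the test vector; it is orthogonal to $\1$. Every edge crossing the cut contributes exactly $1$ to the quadratic form, and no other edge contributes, so the numerator equals $|E_F(S,\overline S)|$. The denominator is
\begin{equation*}
x^{\top}x=|S|\Bigl(1-\frac{|S|}{v}\Bigr)\ge\frac{|S|}{2},
\end{equation*}
where the inequality uses $|S|\le v/2$.

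Combining the two computations gives
\begin{equation*}
\Delta-\lambda_2(F)\le\frac{|E_F(S,\overline S)|}{|S|/2}=2h(F),
\end{equation*}
which rearranges to \Cref{eq:discrete-cheeger}.

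There is no real obstacle in this argument. The only point that needs care is the normalization: the quantity $h(F)$ in \Cref{def:edge-expansion} is the minimum over sets with $|S|\le v/2$, which matches the denominator estimate above. For a disconnected $F$ we would have $\lambda_2(F)=\Delta$, so the inequality holds trivially in that case as well. Alternatively, the fact can simply be cited as the standard Alon--Milman/Dodziuk inequality.
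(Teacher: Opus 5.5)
Your argument is correct. The identity $x^{\top}(\Delta I-A)x=\sum_{\{u,v\}\in E(F)}(x_u-x_v)^2$ is where regularity is used. The centered indicator is a legitimate test vector orthogonal to the top eigenvector $\1$. The computation $\|x\|^2=|S|\,(v(F)-|S|)/v(F)\ge|S|/2$ is exactly where the restriction $|S|\le v(F)/2$ of \Cref{def:edge-expansion} enters.

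The paper does not prove the inequality at all. It imports it from Rosenthal's Fact~2.4, and its proof only checks two things: that Rosenthal's $h(F)$ uses the same denominator $\min\{|S|,|V(F)\setminus S|\}$, and that his $\lambda_2$ is the second-largest adjacency eigenvalue.

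Your self-contained route removes the dependence on the external source and makes that normalization check automatic, since it shows the constant $1/2$ comes precisely from $1-|S|/v(F)\ge1/2$. It also gives the pointwise bound $|E_F(S,\overline S)|\ge(\Delta-\lambda_2(F))\,|S|\,|\overline S|/v(F)$ for every $S$, so fixing a minimizing set is unnecessary. The citation buys brevity, which is reasonable because only this easy direction of Cheeger's inequality is used, in \Cref{cor:degree-four-family}.
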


\begin{proof}
This is the discrete Cheeger inequality in the form recorded in~\cite[Fact~2.4]{Rosenthal19}. Rosenthal's $h(F)$ uses the same denominator $\min\{|S|,|V(F)\setminus S|\}$ as the equivalent form of \Cref{def:edge-expansion}, and his $\lambda_2(F)$ is the second-largest adjacency-matrix eigenvalue~\cite[p.~2525]{Rosenthal19}.
\end{proof}

\begin{corollary}[The bipartite degree-$4$ expander family]\label{cor:degree-four-family}
Set $\varepsilon:=1/10$ and $\lambda_0:=2\sqrt3+\varepsilon$. There is a universal threshold $k_0\ge600$ such that, for every $k\ge k_0$, writing
\begin{equation}\label{eq:degree-four-rounding}
r_k:=\left\lfloor\frac{k-1}{4}\right\rfloor,
\qquad
t_k:=k-1-4r_k\in\{0,1,2,3\},
\end{equation}
there are a simple connected bipartite $4$-regular graph $F^{\mathrm{exp}}_k$ with bipartition $(I_k,I'_k)$ and a padded pattern
\begin{equation}\label{eq:degree-four-pattern}
P^{\mathrm{exp}}_k:=F^{\mathrm{exp}}_k\mathbin{\dot\cup}J_{t_k}
\end{equation}
such that
\begin{equation}\label{eq:degree-four-parameters}
v(F^{\mathrm{exp}}_k)=2r_k,
\qquad
e(F^{\mathrm{exp}}_k)=4r_k>k-5,
\qquad
|I_k|=|I'_k|=r_k>\frac{k-1}{4}-1,
\end{equation}
and
\begin{equation}\label{eq:degree-four-constant}
\kappa(P^{\mathrm{exp}}_k)\ge\frac{k-1}{147}.
\end{equation}
Moreover, for every $N\ge1$, setting every padding-edge variable to $1$ gives a projection
\begin{equation}\label{eq:expander-core-padding-projection}
F^{\mathrm{exp}}_k\text{-}\SUB_N\le_{\proj}P^{\mathrm{exp}}_k\text{-}\SUB_N.
\end{equation}
\end{corollary}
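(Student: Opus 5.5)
The plan is to build the core as the bipartite double cover of an exact-size near-Ramanujan $4$-regular graph, pad it with a matching, and read off each claimed property in turn. Apply \Cref{thm:alon-explicit} with $\Delta=4$ and $\varepsilon=1/10$ at vertex count $v:=r_k$. Since $4v$ is always even, this yields, once $r_k\ge n_{\mathrm A}(4,1/10)$, a simple $4$-regular graph $H_k$ on $r_k$ vertices. All of its nontrivial adjacency eigenvalues have absolute value at most $\lambda_0=2\sqrt3+1/10<3.57$. Let $F^{\mathrm{exp}}_k$ be the bipartite double cover $H_k\times K_2$, with $I_k,I'_k$ the two copies of $V(H_k)$.

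\emph{Basic parameters.} The cover is simple, bipartite and $4$-regular, with $v=2r_k$, $e=4r_k$ and $|I_k|=|I'_k|=r_k$. Its spectrum is $\{\pm\mu:\mu\in\operatorname{spec}(H_k)\}$. Since $|\mu|\le\lambda_0<4$ for every nontrivial eigenvalue, the eigenvalue $4$ is simple, so $H_k$ is connected. The eigenvalue $-4$ is also absent from $H_k$, so $H_k$ is non-bipartite, and therefore the cover is connected. The second-largest adjacency eigenvalue of the cover is at most $\lambda_0$; the eigenvalue $-4$ lies at the bottom of the spectrum and does not affect $\lambda_2$. The inequalities $4r_k>k-5$ and $r_k>(k-1)/4-1$ follow from $\lfloor x\rfloor>x-1$. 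Put $P^{\mathrm{exp}}_k:=F^{\mathrm{exp}}_k\mathbin{\dot\cup}J_{t_k}$. Then $e(P^{\mathrm{exp}}_k)=4r_k+t_k=k-1$, the maximum degree is $4$, and there are no isolated vertices.

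\emph{The $\kappa$ bound.} By \Cref{fact:discrete-cheeger}, $h(F^{\mathrm{exp}}_k)\ge(4-\lambda_0)/2>0.21$. Then \Cref{thm:kappa-properties} applies, since the cover has at least three vertices, and gives
\[
\kappa(F^{\mathrm{exp}}_k)\ge\frac{2r_k\cdot 0.21}{12}\ge 0.035\,r_k.
\]
Using $r_k\ge(k-4)/4$, this is at least $c(k-4)$ for some $c$ slightly above $1/115$. For $k\ge k_0\ge600$ it comfortably exceeds $(k-1)/147$. Minor-monotonicity transfers the bound to $P^{\mathrm{exp}}_k$, since the core is a subgraph and hence a minor of the padded pattern. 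The one hedge is checking the numerical constant with the exact value of $\lambda_0$; $1/147$ appears to leave ample room.

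\emph{Threshold and projection.} Choose $k_0\ge600$ large enough that $r_k\ge n_{\mathrm A}(4,1/10)$ for every $k\ge k_0$. For the projection \Cref{eq:expander-core-padding-projection}, use orientations of $P^{\mathrm{exp}}_k$ that restrict to those of the core, and fix every padding variable $X_{f,u,w}$ to $1$. Any accepting map for the core then extends to the padding vertices arbitrarily. Conversely, an accepting map for $P^{\mathrm{exp}}_k$ restricts to an accepting map for the core. Every output bit is a constant or an input literal, so this is a depth-zero projection.

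The main obstacle is the connectivity and non-bipartiteness of $H_k$, which is needed so that the double cover is connected with a spectral gap. This is settled by the two-sided spectral bound: $\lambda_0<4$ excludes both a repeated eigenvalue $4$ and an eigenvalue $-4$.
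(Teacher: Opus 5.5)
Your proposal is correct and follows the paper's proof essentially step for step. The shared steps are: Alon's exact-size near-Ramanujan $4$-regular graph on $r_k$ vertices; connectivity and non-bipartiteness from the two-sided bound $\lambda_0<4$; the bipartite double cover with $\lambda_2\le\lambda_0$; the Cheeger inequality and the expansion estimate for $\kappa$, transferred to the padded pattern by minor-monotonicity; and the projection that fixes every padding variable to $1$.

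The differences are cosmetic. Your numerical bookkeeping ($0.035\,r_k$ with $r_k\ge(k-4)/4$) differs from the paper's ($r_k/30$ with $r_k\ge(k-1)/4-1$), and both give $(k-1)/147$ for $k\ge600$. You also assert simplicity of the double cover directly, whereas the paper justifies it through its lift-simplicity argument.
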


\begin{proof}
Let $n_*:=\max\{n_{\mathrm A}(4,1/10),5\}$ and choose $k_0\ge\max\{4n_*+1,600\}$. Fix $k\ge k_0$. By \Cref{thm:alon-explicit}, there is a simple $4$-regular graph $Q_k$ on exactly $r_k$ vertices whose nontrivial adjacency eigenvalues have absolute value at most $\lambda_0<4$. Hence $Q_k$ is connected, because otherwise the eigenvalue $4$ would have multiplicity greater than one, and it is non-bipartite, because a bipartite $4$-regular graph has the nontrivial eigenvalue $-4$.

Let $A_k$ be the adjacency matrix of $Q_k$, and let $F^{\mathrm{exp}}_k$ be its bipartite double cover. Thus the two sides are copies $I_k$ and $I'_k$ of $V(Q_k)$, and every edge $\{u,v\}$ of $Q_k$ yields the two cross-edges joining the copy of $u$ on either side to the copy of $v$ on the other. Equivalently, $F^{\mathrm{exp}}_k$ is the $2$-lift of $Q_k$ in which every base edge carries the transposition. The lift-simplicity argument in the proof of \Cref{thm:alon-explicit} therefore shows that the cover is simple, and $4$-regularity is immediate from the construction. Its adjacency matrix is
\begin{equation}\label{eq:double-cover-adjacency}
\begin{pmatrix}
0&A_k\\
A_k&0
\end{pmatrix}.
\end{equation}
Its spectrum is therefore $\{\pm\lambda:\lambda\in\operatorname{spec}(A_k)\}$. Since $Q_k$ is connected and non-bipartite, the eigenvalue $4$ of the cover is simple, while every other eigenvalue except the minimum eigenvalue $-4$ is at most $\lambda_0$ in absolute value. In particular, $F^{\mathrm{exp}}_k$ is connected and $\lambda_2(F^{\mathrm{exp}}_k)\le\lambda_0$.

The vertex, edge, padding, and independent-set identities in \Cref{eq:degree-four-rounding,eq:degree-four-parameters} are immediate. By \Cref{fact:discrete-cheeger},
\begin{equation*}
h(F^{\mathrm{exp}}_k)\ge\frac{4-\lambda_0}{2}.
\end{equation*}
Both $F^{\mathrm{exp}}_k$ and $P^{\mathrm{exp}}_k$ have no isolated vertices, and $F^{\mathrm{exp}}_k$ is a minor of $P^{\mathrm{exp}}_k$. Hence the minor-monotonicity and expansion assertions of \Cref{thm:kappa-properties} give
\begin{equation}\label{eq:degree-four-kappa-computation}
\kappa(P^{\mathrm{exp}}_k)\ge\kappa(F^{\mathrm{exp}}_k)\ge\frac{v(F^{\mathrm{exp}}_k)h(F^{\mathrm{exp}}_k)}{3\cdot4}\ge\frac{r_k(4-\lambda_0)}{12}>\frac{r_k}{30}.
\end{equation}
Here the last inequality uses $\sqrt3<7/4$, and hence $4-\lambda_0>2/5$. Since $r_k\ge(k-1)/4-1$ and $k\ge600$,
\begin{equation*}
\frac{r_k}{30}\ge\frac{k-1}{120}-\frac1{30}\ge\frac{k-1}{147},
\end{equation*}
which proves \Cref{eq:degree-four-constant}. Finally, preserve every core variable and set every padding-edge variable to $1$. Any accepting core map extends over the fresh matching colors, and every accepting padded map restricts to an accepting core map, proving \Cref{eq:expander-core-padding-projection}. The orientation-extension convention from \Cref{sec:structured-source} makes the preserved coordinates unambiguous. In particular, \Cref{lem:expander-pattern-family} holds with $c_\kappa=1/147$, and $c_\kappa(k_0-1)>4$ follows from $k_0\ge600$.
\end{proof}

\begin{remark}[$\kappa$ versus treewidth]\label{rem:kappa-versus-treewidth}
The source lower bound is stated in terms of $\kappa$, not treewidth. For the bounded-degree expander cores used here, however, \Cref{eq:kappa-expansion} gives $\kappa(F_k)=\Omega(v(F_k))$, while \Cref{thm:kappa-properties} gives $\kappa(F_k)\le\tw(F_k)+1\le v(F_k)$, so $\kappa(F_k)=\Theta(\tw(F_k))=\Theta(v(F_k))$. The distinction therefore does not affect the exponent scale of the present construction.
\end{remark}

\begin{remark}[Quantifier form of the fixed-$k$ source bound]\label{rem:fixed-source-quantifiers}
Combining \Cref{cor:degree-four-family,cor:fixed-source-rate}, one may fix any universal $\gamma$ with $0<\gamma<c_\kappa=1/147$, apply \Cref{cor:fixed-source-rate} with source rate $\gamma$, and read the source statement as
\begin{equation}\label{eq:fixed-source-quantifiers}
\exists\,\gamma>0,\ k_0\quad\forall\,d\quad\forall\,k\ge k_0\quad\exists\,N_0(\gamma,d,k)\quad\forall\,N\ge N_0(\gamma,d,k):\quad \size_d(P^{\mathrm{exp}}_k\text{-}\SUB_N)\ge N^{\gamma(k-1)}.
\end{equation}
The rate is universal in $d$ and $k$, whereas, after $\gamma$ is fixed, the host-size threshold is allowed to depend on the fixed pair $(d,k)$.
\end{remark}

\subsection{The clique family}\label{app:clique-patterns}

The clique core trades pattern vertices for a source bound that remains usable when the clique order grows, while a matching fills the remaining edge allowance.

\begin{definition}[Ordinary clique and clique core plus matching]\label{def:clique-pattern}
For integers $r\ge2$ and $N\ge1$, let $K_r\text{-}\CLIQUE_N$ be the total monotone Boolean function on variables $y_{\{u,w\}}$, indexed by unordered pairs of distinct $u,w\in[N]$, that accepts exactly when the represented graph contains an $r$-clique.

For integers $k\ge2$ and $r\ge2$ with $\binom r2\le k-1$, define
\begin{equation}\label{eq:clique-pattern}
P^{\mathrm{clq}}_{k,r}:=K_r\mathbin{\dot\cup}J_{k-1-\binom r2}.
\end{equation}
Then $e(P^{\mathrm{clq}}_{k,r})=k-1$ and $\Delta(P^{\mathrm{clq}}_{k,r})=r-1$.
\end{definition}

\begin{lemma}[Clique hardness embeds into the padded pattern]\label{lem:clique-source-chain}
For every admissible $k,r,N$,
\begin{equation}\label{eq:clique-source-chain}
K_r\text{-}\CLIQUE_N\le_{\proj}K_r\text{-}\SUB_N\le_{\proj}P^{\mathrm{clq}}_{k,r}\text{-}\SUB_N.
\end{equation}
\end{lemma}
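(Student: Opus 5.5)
We prove the two projections in \Cref{eq:clique-source-chain} separately and then compose them; a composition of depth-zero projections is again a depth-zero projection, because substituting literals into literals yields literals.

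\emph{First step: $K_r\text{-}\CLIQUE_N\le_{\proj}K_r\text{-}\SUB_N$.} Fix the orientation of $K_r$ on vertex set $[r]$ with each edge $f=\{i,j\}$, $i<j$, oriented as $(i,j)$. Map the clique input $y$ to the colored input
\begin{equation*}
X_{f,u,w}:=\begin{cases} y_{\{u,w\}}, & u\ne w,\\ 0, & u=w.\end{cases}
\end{equation*}
Each output bit is therefore either a source variable or the constant $0$.

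For correctness, suppose first that $y$ contains an $r$-clique on distinct vertices $v_1,\dots,v_r$. Then $\phi(i):=v_i$ satisfies \Cref{eq:psub-acceptance}. Conversely, suppose a colored copy $\phi:[r]\to[N]$ is accepted. For every pair $i<j$, the variable $X_{\{i,j\},\phi(i),\phi(j)}$ equals $1$. Since the diagonal entries are set to $0$, this forces $\phi(i)\ne\phi(j)$. Hence completeness of $K_r$ makes $\phi$ injective, and its image is an $r$-clique in $y$ because every pair in the image is an edge.

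The only subtle point is the diagonal constant $0$, which blocks non-injective maps. This is exactly the role played by ``completeness forces injectivity,'' and it is the step I would check most carefully.

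\emph{Second step: $K_r\text{-}\SUB_N\le_{\proj}P^{\mathrm{clq}}_{k,r}\text{-}\SUB_N$.} This is the padding argument already used for \Cref{eq:expander-core-padding-projection}. Under the orientation convention of \Cref{sec:structured-source}, core coordinates are preserved verbatim, and every padding-edge variable $X_{f,u,w}$ with $f\in J_{k-1-\binom r2}$ is set to $1$.

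For correctness, an accepting core map extends to an accepting padded map by giving the fresh matching vertices arbitrary labels. Conversely, an accepting padded map restricts to an accepting core map, since the core edges impose the same constraints in both patterns. Finally, $e(P^{\mathrm{clq}}_{k,r})=k-1$ by \Cref{def:clique-pattern}, so the pattern is admissible for the projection.
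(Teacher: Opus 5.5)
Your proposal is correct and follows the paper's proof essentially step for step: the same substitution $X_{f,u,w}\mapsto y_{\{u,w\}}$ with the constant $0$ on the diagonal $u=w$, so that completeness of $K_r$ forces injectivity, followed by the same padding projection that preserves core coordinates and sets every padding-edge variable to $1$. Your explicit orientation $(i,j)$ with $i<j$ is harmless, because the substitution is symmetric in $u,w$ and therefore works for whatever orientation the convention of \Cref{sec:structured-source} fixes.
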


\begin{proof}
Let the ordinary clique input have variables $y_{\{u,w\}}$ for distinct $u,w\in[N]$. For an oriented edge $f=(a_f,b_f)$ of the colored $K_r$ pattern, project
\begin{equation}\label{eq:ordinary-to-colored-clique}
X_{f,u,w}\longmapsto
\begin{cases}
y_{\{u,w\}},&u\ne w,\\
0,&u=w.
\end{cases}
\end{equation}
If the projected colored instance accepts via $\phi:[r]\to[N]$, then every pair of distinct pattern vertices is joined by an edge of $K_r$, and \Cref{eq:ordinary-to-colored-clique} forces their images to be distinct. Hence $\phi$ is injective and its image is an ordinary $r$-clique. Conversely, every ordinary $r$-clique gives an injective accepting map after assigning its vertices to the colors. This proves the first projection.

For the second projection, preserve every variable belonging to the $K_r$ core and set every variable belonging to a padding edge to $1$. Any accepting core map extends to the fresh matching colors by arbitrary host indices, and any accepting padded map restricts to an accepting core map. The orientation-extension convention recalled at the beginning of this appendix makes the preserved core coordinate unique, so the map is a projection.
\end{proof}


\section{Full Projection Proofs}\label{app:projection-proofs}

This appendix proves the three projection theorems of \Cref{sec:instantiations}, together with their supporting lemmas, the two design facts motivating the $\kOV$ dummy and identifier conventions, and the $2$-torsion obstruction of \Cref{prop:sum-two-torsion}, which motivates the native integer encoding for $\kSUM$..\ignore{ The complete worked matrices remain with the constructions in \Cref{fig:toy-ov,fig:toy-xor,fig:toy-sum}; this appendix is reserved for proof details.} The proofs use only the common skeleton of \Cref{sec:projection-skeleton}; no property of the source pattern beyond simplicity, looplessness, the fixed orientation, and $e(P)=k-1$ is required.

\subsection{\texorpdfstring{The $\kOV$ projection}{The k-OV projection}}\label{app:ov-projection-proofs}

Call a choice of one vector from each of the $k$ target classes a selected tuple. Relative to the roles in \Cref{sec:projection-skeleton}, this one-vector-per-class semantics supplies selection rigidity (R1), the guard coordinates certify presence (R2), \Cref{lem:ov-consistency-test} establishes label consistency (R3), \Cref{lem:ov-dummy-rejection} purges dummies (R4), and \Cref{lem:ov-inert-identifiers} enforces the pairwise-distinctness promise (R5).

We first record the two elementary failures that motivate the dummy and identifier designs.

\begin{fact}[Zero-vector dummies are unsound]\label{fact:ov-zero-dummy-failure}
If a $\kOV$ class contains the zero vector as a dummy, then every selected tuple using that dummy is orthogonal, independently of the remaining chosen vectors and independently of the source input.
\end{fact}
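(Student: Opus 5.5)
The plan is a direct calculation of the orthogonality product on any coordinate. Suppose class $U_h$ contains a dummy equal to $0^D$, and take any selected tuple $(u_1,\ldots,u_k)$ that uses this dummy from $U_h$. For every coordinate $t\in[D]$, the product $\prod_{i=1}^k u_i(t)$ contains the factor $0^D(t)=0$. So every coordinate product vanishes, and by \Cref{def:target-problems} the tuple is orthogonal.

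This holds whatever vectors are chosen from the other $k-1$ classes. It also holds for every source input: the zero dummy's entries are the constant $0$ and do not depend on any source variable.

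The remaining step is to state the consequence. Suppose such a class is padded at all, meaning it has fewer than $n$ genuine vectors. Completing the tuple with any vectors from the other classes then gives an accepting witness. Hence the constructed instance accepts even on rejecting source inputs, so the map cannot be a projection from $P\text{-}\SUB_N$. There is no real obstacle in this argument.

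The one point to note is why the paper uses all-ones dummies on the structural coordinates instead. An all-ones entry is neutral in a product, so the guard coordinate $q_h$ of the dummy's own class has product $1$ and rejects the tuple. A zero entry is absorbing, so it would force every product to $0$ and accept.
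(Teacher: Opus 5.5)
Your proof is correct and takes the same approach as the paper: the zero dummy supplies a zero factor on every coordinate, so every coordinate product vanishes regardless of the other chosen vectors or the source input. Your remarks on the resulting unsoundness and on why all-ones dummies let the guard $q_h$ reject are accurate additions, but they go beyond what the fact itself asserts.
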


\begin{proof}
The selected zero vector contributes a zero factor on every coordinate, so the product of the selected entries is zero on every coordinate.
\end{proof}

\begin{fact}[Unpinned shared identifiers need not be inert]\label{fact:ov-shared-id-failure}
Appending classwise identifiers without pinning each coordinate to zero in some class can destroy a genuine orthogonal witness, even when the identifiers make every class pairwise distinct.
\end{fact}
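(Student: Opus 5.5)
The plan is to exhibit an explicit small counterexample. It suffices to take $k=2$, which is the smallest arity. Use a single coordinate $t$ on which the structural part already forces orthogonality. The point to show is that appending unpinned identifier bits can introduce a new coordinate on which every selected vector has a $1$. I would choose two classes $U_1,U_2$ whose structural coordinates contain a genuine witness pair $(x,y)$ that is orthogonal. I would also arrange that, without identifiers, the classes contain a repeated vector; this is what motivates adding identifiers at all.

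Concretely, I would take $n=2$ and one structural coordinate. Let $U_1=(0,0)$ and $U_2=(1,1)$ as one-bit vectors, so every pair is orthogonal on the structural coordinate but each class has repeated entries. Then append one shared identifier coordinate $\mathrm{id}$ with values $(0,1)$ in each class. This makes both classes pairwise distinct: $U_1=\{(0,0),(0,1)\}$ and $U_2=\{(1,0),(1,1)\}$.

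On its own this example still leaves the pair $((0,0),(1,0))$ orthogonal, so it is not yet a counterexample. To destroy every witness, I would instead use two identifier coordinates and assign each vector an identifier with no zero. In the one-bit case that is impossible, so I would take $n=2$ and let each class use the identifiers $(1,0)$ and $(0,1)$ on two coordinates. The witness $(x,y)$ then survives only if the two chosen identifiers are disjoint as supports. The failure to exhibit is that some natural assignment kills the specific witness. So the cleanest statement to verify is the following: with $U_1=\{x\}$ padded by a second vector and $U_2=\{y\}$ padded similarly, if the genuine $x$ and $y$ both receive identifier $1$ on a shared unpinned coordinate, then their product is $1$ there. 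The formerly orthogonal genuine witness is thus destroyed. The remaining pair choices can be made nonorthogonal on the structural coordinates by using the all-ones structural padding from the dummy convention.

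So the final example is as follows. Take a structural coordinate where the genuine $x$ has $0$ and all other vectors have $1$. Append a single identifier coordinate on which $x$ and $y$ both carry $1$ and the other vectors carry $0$, so that each class is pairwise distinct. Then a direct check of all four pairs shows that none is orthogonal, while $(x,y)$ was orthogonal before appending. The only mild obstacle is choosing the entries so that distinctness holds and no other pair accidentally becomes orthogonal; the four-pair check handles this.
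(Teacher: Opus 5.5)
Your final example uses the same mechanism as the paper: a shared, unpinned identifier coordinate that is $1$ on the witness's vector in every class, so the witness has product $1$ there. The paper applies this to an arbitrary structural instance that has an orthogonal witness, so the failure covers in particular the paper's own structural coordinates. You instead build one toy instance with $k=2$ and $n=2$. For an existential ``can destroy'' claim, either route suffices.

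Your closing verification, however, is wrong. With a single structural coordinate on which $x$ is $0$, every pair containing $x$ is already orthogonal there. On the identifier coordinate, the pair $(x,y')$ has product $1\cdot 0=0$. So $(x,y')$ is orthogonal both before and after the append, and your claim that ``none of the four pairs is orthogonal'' is false. Appending a coordinate can never make a non-orthogonal pair orthogonal, so your worry about pairs ``accidentally becoming orthogonal'' is vacuous. The real danger is pairs that were already orthogonal structurally, and $(x,y')$ is one of them. Likewise, the all-ones padding you invoke does not reject $(x,y')$ when there is only one structural coordinate.

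This error does not break the statement as phrased: $(x,y)$ itself gets product $1$ on the identifier coordinate and is destroyed, and both classes become pairwise distinct. It does mean your example does not show the instance flipping from accepting to rejecting. If you want that stronger conclusion, add a second structural coordinate, a guard for $U_2$, on which $y$ is $0$ and $x,x',y'$ are $1$. Then $(x,y)$ is the unique orthogonal pair before the identifier is appended, and no pair is orthogonal afterward. You should also delete the abandoned first attempts, since they play no role in the final argument.
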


\begin{proof}
Take any structural instance with an orthogonal witness. Append one shared coordinate that is $1$ on the witness's selected vector in every class, and use further coordinates to assign pairwise distinct codes within each class. That witness then has product $1$ on the shared coordinate and is no longer orthogonal.
\end{proof}

Pinning every identifier coordinate to zero throughout one owner class makes that coordinate inert for every selected tuple.

We next prove the identifier lemma. Recall that $\ell_n=\lceil\log n\rceil$ and $s_{k,n}=\lceil k\ell_n/(k-1)\rceil$.

\begin{proof}[Proof of \Cref{lem:ov-inert-identifiers}]
Create $s_{k,n}$ identifier coordinates and assign each coordinate to one owner class. Balance these assignments as evenly as possible, so every class owns at most $\lceil s_{k,n}/k\rceil$ coordinates. The defining inequality $(k-1)s_{k,n}\ge k\ell_n$ is equivalent to $s_{k,n}-\ell_n\ge s_{k,n}/k$; since the left-hand side is an integer, it follows that
\begin{equation}\label{eq:ov-owner-capacity}
\left\lceil\frac{s_{k,n}}{k}\right\rceil\le s_{k,n}-\ell_n.
\end{equation}
Hence every class is unpinned on at least $s_{k,n}-\lceil s_{k,n}/k\rceil\ge\ell_n$ coordinates. Pin every vector of a coordinate's owner class to zero on that coordinate. For each class $U_h$, choose an injection from its $n$ vectors into the bit strings on its unpinned coordinates; such an injection exists because there are at least $\ell_n$ unpinned coordinates and $2^{\ell_n}\ge n$. Assign those code bits on the unpinned coordinates and retain zero on the owned coordinates. The vectors within each class are then pairwise distinct. On every identifier coordinate, however, the owner class contributes zero to every selected tuple, so the product of the selected entries is identically zero. Finally,
\begin{equation}\label{eq:ov-identifier-count}
s_{k,n}=\left\lceil\frac{k\ell_n}{k-1}\right\rceil\le2\ell_n
\end{equation}
for $k\ge2$.
\end{proof}

For completeness, we spell out the structural entries. On the anchor guard $q_\alpha$, set $v_\alpha(q_\alpha)=0$ and set every other vector's entry to $1$. On an edge guard $q_f$, set $v_{f,u,w}(q_f)=\overline{X}_{f,u,w}$ and set every other entry, including every dummy entry in $U_f$, to $1$. For $(a,g)\in\Ical(P)$, with $f:=\rho(a)$, and $t\in[\ell_N]$, define
\begin{align}
 v_{f,u,w}(C^{10}_{a,g,t})&:=\operatorname{bit}_t(\lab_a(f,u,w)), & v_{g,u',w'}(C^{10}_{a,g,t})&:=1-\operatorname{bit}_t(\lab_a(g,u',w')),\label{eq:ov-consistency-10}\\
 v_{f,u,w}(C^{01}_{a,g,t})&:=1-\operatorname{bit}_t(\lab_a(f,u,w)), & v_{g,u',w'}(C^{01}_{a,g,t})&:=\operatorname{bit}_t(\lab_a(g,u',w')),\label{eq:ov-consistency-01}
\end{align}
and set every other entry on these coordinates to $1$. Every dummy is therefore $1$ on every structural coordinate; its identifier entries are supplied by \Cref{lem:ov-inert-identifiers}.

\begin{lemma}[Dummy rejection for $\kOV$]\label{lem:ov-dummy-rejection}
Every selected tuple containing a dummy vector is non-orthogonal.
\end{lemma}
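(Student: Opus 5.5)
The argument is a direct check on the guard coordinates. I fix a selected tuple that contains a dummy $\delta$ in some class $U_h$, where $h\in\{\alpha\}\cup E(P)$. I then exhibit one coordinate on which every selected entry equals $1$. The natural candidate is the guard $q_h$ of the dummy's own class.

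\textbf{Step 1: the dummy's entry.} Every dummy is set to $1$ on every structural coordinate. This includes every dummy of $U_h$ on its own guard $q_h$: for $h=\alpha$ all non-anchor vectors are $1$ on $q_\alpha$, and for $h=f$ the dummies in $U_f$ are explicitly $1$ on $q_f$. Hence $\delta(q_h)=1$.

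\textbf{Step 2: the other classes' entries.} A selected tuple takes exactly one vector from each class, so every other selected vector lies in a class $U_{h'}$ with $h'\ne h$. By construction, $q_\alpha$ is $0$ only on the genuine anchor $v_\alpha$, and $q_f$ differs from $1$ only on the genuine candidates $v_{f,u,w}$ of class $f$. Therefore every vector outside $U_h$ has entry $1$ on $q_h$, whether it is genuine or a dummy.

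\textbf{Step 3: conclusion.} The product of the selected entries on $q_h$ is $1$, so the tuple is not orthogonal. The identifier coordinates from \Cref{lem:ov-inert-identifiers} do not matter here, because a single coordinate with product $1$ already prevents orthogonality.

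The only point that needs care is the one-vector-per-class semantics in Step 2. It guarantees that no other selected vector comes from $U_h$, which is the only place a $0$ or a source literal could appear on $q_h$. This is not a real obstacle; it just has to be stated explicitly. The same argument works unchanged when several dummies are selected: I pick any one of them and use its class guard.
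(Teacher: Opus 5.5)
Your proposal is correct and follows essentially the same route as the paper: you locate the dummy's own class $h$ and observe that on the guard $q_h$ the dummy and every selected vector from the other classes all have entry $1$, so the product there is $1$. Your explicit remark that the one-vector-per-class semantics keeps $0$ entries and source literals on $q_h$ out of the selected tuple is exactly the point the paper's one-line proof leaves implicit in ``by the guard definition.''
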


\begin{proof}
Suppose the selected vector in class $h\in\{\alpha\}\cup E(P)$ is a dummy. On the guard $q_h$, that dummy has entry $1$, and every selected vector from every other class also has entry $1$ by the guard definition. The product of the selected entries on $q_h$ is therefore $1$.
\end{proof}

\begin{lemma}[The $\kOV$ consistency test]\label{lem:ov-consistency-test}
Suppose a selected tuple uses genuine candidates from the reference group $f=\rho(a)$ and a non-reference group $g$ incident to $a$. For a fixed bit position $t$, the products on $C^{10}_{a,g,t}$ and $C^{01}_{a,g,t}$ both vanish if and only if the two proposed label bits agree.
\end{lemma}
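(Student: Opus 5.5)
The plan is a direct computation of the two coordinate products, reading off the entries from \Cref{eq:ov-consistency-10,eq:ov-consistency-01}. Write $b_f:=\operatorname{bit}_t(\lab_a(f,u,w))$ for the bit proposed by the selected genuine candidate $v_{f,u,w}$ in the reference class $U_f$, and $b_g:=\operatorname{bit}_t(\lab_a(g,u',w'))$ for the bit proposed by the selected genuine candidate $v_{g,u',w'}$ in $U_g$.

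The first step is to note that every selected vector from a class other than $U_f$ and $U_g$ has entry $1$ on both coordinates. This holds for all vectors in those classes, dummies included, because the construction sets every entry on $C^{10}_{a,g,t}$ and $C^{01}_{a,g,t}$ outside the two active classes to $1$. The identifier coordinates are separate coordinates and do not affect these two. We also need $f\ne g$ as classes, which holds because $(a,g)\in\Ical(P)$ requires $g\ne\rho(a)=f$. So the tuple contributes exactly one active factor from $U_f$ and one from $U_g$.

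The second step is to evaluate the products. On $C^{10}_{a,g,t}$ the product equals $b_f(1-b_g)$, and on $C^{01}_{a,g,t}$ it equals $(1-b_f)b_g$.

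The third step is a four-case check over $(b_f,b_g)\in\bits^2$. Both products vanish when $(b_f,b_g)=(0,0)$ or $(1,1)$. For $(1,0)$ the first product is $1$, and for $(0,1)$ the second is $1$. Hence both products vanish if and only if $b_f=b_g$.

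No step is a real obstacle. The only point needing care is the bookkeeping in the first step: confirming that dummies and the vectors of the other classes are pinned to $1$ on these two coordinates, so that no stray zero factor can mask a disagreement.
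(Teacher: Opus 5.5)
Your proof is correct and follows the paper's argument: every class other than $U_f$ and $U_g$ contributes $1$ on both coordinates, so the two products are $b_f(1-b_g)$ and $(1-b_f)b_g$, and both vanish exactly when $b_f=b_g$. Your extra check that $f\ne g$ (since $g\ne\rho(a)$) is a small detail the paper leaves implicit.
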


\begin{proof}
Let $x$ and $y$ be the two proposed bits. Every class other than $U_f$ and $U_g$ has entry $1$ on both coordinates, so the products are $x(1-y)$ and $(1-x)y$. The first is one only on $(x,y)=(1,0)$, and the second is one only on $(0,1)$; both vanish exactly when $x=y$.
\end{proof}

\begin{proof}[Proof of \Cref{thm:ov-master-projection}]
We first take $D=\Dzero(P,N,n)$. The construction has one anchor class and one edge class for each of the $e(P)=k-1$ pattern edges, hence exactly $k$ classes. The condition $N^2\le n$ permits every edge class to contain all $N^2$ genuine candidates and enough dummies to reach cardinality $n$, while the anchor class contains its one genuine vector and $n-1$ dummies. By \Cref{lem:ov-inert-identifiers}, every class contains $n$ pairwise distinct vectors. Moreover, $s_{k,n}\ge\ell_n$, so $D\ge\Dzero(P,N,n)\ge\ell_n$ and therefore $n\le2^{\ell_n}\le2^D$; the constructed instance lies in the promised target family of \Cref{def:target-problems}.

Assume first that $P\text{-}\SUB_N(X)=1$, and let $\phi:V(P)\to[N]$ satisfy \Cref{eq:psub-acceptance}. Select the genuine anchor and, for every oriented edge $f=(a_f,b_f)$, the genuine vector $v_{f,\phi(a_f),\phi(b_f)}$. The anchor guard has product zero because the anchor contributes zero. On the guard $q_f$, the selected group-$f$ vector contributes $\overline{X}_{f,\phi(a_f),\phi(b_f)}=0$, so that product also vanishes. Every pair of selected candidates incident to a vertex $a$ proposes the common label $\phi(a)$, and \Cref{lem:ov-consistency-test} therefore makes every consistency product zero. Finally, every identifier coordinate is zero in its owner class by \Cref{lem:ov-inert-identifiers}. The selected tuple is an orthogonal witness.

Conversely, suppose the constructed instance has an orthogonal witness. By \Cref{lem:ov-dummy-rejection}, every selected vector is genuine, so the witness consists of $v_\alpha$ and one candidate $v_{f,u_f,w_f}$ from every edge class. On $q_f$, all classes other than $U_f$ contribute one, and orthogonality therefore forces $\overline{X}_{f,u_f,w_f}=0$; hence every selected candidate represents a present edge. For every non-isolated vertex $a$, \Cref{lem:ov-consistency-test} applied to every $(a,g)\in\Ical(P)$ and every bit position forces all incident candidates to propose the same label. Define $\phi(a)$ to be this common value; for an isolated vertex, choose any value in $[N]$. Then $X_{f,\phi(a_f),\phi(b_f)}=1$ for every $f\in E(P)$, so $P\text{-}\SUB_N(X)=1$.

It remains to verify that the map is a projection. The only source-dependent entries are the edge-guard entries $\overline{X}_{f,u,w}$; every consistency bit, dummy bit, identifier bit, and inactive entry is a constant determined by $(P,N,n)$ and the candidate index. Thus every output bit is a constant or a negated input literal. For $D>\Dzero(P,N,n)$, append $D-\Dzero(P,N,n)$ all-zero coordinates. They preserve pairwise distinctness and make the product of every selected tuple zero on the new coordinates, so correctness and the projection property remain unchanged.
\end{proof}

\subsection{\texorpdfstring{The odd-$k$ $\kXOR$ projection}{The odd-k k-XOR projection}}\label{app:xor-projection-proofs}

Here \Cref{lem:xor-support-rigidity} establishes selection rigidity (R1) and dummy purge (R4), \Cref{lem:xor-guard-correctness} certifies presence (R2), and \Cref{lem:xor-consistency-test} establishes label consistency (R3). Because the target is an indexed list in which repeated vector values are permitted, no separate promise-validity role is needed.

We begin with the nonuniform guard compression. Write $r:=r_{\mathrm{guard}}(P,N)$.

\begin{proof}[Proof of \Cref{lem:xor-separating-labels}]
Choose every label $\lambda_{f,u,w}\in\F_2^r$ independently and uniformly. Fix one choice $(u_f,w_f)\in[N]^2$ for every $f\in E(P)$ and one nonempty set $T\subseteq E(P)$. Because the selected labels belong to distinct edge groups, they are independent uniform vectors, and their XOR is uniform on $\F_2^r$. Consequently,
\begin{equation}\label{eq:xor-fixed-event-probability}
\Pr\left[\bigoplus_{f\in T}\lambda_{f,u_f,w_f}=0\right]=2^{-r}.
\end{equation}
There are $N^{2e(P)}$ choices of the candidate tuple and fewer than $2^{e(P)}$ nonempty choices of $T$. Since $r=\lceil2e(P)\log N+e(P)+1\rceil$, the union bound gives
\begin{equation}\label{eq:xor-guard-union-bound}
\Pr[\text{some violation of (\ref{eq:xor-separating-property})}]
\le N^{2e(P)}2^{e(P)}2^{-r}
\le N^{2e(P)}2^{e(P)}2^{-(2e(P)\log N+e(P)+1)}
=\frac12.
\end{equation}
Hence a deterministic labelling satisfying \Cref{eq:xor-separating-property} exists. Fix one such labelling for the parameter pair $(P,N)$; it is independent of the source input $X$.
\end{proof}

For the remaining proof, the matrix rows are exactly those described in \Cref{sec:inst-xor}: selector rows $S_f$, compressed guard rows $H_h$, consistency rows $C_{a,g,t}$, and one anti-dummy row $A_{\mathrm{dum}}$. All unspecified entries are zero, and each dummy column is zero except for its entry one on $A_{\mathrm{dum}}$.

\begin{lemma}[Support rigidity for odd $k$]\label{lem:xor-support-rigidity}
Assume $k=e(P)+1$ is odd. If a $k$-support has zero sum on every selector row and on the anti-dummy row, then it contains the anchor, exactly one candidate from every edge group, and no dummies.
\end{lemma}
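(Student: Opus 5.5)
I will follow the counting argument sketched in the paragraph on selection and dummy purge in \Cref{sec:inst-xor}, making each parity step explicit. Fix a $k$-support $S$ and write $\zeta\in\bits$ for the anchor indicator. Write $c_f$ for the number of selected candidates from group $f$, and $d$ for the number of selected dummies.

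\emph{Step 1: translate the row sums into congruences.} By construction, the selector row $S_f$ is $1$ exactly on the anchor and the group-$f$ candidates. So a zero sum on $S_f$ gives $\zeta+c_f\equiv0\pmod2$, which is \Cref{eq:xor-selector-equation}. The anti-dummy row is $1$ exactly on dummy columns, so a zero sum gives $d\equiv0\pmod2$. Since every column is the anchor, a candidate of exactly one group, or a dummy, these categories partition $S$, and the support-size identity \Cref{eq:xor-support-size} holds: $\zeta+\sum_f c_f+d=k=e(P)+1$.

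\emph{Step 2: split on $\zeta$.}

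If $\zeta=0$, every $c_f$ is even and $d$ is even. The left side of the identity is then even, contradicting that $k$ is odd. So this case is impossible.

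If $\zeta=1$, every $c_f$ is odd, and hence $c_f\ge1$. Substituting into the identity gives $\sum_f c_f+d=e(P)$ with $e(P)$ summands each at least $1$ and $d\ge0$. This forces $d=0$ and every $c_f=1$.

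Together, the two cases give exactly the claimed support shape: the anchor, one candidate per edge group, and no dummies.

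There is no genuine obstacle. The only point needing care is confirming the structural claim that columns partition into anchor, group-$f$ candidates for a unique $f$, and dummies, with the row entries exactly as described. This holds because each candidate $c_{f,u,w}$ carries a single edge label $f$. Note that the guard, consistency and padding rows are not used here; that is why the hypothesis only requires the selector and anti-dummy rows to vanish.
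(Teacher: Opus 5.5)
Your proposal is correct and matches the paper's proof essentially step for step. You convert the selector and anti-dummy rows into $c_f\equiv\zeta$ and $d\equiv0\pmod 2$, exclude $\zeta=0$ because the support size would then be even, and use the support-size identity to force $c_f=1$ and $d=0$ when $\zeta=1$.
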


\begin{proof}
Let $\zeta\in\{0,1\}$ indicate whether the anchor is selected, let $c_f$ count selected candidates from group $f$, and let $d$ count selected dummies. The selector equations give $c_f\equiv\zeta\pmod2$ for every $f$, while the anti-dummy row gives $d\equiv0\pmod2$. If $\zeta=1$, then every $c_f$ is a positive odd integer, and the support-size identity
\begin{equation*}
1+\sum_{f\in E(P)}c_f+d=k=1+e(P)
\end{equation*}
forces $c_f=1$ for every $f$ and $d=0$. If $\zeta=0$, then every $c_f$ and $d$ is even, so the support size is even, contradicting that $k$ is odd.
\end{proof}

\begin{lemma}[Compressed-guard correctness]\label{lem:xor-guard-correctness}
Suppose a support contains exactly one candidate $c_{f,u_f,w_f}$ from each edge group. Its vector of compressed-guard row sums is zero if and only if every selected edge is present.
\end{lemma}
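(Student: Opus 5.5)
The argument computes the compressed-guard row sums directly from the entry definitions and then splits into the two directions, using \Cref{lem:xor-separating-labels} only for the harder one. Fix the selected candidates $c_{f,u_f,w_f}$, one from each $f\in E(P)$. The support may also contain the anchor and dummies, or any other columns that are not candidates. By construction, every such column has entry $0$ on every guard row $H_h$, so they contribute nothing to the guard sums. For each $h\in[r_g]$, the row sum therefore reduces to
\begin{equation*}
\bigoplus_{f\in E(P)}\overline{X}_{f,u_f,w_f}\,\lambda_{f,u_f,w_f}(h).
\end{equation*}
As a vector over all guard rows, this is $\bigoplus_{f\in T}\lambda_{f,u_f,w_f}$, where $T:=\{f\in E(P):X_{f,u_f,w_f}=0\}$ is the set of selected absent edges.

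\emph{Presence implies zero sum.} If every selected edge is present, then $T=\varnothing$. Each term above is then multiplied by $\overline{X}=0$, so every guard row sum vanishes.

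\emph{Zero sum implies presence.} We argue by contraposition. Suppose some selected edge is absent, so that $T$ is nonempty. The tuple $(u_f,w_f)_{f\in E(P)}$ is one choice per edge group, which is exactly the situation quantified in \Cref{lem:xor-separating-labels}. That lemma, applied to this tuple and to the nonempty set $T$, gives $\bigoplus_{f\in T}\lambda_{f,u_f,w_f}\ne0$. Hence some guard row has nonzero sum.

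The only point needing care is the exactly-one-per-group hypothesis. The separating property is guaranteed only for XORs that use at most one label from each group. With two candidates from the same group, their labels could cancel. This case is excluded here by hypothesis, and in the main proof it is supplied by \Cref{lem:xor-support-rigidity}. The remaining checks are routine: the padding rows beyond $r_g$ are all zero, and the zero-padding rows used in the worked example are likewise harmless.
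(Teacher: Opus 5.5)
Your proof is correct and takes the paper's approach: both reduce the guard-sum vector to $\bigoplus_{f:\,X_{f,u_f,w_f}=0}\lambda_{f,u_f,w_f}$, observe that this XOR is empty when every selected edge is present, and otherwise apply the separating property of \Cref{lem:xor-separating-labels} to the nonempty set of absent selected edges. Your explicit remarks that the anchor and dummy columns contribute zero and that the one-candidate-per-group hypothesis is needed to rule out cancellation are details the paper's proof leaves implicit.
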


\begin{proof}
A present candidate has $\overline{X}_{f,u_f,w_f}=0$ and contributes the zero label, while an absent candidate contributes $\lambda_{f,u_f,w_f}$. Hence the vector of guard sums is
\begin{equation}\label{eq:xor-selected-guard-sum}
\bigoplus_{f:\,X_{f,u_f,w_f}=0}\lambda_{f,u_f,w_f}.
\end{equation}
If every selected edge is present, this XOR is empty and therefore zero. If at least one edge is absent, the indexing set in \Cref{eq:xor-selected-guard-sum} is a nonempty subset of $E(P)$, and the XOR is nonzero by \Cref{eq:xor-separating-property}.
\end{proof}

\begin{lemma}[$\kXOR$ consistency]\label{lem:xor-consistency-test}
Suppose a support contains exactly one candidate from each edge group. The consistency rows all sum to zero if and only if, at every non-isolated pattern vertex $a$, all incident selected candidates propose the same label under $\lab_a$.
\end{lemma}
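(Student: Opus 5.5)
The plan is to compute each consistency row sum directly over $\F_2$, using the row definition from \Cref{sec:inst-xor}. Fix $(a,g)\in\Ical(P)$, write $f:=\rho(a)$, and fix $t\in[\ell_N]$. The row $C_{a,g,t}$ carries $\operatorname{bit}_t(\lab_a(\cdot))$ on candidates of groups $f$ and $g$ and $0$ on every other column, including the anchor and all dummies. Since the support contains exactly one candidate from each edge group, the row sum is $\operatorname{bit}_t(\lab_a(f,u_f,w_f))\oplus\operatorname{bit}_t(\lab_a(g,u_g,w_g))$. Here $f\ne g$ because $g\ne\rho(a)$ by the definition of $\Ical(P)$, so there is no double counting within a single column group. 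Selected dummies or the anchor contribute nothing.

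For the forward direction, suppose all consistency rows vanish. Then for each $(a,g)\in\Ical(P)$ the two labels $\lab_a(f,\cdot)$ and $\lab_a(g,\cdot)$ agree in every one of their $\ell_N$ bits. The map $x\mapsto x-1$ written in $\ell_N$ bits is injective on $[N]$, so the two labels are equal. At a non-isolated vertex $a$, every incident edge $g\ne\rho(a)$ therefore proposes the same label as the reference edge $\rho(a)$. If $\deg_P(a)=1$, there is only one incident edge and hence nothing to check. By transitivity, all incident selected candidates propose the same label at $a$.

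For the converse, suppose all incident candidates agree at every non-isolated vertex. Then each pair $(\rho(a),g)$ proposes identical labels, hence identical bits, and every row sum is $0$.

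No step is a real obstacle. The only points needing care are that the anchor and dummy columns are zero on these rows, and that the comparison index set $\Ical(P)$ covers every non-reference incidence, so that transitivity through $\rho(a)$ suffices.
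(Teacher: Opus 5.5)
Your proposal is correct and follows the paper's proof: you write each row $C_{a,g,t}$ as the XOR of the two proposed $t$th label bits, use injectivity of the $\ell_N$-bit encoding across all $t$, and pass through the reference edge $\rho(a)$ for every $(a,g)\in\Ical(P)$. Your explicit treatment of degree-one vertices and of the zero anchor and dummy entries only fills in details the paper leaves implicit.
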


\begin{proof}
Fix $(a,g)\in\Ical(P)$, write $f:=\rho(a)$, and fix a bit position $t$. Only the selected candidates from groups $f$ and $g$ can contribute to $C_{a,g,t}$, and its sum is
\begin{equation}\label{eq:xor-consistency-row-sum}
\operatorname{bit}_t(\lab_a(f,u_f,w_f))\oplus\operatorname{bit}_t(\lab_a(g,u_g,w_g)).
\end{equation}
This sum is zero exactly when the two bits agree. Ranging over all $t\in[\ell_N]$ forces equality of the complete binary encodings, and ranging over all $(a,g)\in\Ical(P)$ ties every non-reference incident group to the reference group at $a$.
\end{proof}

\begin{proof}[Proof of \Cref{thm:xor-master-projection}]
First take $m=\mxor(P,N)$. If $P\text{-}\SUB_N(X)=1$ via an accepting map $\phi$, select the anchor and the $e(P)$ candidates $c_{f,\phi(a_f),\phi(b_f)}$. Every selector row sees two ones, one from the anchor and one from its group, and therefore sums to zero. Every selected edge is present, so all compressed-guard contributions vanish. The selected candidates propose the common label $\phi(a)$ at every pattern vertex, so every consistency row sums to zero by \Cref{lem:xor-consistency-test}. No dummy is selected, so the anti-dummy row also sums to zero. These $k=e(P)+1$ columns witness acceptance.

Conversely, let $S$ be a zero-sum support of size $k$. By \Cref{lem:xor-support-rigidity}, it contains the anchor, one candidate $c_{f,u_f,w_f}$ from every group, and no dummies. By \Cref{lem:xor-guard-correctness}, every selected edge is present. By \Cref{lem:xor-consistency-test}, all candidates incident to a non-isolated vertex $a$ propose one common label; define this value to be $\phi(a)$, and assign an arbitrary value in $[N]$ to every isolated vertex. Then $X_{f,\phi(a_f),\phi(b_f)}=1$ for every pattern edge, so $P\text{-}\SUB_N(X)=1$.

Every selector, consistency, anti-dummy, anchor, and dummy entry is constant. A compressed-guard entry is $\overline{X}_{f,u,w}\lambda_{f,u,w}(h)$, which is either the constant zero or the literal $\overline{X}_{f,u,w}$ because the label bit is fixed. Thus the map is a projection.

The column budget $1+e(P)N^2\le n$ permits padding with the required number of dummy indices; repeated dummy columns are allowed by the indexed-list convention of \Cref{def:target-problems}. For $m>\mxor(P,N)$, append all-zero rows. Finally, \Cref{eq:comparison-count,eq:xor-guard-dimension} give
\begin{equation}\label{eq:xor-budget-verification}
\mxor(P,N)=e(P)+r_{\mathrm{guard}}(P,N)+|\Ical(P)|\ell_N+1=O(k\log(2N))
\end{equation}
for every $N\ge1$.
\end{proof}

\subsection{\texorpdfstring{The $\kSUM$ projection}{The k-SUM projection}}\label{app:sum-projection-proofs}

Here \Cref{lem:sum-support-structure} establishes selection rigidity (R1) and dummy purge (R4), while the guard and consistency coordinates certify presence (R2) and label consistency (R3) in the proof of \Cref{thm:sum-master-projection}. \Cref{lem:sum-packed-supports-exact} reduces the target modular equation to the exact integer-vector condition used by those arguments. Repeated list values are permitted, so no analogue of (R5) is needed.

We first justify why the construction is native to integer addition rather than imported from $\F_2$. The signed packing below is the large-base positional device introduced in \Cref{sec:inst-sum}\ignore{ and traced there to Karp's \textsc{Exact Cover}-to-\textsc{Knapsack} reduction~\cite{Karp72}}; here we verify its range and modular faithfulness in full.

\begin{proposition}[$2$-torsion obstruction]\label{prop:sum-two-torsion}
Let $r\ge2$, $m\ge1$, and $k\ge2$. There is no additive map $\psi:\F_2^r\to\Z_{2^m}$ such that, for every $k$-tuple $(v_1,\ldots,v_k)$,
\begin{equation}\label{eq:sum-zero-relation-faithfulness}
\bigoplus_{i=1}^k v_i=0
\quad\Longleftrightarrow\quad
\sum_{i=1}^k\psi(v_i)=0\quad\text{in }\Z_{2^m}.
\end{equation}
\end{proposition}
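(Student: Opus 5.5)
The proof is a 2-torsion count: I would first show that additivity forces $\psi$ to take very few values, then exhibit one $k$-tuple on which the two sides of \Cref{eq:sum-zero-relation-faithfulness} disagree.

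\emph{The image of $\psi$.} Additivity gives $\psi(0)=0$. For every $v\in\F_2^r$ we have $v\oplus v=0$, so
\begin{equation*}
2\psi(v)=\psi(v)+\psi(v)=\psi(v\oplus v)=\psi(0)=0 \quad\text{in }\Z_{2^m}.
\end{equation*}
The only elements $x\in\Z_{2^m}$ with $2x\equiv0\pmod{2^m}$ are $0$ and $2^{m-1}$. Hence $\psi$ takes at most two values.

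\emph{A nonzero kernel element.} Since $r\ge2$, the domain has $2^r\ge4$ elements while the image has at most two. Viewing $\psi$ as a group homomorphism, its kernel has index at most $2$, so it has at least $2^{r-1}\ge2$ elements. In particular some nonzero $v\in\F_2^r$ satisfies $\psi(v)=0$. This is the only place the hypothesis $r\ge2$ is used: for $r=1$ the map $1\mapsto 2^{m-1}$ would be injective.

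\emph{The violating tuple.} Take the $k$-tuple $(v,0,\ldots,0)$; it is valid because $k\ge2$ and the statement quantifies over all $k$-tuples, including those with repeated or zero entries.
\begin{itemize}
\item Its XOR is $v\ne0$, so the left-hand side of \Cref{eq:sum-zero-relation-faithfulness} is false.
\item Its image sum is $\psi(v)+(k-1)\psi(0)=0$, so the right-hand side is true.
\end{itemize}
This contradicts the assumed equivalence, so no such $\psi$ exists.

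I expect no real obstacle. The one point to keep in mind is that this proposition concerns arbitrary $k$-tuples, not supports of distinct indices. If a version restricted to distinct vectors were wanted, one would instead pick $k$ pairwise distinct vectors inside the kernel, which has at least $2^{r-1}$ elements, whose XOR is nonzero. That needs $r$ large relative to $k$, and the statement as given does not require it.
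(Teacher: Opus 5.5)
Your proof is correct and follows essentially the same route as the paper: additivity forces $2\psi(v)=0$, so the image lies in $\{0,2^{m-1}\}$, the hypothesis $r\ge2$ yields a nonzero kernel vector $v$, and the $k$-tuple $(v,0,\ldots,0)$ breaks the equivalence. Your remark that $1\mapsto2^{m-1}$ is faithful when $r=1$ is a correct check that the hypothesis $r\ge2$ cannot be dropped.
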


\begin{proof}
Additivity gives $2\psi(v)=\psi(v\oplus v)=\psi(0)=0$ for every $v$. The only elements of $\Z_{2^m}$ annihilated by two are $0$ and $2^{m-1}$, so the image of $\psi$ has $\F_2$-dimension at most one. Since $r\ge2$, the kernel contains a nonzero vector $v$. The $k$-tuple $(v,0,\ldots,0)$ has nonzero XOR but image sum zero, contradicting \Cref{eq:sum-zero-relation-faithfulness}.
\end{proof}

We now analyze the signed vector construction. For a $k$-support $S$, write
\begin{equation}\label{eq:sum-support-vector}
y_S:=\sum_{j\in S}\vecop(j)\in\Z^{R(P)}.
\end{equation}

\begin{lemma}[Exact support structure]\label{lem:sum-support-structure}
If $y_S=0$, then $S$ contains the anchor, exactly one candidate from every edge group, and no dummies.
\end{lemma}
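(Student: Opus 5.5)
The plan is to read off the dummy coordinate and the group coordinates of $y_S$ and finish with the support-size identity, exactly as previewed in \Cref{sec:inst-sum}. Let $\zeta\in\bits$ indicate whether the anchor lies in $S$, let $c_f$ count the selected candidates from group $f$, and let $d$ count the selected dummies. Every index is exactly one of the anchor, a candidate in a unique group, or a dummy, so
\begin{equation*}
\zeta+\sum_{f\in E(P)}c_f+d=|S|=k=e(P)+1.
\end{equation*}

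The first step examines the coordinate $Q_{\mathrm{dum}}$. It is $1$ on every dummy and $0$ on every genuine object, so its entry in $y_S$ is exactly $d$, and $y_S=0$ forces $d=0$.

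Next, for each $f\in E(P)$, the coordinate $G_f$ has $-1$ on the anchor, $+1$ on each group-$f$ candidate, and $0$ elsewhere, including on candidates of other groups and on dummies. Its entry in $y_S$ is therefore $c_f-\zeta$ as an integer, so $c_f=\zeta$ exactly, not merely modulo two.

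Substituting both facts into the identity gives $\zeta+e(P)\zeta=k$, that is, $k\zeta=k$. Since $k\ge2>0$, this yields $\zeta=1$, hence $c_f=1$ for every $f$ and $d=0$. That is exactly the claimed support shape.

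No step is a real obstacle. The only point needing care is that the group and dummy coordinates really take the stated values on every index class. In particular, these coordinates must be entirely source-independent, with the literals $\overline{X}_{f,u,w}$ appearing only in $Q$, and the consistency coordinates $C_{a,g}$ must not interfere, which they cannot because the argument never uses them. The guard and consistency coordinates are used only afterward, in \Cref{thm:sum-master-projection}.
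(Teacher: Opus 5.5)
Your proposal is correct and follows essentially the same route as the paper's proof. You read off $d=0$ from the dummy coordinate and $c_f=\zeta$ from each group coordinate, then collapse the support-size identity to $k\zeta=k$, which forces $\zeta=1$, $c_f=1$ for every $f$, and $d=0$.
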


\begin{proof}
Let $\zeta\in\{0,1\}$ indicate whether the anchor is selected, let $c_f$ count selected candidates from group $f$, and let $d$ count selected dummies. Vanishing of the group coordinates gives $c_f=\zeta$ for every $f$, and vanishing of the dummy coordinate gives $d=0$. Since $e(P)=k-1$, the support-size identity becomes
\begin{equation*}
k=\zeta+\sum_{f\in E(P)}c_f+d=k\zeta.
\end{equation*}
Hence $\zeta=1$, every $c_f=1$, and no dummy is selected.
\end{proof}

\begin{lemma}[Coordinate range]\label{lem:sum-coordinate-range}
For every $k$-support $S$, one has $\|y_S\|_\infty\le kN$.
\end{lemma}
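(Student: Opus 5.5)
The plan is to bound each coordinate of $y_S=\sum_{j\in S}\vecop(j)$ separately, by the triangle inequality, using the entrywise bounds in the definition of the signed vectors in \Cref{sec:inst-sum}. A $k$-support has exactly $k$ indices. Hence it suffices to show that every entry of every column vector $\vecop(j)$ has absolute value at most $N$; then each coordinate of $y_S$ has absolute value at most $kN$.

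First I verify the entrywise bound coordinate by coordinate.
\begin{enumerate}[label=(\roman*),itemsep=2pt]
\item A group coordinate $G_f$ takes values in $\{-1,0,1\}$.
\item The guard coordinate $Q$ takes values $\overline{X}_{f,u,w}\in\{0,1\}$ on candidates and $0$ elsewhere.
\item The dummy coordinate $Q_{\mathrm{dum}}$ takes values in $\{0,1\}$.
\item A consistency coordinate $C_{a,g}$ takes the value $+\lab_a(g,u,w)$ on group-$g$ candidates, $-\lab_a(f,u,w)$ on group-$f$ candidates with $f=\rho(a)$, and $0$ elsewhere.
\end{enumerate}
Since labels lie in $[N]$, every entry of $C_{a,g}$ has absolute value at most $N$. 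The first three families are bounded by $1\le N$.

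One subtlety needs checking in (iv). Both groups $f$ and $g$ contribute to the same coordinate $C_{a,g}$, but they are distinct edges, since $g\ne\rho(a)$ by the definition of $\Ical(P)$. So each column receives exactly one entry on that coordinate, and no column has an entry of the form label plus label.

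Summing over the $k$ selected indices then gives $|(y_S)_i|\le kN$ for every coordinate $i$, which is the claim. A sharper count is available but unnecessary. For instance, a consistency coordinate actually receives contributions only from selected candidates in two groups, and the group coordinates are at most $k$ in absolute value. The stated uniform bound $kN$ is exactly what \Cref{eq:sum-packing-parameters} needs, via $B=2^q>2kN$.

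There is no real obstacle here. The only point needing care is the case distinction above, ensuring that no single column carries two label contributions on the same consistency coordinate.
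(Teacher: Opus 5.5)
Your proposal is correct and matches the paper's proof. Both bound every entry of every $\vecop(j)$ by $N$ in absolute value (group, guard, and dummy entries lie in $\{-1,0,1\}$, consistency entries in $[-N,N]$) and then sum over the $k$ selected indices. Your check that no column receives two label contributions on a single consistency coordinate, because $g\ne\rho(a)$, is a valid detail that the paper leaves implicit.
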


\begin{proof}
Every coordinate of every role vector has absolute value at most $N$: group, guard, and dummy entries lie in $\{-1,0,1\}$, while consistency entries lie in $[-N,N]$. Summing $k$ vectors gives the claimed bound.
\end{proof}

Recall that $B=2^q>2kN$, $R=R(P)$, $\msum(P,N)=qR$, and $2^{\msum(P,N)}=B^R$.

\begin{lemma}[Faithful balanced-digit packing]\label{lem:sum-faithful-packing}
For every $y\in\Z^R$ with $\|y\|_\infty\le kN$,
\begin{equation}\label{eq:sum-faithful-packing}
\Phi(y)\equiv0\pmod{B^R}
\quad\Longleftrightarrow\quad
y=0.
\end{equation}
\end{lemma}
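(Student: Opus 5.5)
The plan is to prove the nontrivial direction by a digit induction from the least significant coordinate upward. The reverse direction is immediate, since $\Phi(0)=0$.

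Write $y=(y_0,\ldots,y_{R-1})$ with $|y_i|\le kN<B/2$, and suppose $\Phi(y)\equiv0\pmod{B^R}$. I will show by induction on $j\in\{0,\ldots,R\}$ that $y_0=\cdots=y_{j-1}=0$. To make the induction self-propagating, I will carry along a slightly stronger statement: $\sum_{i\ge j}y_iB^{i-j}\equiv0\pmod{B^{R-j}}$. This holds at $j=0$ by hypothesis.

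For the inductive step with $j<R$, reduce the congruence modulo $B$, which divides $B^{R-j}$ because $R-j\ge1$. This gives $y_j\equiv0\pmod B$. Since $|y_j|\le kN<B/2<B$, the only multiple of $B$ in that range is $0$, so $y_j=0$. Removing that term and dividing the remaining sum by $B$ (exact over $\Z$, since every remaining term carries a factor $B$) gives $\sum_{i\ge j+1}y_iB^{i-j-1}\equiv0\pmod{B^{R-j-1}}$. This is the inductive statement at $j+1$.

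At $j=R$ all coordinates vanish, so $y=0$.

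The key to correctness is dividing both the sum and the modulus by $B$ at each step. This keeps the congruence exact rather than leaving a stray carry, and it is where the hypothesis $B>2kN$ is used. Note that only $B>kN$ is actually needed for this lemma; the balanced range $|y_j|<B/2$ is more than enough.

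I do not expect any real obstacle. The one point needing care is that $\Phi(y)$ may be negative, so the congruences must be read in $\Z$ rather than through standard residues, which the argument above respects.
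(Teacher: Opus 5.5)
Your proof is correct and matches the paper's argument. The paper inducts on $R$ instead of on $j$: it reduces modulo $B$ to get $B\mid y_0$, uses $|y_0|\le kN<B/2$ to force $y_0=0$, and then divides both the sum and the modulus by $B$, which is your invariant-carrying induction. Your side remark that $B>kN$ already suffices for this particular lemma is also right.
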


\begin{proof}
The reverse implication is immediate. For the forward implication, proceed by induction on $R$. When $R=1$, divisibility by $B$ gives $B\mid y_0$, while $|y_0|\le kN<B/2$, so $y_0=0$.

Assume $R>1$. Reducing $\Phi(y)=y_0+B\sum_{i=1}^{R-1}y_iB^{i-1}$ modulo $B$ shows that $B\mid y_0$, and the same balanced-range argument gives $y_0=0$. Dividing by $B$ then gives
\begin{equation}\label{eq:sum-packing-induction-step}
B^{R-1}\mid\sum_{i=1}^{R-1}y_iB^{i-1}.
\end{equation}
The remaining digits satisfy the same $\ell_\infty$ bound, so the induction hypothesis gives $y_1=\cdots=y_{R-1}=0$.
\end{proof}

\begin{lemma}[Single-column range]\label{lem:sum-single-column-range}
For every anchor, candidate, or dummy index $j$,
\begin{equation}\label{eq:sum-single-column-range}
|\Phi(\vecop(j))|<B^R.
\end{equation}
\end{lemma}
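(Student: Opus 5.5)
We bound $\Phi(\vecop(j))$ by the triangle inequality, using the same digit-range information as \Cref{lem:sum-coordinate-range} but for a single column rather than a $k$-support. First we record that every coordinate of a single role vector $\vecop(j)$ has absolute value at most $N$. Group, guard, and dummy entries lie in $\{-1,0,1\}$, and consistency entries are $\pm\lab_a(\cdot)\in[-N,N]$. In fact $N\le kN$ with $k\ge2$, so each digit satisfies $|z_i|\le N<B/2$, because $B>2kN\ge 4N$.

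Next we estimate the packed value directly:
\begin{equation*}
|\Phi(\vecop(j))|\le\sum_{i=0}^{R-1}|z_i|B^i\le N\sum_{i=0}^{R-1}B^i=N\,\frac{B^R-1}{B-1}.
\end{equation*}
Since $B-1\ge 2kN\ge N$ (indeed $B\ge2kN+1$ by the choice $q=\lceil\log(2kN+1)\rceil$), the factor $N/(B-1)$ is at most $1$, giving $|\Phi(\vecop(j))|\le B^R-1<B^R$. An alternative would be to use $|z_i|\le (B-1)/2$ and sum to $(B^R-1)/2$, which is even stronger. Either route works, and we will use the first, which needs only $B-1\ge N$.

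The only point requiring care is the base case of the arithmetic, namely confirming $B-1\ge N$ from the definition of $q$. This holds since $2^q\ge 2kN+1$. We would also note explicitly that anchor, candidate, and dummy columns are all covered by the same coordinatewise bound, so no case split beyond listing the entry ranges is needed. We expect no genuine obstacle here: the lemma just guarantees that the standard residue $\widehat a_j$ and the signed value $\Phi(\vecop(j))$ differ by either $0$ or exactly $B^R$, which is what the subsequent exactness argument consumes.
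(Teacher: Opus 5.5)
Your proof is correct and is essentially the paper's argument: bound every coordinate of $\vecop(j)$ by $N$, apply the triangle inequality, and sum the geometric series $N\sum_{i<R}B^i$. The only difference is the closing step, where you use $N\le B-1$ (from $B\ge 2kN+1$) to get $N(B^R-1)/(B-1)\le B^R-1$, while the paper uses $N<B/2$ together with $B>2$.
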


\begin{proof}
Every coordinate of $\vecop(j)$ has absolute value at most $N$, and $B>2kN\ge2N$. Hence
\begin{equation}\label{eq:sum-single-column-geometric}
|\Phi(\vecop(j))|
\le N\sum_{i=0}^{R-1}B^i
<\frac{B}{2}\cdot\frac{B^R-1}{B-1}
<B^R,
\end{equation}
where the last inequality uses $B>2$.
\end{proof}

\begin{remark}[Why low-bit modulus extension fails]\label{rem:sum-low-bit-failure}
Let $S$ be the support arising from an accepting source map, so $y_S=0$. Put $t_j:=\Phi(\vecop(j))$ and let $w$ be the number of selected indices with $t_j<0$. The anchor has negative exact packed value, so $1\le w\le k$. By \Cref{lem:sum-single-column-range}, reduction modulo $B^R$ wraps each negative $t_j$ exactly once and each nonnegative $t_j$ not at all; therefore
\begin{equation}\label{eq:sum-low-bit-residual}
\sum_{j\in S}\widehat a_j=\sum_{j\in S}t_j+wB^R=wB^R.
\end{equation}
If these residues are merely viewed as low-bit integers modulo $2^m$ and $2^{m-\msum(P,N)}>k$, then \Cref{eq:sum-low-bit-residual} is nonzero modulo $2^m$, so the true witness is rejected. This is the reason for the top-bit scaling.
\end{remark}

\begin{lemma}[Top-bit scaling]\label{lem:sum-top-bit-scaling}
For every $m\ge\msum(P,N)$ and every integer $Z$,
\begin{equation}\label{eq:sum-scaling-divisibility}
2^m\mid2^{m-\msum(P,N)}Z
\quad\Longleftrightarrow\quad
2^{\msum(P,N)}\mid Z.
\end{equation}
Consequently, for every support $S$,
\begin{equation}\label{eq:sum-scaled-residue-equivalence}
\sum_{j\in S}a_j\equiv0\pmod{2^m}
\quad\Longleftrightarrow\quad
\sum_{j\in S}\widehat a_j\equiv0\pmod{2^{\msum(P,N)}}.
\end{equation}
\end{lemma}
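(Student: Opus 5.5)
The first equivalence is elementary 2-adic arithmetic, and I will prove it directly. Write $\mu:=\msum(P,N)$ and $e:=m-\mu\ge0$. If $2^\mu\mid Z$, say $Z=2^\mu Z'$, then $2^eZ=2^mZ'$, so $2^m$ divides it. Conversely, suppose $2^m\mid 2^eZ$, say $2^eZ=2^mW$. Cancelling the nonzero factor $2^e$ in $\Z$ gives $Z=2^\mu W$. Equivalently, I can compare 2-adic valuations: $v_2(2^eZ)=e+v_2(Z)\ge m$ holds iff $v_2(Z)\ge\mu$, with $Z=0$ handled trivially.

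For the consequence, I will apply \Cref{eq:sum-top-bit-scaling}. It defines $a_j=2^{e}\widehat a_j$ with $\widehat a_j\in[0,2^\mu)$. Since $a_j<2^m$, the integer $a_j$ is already a genuine representative in $[0,2^m)$, so the target sum $\sum_{j\in S}a_j$ computed modulo $2^m$ is the reduction of the integer $2^e\sum_{j\in S}\widehat a_j$. I then take $Z:=\sum_{j\in S}\widehat a_j\in\Z$ and apply the first equivalence. This turns $\sum_j a_j\equiv0\pmod{2^m}$ into $2^\mu\mid Z$, which is exactly the right-hand congruence in \Cref{eq:sum-scaled-residue-equivalence}.

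The only point needing care, which is also the main (mild) obstacle, is that no wraparound occurs before the sum is formed. Each $a_j$ must be the literal integer $2^e\widehat a_j$ rather than some further reduction. Since $\widehat a_j<2^\mu$ gives $2^e\widehat a_j<2^m$, this holds, and the target then treats the sum modulo $2^m$. No range bound on the sum itself is needed, because divisibility is insensitive to the size of $Z$. The argument uses nothing about the support $S$, so it holds for every support, as stated.
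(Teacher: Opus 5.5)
Your proof is correct and is the paper's argument: cancel the common factor $2^{m-\msum(P,N)}$ in the divisibility relation, then apply the result with $Z=\sum_{j\in S}\widehat a_j$. Your no-wraparound check does no harm, but it is not needed, because reducing each $a_j$ modulo $2^m$ would leave the congruence class of the sum unchanged.
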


\begin{proof}
The first equivalence follows by cancelling the common power of two $2^{m-\msum(P,N)}$ from the divisibility relation. Apply it to $Z=\sum_{j\in S}\widehat a_j$ to obtain the second equivalence.
\end{proof}

\begin{lemma}[Packed supports are exact]\label{lem:sum-packed-supports-exact}
For every $k$-support $S$,
\begin{equation}\label{eq:sum-packed-support-equivalence}
\sum_{j\in S}a_j\equiv0\pmod{2^m}
\quad\Longleftrightarrow\quad
y_S=0.
\end{equation}
\end{lemma}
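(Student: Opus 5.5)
The plan is to chain the two reductions already established: \Cref{lem:sum-top-bit-scaling} removes the top-bit scaling, and \Cref{lem:sum-faithful-packing} converts a packed zero into an exact zero. The only real work is the middle step, which replaces the sum of standard residues $\widehat a_j$ by the exact packed value $\Phi(y_S)$ modulo $B^R=2^{\msum(P,N)}$.

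\textbf{Step 1.} Apply \Cref{eq:sum-scaled-residue-equivalence}. This reduces the left-hand condition to
\[
\sum_{j\in S}\widehat a_j\equiv0\pmod{2^{\msum(P,N)}}.
\]

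\textbf{Step 2.} Use the definition $\widehat a_j:=\Phi(\vecop(j))\bmod 2^{\msum(P,N)}$. This gives $\widehat a_j\equiv\Phi(\vecop(j))\pmod{B^R}$ for each $j$. Since $\Phi$ is $\Z$-linear by \Cref{eq:sum-packing-map}, summing over $S$ yields
\[
\sum_{j\in S}\widehat a_j\equiv\sum_{j\in S}\Phi(\vecop(j))=\Phi(y_S)\pmod{B^R}.
\]
The wraparound computed in \Cref{rem:sum-low-bit-failure} affects only multiples of $B^R$, so it disappears here. \Cref{lem:sum-single-column-range} is therefore not needed for this direction; it only explains that remark.

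\textbf{Step 3.} By \Cref{lem:sum-coordinate-range}, every $k$-support satisfies $\|y_S\|_\infty\le kN$. \Cref{lem:sum-faithful-packing} then gives $\Phi(y_S)\equiv0\pmod{B^R}$ if and only if $y_S=0$. Chaining the three equivalences proves \Cref{eq:sum-packed-support-equivalence}.

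The main obstacle is bookkeeping rather than mathematics. We must keep the identity $2^{\msum(P,N)}=B^R$ explicit, which holds since $\msum(P,N)=qR$ and $B=2^q$. We must also make sure the residues $\widehat a_j$ are compared only modulo $B^R$, never as integers.
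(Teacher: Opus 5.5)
Your proof is correct and matches the paper's argument step for step. You remove the top-bit scaling, then use $\widehat a_j\equiv\Phi(\vecop(j))\pmod{B^R}$ together with the $\Z$-linearity of $\Phi$, and finally feed the coordinate-range bound into the faithful-packing lemma. You are also right that the single-column range lemma is not needed here; the paper uses it only in the low-bit-failure remark.
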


\begin{proof}
By \Cref{lem:sum-top-bit-scaling}, the modular equation on the left is equivalent to $\sum_{j\in S}\widehat a_j\equiv0\pmod{B^R}$. Since $\widehat a_j\equiv\Phi(\vecop(j))\pmod{B^R}$ and $\Phi$ is $\Z$-linear, this is equivalent to
\begin{equation}\label{eq:sum-linearity-step}
\Phi\left(\sum_{j\in S}\vecop(j)\right)=\Phi(y_S)\equiv0\pmod{B^R}.
\end{equation}
By \Cref{lem:sum-coordinate-range}, $\|y_S\|_\infty\le kN$, so \Cref{lem:sum-faithful-packing} makes \Cref{eq:sum-linearity-step} equivalent to $y_S=0$.
\end{proof}

\begin{proof}[Proof of \Cref{thm:sum-master-projection}]
Assume first that $P\text{-}\SUB_N(X)=1$ via an accepting map $\phi$, and select the support
\begin{equation}\label{eq:sum-embedding-support}
S_\phi:=\{\alpha\}\cup\{c_{f,\phi(a_f),\phi(b_f)}:f\in E(P)\}.
\end{equation}
Every group coordinate of $y_{S_\phi}$ is $1-1=0$. Every consistency coordinate compares two occurrences of the common value $\phi(a)$ and is zero. Every selected edge is present, so every guard contribution is zero, and no dummy is selected. Hence $y_{S_\phi}=0$, and \Cref{lem:sum-packed-supports-exact} gives a zero-sum $k$-support modulo $2^m$.

Conversely, suppose the constructed $\kSUM$ instance has a zero-sum support $S$ of size $k$. By \Cref{lem:sum-packed-supports-exact}, $y_S=0$, and \Cref{lem:sum-support-structure} forces the anchor, exactly one candidate $c_{f,u_f,w_f}$ from each group, and no dummies. The guard coordinate is the nonnegative sum $\sum_f\overline{X}_{f,u_f,w_f}$; since it is zero, every term is zero and every selected edge is present. Each consistency coordinate is the signed difference of the labels proposed by a non-reference group and the reference group at the same pattern vertex, so its vanishing makes those labels equal. Define $\phi(a)$ to be the common proposed label at each non-isolated vertex and choose an arbitrary value for every isolated vertex. Then $X_{f,\phi(a_f),\phi(b_f)}=1$ for every $f$, and the source instance accepts.

The column budget permits all genuine indices and enough dummies to reach length $n$, with repeated dummy residues allowed by \Cref{def:target-problems}. The anchor and dummy residues are constants.

For a candidate $c_{f,u,w}$, every coordinate of $\vecop(c_{f,u,w})$ is constant except the guard coordinate $\overline{X}_{f,u,w}$, so the packed-and-scaled integer $a_{f,u,w}(X)$ takes one of two fixed $m$-bit values according to the single bit $X_{f,u,w}$.\ignore{ At each output bit position, a Boolean function of one bit is either a constant, that bit, or its negation.} Thus every target input bit is a constant, $X_{f,u,w}$, or $\overline{X}_{f,u,w}$, and the map is a projection.

Finally, \Cref{eq:sum-vector-dimension,eq:sum-packing-parameters} give
\begin{equation}\label{eq:sum-budget-verification}
\msum(P,N)=R(P)\left\lceil\log(2kN+1)\right\rceil=O(k\log(2kN)).
\end{equation}
\ignore{The support-rigidity proof used exact integer equalities and never the parity of $k$, so the construction applies to both odd and even $k$.}
\end{proof}

\section{Parameter Completions}\label{app:parameter-completions}

This appendix completes the parameter picture left implicit in the main theorems. We first record how each projection extends beyond its constructed constraint budget and then optimize the source host size when the available row count or bit width is smaller than the principal $O(k\log n)$-scale budget used in \Cref{thm:B}.

We do not pursue an analogous interpolation in the $\kOV$ dimension $D$, i.e. the counterpart of \Cref{thm:xor-general-rows,thm:sum-general-width} below. Unlike the two list targets, whose construction floors depend only on $k$, the $\kOV$ floor $k+s_{k,n}\ge k+\ell_n$ grows with the target universe: the identifier block enforcing the promise in \Cref{def:target-problems} is sized by $n$, not by the source host $N$. For fixed $k$, this floor and the principal budget $\COV k\log n$ are both $\Theta(\log n)$, so varying $N$ trades one constant against another rather than opening a regime analogous to $m\ll k\log(n/k)$.

The final subsection records two boundary regimes outside this interpolation: a small-modulus degeneracy and a parity-hard restriction at the minimal universe $n=k$.

\subsection{Extension beyond the constructed budget}\label{app:parameter-extension}

\begin{remark}[Monotone extension of the constraint parameter]\label{prop:parameter-extension}
Let $P$ be a pattern with $e(P)=k-1$, and suppose one of the projections in \Cref{thm:A} has been constructed at its native constraint parameter. Then the same source instance projects to every larger admissible target parameter: from $D_0:=\Dzero(P,N,n)$ to every $D\ge D_0$ for $\kOV$, from $m_0:=\mxor(P,N)$ to every $m\ge m_0$ for odd-$k$ $\kXOR$, and from $m_0:=\msum(P,N)$ to every $m\ge m_0$ for $\kSUM$.
\end{remark}

\begin{proof}
For $\kOV$, append $D-D_0$ coordinates that are $0$ on every vector; this preserves pairwise distinctness within each array and makes the product of every one-vector-per-class selection identically zero on the new coordinates. For $\kXOR$, append $m-m_0$ all-zero rows, which do not change the sum of any support. For $\kSUM$, let $\widehat a_j\in\Z_{2^{m_0}}$ be the residue produced at width $m_0$ and replace it at width $m$ by $a_j:=2^{m-m_0}\widehat a_j\in\Z_{2^m}$, as in \Cref{eq:sum-top-bit-scaling}. For every support $S$,
\begin{equation*}
\sum_{j\in S}a_j\equiv0\pmod{2^m}
\quad\Longleftrightarrow\quad
2^m\mid2^{m-m_0}\sum_{j\in S}\widehat a_j
\quad\Longleftrightarrow\quad
\sum_{j\in S}\widehat a_j\equiv0\pmod{2^{m_0}}.
\end{equation*}
Every added target bit is still a Boolean function of at most the single source bit on which the original residue depended, and is therefore a constant, that bit, or its negation. Thus all three extensions remain depth-zero projections; this remark consolidates operations already built into \Cref{thm:ov-master-projection,thm:xor-master-projection,thm:sum-master-projection} rather than adding a separate reduction.
\end{proof}

\paragraph{Growing $k$ for $\kOV$.} Although the identifier floor leaves no qualitatively new fixed-$k$ window, the explicit growing-pattern source bounds do yield a shifted interpolation above that floor. For every pattern $P$ with $e(P)=k-1$, one has $|\Ical(P)|\le2e(P)$ and therefore
\begin{equation*}
\Dzero(P,N,n)\le4k\ell_N+k+2\ell_n.
\end{equation*}
Define
\begin{equation*}
w_{\mathrm{OV}}:=\left[\min\left\{\frac12\log n,\frac{D-k-2\ell_n}{4k}\right\}\right]_+.
\end{equation*}
If $w_{\mathrm{OV}}>0$, put $N:=2^{\lfloor w_{\mathrm{OV}}\rfloor}$. Then $\ell_N=\lceil\log N\rceil=\lfloor w\rfloor\le w$, $N^2\le n$ and $\Dzero(P,N,n)\le D$. Hence \Cref{thm:ov-master-projection} applied to $P^{\mathrm{exp}}_k$, followed by \Cref{thm:ov-master-projection,thm:D}(i), gives
\begin{equation*}
w_{\mathrm{OV}}\ge2\log k+1
\quad\Longrightarrow\quad
\log\size_2(\kOV_{n,D,k})\ge\gamma_Dk\log N\ge\gamma_Dk(w_{\mathrm{OV}}-1)\ge\frac{\gamma_D}{2}k w_{\mathrm{OV}}
\end{equation*}
for every $k\ge k_D$.
\ifhidedepththree\else
Likewise, for every fixed $0<\gamma<1/8$, applying the projection to $P^{\mathrm{exp}}_k$ and using \Cref{thm:E} gives
\begin{equation*}
w_{\mathrm{OV}}\ge\max\{2,C_E(\gamma)\log k+1\}
\quad\Longrightarrow\quad
\log\size_{\Sigma_3}(\kOV_{n,D,k})\ge\frac{\gamma}{2}k w_{\mathrm{OV}}
\end{equation*}
for every $k\ge k_E(\gamma)$.
\fi
In particular, once $D\ge2(k+2\ell_n)$ and the corresponding onset condition holds, \ifhidedepththree this conclusion\else either conclusion\fi{} has exponent $\Omega(\min\{k\log n,D\})$.

\subsection{\texorpdfstring{General row counts for odd $\kXOR$}{General row counts for odd k-XOR}}\label{app:xor-general-rows}

The exact formula in \Cref{eq:xor-row-budget} yields a convenient absolute estimate. Since $e(P)=k-1$, $|\Ical(P)|\le2e(P)$, and $r_{\mathrm{guard}}(P,N)=\lceil2e(P)\log N+e(P)+1\rceil$, for every $k\ge2$ and $N\ge1$,
\begin{equation}\label{eq:D-xor-budget-explicit}
\mxor(P,N)
\le 4(k-1)\log(2N)+2(k-1)+3
\le 8k\log(2N).
\end{equation}
The bound depends only on the edge count, so it applies uniformly to the expander pattern family used below.

\begin{theorem}[General row count for odd $\kXOR$]\label{thm:xor-general-rows}
There is a universal constant $c_\oplus>0$ with the following property. For every fixed depth $d$ and every fixed odd integer $k\ge k_0$, there is a threshold $L_\oplus(d,k)$ such that, whenever
\begin{equation}\label{eq:D-xor-width-parameter}
w_\oplus:=\min\left\{\log\frac nk,\frac mk\right\}\ge L_\oplus(d,k),
\end{equation}
one has
\begin{equation}\label{eq:D-xor-general-lower}
\log\size_d(\kXOR_{n,m,k})
\ge c_\oplus\min\left\{k\log\frac nk,m\right\}.
\end{equation}
Equivalently, $\size_d(\kXOR_{n,m,k})\ge\exp\bigl(\Omega(\min\{k\log(n/k),m\})\bigr)$, with a universal implied constant; only the onset threshold depends on $(d,k)$.
\end{theorem}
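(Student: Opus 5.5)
We obtain this by rerunning the proof of \Cref{thm:B} with a smaller, width-adapted host size. Fix $d$ and odd $k\ge k_0$, and fix a universal source rate $\gamma$ with $0<\gamma<c_\kappa$ as in \Cref{cor:fixed-source-rate}. The main point is to choose $N$ so that two constraints hold at once. First, the row budget must satisfy $\mxor(P^{\mathrm{exp}}_k,N)\le m$. Second, the list capacity must satisfy $1+(k-1)N^2\le n$. Using the explicit estimate \Cref{eq:D-xor-budget-explicit}, the first constraint holds whenever $8k\log(2N)\le m$. We therefore set
\begin{equation*}
w:=\min\left\{\tfrac12\log\tfrac{n-1}{k-1},\ \tfrac{m}{8k}\right\}-1,
\qquad
N:=2^{\lfloor w\rfloor}.
\end{equation*}
Then $\log(2N)\le m/(8k)$, so $\mxor\le m$. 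Also $N^2\le (n-1)/(k-1)$, so the capacity condition holds. With these choices, \Cref{thm:xor-master-projection} gives $P^{\mathrm{exp}}_k\text{-}\SUB_N\le_{\proj}\kXOR_{n,m,k}$ for odd $k$. Extra rows are handled by appending all-zero rows, as in \Cref{prop:parameter-extension}.

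Next we transfer the source bound. By \Cref{lem:projection-monotonicity} and \Cref{cor:fixed-source-rate}, as long as $N\ge N_0(\gamma,d,k)$ we get
\begin{equation*}
\log\size_d(\kXOR_{n,m,k})\ge\gamma(k-1)\log N\ge\gamma(k-1)(w-1).
\end{equation*}
We define $L_\oplus(d,k)$ large enough that three conditions follow from $w_\oplus\ge L_\oplus(d,k)$. The first is that $n$ is large relative to $k$, so that $\log\frac{n-1}{k-1}\ge\log\frac nk$. The second is $w\ge\max\{4,\,\log N_0(\gamma,d,k)+2\}$, which forces $N\ge N_0$. The third is $w-1\ge w/2$. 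Under these conditions, $w\ge\frac1{16}\min\{\log(n/k),m/k\}$, and hence
\begin{equation*}
\log\size_d\ge\frac{\gamma}{64}(k-1)\min\left\{\log\frac nk,\frac mk\right\}\ge\frac{\gamma}{128}\min\left\{k\log\frac nk,m\right\},
\end{equation*}
using $k-1\ge k/2$. This gives $c_\oplus=\gamma/128$, which is universal because $\gamma$ depends only on $c_\kappa$.

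The main obstacle is the quantifier bookkeeping, not the construction. The rate $c_\oplus$ must come only from the universal $\gamma$ and from absolute constants: the factor $8$ in \Cref{eq:D-xor-budget-explicit}, the factor $\frac12$ from the capacity condition, and the floor loss. All dependence on $(d,k)$, namely the onset $N_0(\gamma,d,k)$, must be pushed into $L_\oplus(d,k)$. This works because the hypothesis bounds $w_\oplus$, and hence $\log N$, from below. I would check carefully that the floor $2^{\lfloor w\rfloor}$ and the ceiling in $r_{\mathrm{guard}}$ cost only additive constants in $\log N$. Those constants are absorbed by requiring $w\ge 4$, which the threshold guarantees.
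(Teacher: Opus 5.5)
Your proposal is correct and follows essentially the same route as the paper. You choose the host size as the minimum of a width-adapted term, so that the explicit estimate $\mxor\le 8k\log(2N)$ fits within $m$, and the capacity term $\sqrt{(n-1)/(k-1)}$; you then compose \Cref{thm:xor-master-projection} with \Cref{cor:fixed-source-rate} and push the $(d,k)$-dependent onset $N_0(\gamma,d,k)$ into $L_\oplus(d,k)$. The differences are cosmetic: you round to a power of two, $N=2^{\lfloor w\rfloor}$, where the paper takes $N=\lfloor Y_\oplus\rfloor$, and your constant $\gamma/128$ is looser than the paper's $\gamma_\star/32$.
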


\begin{proof}
Fix a universal source rate $\gamma_\star$ with $0<\gamma_\star<c_\kappa$, and let $N_0(\gamma_\star,d,k)$ be the threshold supplied by \Cref{cor:fixed-source-rate} for $P^{\mathrm{exp}}_k$. Define
\begin{equation}\label{eq:D-xor-host-choice}
Y_\oplus:=\min\left\{2^{m/(8k)-1},\sqrt{\frac{n-1}{k-1}}\right\},
\qquad
N:=\lfloor Y_\oplus\rfloor.
\end{equation}
The second term ensures $1+(k-1)N^2\le n$. The first gives $\log(2N)\le m/(8k)$, so \Cref{eq:D-xor-budget-explicit} implies $\mxor(P^{\mathrm{exp}}_k,N)\le m$. Thus \Cref{thm:xor-master-projection} gives a projection from $P^{\mathrm{exp}}_k\text{-}\SUB_N$ to $\kXOR_{n,m,k}$.

Set $x:=w_\oplus$. Once $x$ is above an absolute constant, one has $n\ge k$ and $Y_\oplus\ge2$. Using $(n-1)/(k-1)\ge n/k$ and $\lfloor Y_\oplus\rfloor\ge Y_\oplus/2$ gives
\begin{align}
\log N
&\ge \min\left\{\frac{m}{8k}-1,\frac12\log\frac{n-1}{k-1}\right\}-1\notag\\
&\ge \min\left\{\frac{x}{8}-1,\frac{x}{2}\right\}-1
\ge \frac{x}{16},\label{eq:D-xor-host-lower}
\end{align}
where the last inequality holds after increasing the absolute lower bound on $x$. Choose $L_\oplus(d,k)$ large enough that \Cref{eq:D-xor-host-lower} also forces $N\ge N_0(\gamma_\star,d,k)$. Projection monotonicity and \Cref{cor:fixed-source-rate} then yield
\begin{align*}
\log\size_d(\kXOR_{n,m,k})
&\ge \gamma_\star(k-1)\log N\\
&\ge \frac{\gamma_\star}{32}kw_\oplus
=\frac{\gamma_\star}{32}\min\left\{k\log\frac nk,m\right\},
\end{align*}
using $k-1\ge k/2$. Taking $c_\oplus:=\gamma_\star/32$ proves the theorem.
\end{proof}

\subsection{\texorpdfstring{General bit widths for $\kSUM$}{General bit widths for k-SUM}}\label{app:sum-general-widths}

For this construction, the integer packing depends on $kN$, rather than on $N$ alone. Indeed, \Cref{eq:sum-vector-dimension,eq:sum-packing-parameters} imply, for every $k\ge2$ and $N\ge1$,
\begin{equation}\label{eq:D-sum-budget-explicit}
\msum(P,N)
=R(P)\left\lceil\log(2kN+1)\right\rceil
\le9k\log(2kN),
\end{equation}
because $R(P)\le3k$ and $\lceil\log(2kN+1)\rceil\le3\log(2kN)$. The factor $k$ inside the logarithm produces the $-\log k$ term below.

\begin{theorem}[General bit width for $\kSUM$]\label{thm:sum-general-width}
There is a universal constant $c_\Sigma>0$ with the following property. For every fixed depth $d$ and every fixed integer $k\ge k_0$, of either parity, there is a threshold $L_\Sigma(d,k)$ such that, with
\begin{equation}\label{eq:D-sum-width-parameter}
w_\Sigma:=\left[\min\left\{\log\frac nk,\frac{m}{9k}-\log(2k)\right\}\right]_+,
\qquad [x]_+:=\max\{x,0\},
\end{equation}
the implication
\begin{equation}\label{eq:D-sum-width-threshold}
w_\Sigma\ge L_\Sigma(d,k)
\quad\Longrightarrow\quad
\log\size_d(\kSUM_{n,m,k})\ge c_\Sigma kw_\Sigma
\end{equation}
holds. The constant $c_\Sigma$ is universal; only the onset threshold depends on $(d,k)$.
\end{theorem}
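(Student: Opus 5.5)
The plan mirrors the proof of \Cref{thm:xor-general-rows}, with the host size chosen to respect the $\kSUM$ budget estimate \Cref{eq:D-sum-budget-explicit} instead of \Cref{eq:D-xor-budget-explicit}. Fix a universal source rate $\gamma_\star$ with $0<\gamma_\star<c_\kappa$, and let $N_0(\gamma_\star,d,k)$ be the threshold that \Cref{cor:fixed-source-rate} supplies for $P^{\mathrm{exp}}_k$. Define
\begin{equation*}
Y_\Sigma:=\min\left\{\frac{2^{m/(9k)}}{2k},\sqrt{\frac{n-1}{k-1}}\right\},
\qquad
N:=\lfloor Y_\Sigma\rfloor .
\end{equation*}

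\textbf{Admissibility.} The second term gives $1+(k-1)N^2\le n$. The first gives $\log(2kN)\le m/(9k)$, so \Cref{eq:D-sum-budget-explicit} yields $\msum(P^{\mathrm{exp}}_k,N)\le m$. Hence \Cref{thm:sum-master-projection} supplies a depth-zero projection $P^{\mathrm{exp}}_k\text{-}\SUB_N\le_{\proj}\kSUM_{n,m,k}$ for either parity of $k$. If instead $m$ exceeds the natively constructed width, \Cref{prop:parameter-extension} covers it through top-bit scaling.

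\textbf{Host-size lower bound.} Write $x:=w_\Sigma$ and require $L_\Sigma(d,k)$ to exceed an absolute constant. Then $x>0$, both terms of the minimum in \Cref{eq:D-sum-width-parameter} are at least $x$, $n\ge k$, and $Y_\Sigma\ge2$. Using $(n-1)/(k-1)\ge n/k$ and $\lfloor Y\rfloor\ge Y/2$, we get
\begin{equation*}
\log N\ge\min\left\{\frac{m}{9k}-\log(2k),\ \frac12\log\frac nk\right\}-1\ge\min\left\{x,\frac x2\right\}-1\ge\frac x4 ,
\end{equation*}
once $x\ge4$. This step is the only genuine difference from the $\kXOR$ case: the $-\log(2k)$ shift built into $w_\Sigma$ exactly absorbs the factor $k$ inside $\log(2kN)$. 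It is also where I expect the main care to be needed, since the definition of $Y_\Sigma$ has to line up with the exact shift so that no extra $\log k$ loss appears. Since $k$ is fixed, I then enlarge $L_\Sigma(d,k)$ so that $x/4\ge\log N_0(\gamma_\star,d,k)+1$, which forces $N\ge N_0(\gamma_\star,d,k)$.

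\textbf{Conclusion.} \Cref{lem:projection-monotonicity} and \Cref{cor:fixed-source-rate} give
\begin{equation*}
\log\size_d(\kSUM_{n,m,k})\ge\gamma_\star(k-1)\log N\ge\frac{\gamma_\star}{8}\,k\,w_\Sigma ,
\end{equation*}
using $k-1\ge k/2$. Setting $c_\Sigma:=\gamma_\star/8$ makes the constant universal. Only $L_\Sigma(d,k)$ depends on $(d,k)$, and it does so solely through $N_0$.
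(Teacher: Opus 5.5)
Your proposal is correct and follows the paper's proof essentially line for line: the same host choice $N=\lfloor Y_\Sigma\rfloor$, the same admissibility check $\msum(P^{\mathrm{exp}}_k,N)\le9k\log(2kN)\le m$, the same estimate $\log N\ge\log Y_\Sigma-1\ge x/4$ once $x\ge4$, and the same constant $c_\Sigma=\gamma_\star/8$. The appeal to \Cref{prop:parameter-extension} is harmless but unnecessary, since \Cref{thm:sum-master-projection} already covers every $m\ge\msum(P,N)$.
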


\begin{proof}
Use the same universal $\gamma_\star$ as in the proof of \Cref{thm:xor-general-rows}, and let $N_0(\gamma_\star,d,k)$ be the corresponding source threshold. Define
\begin{equation}\label{eq:D-sum-host-choice}
Y_\Sigma:=\min\left\{\frac{2^{m/(9k)}}{2k},\sqrt{\frac{n-1}{k-1}}\right\},
\qquad
N:=\lfloor Y_\Sigma\rfloor.
\end{equation}
The column budget again holds by the second term. The first term gives $\log(2kN)\le m/(9k)$, so \Cref{eq:D-sum-budget-explicit} implies $\msum(P^{\mathrm{exp}}_k,N)\le m$. Hence \Cref{thm:sum-master-projection} gives a projection from $P^{\mathrm{exp}}_k\text{-}\SUB_N$ to $\kSUM_{n,m,k}$.

Write $x:=w_\Sigma$. If $x>0$, then both entries inside the minimum in \Cref{eq:D-sum-width-parameter} are positive. Since $(n-1)/(k-1)\ge n/k$, one has
\begin{equation*}
\log Y_\Sigma
\ge\min\left\{\frac{m}{9k}-\log(2k),\frac12\log\frac nk\right\}
\ge\frac{x}{2}.
\end{equation*}
Once $x\ge4$, this gives $Y_\Sigma\ge2$ and therefore
\begin{equation}\label{eq:D-sum-host-lower}
\log N\ge\log Y_\Sigma-1\ge\frac{x}{4}.
\end{equation}
Choose $L_\Sigma(d,k)$ large enough that \Cref{eq:D-sum-host-lower} forces $N\ge N_0(\gamma_\star,d,k)$. By projection monotonicity and \Cref{cor:fixed-source-rate},
\begin{align*}
\log\size_d(\kSUM_{n,m,k})
&\ge\gamma_\star(k-1)\log N\\
&\ge\frac{\gamma_\star}{8}kw_\Sigma,
\end{align*}
where $k-1\ge k/2$ is used in the last step. Taking $c_\Sigma:=\gamma_\star/8$ proves the theorem.
\end{proof}

\paragraph{Growing $k$ at depth two\ifhidedepththree\else{} and three\fi.} The fixed-$k$ qualification in \Cref{thm:xor-general-rows,thm:sum-general-width} enters only through \Cref{cor:fixed-source-rate}. Replacing that source bound by \Cref{thm:D}(i), the same host choices give, for every $k\ge k_D$, the implications
\begin{align*}
w_\oplus\ge32\log k
&\quad\Longrightarrow\quad
\log\size_2(\kXOR_{n,m,k})\ge\frac{\gamma_D}{16}kw_\oplus
&&\text{for odd }k,\\
w_\Sigma\ge8\log k
&\quad\Longrightarrow\quad
\log\size_2(\kSUM_{n,m,k})\ge\frac{\gamma_D}{4}kw_\Sigma.
\end{align*}
Indeed, \Cref{eq:D-xor-host-lower,eq:D-sum-host-lower} then force $N\ge k^2$.
\ifhidedepththree\else
Likewise, fix $0<\gamma<1/8$ and enlarge $C_E(\gamma)$ to at least $2$. Replacing the source bound by \Cref{thm:E}, for every $k\ge k_E(\gamma)$ the same calculations give
\begin{align*}
w_\oplus\ge16C_E(\gamma)\log k
&\quad\Longrightarrow\quad
\log\size_{\Sigma_3}(\kXOR_{n,m,k})\ge\frac{\gamma}{16}kw_\oplus
&&\text{for odd }k,\\
w_\Sigma\ge4C_E(\gamma)\log k
&\quad\Longrightarrow\quad
\log\size_{\Sigma_3}(\kSUM_{n,m,k})\ge\frac{\gamma}{4}kw_\Sigma.
\end{align*}
\fi
Together with the preceding $\kOV$ interpolation, the parameter completion is uniform for growing $k$ at depth two\ifhidedepththree\else{} and in the top-disjunction depth-three orientation\fi{} once the corresponding logarithmic host size $w_{\mathrm{OV}}$, $w_\oplus$, or $w_\Sigma$ is $\Omega(\log k)$. No analogous low-row consequence follows from the even-$K$ lift: its tag matrix and inert-column padding already cost $\Theta(K\log n')$ rows independently of the source host size.

The $[\cdot]_+$ and the subtraction of $\log(2k)$ are substantive. They keep the theorem silent whenever the available width does not permit a growing source host. Far above that boundary, the familiar minimum form is recovered.

\begin{corollary}[High-width minimum form for $\kSUM$]\label{cor:sum-high-width}
For every fixed depth $d$ and every fixed integer $k\ge k_0$, there is a threshold $L'_\Sigma(d,k)$ such that, whenever
\begin{equation}\label{eq:D-sum-high-width-hypotheses}
m\ge18k\log(2k)
\qquad\text{and}\qquad
\min\left\{\log\frac nk,\frac mk\right\}\ge L'_\Sigma(d,k),
\end{equation}
one has
\begin{equation}\label{eq:D-sum-high-width-lower}
\size_d(\kSUM_{n,m,k})
\ge\exp\left(\Omega\left(\min\left\{k\log\frac nk,m\right\}\right)\right),
\end{equation}
with a universal implied constant.
\end{corollary}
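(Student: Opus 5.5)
The corollary follows from \Cref{thm:sum-general-width}. The plan is to show that, under \Cref{eq:D-sum-high-width-hypotheses}, the truncated parameter $w_\Sigma$ of \Cref{eq:D-sum-width-parameter} is comparable to $\min\{\log(n/k),m/k\}$. Then we convert $c_\Sigma k w_\Sigma$ into the stated minimum.

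\textbf{Step 1: compare $w_\Sigma$ with $\min\{\log(n/k),m/k\}$.} The width hypothesis $m\ge18k\log(2k)$ gives $\log(2k)\le m/(18k)$. Hence
\[
\frac{m}{9k}-\log(2k)\ge\frac{m}{18k}.
\]
Both entries of the minimum defining $w_\Sigma$ are then nonnegative, so the truncation $[\cdot]_+$ is inactive. We obtain
\[
w_\Sigma\ge\min\left\{\log\frac nk,\frac{m}{18k}\right\}\ge\frac1{18}\min\left\{\log\frac nk,\frac mk\right\}.
\]

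\textbf{Step 2: choose the threshold and apply \Cref{thm:sum-general-width}.} Set $L'_\Sigma(d,k):=18L_\Sigma(d,k)$. The second hypothesis in \Cref{eq:D-sum-high-width-hypotheses}, combined with Step~1, then yields $w_\Sigma\ge L_\Sigma(d,k)$. This is exactly the onset condition of \Cref{eq:D-sum-width-threshold}. That implication gives
\[
\log\size_d(\kSUM_{n,m,k})\ge c_\Sigma k w_\Sigma\ge\frac{c_\Sigma}{18}\min\left\{k\log\frac nk,m\right\}.
\]
This is \Cref{eq:D-sum-high-width-lower}, with the universal implied constant $c_\Sigma/18$.

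There is no real obstacle. The only point needing care is that the threshold rescaling by the factor $18$ must not touch the universal constant: it affects only the onset $L'_\Sigma(d,k)$, which is allowed to depend on $(d,k)$, so the exponent rate $c_\Sigma/18$ stays universal. As in \Cref{thm:sum-general-width}, the argument holds for $k$ of either parity.
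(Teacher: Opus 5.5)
Your proposal is correct and follows the paper's proof essentially line for line: the width hypothesis gives $m/(9k)-\log(2k)\ge m/(18k)$, hence $w_\Sigma\ge\frac1{18}\min\{\log(n/k),m/k\}$, and the choice $L'_\Sigma(d,k):=18L_\Sigma(d,k)$ lets \Cref{thm:sum-general-width} deliver the bound with the universal constant $c_\Sigma/18$. One minor point of ordering: the inequality $w_\Sigma\ge\min\{\log(n/k),m/(18k)\}$ needs only $[x]_+\ge x$, and the nonnegativity of $\log(n/k)$, which you need for the factor-$\frac1{18}$ comparison, comes from the second hypothesis (with $L'_\Sigma>0$) rather than from the first.
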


\begin{proof}
Put $u:=\min\{\log(n/k),m/k\}$. The first hypothesis gives $m/(9k)-\log(2k)\ge m/(18k)$, and therefore
\begin{equation*}
w_\Sigma
\ge\min\left\{\log\frac nk,\frac{m}{18k}\right\}
\ge\frac{u}{18}.
\end{equation*}
Choose $L'_\Sigma(d,k):=18L_\Sigma(d,k)$. Then \Cref{thm:sum-general-width} applies and gives
\begin{equation*}
\log\size_d\ge\Omega(ku)=\Omega\left(\min\left\{k\log\frac nk,m\right\}\right).
\end{equation*}
\end{proof}

\subsection{\texorpdfstring{Two boundary cases for $\kSUM$}{Two boundary cases for k-SUM}}\label{app:sum-small-modulus}

The next two propositions show why no uniform lower bound can cover the silent regime: one boundary family is constant, while another contains PARITY under a restriction. \ignore{The latter restriction also supplies the fixed-support verification floor used in the discussion following \Cref{eq:F-sum-exponent-ratio} and in \Cref{sec:synthesis-dichotomies}.}

\begin{proposition}[Erd\H{o}s--Ginzburg--Ziv degeneracy]\label{prop:sum-egz-degeneracy}
If $k=2^m$ and $n\ge2k-1$, then $\kSUM_{n,m,k}$ is identically one.\footnote{This proposition and its proof were suggested by the AI language model GPT-5.5; the authors independently verified the argument before inclusion.}
\end{proposition}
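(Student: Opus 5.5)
The plan is to invoke the Erd\H{o}s--Ginzburg--Ziv theorem directly: among any $2k-1$ integers there are $k$ whose sum is divisible by $k$. Here $k=2^m$ and the modulus of $\kSUM_{n,m,k}$ is $2^m=k$. Fix any input $(z_1,\ldots,z_n)\in(\Z_{2^m})^n$ with $n\ge2k-1$, and lift each residue to an integer representative in $[0,2^m)$. Restrict attention to the first $2k-1$ indices, which is possible because $n\ge 2k-1$.

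By Erd\H{o}s--Ginzburg--Ziv, there is a set $S$ of exactly $k$ of these indices whose integer representatives sum to a multiple of $k$. Because $k=2^m$, this gives $\sum_{i\in S}z_i\equiv0\pmod{2^m}$. The set $S$ has size $k$ and consists of distinct indices. The indexed-list semantics of \Cref{def:target-problems} selects indices rather than values, so the witness remains admissible even if several of the chosen $z_i$ coincide. Hence every input is accepted, and the function is identically one.

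The only points to check are bookkeeping. The EGZ theorem applies to multisets of integers, so repeated values are allowed, which matches the convention that distinct indices may carry equal residues. The boundary case $m=0$ needs no separate treatment: if it is allowed at all, then $k=1$ and the claim is trivial. I expect no real obstacle. If a self-contained argument is preferred over citing EGZ, I would prove the $2$-power case by induction on $m$. From $2(2^{m}) - 1$ numbers, repeatedly extract disjoint pairs with even sum; this is possible among any $3$ numbers. Doing so yields $2^{m+1}-1$ pairs. Halve each pair sum, and apply the inductive hypothesis for modulus $2^{m}$ to those halves.
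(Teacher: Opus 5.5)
Your proposal is correct and follows the paper's proof: apply Erd\H{o}s--Ginzburg--Ziv in its sequence (multiset) form to any $2k-1$ of the residues, and use that $\kSUM$ selects indices rather than values. In your optional self-contained induction there is an index slip. To extract $2^{m+1}-1$ disjoint even-sum pairs and then apply the hypothesis at modulus $2^m$ to their halves, you must start from $2\cdot 2^{m+1}-1=2^{m+2}-1$ numbers (this is the step proving the modulus-$2^{m+1}$ case), not from $2(2^m)-1$ numbers.
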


\begin{proof}
Write $G=\Z_{2^m}$, a cyclic group of order $2^m=k$. Fix an arbitrary input $x=(x_1,\dots,x_n)\in G^{\,n}$ and any $S\subseteq[n]$ with $|S|=2k-1$, which exists since $n\ge 2k-1$. The Erd\H{o}s--Ginzburg--Ziv theorem~\cite{EGZ} asserts that every sequence of $2k-1$ elements of $G$, repetitions permitted, admits a subsequence of length exactly $k$ summing to $0$;\footnote{\cite{EGZ} phrases this for a set of $2k-1$ integers and a subset of $k$ of them. The sequence form is what the proof there establishes, and the two are equivalent: given $g_1,\dots,g_{2k-1}\in\Z_k$ with representatives $\tilde g_i\in[0,k)$, the integers $z_i:=\tilde g_i+ki$ are pairwise distinct, since $z_i\in[ki,ki+k)$ and these intervals are disjoint, and satisfy $z_i\equiv g_i \pmod k$.} applied to $(x_i)_{i\in S}$ it yields $I\subseteq S$ with $|I|=k$ and $\sum_{i\in I}x_i=0$ in $G$. Since $I$ is a set of $k$ distinct indices of $[n]$, it witnesses $\kSUM_{n,m,k}(x)=1$; as $x$ was arbitrary, the function is identically one.
\end{proof}

\begin{proposition}[The parity corner at every bit width]\label{prop:sum-parity-corner}
Under the convention that $\mathsf{PARITY}_k(x)=1$ on odd Hamming weight, for every $m\ge1$,
\begin{equation}\label{eq:D-sum-parity-corner}
\kSUM_{k,m,k}(2^{m-1}x_1,\ldots,2^{m-1}x_k)=1-\mathsf{PARITY}_k(x).
\end{equation}
Consequently, for every fixed depth $d\ge2$ and every $m\ge1$, once $k$ is sufficiently large as a function of $d$,
\begin{equation}\label{eq:D-sum-parity-lower}
\size_d(\kSUM_{k,m,k})\ge\exp\bigl(\Omega_d(k^{1/(d-1)})\bigr).
\end{equation}
\end{proposition}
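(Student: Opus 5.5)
The identity \eqref{eq:D-sum-parity-corner} comes first, and the size bound then follows by combining it with H\aa stad's theorem. At universe size $n=k$ the only available support is $S=[k]$, so $\kSUM_{k,m,k}(z)=1$ exactly when $\sum_{i=1}^k z_i\equiv0\pmod{2^m}$. Substituting $z_i=2^{m-1}x_i$ turns this sum into $2^{m-1}|x|$, where $|x|$ is the Hamming weight. Since $2^m\mid2^{m-1}|x|$ if and only if $|x|$ is even, the restricted function accepts precisely on even weight, which is $1-\mathsf{PARITY}_k(x)$. The case $m=1$ is covered as well, because the scaling factor is then $2^0=1$ and the modulus is $2$.

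Next, this substitution is a depth-zero projection in the sense of \Cref{def:projection}. In the standard binary encoding of $2^{m-1}x_i\in\Z_{2^m}$, the top bit of the $i$th residue equals $x_i$ and every other bit is the constant $0$. Every target input bit is therefore a constant or a source literal, so \Cref{def:projection} yields $\neg\mathsf{PARITY}_k\le_{\proj}\kSUM_{k,m,k}$. \Cref{lem:projection-monotonicity} then gives $\size_d(\kSUM_{k,m,k})\ge\size_d(\neg\mathsf{PARITY}_k)$.

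It remains to lower-bound the complement of parity. Negating the literals of a single input leaf of a circuit for $\mathsf{PARITY}_k$ complements its output, since $\mathsf{PARITY}_k(\overline{x}_1,x_2,\ldots)=1-\mathsf{PARITY}_k(x)$. This change adds no gates and no depth, so $\size_d(\neg\mathsf{PARITY}_k)=\size_d(\mathsf{PARITY}_k)$. The imported form of H\aa stad's theorem in \Cref{sec:parity-comparison} gives $\size_d(\mathsf{PARITY}_k)\ge\exp(\Omega_d(k^{1/(d-1)}))$ once $k>\ell_0^d$. This onset is exactly the meaning of ``$k$ sufficiently large as a function of $d$,'' and the gate-count convention there matches ours. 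Combining the two inequalities proves \eqref{eq:D-sum-parity-lower}.

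No step here is a real obstacle. The only points needing care are the arithmetic for the top bit, including the degenerate width $m=1$, and confirming that the bound is uniform in $m$. It is, because the projection uses only top bits and H\aa stad's onset depends only on $d$.
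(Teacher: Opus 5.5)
Your proof is correct and follows the paper's route. Both arguments use the identity at $n=k$ via the unique support $[k]$, the top-bit substitution as a depth-zero projection, \Cref{lem:projection-monotonicity}, and H\aa stad's bound.

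The only difference is the complement step. You use the parity-specific flip $x_1\mapsto\overline{x}_1$, which is itself a projection $\mathsf{PARITY}_k\le_{\proj}\neg\mathsf{PARITY}_k$. It must be applied to every leaf reading $x_1$ or $\overline{x}_1$, not to a single leaf. The paper instead proves the general identity $\size_d(f)=\size_d(\neg f)$ for every $f$ by De Morgan dualization: swap $\wedge$ with $\vee$ and complement all leaves. Either argument suffices here.
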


\begin{proof}
When $n=k$, the only size-$k$ support is all of $[k]$. On the displayed restriction, its sum is $2^{m-1}|x|$, which vanishes modulo $2^m$ exactly when $|x|$ is even; this proves \Cref{eq:D-sum-parity-corner}. For \Cref{eq:D-sum-parity-lower}, first observe that $\size_d(f)=\size_d(\neg f)$ exactly in our model because only non-input AND/OR gates are counted and complementing a leaf literal introduces no gate. Given a depth-$d$ circuit for $f$, swap every $\wedge$ with $\vee$, complement every input literal, and interchange constant leaves $0$ and $1$ if they occur. Induction on gate height shows that the transformed gate computes the complement of the original gate, so the transformed circuit has the same size and depth and computes $\neg f$; applying the same involution to a circuit for $\neg f$ gives the reverse inequality. The substitution in \Cref{eq:D-sum-parity-corner} realizes $1-\mathsf{PARITY}_k$ as a depth-zero restriction of the total indexed-list function $\kSUM_{k,m,k}$. Therefore \Cref{lem:projection-monotonicity}, the dualization equality, and the fixed-depth PARITY bound recorded in \Cref{sec:parity-comparison} imply \Cref{eq:D-sum-parity-lower} once $k$ exceeds the depth-dependent onset.
\end{proof}

The same fixed-support floor is shared by $\kXOR$. For every $m\ge1$, set rows $2,\ldots,m$ to zero and leave the first-row bits $x_1,\ldots,x_k$ free. Since $n=k$, the only candidate support is $[k]$, and hence
\begin{equation}\label{eq:D-xor-parity-corner}
\kXOR_{k,m,k}\bigl((x_1,0^{m-1}),\ldots,(x_k,0^{m-1})\bigr)=1-\mathsf{PARITY}_k(x).
\end{equation}
This is a genuine depth-zero restriction of the total indexed-list function in \Cref{def:target-problems}, not a promise-sensitive map. All $k$ first-row bits remain free, and the substitution uses only literals and constants. By \Cref{lem:projection-monotonicity}, the preceding dualization equality, and the fixed-depth PARITY bound recorded in \Cref{sec:parity-comparison}, for every fixed depth $d\ge2$ and every $m\ge1$, once $k$ is sufficiently large as a function of $d$,
\begin{equation}\label{eq:D-xor-parity-lower}
\size_d(\kXOR_{k,m,k})\ge\exp\bigl(\Omega_d(k^{1/(d-1)})\bigr).
\end{equation}
Consequently, no fixed depth yields a polynomial-size per-support verifier uniformly in $k$ for either list target. As \Cref{prop:xor-brute-force,prop:sum-brute-force} make explicit, the later comparison instead distinguishes the $m\,2^{k-1}$-clause CNF for a fixed $\kXOR$ support from the $2^{(k-1)m}$-term direct DNF for a fixed $\kSUM$ support, and then from the block-carry $\Sigma_3$ verifier.

The two propositions concern different boundary phenomena. At $k=2^m$ and $n\ge2k-1$, the function is trivial by additive combinatorics, whereas at $n=k$ every bit width contains a parity-hard two-point restriction. Neither contradicts \Cref{thm:sum-general-width}: in both cases the host-size parameter $w_\Sigma$ is below the theorem's onset threshold. The hypothesis $n\ge2k-1$ is optimal: at $n=2k-2$ the input consisting of $k-1$ zeros and $k-1$ ones has no zero-sum witness, since any $k$ of its coordinates contain $j$ ones with $1\le j\le k-1$, and $j\not\equiv0\pmod k$.

\section{Supplementary Results for \texorpdfstring{$\kXOR$}{k-XOR}}\label{app:xor-supplementary}

This appendix contains two black-box extensions of the principal odd-$k$ $\kXOR$ results. The first is a projection from an odd source parameter to an even target parameter, parameterized by a supplied admissible decomposition of the target parameter. The second is the amplified PARITY projection stated in \Cref{thm:parity-route}, together with a matching limitation for every projection-based PARITY reduction.

\subsection{The even-\texorpdfstring{$K$}{K} lifting projection}\label{app:even-xor-lift}

The threshold $k_0$ fixed in \Cref{lem:expander-pattern-family} already satisfies $k_0\ge3$. The lift uses a nonuniformly fixed tag matrix to force every target witness of weight at most $K$ to lie in the image of a repetition code. The $t$-fold repetition makes an odd-weight source witness of weight $r$ occupy $tr$ positions, while the $s$ auxiliary copies of the source vector's parity bit raise that weight to $tr+s=K$; the image of an even-weight source vector has weight divisible by $t$, so the condition $t\nmid s$ eliminates that branch, with the case $s=0$ handled by the opposite parities of $r$ and an even weight.\footnote{While we were attempting to derive a direct projection for even $k$, the AI language model GPT-5.5 suggested this odd-to-even lifting approach, namely the repetition code with auxiliary parity copies together with a separating tag matrix confining low-weight target witnesses to its image; the authors independently verified the construction before recording it here.}

The next lemma is the standard probabilistic parity-check-matrix argument; we include it to fix the nonuniform tag construction and its exact row count.

\begin{lemma}[Separating tag matrix]\label{lem:xor-separating-tags}
Let $n'\ge2$ and $K\ge1$, and let $U\subseteq\F_2^{n'}$ be a linear subspace. Put
\begin{equation}\label{eq:xor-tag-row-count}
q:=(K+1)\lceil\log n'\rceil+1.
\end{equation}
There is a matrix $\Pi\in\F_2^{q\times n'}$ such that $U\subseteq\ker\Pi$ and every $y\in\ker\Pi$ with $|y|\le K$ belongs to $U$.
\end{lemma}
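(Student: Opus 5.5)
The approach is the probabilistic parity-check argument, with $\Pi$ chosen at random subject to vanishing on $U$. Equivalently, pick a uniformly random linear map $\Pi:\F_2^{n'}\to\F_2^q$ that kills $U$: fix a complement $W$ with $\F_2^{n'}=U\oplus W$, let $\pi:\F_2^{n'}\to\F_2^{n'}/U$ be the quotient map, and set $\Pi:=M\circ\pi$ for a uniformly random linear map $M:\F_2^{n'}/U\to\F_2^q$. By construction $U\subseteq\ker\Pi$ for every choice of $M$.

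It then suffices to make every $y\notin U$ with $|y|\le K$ satisfy $\Pi y\ne0$. For a fixed such $y$, the image $\pi(y)$ is a nonzero vector of the quotient. Since $M$ is a uniformly random linear map, $M\pi(y)$ is uniform on $\F_2^q$; to see this, extend $\pi(y)$ to a basis and note that the images of the basis vectors are independent uniform vectors. Hence $\Pr[\Pi y=0]=2^{-q}$.

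Next I would count the bad candidates. The number of nonzero $y$ with $|y|\le K$ is at most $\sum_{j=1}^{K}\binom{n'}{j}\le\sum_{j=1}^K n'^j\le 2n'^K$ when $n'\ge2$. If $K\ge n'$, all vectors are covered, but the count is still at most $2^{n'}\le 2n'^{K}$. A union bound then gives failure probability at most $2n'^K2^{-q}$. With $q=(K+1)\lceil\log n'\rceil+1$, we have $2^{q}\ge 2n'^{K+1}\ge 4n'^{K}>2n'^K$, so the failure probability is below $1$ (in fact at most $1/n'\le1/2$). Therefore some fixed $M$, and hence some $\Pi\in\F_2^{q\times n'}$ written as a matrix in the standard basis, has the required property.

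The only step needing care is the uniformity of $M\pi(y)$, which requires $\pi(y)\ne0$. This is exactly the hypothesis $y\notin U$, and it is why the random map is defined on the quotient rather than by choosing independent uniform entries of $\Pi$ directly. Everything else is routine counting, so I expect no real obstacle. The construction is nonuniform but depends only on $(n',K,U)$, as the later lift requires.
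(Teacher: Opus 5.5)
Your proposal is correct and matches the paper's proof. The paper also composes the quotient map $\pi$ with $q$ independent uniform linear functionals on $\F_2^{n'}/U$, which is equivalent to your uniformly random $M$, and uses $\pi(y)\ne0$ for $y\notin U$ to get $\Pr[\Pi y=0]=2^{-q}$. It then bounds the number of vectors of weight at most $K$ by $2(n')^K$ and concludes that the expected number of bad vectors is at most $(n')^{-1}<1$.
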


\begin{proof}
Let $\pi:\F_2^{n'}\to\F_2^{n'}/U$ be the quotient map. Choose $q$ independent uniformly random linear functionals $\varphi_1,\ldots,\varphi_q$ on $\F_2^{n'}/U$, and let $\Pi$ be the matrix representing $(\varphi_1\circ\pi,\ldots,\varphi_q\circ\pi)$. Then $U\subseteq\ker\Pi$ deterministically. For each fixed $y\notin U$, the class $\pi(y)$ is nonzero, so every $\varphi_h$ annihilates it with probability $1/2$, independently, and therefore
\begin{equation}\label{eq:xor-tag-fixed-vector}
\Pr[\Pi y=0^q]=2^{-q}.
\end{equation}
The number of vectors of weight at most $K$ is at most
\begin{equation}\label{eq:xor-tag-vector-count}
\sum_{j=0}^{K}\binom{n'}j\le\sum_{j=0}^{K}(n')^j\le2(n')^K.
\end{equation}
By \Cref{eq:xor-tag-row-count}, the expected number of vectors $y\notin U$ with $|y|\le K$ and $\Pi y=0^q$ is at most
\begin{equation*}
2(n')^K2^{-q}\le(n')^{-1}<1.
\end{equation*}
Hence some fixed choice of the functionals leaves no such vector, and the corresponding matrix has the required property.
\end{proof}

\begin{lemma}[Odd-to-even lifting projection]\label{lem:xor-lift}
Let $r\ge1$ be odd, let $t\ge2$ and $s\ge0$ be integers, put $K:=tr+s$, and assume that $K$ is even and that either $s=0$ or $t\nmid s$. Then, for all $n,m\ge1$, there is a projection
\begin{equation}\label{eq:xor-lift-exact}
\kXOR_{n,m,r}\le_{\proj}\kXOR_{tn+s,m+q,K},
\qquad
q=(K+1)\lceil\log(tn+s)\rceil+1.
\end{equation}
\end{lemma}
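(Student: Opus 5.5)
We construct the lifted instance from the linear repetition code described just before \Cref{lem:separating-tags}. Given source vectors $a_1,\ldots,a_n\in\F_2^m$, the target list has $tn+s$ columns: for each $i\in[n]$ there are $t$ copies $(i,1),\ldots,(i,t)$, each carrying $a_i$ in the first $m$ rows, and there are $s$ auxiliary columns. Let $E:\F_2^n\to\F_2^{tn+s}$ be the linear map $E(x)=(x^{\otimes t},(\oplus_i x_i)^{s})$: it repeats each coordinate $t$ times and appends $s$ copies of the parity of $x$. Put $U:=\im E$. Apply \Cref{lem:xor-separating-tags} with $n'=tn+s$ and $U$, and append the resulting $q$ tag rows $\Pi$, with the tag column of each target index being the corresponding column of $\Pi$.

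An auxiliary column must contribute so that, on image vectors, the first-$m$-row sum is $\bigoplus_i x_ia_i$. The $t$ copies contribute $t\bigoplus_i x_ia_i$, which vanishes when $t$ is even. To avoid case analysis, we put the source data only on the first copy of each index: copy $(i,1)$ carries $a_i$, copies $(i,2),\ldots,(i,t)$ carry $0^m$, and the auxiliary columns carry $0^m$. Then for $y=E(x)$ the data sum equals $\bigoplus_i x_ia_i$. Every entry is either a source bit or a constant, namely a fixed tag bit or zero, so the map is a projection. The tag matrix is fixed nonuniformly for $(n,t,s,K)$, as in \Cref{lem:xor-separating-labels}.

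Completeness is immediate. If $T\in\binom{[n]}{r}$ has zero XOR, then $y:=E(\1_T)$ has weight $tr+s=K$, since $r$ is odd and the parity bit is therefore $1$. Moreover $\Pi y=0$ because $y\in U\subseteq\ker\Pi$, and the data sum is zero.

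Soundness carries the main argument. Let $y$ be the indicator of a target support of size $K$ with zero total sum. Its tag part gives $\Pi y=0$, and $|y|=K$, so \Cref{lem:xor-separating-tags} forces $y=E(x)$ for some $x$. Then $K=t|x|+s\cdot\mathrm{par}(x)$.

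If $x$ has odd weight, then $t|x|=tr$, so $|x|=r$ and $x$ indexes a zero-XOR $r$-support of the source.

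If $x$ has even weight, then $K=t|x|$. In the case $t\nmid s$, this gives $t\mid tr+s$ and hence $t\mid s$, a contradiction. In the case $s=0$, it gives $|x|=r$, which is odd, contradicting the even weight of $x$. The hypothesis that $K$ is even is not needed in this argument; it only fixes the regime of interest.

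Thus the target accepts exactly when the source accepts, establishing \Cref{eq:xor-lift-exact}. The step I expect to be most delicate is this case analysis on the parity of $x$, together with checking that the weight bound $|y|\le K$ matches the hypothesis of \Cref{lem:xor-separating-tags}; both reduce to the arithmetic above.
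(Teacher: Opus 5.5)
Your proposal is correct and matches the paper's proof essentially step for step. It uses the same repetition-plus-parity code $U=\im\rep$, places the payload $a_i$ only on the first copy of each index, takes the tag matrix from \Cref{lem:xor-separating-tags} with $n'=tn+s$, and makes the same odd/even case split on the weight of the preimage in the soundness direction (your side remark that the evenness of $K$ is never used is also accurate).
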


\begin{proof}
For $\xi\in\F_2^n$, write $p(\xi):=\bigoplus_{i=1}^n\xi_i$ and define the linear repetition map $\rep:\F_2^n\to\F_2^{tn+s}$ by
\begin{equation}\label{eq:xor-repetition-map}
\rep(\xi):=\bigl(\underbrace{\xi_1,\ldots,\xi_1}_{t},\ldots,\underbrace{\xi_n,\ldots,\xi_n}_{t},\underbrace{p(\xi),\ldots,p(\xi)}_{s}\bigr).
\end{equation}
Put $U:=\im\rep$. The map is injective because $\xi_i$ is recovered from the first copy of the $i$th block, and its Hamming weight satisfies
\begin{equation}\label{eq:xor-repetition-weight}
|\rep(\xi)|=t|\xi|+s\bigl(|\xi|\bmod2\bigr).
\end{equation}

Let $n'':=tn+s$, and let $\Pi\in\F_2^{q\times n''}$ be supplied by \Cref{lem:xor-separating-tags} for $U$ and $K$, with columns $\Pi_1,\ldots,\Pi_{n''}$. The target universe contains $t$ copies of every source index followed by $s$ auxiliary indices. For each target position $j$ corresponding to the first copy of source index $i$, the target column is the concatenation of the source column $a_i\in\F_2^m$ (its payload) with that position's tag $\Pi_j$; every remaining copy and every auxiliary position $j$ has payload $0^m$ and tag $\Pi_j$. Every target bit is therefore a constant or a source bit, so this map is a projection.

Suppose first that $\xi$ is the indicator of a source witness. Then $|\xi|=r$ and the selected source payloads sum to zero. Since $r$ is odd, \Cref{eq:xor-repetition-weight} gives $|\rep(\xi)|=tr+s=K$; moreover, $\rep(\xi)\in U\subseteq\ker\Pi$, and the selected target payloads are exactly the selected source payloads. Thus $\rep(\xi)$ is a target witness.

Conversely, let $y\in\F_2^{n''}$ indicate a target witness. Its tag sum is zero and $|y|=K$, so \Cref{lem:xor-separating-tags} gives $y\in U$. Write $y=\rep(\xi)$ for the unique $\xi\in\F_2^n$. If $|\xi|$ is odd, then \Cref{eq:xor-repetition-weight} and $|y|=K=tr+s$ imply $|\xi|=r$. If $|\xi|$ is even, then the same equations give $t|\xi|=tr+s$, so $t\mid s$. When $s>0$ this contradicts the hypothesis, and when $s=0$ it gives $|\xi|=r$, contradicting the opposite parities of $|\xi|$ and $r$. Hence $|\xi|=r$. Finally, the target payload sum is the XOR of the source columns indexed by $\operatorname{supp}(\xi)$, so its vanishing makes $\xi$ a source witness.
\end{proof}

\begin{remark}[Nonuniformity and arbitrary target universes]\label{rem:xor-lift-padding}
The matrix $\Pi$ is chosen by the probabilistic method and fixed once for each parameter tuple before the input is read, which is legitimate for the nonuniform circuit lower bounds considered here. To reach an arbitrary target universe size $n'\ge t+s$, let $n:=\lfloor(n'-s)/t\rfloor$ and $\rho:=n'-(tn+s)<t$. Apply \Cref{lem:xor-lift} to the first $tn+s$ columns and append $\rho$ columns that are zero on every existing row, giving each one a private guard row equal to $1$ on that column and $0$ elsewhere. No zero-sum support uses any appended column, and the additional row count is less than $t$. Appending all-zero rows then reaches every larger target row count without changing the function.
\end{remark}


\subsection{Consequences of the lift}\label{app:even-xor-consequences}

The lift is independent of the source of the odd-$k$ lower bound. We first recover the fixed-$k$ statement quoted in \Cref{sec:fixed-k}\ifhidedepththree{} and then apply the same projection to\else, then apply it to the proved top-disjunction depth-three theorem and to\fi{} the conditional all-depth theorem.

Call a tuple $(t,r,s)$ admissible for an even integer $K$ when $t\ge2$, $r\ge1$ is odd, $0\le s<2t$, $K=tr+s$, and either $s=0$ or $t\nmid s$. By \Cref{lem:xor-lift}, every admissible tuple gives an odd-to-even projection. We use the same elementary estimates in the consequences below. Let $(t,r,s)$ be admissible for $K$, assume $r\ge3$, and for a target universe size $n'$ put $n:=\lfloor(n'-s)/t\rfloor$. Since $tn+s\le n'<t(n+1)+s<t(n+3)$ and $tr\le K<t(r+2)$, whenever $n\ge3$ and $n'/K\ge4$ one has
\begin{equation}\label{eq:xor-lift-common-base}
\frac nr\ge\frac{n'}{2K}\ge\left(\frac{n'}K\right)^{1/2}.
\end{equation}
Moreover, the tag matrix and the inert-column padding add $O(K\log n')$ rows in total. More explicitly, let $q$ be the tag-row count in \Cref{eq:xor-lift-exact} and let $\rho<t$ be the number of inert columns in \Cref{rem:xor-lift-padding}. There is a universal constant $\Cxorlift>0$ such that, given a target budget $m'\ge\Cxorlift K\log n'$, the quantity $\widetilde m:=m'-q-\rho$ exceeds each source row budget used below. Applying the source theorem at row count $\widetilde m$, then adding the $q$ tag rows and the $\rho$ private guard rows, fits within $m'$, and all-zero padding reaches the budget exactly.


\begin{corollary}[Fixed even $K$ from an admissible odd parameter]\label{cor:fixed-even-xor}
Let $K$ be even, and let $(t,r,s)$ be admissible for $K$ with $r\ge k_0$. For every fixed depth $d$ there is a threshold $n'_1(d,K,t,r,s)$ such that, for every $n'\ge n'_1(d,K,t,r,s)$ and every $m'\ge\Cxorlift K\log n'$,
\begin{equation}\label{eq:fixed-even-xor}
\size_d(\kXOR_{n',m',K})
\ge
(n'/K)^{\beta(r-1)/2},
\end{equation}
where $\beta$ is the universal target-rate constant of \Cref{thm:B}.
\end{corollary}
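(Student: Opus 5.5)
The argument composes three ingredients stated earlier: the odd-$r$ lower bound of \Cref{thm:B}, the lifting projection of \Cref{lem:xor-lift} with the padding of \Cref{rem:xor-lift-padding}, and the arithmetic in \Cref{eq:xor-lift-common-base}. Fix $d$, the even $K$, and the admissible tuple $(t,r,s)$ with odd $r\ge k_0$. For a target universe size $n'$, set $n:=\lfloor(n'-s)/t\rfloor$ and $\rho:=n'-(tn+s)<t$. Let $q$ be the tag-row count of \Cref{eq:xor-lift-exact} at universe $tn+s\le n'$, so $q\le(K+1)\lceil\log n'\rceil+1$. Given $m'\ge\Cxorlift K\log n'$, define the source row count $\widetilde m:=m'-q-\rho$.

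The first step is to check that $\widetilde m\ge\Cxor r\log(\mathrm e n/r)$, which is the budget \Cref{eq:B-xor} requires for odd $r$. Since $r\le K/t\le K/2$ and $n\le n'$, we have $\Cxor r\log(\mathrm e n/r)=O(K\log n')$ once $n\ge r$. Also $q+\rho=O(K\log n')$ because $\rho<t\le K$. Choosing the universal constant $\Cxorlift$ larger than the sum of these constants makes the inequality hold. Then \Cref{thm:B} applied with parameter $r$ gives
\[
\size_d(\kXOR_{n,\widetilde m,r})\ge(n/r)^{\beta(r-1)}
\]
for all $n\ge n_1(d,r)$.

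The second step composes projections. \Cref{lem:xor-lift} gives $\kXOR_{n,\widetilde m,r}\le_{\proj}\kXOR_{tn+s,\widetilde m+q,K}$. \Cref{rem:xor-lift-padding} then appends $\rho$ inert columns with private guard rows and pads with all-zero rows, yielding a projection into $\kXOR_{n',m',K}$. By \Cref{lem:projection-monotonicity},
\[
\size_d(\kXOR_{n',m',K})\ge(n/r)^{\beta(r-1)}.
\]
The last step applies \Cref{eq:xor-lift-common-base}, valid when $n\ge3$ and $n'/K\ge4$ (and $r\ge k_0\ge3$). It gives $n/r\ge(n'/K)^{1/2}$, so the bound becomes $(n'/K)^{\beta(r-1)/2}$.

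The threshold $n'_1(d,K,t,r,s)$ is chosen large enough that $n=\lfloor(n'-s)/t\rfloor\ge\max\{n_1(d,r),3,r\}$ and $n'\ge4K$; this is possible because $t,s,K,r$ are fixed. The main point to verify is the row accounting in the first step: the subtracted overhead $q+\rho$ must leave at least the source budget uniformly, with $\Cxorlift$ independent of $(d,K,t,r,s)$. This holds because every term is bounded by an absolute constant times $K\log n'$, using $r\le K$, $t\le K$, and $\log(\mathrm e n/r)\le\log(\mathrm e n')$.
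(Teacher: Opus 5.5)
Your proposal is correct and follows the paper's proof essentially step for step. You compose the odd-$r$ bound of Theorem~B at source row count $\widetilde m=m'-q-\rho$ with the lift of Lemma~\ref{lem:xor-lift} and the padding of Remark~\ref{rem:xor-lift-padding}, apply projection monotonicity, and convert via \eqref{eq:xor-lift-common-base}. Your explicit row accounting, using $r\le K/2$ and $q+\rho=O(K\log n')$ to get a universal $\Cxorlift$, just spells out what the paper compresses into ``the common row estimate above.''
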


\begin{proof}
Put $n:=\lfloor(n'-s)/t\rfloor$ and use the arbitrary-universe version of the lift from \Cref{rem:xor-lift-padding}. For $n'$ above a threshold depending on $(d,K,t,r,s)$, the source universe satisfies the onset condition in \Cref{thm:B}, the source theorem applies at row count $\widetilde m$, and the common row estimate above fits the lifted instance within every $m'\ge\Cxorlift K\log n'$. Projection monotonicity therefore gives
\begin{equation}\label{eq:fixed-even-source-bound}
\size_d(\kXOR_{n',m',K})
\ge
\size_d(\kXOR_{n,\widetilde m,r})
\ge
(n/r)^{\beta(r-1)}.
\end{equation}
After enlarging the threshold so that \Cref{eq:xor-lift-common-base} applies, substituting that estimate into \Cref{eq:fixed-even-source-bound} proves \Cref{eq:fixed-even-xor}.
\end{proof}

\ifhidedepththree\else
\begin{corollary}[Depth three for even $K$ from an admissible odd parameter]\label{cor:depth-three-even-xor}
There are universal constants $a_E>0$, $k_E^{\oplus}\in\N$, and $C_E^{\oplus}>1$ such that the following holds. Let $K$ be even, and let $(t,r,s)$ be admissible for $K$ with $r\ge k_E^{\oplus}$. If $n'\ge K^{C_E^{\oplus}}$, then, for every $m'\ge\Cxorlift K\log n'$,
\begin{equation}\label{eq:depth-three-even-xor}
\size_{\Sigma_3}(\kXOR_{n',m',K})
\ge
(n'/K)^{a_Er}.
\end{equation}
\end{corollary}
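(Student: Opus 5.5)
We argue as in \Cref{cor:fixed-even-xor}, with the fixed-$k$ source theorem replaced by \Cref{thm:E}. Fix once and for all a depth-three rate, say $\gamma:=1/16$, and a target rate $c:=1/64<\gamma/2$. \Cref{thm:E} then supplies $k_1(\gamma,c)$ and $C'(\gamma,c)$. Set $k_E^{\oplus}:=\max\{k_1(\gamma,c),3\}$. For an admissible $(t,r,s)$ with $r\ge k_E^{\oplus}$, put $n:=\lfloor(n'-s)/t\rfloor$ and apply the arbitrary-universe version of \Cref{lem:xor-lift} from \Cref{rem:xor-lift-padding}. This gives a projection
\[
\kXOR_{n,\widetilde m,r}\le_{\proj}\kXOR_{n',m',K},
\qquad
\widetilde m=m'-q-\rho .
\]
It is a depth-zero projection between $\kXOR$ instances, so by \Cref{lem:projection-monotonicity} it preserves the $\Sigma_3$ orientation. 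Hence $\size_{\Sigma_3}(\kXOR_{n',m',K})\ge\size_{\Sigma_3}(\kXOR_{n,\widetilde m,r})$.

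The next step is to verify the hypotheses of \Cref{thm:E} at the odd parameter $r$. We need $n\ge r^{C'(\gamma,c)}$. Since $t\le K/r$ and $s<2t$, we have $n\ge n'/t-3\ge rn'/K-3$. So if $n'\ge K^{C_E^{\oplus}}$ with $C_E^{\oplus}:=C'(\gamma,c)+2$, then $n\ge n'/K\ge K^{C'+1}\ge r^{C'}$, once $K$ is at least an absolute constant (guaranteed by $r\ge k_E^{\oplus}$). For the row count, the common row estimate preceding \Cref{cor:fixed-even-xor} guarantees $\widetilde m\ge\Cxor r\log(\mathrm e n/r)$, provided $m'\ge\Cxorlift K\log n'$ and $\Cxorlift$ is enlarged by a universal factor. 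This uses $r\le K$ and $n\le n'$. \Cref{thm:E} therefore yields
\[
\size_{\Sigma_3}\ge(n/r)^{cr}.
\]

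It remains to convert the base. The bound \Cref{eq:xor-lift-common-base} needs $n\ge3$ and $n'/K\ge4$; both hold because $n'\ge K^{C_E^{\oplus}}$ with $K\ge k_E^{\oplus}$ large. It gives $n/r\ge(n'/K)^{1/2}$, so the lower bound becomes $(n'/K)^{cr/2}$. Setting $a_E:=c/2$ proves \Cref{eq:depth-three-even-xor}.

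The only delicate point is the polynomial-threshold bookkeeping: the source requirement $n\ge r^{C'}$ must follow from a single universal exponent $C_E^{\oplus}$ applied to $n'$ in terms of $K$. This works because $n\ge rn'/K-3$ and $r\le K$, so no dependence on $t$ or $s$ survives. The row budget causes no difficulty, since \Cref{thm:E} holds for every row count above its principal budget.
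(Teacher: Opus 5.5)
Your proposal is correct and follows the paper's proof essentially verbatim. It uses the same choices $\gamma=1/16$, $c=1/64$, $a_E=c/2$, applies the $\kXOR$ line of \Cref{thm:E} at the odd parameter $r$ composed with the arbitrary-universe lift of \Cref{rem:xor-lift-padding}, relies on the same common row estimate for $\widetilde m$, and converts the base via \Cref{eq:xor-lift-common-base}. Your explicit threshold $C_E^{\oplus}=C'(\gamma,c)+2$, derived from $n>rn'/K-3$ and $r\le K$, just makes concrete the paper's instruction to choose $C_E^{\oplus}$ sufficiently large.
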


\begin{proof}
Fix $\gamma_E:=1/16$ and $c_E:=1/64$. Take $k_E^{\oplus}\ge\max\{k_1(\gamma_E,c_E),k_0\}$, and choose $C_E^{\oplus}$ sufficiently large that $n'\ge K^{C_E^{\oplus}}$ makes $n:=\lfloor(n'-s)/t\rfloor$ satisfy both $n\ge r^{C'(\gamma_E,c_E)}$ and \Cref{eq:xor-lift-common-base}. Apply the $\kXOR$ line of \Cref{thm:E} at parameter $r$ and row count $\widetilde m$, then compose with the lifting projection; the common row estimate above fits within $m'\ge\Cxorlift K\log n'$, and orientation preservation gives $\size_{\Sigma_3}(\kXOR_{n',m',K})\ge(n/r)^{c_Er}\ge(n'/K)^{c_Er/2}$. Taking $a_E:=c_E/2$ proves the claim.
\end{proof}
\fi

\begin{corollary}[Conditional all-depth bound for even $K$ from an admissible odd parameter]\label{cor:conditional-even-xor}
Assume \Cref{conj:dbd}, and fix a depth $d$. There are $k_{\conditionalresultletter}^{\oplus}(d)\in\N$ and $C_{\conditionalresultletter}^{\oplus}(d)>1$ such that the following holds. Let $K$ be even, and let $(t,r,s)$ be admissible for $K$ with $r\ge k_{\conditionalresultletter}^{\oplus}(d)$. If $n'\ge K^{C_{\conditionalresultletter}^{\oplus}(d)}$ and $m'\ge\Cxorlift K\log n'$, then
\begin{equation}\label{eq:conditional-even-xor}
\size_d(\kXOR_{n',m',K})
\ge
\exp\!\left(\Omega_d\!\left(r\log n'\right)\right).
\end{equation}
\end{corollary}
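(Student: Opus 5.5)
The plan follows the template of \Cref{cor:fixed-even-xor,cor:depth-three-even-xor}, except that the source bound comes from the conditional \Cref{thm:F} rather than \Cref{thm:B} or \Cref{thm:E}. Fix $d$, and let $k_{\conditionalresultletter}(d)$ and $C_{\conditionalresultletter}(d)$ be the constants of \Cref{thm:F}. Set
\begin{equation*}
k_{\conditionalresultletter}^{\oplus}(d):=\max\{k_{\conditionalresultletter}(d),k_0,3\}.
\end{equation*}
Given an even $K$, an admissible $(t,r,s)$ with $r\ge k_{\conditionalresultletter}^{\oplus}(d)$, and a target universe $n'$, put $n:=\lfloor(n'-s)/t\rfloor$. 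Then apply the arbitrary-universe version of \Cref{lem:xor-lift} from \Cref{rem:xor-lift-padding}, which is a depth-zero projection from $\kXOR_{n,\widetilde m,r}$ to $\kXOR_{n',m',K}$, where $\widetilde m:=m'-q-\rho$. By \Cref{lem:projection-monotonicity}, any lower bound on $\size_d(\kXOR_{n,\widetilde m,r})$ transfers to $\size_d(\kXOR_{n',m',K})$.

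Next I check the hypotheses of \Cref{thm:F} at parameter $r$ (odd, $r\ge k_{\conditionalresultletter}(d)$). It needs $n\ge r^{C_{\conditionalresultletter}(d)}$. Since $r\le K/t\le K/2$ and $s<2t$, we have $n\ge n'/t-3\ge n'/K-3$. So if $n'\ge K^{C_{\conditionalresultletter}^{\oplus}(d)}$ with $C_{\conditionalresultletter}^{\oplus}(d):=C_{\conditionalresultletter}(d)+2$, then for $K$ large (guaranteed by enlarging $k_{\conditionalresultletter}^{\oplus}(d)$, since $K>r$) we get
\begin{equation*}
n\ge K^{C_{\conditionalresultletter}(d)+1}/2\ge r^{C_{\conditionalresultletter}(d)}.
\end{equation*}
This also gives $n\ge3$ and $n'/K\ge4$, so \Cref{eq:xor-lift-common-base} applies.

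For the row budget, the source requirement is $\Cxor r\log(\mathrm e n/r)\le\Cxor K\log n'$. The tag rows satisfy $q=O(K\log n')$ and the inert rows satisfy $\rho<t\le K$. Choosing $\Cxorlift$ as in the common row estimate therefore makes $\widetilde m$ admissible, and all-zero padding reaches $m'$ exactly.

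\Cref{thm:F} then gives $\size_d(\kXOR_{n,\widetilde m,r})\ge\exp(\Omega_d(r\log n))$. It remains to replace $\log n$ by $\log n'$. From $n\ge n'/K-3$ and $n'\ge K^{C}$ with $C\ge3$, we get $n\ge n'^{1-1/C}/2$, so $\log n\ge\frac12\log n'$ for large parameters. Alternatively, \Cref{eq:xor-lift-common-base} gives $n\ge r(n'/K)^{1/2}$, and $\log(n'/K)\ge(1-1/C)\log n'$. Either route yields $\exp(\Omega_d(r\log n'))$.

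I expect the only real obstacle to be bookkeeping: making sure the thresholds depend only on $d$ and not on $(t,s)$. Every estimate above uses $t\le K$, $s<2t$, and $r\le K$, all of which are bounded uniformly in terms of $K$, and the polynomial condition $n'\ge K^{C}$ absorbs the resulting constant losses. So I expect no dependence on $(t,s)$ to survive.
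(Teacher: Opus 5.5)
Your proposal is correct and follows essentially the same route as the paper. You compose the arbitrary-universe lift of \Cref{lem:xor-lift} and \Cref{rem:xor-lift-padding} with \Cref{thm:F} at the odd parameter $r$, verify $n\ge r^{C_{\conditionalresultletter}(d)}$ through $n>n'/K-3$, fit $\widetilde m$ via the common row estimate, and then convert $\log n$ into $\log n'$; the only differences are constant choices, since the paper takes $C_{\conditionalresultletter}^{\oplus}(d)\ge\max\{C_{\conditionalresultletter}(d)+3,4\}$ and uses $n'/K-3\ge\sqrt{n'}$ directly.
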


\begin{proof}
Let $k_{\conditionalresultletter}^{\oplus}(d)$ be at least $k_{\conditionalresultletter}(d)$ from \Cref{thm:F} and at least $k_0$. Choose $C_{\conditionalresultletter}^{\oplus}(d)\ge\max\{C_{\conditionalresultletter}(d)+3,4\}$, and put $n:=\lfloor(n'-s)/t\rfloor$. Since $r\le K$ and $t\le K$, the relation $n'\ge K^{C_{\conditionalresultletter}^{\oplus}(d)}$ makes $n\ge r^{C_{\conditionalresultletter}(d)}$ after increasing $k_{\conditionalresultletter}^{\oplus}(d)$ if necessary, so \Cref{thm:F} applies to the odd-$r$ source instance at row count $\widetilde m$. The common row estimate above fits the lift within $\Cxorlift K\log n'$. Thus
\begin{equation}\label{eq:conditional-even-source-bound}
\size_d(\kXOR_{n',m',K})
\ge
\exp\bigl(\Omega_d(r\log n)\bigr).
\end{equation}
Moreover, $s<2t$, $t\le K$, and $n'\ge K^4$ imply
\begin{equation*}
n=\left\lfloor\frac{n'-s}{t}\right\rfloor>\frac{n'}K-3\ge\sqrt{n'},
\end{equation*}
for every $K\ge2$, and hence $\log n\ge\frac12\log n'$. Substituting this estimate into \Cref{eq:conditional-even-source-bound} proves \Cref{eq:conditional-even-xor}.
\end{proof}

\subsection{The row-rich PARITY route}\label{app:parity-route}

We now prove the two statements from \Cref{sec:parity-comparison}. The construction embeds a parity instance of length $\Theta(k^2\log(n/k))$ while using only $O(n)$ target rows.

\begin{proof}[Proof of \Cref{thm:parity-route}]
Let $b,B,\ell$ be as in \Cref{eq:parity-route-parameters}, and put $q:=2^b$. The input variables are
\begin{equation}\label{eq:parity-input-variables}
x=(x_{r,t})_{r\in[B],\,t\in[k-1]}\in\bits^{\ell}.
\end{equation}
The target universe contains one anchor column $a_\star$ and, for every $t\in[k-1]$, a block of $q$ columns $(t,u)$ indexed by $u\in\bits^b$. Since $q\le n/(2k)$,
\begin{equation}\label{eq:parity-genuine-columns}
n_0:=1+(k-1)q< n,
\end{equation}
and the remaining $n-n_0$ positions are dummies, each equipped with one private guard row. All dummy entries on the selector and data rows introduced below are set to $0$. Thus no zero-sum support contains a dummy.

For each $t\in[k-1]$, add a selector row that is $1$ on the anchor and on every column of block $t$, and $0$ elsewhere. Let $\zeta$ indicate whether a support contains the anchor, and let $c_t$ be the number of selected columns from block $t$. Vanishing of the selector rows gives $\zeta+c_t\equiv0\pmod2$ for every $t$. If $\zeta=0$, every $c_t$ is even and the support size $\sum_t c_t=k$ is even, contradicting odd $k$. Hence $\zeta=1$, every $c_t$ is odd, and
\begin{equation*}
1+\sum_{t=1}^{k-1}c_t=k.
\end{equation*}
Because there are $k-1$ positive odd integers in the sum, each $c_t=1$. Every zero-sum support therefore has the form
\begin{equation}\label{eq:parity-support-shape}
\{a_\star\}\cup\{(t,u_t):t\in[k-1]\}
\end{equation}
for a unique tuple $u=(u_1,\ldots,u_{k-1})\in(\bits^b)^{k-1}$.

Define $y:(\bits^b)^{k-1}\to\bits^B$ by
\begin{align}
y(u)_{(t-1)b+j}&:=u_{t,j} \qquad\text{for }t\in[k-1],\ j\in[b],\label{eq:parity-code-first}\\
y(u)_B&:=1\xor\bigoplus_{t\in[k-1]}\bigoplus_{j\in[b]}u_{t,j}.\label{eq:parity-code-last}
\end{align}
The image of $y$ is exactly the set of odd-parity strings in $\bits^B$. We add $B$ data rows. For a row $r=(t_r-1)b+j_r\le(k-1)b$, put $0$ on the anchor, and put the literal
\begin{equation}\label{eq:parity-data-ordinary}
x_{r,t}\xor\1[t=t_r]u_{j_r}
\end{equation}
on the column $(t,u)$. On the last row $B$, put $1$ on the anchor and put
\begin{equation}\label{eq:parity-data-last}
x_{B,t}\xor\bigoplus_{j\in[b]}u_j
\end{equation}
on the column $(t,u)$. Since $u$ is part of the column index, every expression in \Cref{eq:parity-data-ordinary,eq:parity-data-last} is either an input variable or its negation, and the construction is a projection.

For the support in \Cref{eq:parity-support-shape}, data row $r$ has parity zero exactly when
\begin{equation}\label{eq:parity-row-condition}
\bigoplus_{t\in[k-1]}x_{r,t}=y(u)_r.
\end{equation}
Such a tuple $u$ exists if and only if the vector $(\bigoplus_t x_{r,t})_{r\in[B]}$ has odd parity, which is equivalent to
\begin{equation}\label{eq:parity-total-condition}
\bigoplus_{r\in[B]}\bigoplus_{t\in[k-1]}x_{r,t}=1.
\end{equation}
Thus the target accepts exactly the odd-parity inputs, proving \Cref{eq:parity-route-projection}.

The number of rows before zero padding is
\begin{equation}\label{eq:parity-row-count}
m_0=(k-1)+B+(n-n_0).
\end{equation}
For $n\ge4k$, one has $k-1\le n/4$ and $B\le k\log(n/k)+1\le n/2+1$, where the last inequality uses $\log x\le x/2$ for $x\ge4$. Hence $m_0\le2n$. Also, $k-1=\Theta(k)$ and $b=\Theta(\log(n/k))$, so
\begin{equation}\label{eq:parity-dimension-asymptotic}
\ell=(k-1)B=\Theta(k^2\log(n/k)).
\end{equation}
Appending zero rows extends the projection to every $m\ge m_0$. Every variable in \Cref{eq:parity-input-variables} remains free, every target bit is a constant or a possibly negated source literal, and $\kXOR$ is a total function, so \Cref{eq:parity-route-projection} is an ordinary projection with no promise-side obligation. For fixed $d$, once $\ell$ exceeds the depth-dependent onset, the PARITY bound recorded in \Cref{sec:parity-comparison} gives size $\exp(\Omega_d(\ell^{1/(d-1)}))$, and \Cref{eq:parity-dimension-asymptotic} yields \Cref{eq:parity-route-bound}.
\end{proof}

\begin{proof}[Proof of \Cref{prop:parity-projection-saturation}]
Write $R(x)=(A_1(x),\ldots,A_n(x))$. For each support $S\in\binom{[n]}k$, define
\begin{equation}\label{eq:parity-affine-piece}
V_S:=\left\{x\in\bits^{\ell}:\bigoplus_{j\in S}A_j(x)=0^m\right\}.
\end{equation}
Every output entry of a projection is a constant or one possibly negated input bit, hence an affine function over $\F_2$ involving at most one variable. Each row equation in \Cref{eq:parity-affine-piece} is therefore affine and involves at most $k$ input variables. Thus $V_S$ is either empty or an affine subspace cut out by $k$-sparse equations.

Suppose $V_S$ is nonempty. Choose a maximal linearly independent subset of the coefficient rows defining it, and write the resulting equivalent system as $M_S x = \eta_S$, where $M_S$ has rank $r_S$ and every row has Hamming weight at most $k$. For every $x\in V_S$, the support $S$ witnesses acceptance of $\kXOR(R(x))$, so \Cref{eq:parity-saturation-hypothesis} implies that the all-ones linear functional has value $1$ throughout $V_S$. Writing $V_S=x_0+\ker M_S$, this means that the all-ones vector belongs to $(\ker M_S)^\perp$, which is the row space of $M_S$. It is therefore the XOR of at most $r_S$ rows of $M_S$. The union of their supports has size at most $r_Sk$, and consequently
\begin{equation}\label{eq:parity-piece-rank}
\ell\le r_Sk.
\end{equation}
It follows that
\begin{equation}\label{eq:parity-piece-size}
|V_S|=2^{\ell-r_S}\le2^{\ell-\ell/k}.
\end{equation}

Every odd-parity input is accepted and hence lies in $V_S$ for at least one support $S$. The $2^{\ell-1}$ odd inputs are therefore covered by at most $\binom nk$ affine spaces satisfying \Cref{eq:parity-piece-size}. Hence
\begin{equation}\label{eq:parity-covering}
2^{\ell-1}\le\binom nk2^{\ell-\ell/k}.
\end{equation}
Cancelling the common factor, taking logarithms, and using $\binom nk\le(\mathrm e n/k)^k$ gives
\begin{equation*}
\frac{\ell}{k}-1\le\log\binom nk\le k\log(\mathrm e n/k),
\end{equation*}
which rearranges to \Cref{eq:parity-saturation-bound}.
\end{proof}

\section{Elementary Upper Bounds and Quantitative Tightness}\label{app:upper-bounds}

This appendix supplies the elementary circuits used to calibrate the lower bounds. Circuit size counts non-input gates, as fixed in \Cref{sec:structured-source}; fan-in therefore affects the depth and the formula semantics but not the gate count directly.

\subsection{Brute-force circuits}\label{app:brute-force-circuits}

\begin{proposition}[Brute-force circuit for $\kOV$]\label{prop:ov-brute-force}
For all $n,D,k$,
\begin{equation}\label{eq:F-ov-upper}
\size_{\Sigma_3}(\kOV_{n,D,k})
\le1+n^k(D+1).
\end{equation}
In particular, if $D=O(k\log n)$ and $k$ is fixed, then
\begin{equation}\label{eq:F-ov-upper-scale}
\size_{\Sigma_3}(\kOV_{n,D,k})\le n^{k+o(1)}.
\end{equation}
The fixed-$k$ hypothesis is unnecessary for the sharper conclusion whenever $D=n^{o(1)}$: if $\log D=o(\log n)$, then
\begin{equation*}
\size_{\Sigma_3}(\kOV_{n,D,k})\le n^{k+o(1)}.
\end{equation*}
More generally, if $k\to\infty$ and $\log D=o(k\log n)$, then the same circuit has size $n^{k+o(k)}$.
\end{proposition}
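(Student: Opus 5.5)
The plan is to write down the obvious guess-and-check circuit and count its gates.

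\textbf{The circuit.} The top OR ranges over all $n^k$ index tuples $(j_1,\ldots,j_k)\in[n]^k$. Under each tuple sits one middle AND gate with fan-in $D$. The $t$-th child of that AND is a bottom OR gate over the negated literals $\overline{u_{1,j_1}(t)},\ldots,\overline{u_{k,j_k}(t)}$. This bottom OR is true exactly when $\prod_i u_{i,j_i}(t)=0$.

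\textbf{Correctness and size.} The circuit is an $\mathsf{OR}\circ\mathsf{AND}\circ\mathsf{OR}$ formula computing the existential definition in \Cref{def:target-problems} verbatim. It is therefore correct on every input, and in particular on every promised input. It is a $\Sigma_3$ circuit. The gates are counted as follows:
\begin{itemize}
\item one top gate;
\item $n^k$ middle gates;
\item $n^kD$ bottom gates.
\end{itemize}
The total is $1+n^k(D+1)$, which is \Cref{eq:F-ov-upper}. Degenerate fan-in-one gates can be kept as gates, or removed, which only lowers the count; no care is needed.

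\textbf{The asymptotic forms.} These come from taking logarithms of $1+n^k(D+1)\le 2n^k(D+1)$:
\[
\log\bigl(2n^k(D+1)\bigr)=k\log n+\log(D+1)+1.
\]
\begin{itemize}
\item If $\log D=o(\log n)$, the two extra terms are $o(\log n)$, so the bound is $n^{k+o(1)}$ for any $k$.
\item If $k$ is fixed and $D=O(k\log n)$, then $\log D=O(\log\log n)=o(\log n)$, which reduces to the previous case.
\item If $k\to\infty$ and $\log D=o(k\log n)$, the same expansion gives $n^{k+o(k)}$.
\end{itemize}

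There is no real obstacle. The only point to watch is the non-input-gate convention: negations sit at the leaves and cost nothing, so the count above is exact.
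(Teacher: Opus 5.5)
Your proposal is correct and follows the paper's proof essentially verbatim. The paper uses the same $\mathsf{OR}\circ\mathsf{AND}\circ\mathsf{OR}$ formula: one top gate, $n^k$ middle AND gates, and $n^kD$ bottom clauses of negated literals. It then derives the asymptotic forms from the same estimate $\log\size_{\Sigma_3}(\kOV_{n,D,k})\le k\log n+O(\log(D+1))$.
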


\begin{proof}
For a tuple $(i_1,\ldots,i_k)\in[n]^k$, orthogonality is the conjunction, over all coordinates $t\in[D]$, of the clause asserting that not all selected vectors have a $1$ in coordinate $t$. Thus
\begin{equation}\label{eq:F-ov-formula}
\kOV_{n,D,k}
=
\bigvee_{(i_1,\ldots,i_k)\in[n]^k}
\ \bigwedge_{t\in[D]}
\ \bigvee_{\ell\in[k]}\neg u_{\ell,i_\ell}(t).
\end{equation}
The formula has one top OR gate, one middle AND gate for each of the $n^k$ tuples, and one bottom OR gate for each pair of a tuple and a coordinate $t\in[D]$, proving \Cref{eq:F-ov-upper}. If $D=O(k\log n)$ and $k$ is fixed, then $1+n^k(D+1)=n^{k+o(1)}$. More generally,
\begin{equation*}
\log\size_{\Sigma_3}(\kOV_{n,D,k})\le k\log n+O(\log(D+1)),
\end{equation*}
so $\log D=o(\log n)$ gives the $n^{k+o(1)}$ estimate, while $\log D=o(k\log n)$ gives $n^{k+o(k)}$ when $k\to\infty$.
\end{proof}

\begin{proposition}[Brute-force circuit for $\kXOR$]\label{prop:xor-brute-force}
For all $n,m,k$,
\begin{equation}\label{eq:F-xor-upper}
\size_{\Sigma_3}(\kXOR_{n,m,k})
\le1+\binom nk\bigl(1+m2^{k-1}\bigr)
=O\left(\binom nk\,m2^k\right).
\end{equation}
If $k$ is fixed, $n/k\to\infty$, and $\log m=o(\log(n/k))$, then
\begin{equation}\label{eq:F-xor-upper-scale}
\size_{\Sigma_3}(\kXOR_{n,m,k})\le(n/k)^{k+o(1)}.
\end{equation}
More generally, if $k\to\infty$, $\log(n/k)\to\infty$, and $\log m=o(k\log(n/k))$, then
\begin{equation*}
\size_{\Sigma_3}(\kXOR_{n,m,k})\le(n/k)^{k+o(k)}.
\end{equation*}
\end{proposition}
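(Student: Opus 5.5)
The plan mirrors the proof of \Cref{prop:ov-brute-force}: write $\kXOR_{n,m,k}$ as a disjunction over supports $S\in\binom nk$ of the fixed-support predicate $\bigwedge_{h\in[m]}[\bigoplus_{i\in S}a_i(h)=0]$. For a single row $h$ and fixed $S$, the condition that $k$ bits have even XOR is a function of $k$ variables with CNF of exactly $2^{k-1}$ clauses, one for each odd-weight assignment, each clause excluding that assignment. Thus the middle AND for $S$ has $m2^{k-1}$ bottom OR gates.

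Counting non-input gates gives one top OR, $\binom nk$ middle ANDs, and $\binom nk m2^{k-1}$ bottom ORs. This is exactly $1+\binom nk(1+m2^{k-1})$, which is $O(\binom nk m2^k)$ since $m\ge1$. Any degenerate clause of fan-in one (for instance when $k=1$) can only be absorbed, so the count does not increase. Adjacent gates of the same type are not an issue, since the layers alternate OR/AND/OR.

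For the scale statements, I would use $\binom nk\le(\mathrm e n/k)^k$. This gives
\[
\log\size_{\Sigma_3}\le k\log(n/k)+k\log\mathrm e+k+\log m+O(1).
\]
For fixed $k$ with $n/k\to\infty$ and $\log m=o(\log(n/k))$, every term except $k\log(n/k)$ is $o(\log(n/k))$, yielding $(n/k)^{k+o(1)}$. For growing $k$ with $\log(n/k)\to\infty$, the terms $k\log\mathrm e+k$ are $o(k\log(n/k))$, and $\log m=o(k\log(n/k))$ by hypothesis, yielding $(n/k)^{k+o(k)}$.

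The main point requiring care is only the constant-term bookkeeping in the growing-$k$ case: the $2^k$ factor contributes $k$ to the exponent, and it is absorbed precisely because $\log(n/k)\to\infty$. No other obstacle is expected.
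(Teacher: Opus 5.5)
Your proposal is correct and follows essentially the same route as the paper: a disjunction over supports of the row-wise $2^{k-1}$-clause even-parity CNFs, the same gate count $1+\binom nk(1+m2^{k-1})$, and the estimate $\binom nk\le(\mathrm e n/k)^k$ to absorb the $O(k)+\log m$ error term in both the fixed-$k$ and growing-$k$ regimes.
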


\begin{proof}
Fix a support $S\in\binom{[n]}k$ and a row $r\in[m]$. The condition $\bigoplus_{j\in S}A_{r,j}=0$ is an even-parity predicate on $k$ bits and has a CNF with exactly $2^{k-1}$ clauses, one excluding each odd-parity assignment. Conjoining these clauses over all $m$ rows computes that $S$ is a zero-sum support, and taking the OR over all supports computes $\kXOR$. This gives an $\mathsf{OR}\circ\mathsf{AND}\circ\mathsf{OR}$ circuit with one top OR gate, one middle AND gate per support, and $m2^{k-1}$ bottom clause gates per support, proving \Cref{eq:F-xor-upper}. Finally, $\binom nk\le(\mathrm e n/k)^k$, so
\begin{equation*}
\log\size_{\Sigma_3}(\kXOR_{n,m,k})
\le k\log(n/k)+O(k)+\log m
=(k+o(1))\log(n/k)
\end{equation*}
under the stated fixed-$k$ hypotheses. In particular, this condition holds at the principal $O(k\log n)$-scale row budgets used in \Cref{thm:B}. Without fixing $k$, the same estimate has error term $O(k)+\log m$; the growing-$k$ hypotheses in the proposition make this $o(k\log(n/k))$, proving that bound.
\end{proof}

\begin{proposition}[Elementary circuits for $\kSUM$]\label{prop:sum-brute-force}
For all $n\ge k\ge2$ and $m\ge1$,
\begin{equation}\label{eq:F-sum-upper}
\size_2(\kSUM_{n,m,k})
\le1+\binom nk2^{(k-1)m}.
\end{equation}
Consequently,
\begin{equation}\label{eq:F-sum-upper-log}
\log\size_2(\kSUM_{n,m,k})
\le(k-1)m+k\log(\mathrm e n/k)+O(1),
\end{equation}
and the displayed depth-two circuit has size $2^{\Theta(km)}$ whenever $m\ge\log(\mathrm e n/k)$. There is also a top-disjunction depth-three block-carry circuit: for every integer $b\in[m]$, with $t:=\lceil m/b\rceil$,
\begin{equation}\label{eq:F-sum-block-upper}
\size_{\Sigma_3}(\kSUM_{n,m,k})
\le
1+\binom nk k^{t-1}\bigl(1+t2^{kb}\bigr).
\end{equation}
In particular,
\begin{equation}\label{eq:F-sum-block-upper-log}
\log\size_{\Sigma_3}(\kSUM_{n,m,k})
\le
k\log(\mathrm e n/k)+O\!\left(\sqrt{km\log k}+k+\log m\right).
\end{equation}
If $n\ge k^2$ and $m=\Theta(k\log(\mathrm e n/k))$, then
\begin{equation}\label{eq:F-sum-principal-upper}
\size_{\Sigma_3}(\kSUM_{n,m,k})\le(n/k)^{O(k)}.
\end{equation}
\end{proposition}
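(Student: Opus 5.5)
The plan is to write down both circuits explicitly and then count gates. The depth-two bound comes from a support-by-assignment DNF, and the depth-three bound comes from guessing carries between blocks.

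\textbf{Depth-two DNF.} Fix a support $S\in\binom{[n]}k$, say $S=\{j_1<\dots<j_k\}$. The predicate $\sum_{j\in S}z_j\equiv0\pmod{2^m}$ depends on $km$ bits. For every choice of the first $k-1$ residues $(z_{j_1},\dots,z_{j_{k-1}})\in\Z_{2^m}^{k-1}$, the last residue is forced to equal $z_{j_k}=-\sum_{i<k}z_{j_i}$. So the predicate is an OR of exactly $2^{(k-1)m}$ full terms, each of width $km$ and each a single AND gate.
\begin{itemize}
\item Take one top OR over all supports and all such terms. This gives \Cref{eq:F-sum-upper}.
\item Using $\binom nk\le(\mathrm e n/k)^k$ gives \Cref{eq:F-sum-upper-log}.
\item The claim that this particular circuit has size $2^{\Theta(km)}$ is just a statement about its own size. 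It holds because $(k-1)m\ge km/2$ dominates once $m\ge\log(\mathrm e n/k)$.
\end{itemize}

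\textbf{Block-carry $\Sigma_3$ circuit.} Split the $m$ bit positions into $t=\lceil m/b\rceil$ consecutive blocks of at most $b$ bits each. For a fixed support $S$, the sum of $k$ block-values plus an incoming carry $c_{i-1}$ is at most $k(2^b-1)+(k-1)<k2^b$, so every outgoing carry lies in $\{0,\dots,k-1\}$. The modular sum vanishes exactly when there exist carries $c_0=0,c_1,\dots,c_{t-1}\in\{0,\dots,k-1\}$ such that, for every block $i$:
\begin{itemize}
\item the block-$i$ bits of the $k$ selected residues, together with the incoming carry $c_{i-1}$, sum to a multiple of $2^{b}$;
\item the resulting quotient equals the prescribed carry $c_i$, except in the top block, where it is unconstrained because the sum is taken modulo $2^m$.
\end{itemize}
Each block condition is a Boolean function of $kb$ input bits once the carries are fixed, so it has a CNF with at most $2^{kb}$ clauses.

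The circuit is then as follows. The top OR ranges over pairs $(S,\vec c)$, of which there are $\binom nk k^{t-1}$. Each pair gets one middle AND gate of the $t$ block CNFs, which uses at most $t2^{kb}$ bottom clause gates. This gives \Cref{eq:F-sum-block-upper}.

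I have to check soundness carefully. Correct carries exist whenever the sum vanishes. Conversely, any consistent guessed carries reproduce the true carry chain by induction from $c_0=0$, because at each block the quotient is determined by the block sum. This induction is the one step needing care, especially the treatment of the last, possibly shorter, block.

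\textbf{Parameter optimization and the principal regime.} Taking logarithms of \Cref{eq:F-sum-block-upper} gives
\[
k\log(\mathrm e n/k)+(t-1)\log k+\log t+kb+O(1).
\]
Choose $b\approx\sqrt{m\log k/k}$, clipped to the interval $[1,m]$. Then $(m/b)\log k+kb=O(\sqrt{km\log k})$, and the extra additive $k$ covers the clipping cases. This gives \Cref{eq:F-sum-block-upper-log}.

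Finally, for \Cref{eq:F-sum-principal-upper}, set $m=\Theta(k\log(\mathrm e n/k))$. Then
\[
\sqrt{km\log k}=O\bigl(k\sqrt{\log(\mathrm e n/k)\log k}\bigr).
\]
The hypothesis $n\ge k^2$ gives $\log(n/k)\ge\log k$, so this is $O(k\log(\mathrm e n/k))$. Also $\log m=O(k\log(\mathrm e n/k))$, so the whole size is $(n/k)^{O(k)}$. Here $n/k\ge k\ge2$ guarantees that $\log(\mathrm e n/k)$ and $\log(n/k)$ are comparable.

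The main obstacle is not the arithmetic but getting the carry bookkeeping exactly right, namely the bound on carries and the top-block exemption. The optimization and the final regime check are routine.
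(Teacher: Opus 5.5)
Your proposal is correct and follows the paper's proof essentially step for step: the same $2^{(k-1)m}$-term DNF per support, the same carry bound placing every intermediate carry in $\{0,\dots,k-1\}$, one middle AND gate per (support, carry-guess) pair, the choice $b=\lceil\sqrt{m\log k/k}\rceil$, and the use of $n\ge k^2$ to absorb $\log k$ into $\log(n/k)$. The one slip is your uniform requirement that every block sum be a multiple of $2^b$. For a final block of width $b'=m-(t-1)b<b$ the condition must be divisibility by $2^{b'}$; the paper phrases this as ``vanish modulo the corresponding power of two.'' As literally written, your circuit would reject genuine zero-sum supports, e.g.\ $k=2$, $m=3$, $b=2$, $z_1=z_2=4$.
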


\begin{proof}
Fix a support $S\in\binom{[n]}k$. Choosing arbitrary values for any $k-1$ of its residues uniquely determines the remaining residue that makes the sum zero modulo $2^m$. Hence exactly $2^{(k-1)m}$ assignments to the $km$ selected input bits satisfy the support equation. A conjunction of literals recognizes each such assignment, so the support predicate has a DNF with $2^{(k-1)m}$ terms. Taking the OR of these terms over every support gives one depth-two DNF with the gate count in \Cref{eq:F-sum-upper}. The binomial estimate $\binom nk\le(\mathrm e n/k)^k$ gives \Cref{eq:F-sum-upper-log}. When $m\ge\log(\mathrm e n/k)$, the upper bound is $O(km)$ in the logarithm, while the construction contains $2^{(k-1)m}$ term gates for every fixed support, proving the stated $2^{\Theta(km)}$ size.

For the block-carry circuit, partition the $m$ bit positions, from least to most significant, into $t=\lceil m/b\rceil$ consecutive blocks of width at most $b$. For a fixed support, the carry into the first block is zero, and every later carry lies in $\{0,\ldots,k-1\}$: the sum of $k$ $b$-bit chunks and an incoming carry at most $k-1$ is at most $k2^b-1$. Guess the $t-1$ intermediate carries. For each nonfinal block, require that the sum of its $k$ chunks and its incoming carry equal $2^b$ times the guessed outgoing carry; for the final, possibly shorter block, require only that this sum vanish modulo the corresponding power of two. Every block predicate is a Boolean function of at most $kb$ input bits and therefore has a CNF with at most $2^{kb}$ clauses, one excluding each falsifying assignment. For a fixed support and carry guess, flatten the conjunctions of all block CNFs into one middle AND gate; then take the OR over all supports and carry guesses. This gives an $\mathsf{OR}\circ\mathsf{AND}\circ\mathsf{OR}$ circuit. The true arithmetic carries give a satisfying branch whenever the modular sum is zero, and conversely any satisfying branch certifies zero in every block, so the circuit is correct. Its gate count is bounded by \Cref{eq:F-sum-block-upper}.

Taking logarithms in \Cref{eq:F-sum-block-upper}, using $\binom nk\le(\mathrm e n/k)^k$, gives
\begin{equation}\label{eq:F-sum-block-log-general}
\log\size_{\Sigma_3}(\kSUM_{n,m,k})
\le
k\log(\mathrm e n/k)+t\log k+kb+O(\log m).
\end{equation}
Choose $b:=\lceil\sqrt{m\log k/k}\rceil$. Since $k\ge2$ and $m\ge1$, this choice lies in $[m]$, and $t\le m/b+1$. Hence both $t\log k$ and $kb$ are $O(\sqrt{km\log k}+k)$, proving \Cref{eq:F-sum-block-upper-log}. Finally, if $n\ge k^2$ and $m=\Theta(k\log(\mathrm e n/k))$, then $\log k=O(\log(n/k))$, so every term on the right of \Cref{eq:F-sum-block-upper-log} is $O(k\log(n/k))$, which proves \Cref{eq:F-sum-principal-upper}.
\end{proof}

\ifhidedepththree\else
Along the growing-$k$ polynomial-host regimes of \Cref{thm:E}, the principal budgets make the preceding $\kOV$ and $\kXOR$ estimates and \Cref{eq:F-sum-principal-upper} read respectively as $n^{k+o(k)}$, $(n/k)^{k+o(k)}$, and $(n/k)^{O(k)}$. Thus the same-orientation top-disjunction depth-three upper and lower bounds match at the linear-in-$k$ exponent scale for all three targets in those regimes as well as for fixed $k$.
\fi

\subsection{The structural constant floor for the expander route}\label{app:structural-floor}

The preceding circuits show that the natural brute-force exponent is $k$ for $\kOV$ in base $n$ and for $\kXOR$ in base $n/k$. The lower bounds of \Cref{thm:B} have the same linear scale, but the particular $\kappa$-based expander route cannot match the leading constant.

\begin{proposition}[A constant-factor floor for regular-core padding]\label{prop:regular-core-floor}
Let $P=F\mathbin{\dot\cup}J_t$, where $F$ is a nonempty $\Delta$-regular core with $\Delta\ge2$, $J_t$ is a disjoint matching, and $e(P)=k-1$. If one idealizes the LRR exponent $\kappa(P)-o(1)$ to $\kappa(P)$ and then applies a quadratic host substitution $N^2=\Theta(n)$ or $N^2=\Theta(n/k)$, the resulting target-base exponent is at most $(k-1)/\Delta$. Consequently, compared with the brute-force exponent $k$, the ratio between the upper exponent and this ceiling for the $\kappa$-based route is strictly greater than $\Delta$.\footnote{This comparison at the principal bit-width were suggested and derived by the AI language model GPT-5.5; the authors independently verified the argument before inclusion.}
\end{proposition}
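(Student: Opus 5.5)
The plan is to bound $\kappa(P)$ from above by treewidth and then convert the host-size exponent into a target-base exponent. By \Cref{thm:kappa-properties}, $\kappa(P)\le\tw(P)+1$. Treewidth of a disjoint union is the maximum over components. A matching edge has treewidth $1$, and the core satisfies $\tw(F)\le v(F)-1$. Since each component of $P$ has at least one edge, a disjoint union should satisfy $\kappa(P)=\max$ over the components, using the LRR treatment of disconnected patterns \cite[Rem.~2.7]{LRR17}. This gives $\kappa(P)\le\max\{\tw(F)+1,2\}\le\max\{v(F),2\}$. Alternatively, one can use minor-monotonicity only in the crude form $\kappa(P)\le\tw(P)+1\le v(F)$ once $v(F)\ge2$. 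The point of the argument is that the padding contributes nothing beyond a constant.

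The next step is to express $v(F)$ through the edge count. Since $F$ is $\Delta$-regular, $e(F)=\Delta v(F)/2$, and $e(F)\le e(P)=k-1$, so $v(F)\le 2(k-1)/\Delta$. The idealized source bound is therefore $N^{\kappa(P)}\le N^{2(k-1)/\Delta}$. Substituting $N^2=\Theta(n)$, or $N^2=\Theta(n/k)$, turns $N^{2(k-1)/\Delta}$ into $n^{(k-1)/\Delta}$, or $(n/k)^{(k-1)/\Delta}$, up to constant factors in the base. The resulting target-base exponent is thus at most $(k-1)/\Delta$.

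I expect the main obstacle to be the degenerate cases. The case $v(F)<2$ cannot occur, because a $\Delta$-regular graph with $\Delta\ge2$ has at least $\Delta+1\ge3$ vertices. The max with $2$ from the matching component is dominated, since $v(F)\ge3$. I also need $\kappa(P)\le v(F)$ rather than merely $\le\tw(P)+1$ with $\tw(P)$ counted over the whole union. The cleanest route is $\tw(P)=\max\{\tw(F),1\}\le v(F)-1$, which requires only the standard component-max property of treewidth and no LRR disconnected-pattern fact.

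Finally, for the ratio I will compare the brute-force exponent $k$ with the ceiling $(k-1)/\Delta$. The quotient is $k\Delta/(k-1)$, which is strictly greater than $\Delta$ because $k/(k-1)>1$. This proves the strict-inequality conclusion.
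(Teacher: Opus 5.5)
Your proposal is correct and follows essentially the same route as the paper. Both arguments use $\kappa(P)\le\tw(P)+1=\max\{\tw(F),1\}+1\le v(F)$ via the component-max property of treewidth, then $v(F)=2e(F)/\Delta\le 2(k-1)/\Delta$, the halving of the exponent under the quadratic host substitution, and the ratio $\Delta k/(k-1)>\Delta$. The differences are cosmetic: you get $\max\{\tw(F),1\}\le v(F)-1$ from $v(F)\ge\Delta+1\ge 3$, whereas the paper observes $\tw(F)\ge 2$ because $F$ contains a cycle, and your initial detour through the disconnected-pattern remark of LRR is unnecessary, as you yourself conclude.
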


\begin{proof}
By \Cref{thm:kappa-properties}, $\kappa(P)\le\tw(P)+1$. Treewidth is the maximum over connected components; moreover, $\tw(J_t)\le1$, while the minimum-degree condition on $F$ forces a cycle, whose treewidth is $2$, and treewidth is minor-monotone. Hence $\tw(F)\ge2$, while trivially $\tw(F)\le v(F)-1$, and therefore
\begin{equation}\label{eq:F-kappa-core-bound}
\kappa(P)
\le\tw(P)+1
=\max\{\tw(F),1\}+1
\le v(F).
\end{equation}
Since $F$ is $\Delta$-regular,
\begin{equation}\label{eq:F-core-vertex-bound}
v(F)=\frac{2e(F)}{\Delta}\le\frac{2(k-1)}{\Delta}.
\end{equation}
The idealized LRR exponent $\kappa(P)$ in base $N$ becomes $\kappa(P)/2$ in the corresponding target base after either quadratic host substitution. Combining \Cref{eq:F-kappa-core-bound,eq:F-core-vertex-bound} gives the ceiling $(k-1)/\Delta$. The brute-force exponent $k$ divided by this ceiling is $\Delta k/(k-1)>\Delta$.
\end{proof}

\begin{remark}[Scope of the floor]\label{rem:regular-core-floor-scope}
The proposition is a limitation of this specific $\kappa$-based regular-core-plus-matching route, not an upper bound on what other source families or other lower-bound methods could prove. For the degree-$4$ family used in \Cref{thm:B}, it already forces a ratio greater than $4$, even before accounting for slack in the expansion estimate or in the LRR exponent.
\end{remark}

\subsection[\texorpdfstring{The depth-sensitive comparison for $\kSUM$}{The depth-sensitive comparison for k-SUM}]%
  {\texorpdfstring{The depth-sensitive comparison for $\kSUM$}{The depth-sensitive comparison for k-SUM}\footnote{The depth-sensitive comparison in this section, including the block-carry width choice underlying the exponent bound $U_\Sigma(n,m,k)$ of \Cref{eq:F-sum-upper-scale} and the resulting ratio estimate of \Cref{eq:F-sum-exponent-ratio}, was suggested and derived by the AI language model GPT-5.5; the authors independently verified the argument before inclusion.}}%
\label{app:sum-tightness-gap}

In the nondegenerate high-width regime of \Cref{cor:sum-high-width}, the transferred lower exponent has scale $\min\{k\log(n/k),m\}$. By \Cref{prop:sum-brute-force}, a top-disjunction depth-three upper exponent is controlled by
\begin{equation}\label{eq:F-sum-upper-scale}
U_\Sigma(n,m,k):=k\log(\mathrm e n/k)+\sqrt{km\log k}+k+\log m.
\end{equation}
At the principal width $m=\Theta(k\log(\mathrm e n/k))$ and under $n\ge k^2$, which holds throughout the polynomial-host regimes of \Cref{thm:D}\ifhidedepththree\else, \Cref{thm:E}\fi, and \Cref{thm:F}, one has $U_\Sigma(n,m,k)=\Theta(k\log(n/k))$. Thus, after specializing \ifhidedepththree the fixed-$k$ lower bound of \Cref{thm:B}\else the lower bound\fi{} to top-disjunction depth three, the lower and upper bounds match at exponent scale there.

Away from the principal width, two different gaps may remain. When $m\le k\log(n/k)$, the lower exponent can be width-limited while the elementary circuit still enumerates all supports, contributing $k\log(\mathrm e n/k)$. When $m\ge k\log(n/k)$, the ratio between the displayed depth-three upper scale and the saturated lower scale satisfies
\begin{equation}\label{eq:F-sum-exponent-ratio}
\frac{U_\Sigma(n,m,k)}{k\log(n/k)}
=O\!\left(
1+
\frac{\sqrt{m\log k/k}}{\log(n/k)}
+
\frac{k+\log m}{k\log(n/k)}
\right).
\end{equation}
The carry contribution in (\ref{eq:F-sum-exponent-ratio}) becomes superconstant only beyond $m=\omega(k\log^2(n/k)/\log k)$. The resulting overhead is much smaller than the $\Theta(m/\log(n/k))$ gap suggested by the depth-two DNF, but it need not be constant at arbitrarily large width.

The block-carry improvement is depth-sensitive. The direct depth-two circuit resolves complete assignments to a support, whereas the top-disjunction depth-three circuit guesses intermediate block carries and verifies the blocks in parallel. No fixed depth yields a polynomial-size verifier for the fixed-support predicate uniformly in $k$: by \Cref{prop:sum-parity-corner}, restricting each selected residue to $\{0,2^{m-1}\}$ turns that predicate into the complement of PARITY, so every fixed depth still requires size $\exp(\Omega_d(k^{1/(d-1)}))$. The matching restriction in \Cref{eq:D-xor-parity-corner,eq:D-xor-parity-lower} shows that this parity floor is shared by $\kXOR$. Hence the PARITY obstruction itself does not distinguish the two list targets. The quantitative distinction is in the available upper bounds: the fixed-support $\kXOR$ predicate has an $m\,2^{k-1}$-clause CNF, the direct fixed-support $\kSUM$ predicate has a $2^{(k-1)m}$-term DNF, and the block-carry $\Sigma_3$ construction replaces the latter dependence by the square-root carry overhead.

\ifhidedepththree\else
  \section{Depth-Three Probability Estimates}\label{app:depth-three}

This appendix proves the two probabilistic ingredients used in \Cref{sec:growing-depth-three}: the single-CNF minterm bound and the planted-copy good event. The common bipartite degree-$4$ expander core is constructed in \Cref{cor:degree-four-family}.

\subsection{The single-CNF minterm bound}\label{app:depth-three-minterm}

We use the notation of \Cref{thm:mintermbound}. Thus $F$ has a nonempty independent set $I$ of size $L$, every vertex of $I$ has positive degree, $T$ is a CNF with $s=s(T)\ge1$ clauses, and the random experiment produces the background $G$, the planted labels $\phi^*$, the blocks $B_{v,a}$, and the partial inputs $H_A$.

Fix the labels $\phi^*(u)$ for $u\notin I$. For a clause $C$ of $T$ and a state $A\subseteq I$, define its remaining positive support by
\begin{align}
R_A(C)&:=\bigl\{(v,a)\in(I\setminus A)\times[N]:\notag\\
&\hspace{35mm}C\text{ contains a positive literal whose variable lies in }B_{v,a}\bigr\},\label{eq:depth-three-support}\\
r_A(C)&:=|R_A(C)|.\notag
\end{align}
For each $(v,a)\in R_A(C)$, fix one canonical representative positive literal $x_{C,v,a}\in C\cap B_{v,a}$. The representatives are distinct because the blocks are pairwise disjoint. Call a pair $(A,C)$, with $A\subseteq I$ and $C$ a clause of $T$, eligible when $x_{C,v,a}\notin G$ for every $(v,a)\in R_A(C)$, and define
\begin{equation}\label{eq:depth-three-support-maximum}
Z:=\max\bigl\{r_A(C):(A,C)\text{ eligible}\bigr\}.
\end{equation}
The eligible set is nonempty because $R_I(C)=\varnothing$ for every clause $C$. Once the outside labels are fixed, $Z$ depends only on the product-random background $G$ and is independent of the labels inside $I$.

\begin{lemma}[Support-maximum tail and moment]\label{lem:depth-three-support-moment}
With the outside labels fixed, let $\Pr$ and $\E$ be taken over $G$ alone. For every real $t\ge0$,
\begin{equation}\label{eq:depth-three-support-tail}
\Pr[Z\ge t]\le\min\{1,2^L s\,\mathrm e^{-pt}\}.
\end{equation}
Consequently, for an absolute constant $C_0>0$,
\begin{equation}\label{eq:depth-three-support-moment}
\E[Z^L]\le\left(\frac{C_0\bigl(L+\log(\mathrm e s)\bigr)}{p}\right)^L.
\end{equation}
Both estimates are uniform in the fixed outside labels.
\end{lemma}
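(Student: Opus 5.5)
Fix the outside labels, so that every block $B_{v,a}$, every support $R_A(C)$ and every canonical representative $x_{C,v,a}$ is a deterministic object, and the only randomness is the product background $G$. The tail bound \Cref{eq:depth-three-support-tail} will come from a union bound over the at most $2^L s$ pairs $(A,C)$.

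For a fixed pair $(A,C)$ with $r_A(C)\ge t$, the event that $(A,C)$ is eligible requires that the $r_A(C)$ canonical representatives all be absent from $G$. These representatives are distinct variables, because they lie in pairwise disjoint blocks. Since $G$ includes each variable independently with probability $p$, we get
\[
\Pr[(A,C)\text{ eligible}]=(1-p)^{r_A(C)}\le\mathrm e^{-p r_A(C)}\le\mathrm e^{-pt}.
\]
The event $\{Z\ge t\}$ is contained in the union, over the deterministic pairs with $r_A(C)\ge t$, of their eligibility events. There are $2^L$ states $A\subseteq I$ and $s$ clauses, so at most $2^Ls$ such pairs, and the union bound gives $\Pr[Z\ge t]\le 2^Ls\,\mathrm e^{-pt}$. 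Taking the minimum with $1$ is trivial.

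Nothing in this argument depends on the specific outside labels beyond the disjointness of the blocks, so the bound is uniform in them. The one point needing care is that eligibility of a fixed pair depends only on the representatives of that pair. This is what makes the union bound legitimate even though the maximum in \Cref{eq:depth-three-support-maximum} ranges over a random set.

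For the moment bound \Cref{eq:depth-three-support-moment}, set $t_0:=(L\ln 2+\ln s)/p$, so that the tail is at most $\mathrm e^{-p(t-t_0)}$ for $t\ge t_0$. Then split the layer-cake integral:
\[
\E[Z^L]=\int_0^\infty Lt^{L-1}\Pr[Z\ge t]\,dt\le t_0^L+\int_{t_0}^\infty Lt^{L-1}\mathrm e^{-p(t-t_0)}\,dt.
\]
Substituting $t=t_0+u$ and using $(t_0+u)^{L-1}\le 2^{L-1}(t_0^{L-1}+u^{L-1})$, the second integral is at most
\[
2^{L-1}L\bigl(t_0^{L-1}/p+(L-1)!/p^L\bigr).
\]
Using $L!\le L^L$ and $L t_0^{L-1}/p\le (t_0+L/p)^L$, everything is bounded by $\bigl(C(t_0+L/p)\bigr)^L$ for an absolute constant $C$. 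Since $t_0+L/p=O\bigl((L+\log(\mathrm e s))/p\bigr)$, this gives the claim.

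I expect no real obstacle here. The only delicate bookkeeping is keeping the constant absolute, independent of $L$, by taking $L$th roots consistently; this is routine once one notes that $(L!)^{1/L}\le L$.
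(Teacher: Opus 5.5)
Your proposal is correct and follows the paper's proof. The tail bound is the same union bound over at most $2^Ls$ deterministic pairs $(A,C)$, using the distinct canonical representatives, and the moment bound uses the same shift by $\ln(2^Ls)/p$ to reach an exponential tail. The only difference is in the last step: the paper stochastically dominates $pZ$ by $\ln(2^Ls)+Y$, with $Y$ a mean-one exponential, and applies Minkowski's inequality to get $(\E[(pZ)^L])^{1/L}\le\ln(2^Ls)+L$ directly. You instead integrate the layer-cake formula with a convexity split, which gives the same bound with a slightly larger but still absolute constant.
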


\begin{proof}
Fix $A\subseteq I$ and a clause $C$ with $r_A(C)=r$. Its $r$ canonical representatives are distinct variables, each absent from $G$ independently with probability $1-p$, so the probability that all of them are absent is $(1-p)^r\le\mathrm e^{-pr}$. A union bound over at most $2^Ls$ pairs $(A,C)$ proves \Cref{eq:depth-three-support-tail}.

Put $\alpha:=\ln(2^Ls)$. Taking $t=(\alpha+\lambda)/p$ in \Cref{eq:depth-three-support-tail} gives
\begin{equation}\label{eq:depth-three-shifted-tail}
\Pr[pZ\ge\alpha+\lambda]\le2^Ls\,\mathrm e^{-(\alpha+\lambda)}=\mathrm e^{-\lambda}
\qquad\text{for every }\lambda\ge0.
\end{equation}
Let $Y$ be an exponential random variable of mean $1$, so that $\Pr[\alpha+Y\ge\alpha+\lambda]=\mathrm e^{-\lambda}$ for every $\lambda\ge0$, while $\Pr[\alpha+Y\ge t]=1$ for every $t<\alpha$. Comparing the first identity with \Cref{eq:depth-three-shifted-tail} and the second with the trivial bound $\Pr[pZ\ge t]\le1$ shows that $pZ$ is stochastically dominated by $\alpha+Y$. Since $x\mapsto x^L$ is increasing on $[0,\infty)$, this domination gives $\E[(pZ)^L]\le\E[(\alpha+Y)^L]$, and Minkowski's inequality for the norm $X\mapsto(\E[X^L])^{1/L}$, available because $L\ge1$, gives
\begin{equation*}
\bigl(\E[(pZ)^L]\bigr)^{1/L}\le\alpha+\bigl(\E[Y^L]\bigr)^{1/L}=\alpha+(L!)^{1/L}\le\alpha+L.
\end{equation*}
Since $\alpha+L=L(1+\ln2)+\ln s=O(L+\log(\mathrm e s))$, enlarging an absolute constant and dividing by $p^L$ proves \Cref{eq:depth-three-support-moment}. The argument is uniform in the outside labels.
\end{proof}

\begin{proof}[Proof of \Cref{thm:mintermbound}]
The order of conditioning is essential. First fix only the labels outside $I$ and take the background moment in \Cref{eq:depth-three-support-moment}. Then condition further on $G$, which fixes $Z$, and expose the mutually independent uniform labels $\{\phi^*(v):v\in I\}$ adaptively. All probabilities in this proof are over the background, the planted labels, and the process's own uniform vertex choices, the last being independent of the other two; the conditioning in each display records which of them have been fixed.

Consider the following randomized commitment process. It starts at $A=\varnothing$. At a state $A\subseteq I$, it succeeds if $A=I$ and fails if $T(H_A)=1$. In the remaining case, where $A\subsetneq I$ and $T(H_A)=0$, call $A$ proper and let $C_A$ be the first clause, in a fixed ordering, that is false on $H_A$; such a clause exists because $T(H_A)=0$. The process chooses $v\in I\setminus A$ uniformly, reveals $\phi^*(v)$, and advances to $A\cup\{v\}$ exactly when $C_A$ contains a positive literal whose variable lies in $B_{v,\phi^*(v)}$, that is, when $(v,\phi^*(v))\in R_A(C_A)$; it fails otherwise.

Fix a proper state $A$ and put $q:=|I\setminus A|$. For each $v\in I\setminus A$, let $s_v:=|\{a:(v,a)\in R_A(C_A)\}|$. Conditional on the preceding history, the next vertex is uniform among the $q$ choices and its unrevealed label is uniform on $[N]$, so
\begin{equation}\label{eq:depth-three-step-probability}
\Pr[\text{advance from }A\mid\text{history}]=\frac1q\sum_{v\in I\setminus A}\frac{s_v}{N}=\frac{r_A(C_A)}{qN}.
\end{equation}
The pair $(A,C_A)$ is eligible for \Cref{eq:depth-three-support-maximum}. Indeed, $C_A$ is false on $H_A$, so every positive literal of $C_A$ has its variable outside $H_A$ and hence outside $G\subseteq H_A$; in particular every canonical representative is absent from $G$. For a representative indexed by some $(v,a)$ with $v\in I\setminus A$, the two absences are in fact equivalent: it lies in no planted block already added at $A$, the blocks being pairwise disjoint and $v\notin A$, and in no edge of $F^*_{\overline I}$, which avoids $I$ entirely, so its only possible source in $H_A$ is $G$. Thus $r_A(C_A)\le Z$. Since this bound holds uniformly over histories and $Z$ does not depend on the labels inside $I$, multiplying the $L$ conditional estimates $Z/(qN)$, one for each $q=L,L-1,\ldots,1$, yields
\begin{equation}\label{eq:depth-three-process-upper}
\Pr[\text{the process succeeds}\mid G,\ \phi^*|_{V(F)\setminus I}]\le\frac{Z^L}{L!\,N^L}.
\end{equation}

Now fix all planted labels and suppose that $\Mcal_I(T)$ occurs. At every proper state $A$, the selected clause $C_A$ is false on $H_A$ but true on $H_I$. Passing from $H_A$ to $H_I$ only adds variables from the remaining planted blocks, so some remaining block contains a positive literal that repairs $C_A$. The uniform vertex choice therefore advances with probability at least $1/q$, and
\begin{equation}\label{eq:depth-three-process-lower}
\Pr[\text{the process succeeds}\mid G,\phi^*]\ge\frac1{L!}
\qquad\text{on }\Mcal_I(T).
\end{equation}
Since $\Mcal_I(T)$ is determined by $G$ and $\phi^*$, \Cref{eq:depth-three-process-lower} says that $\Pr[\text{the process succeeds}\mid G,\phi^*]\ge\1[\Mcal_I(T)]/L!$ pointwise; averaging over the labels inside $I$ turns the right-hand side into $\Pr[\Mcal_I(T)\mid G,\ \phi^*|_{V(F)\setminus I}]/L!$, and comparing with \Cref{eq:depth-three-process-upper} cancels the factorials and gives
\begin{equation*}
\Pr[\Mcal_I(T)\mid G,\ \phi^*|_{V(F)\setminus I}]\le\frac{Z^L}{N^L}.
\end{equation*}
Averaging first over $G$ and then over the outside labels, and applying \Cref{eq:depth-three-support-moment}, proves \Cref{eq:mintermbound}.

Finally, for every $A\subsetneq I$, the set
\begin{equation*}
F^*_{\overline I}\cup\bigcup_{v\in A}B_{v,\phi^*(v)}
\end{equation*}
is a proper subset of $F^*$ because a missing vertex of $I$ has positive degree and hence a nonempty planted block, and the planted blocks are disjoint. Hence $\Mcal(T)\subseteq\Mcal_I(T)$, which proves the full-minterm bound.
\end{proof}

\subsection{The planted-copy good event}\label{app:depth-three-good-event}

The following statement is slightly more general than the $4$-regular specialization used in \Cref{thm:E}. Its proof explains the threshold $\theta>2/\Delta$ and the polynomial relation between $N$ and $k$.

\begin{lemma}[Subthreshold planted input]\label{lem:depth-three-good-event}
Fix an integer $\Delta\ge3$ and a constant $\theta>2/\Delta$. Let $F=F_k$ be a $\Delta$-regular graph with $v(F)=O(k)$, and choose the background density
\begin{equation}\label{eq:depth-three-general-density}
p:=N^{-\theta}.
\end{equation}
There is a constant $C_{\Delta,\theta}>0$ such that, uniformly for $N\ge k^{C_{\Delta,\theta}}$, the following event has probability $1-o(1)$ as $k\to\infty$: no variable of the planted copy $F^*$ already belongs to $G$, and $F^*$ is the unique $F$-copy in $G\cup F^*$, where an $F$-copy means an accepting map $\psi:V(F)\to[N]$ in the sense of \Cref{def:psub}. On this event, $G\cup R$ is a $0$-input of $F\text{-}\SUB_N$ for every proper subset $R\subsetneq F^*$.
\end{lemma}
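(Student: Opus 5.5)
The plan is a first-moment argument with two union bounds, conditioning on the planted labels $\phi^*$ throughout; recall that $G$ is independent of $\phi^*$. The three assertions are handled in order: planted variables avoid $G$, the planted copy is unique, and proper deletions are $0$-inputs.

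\emph{Step 1: planted variables avoid $G$.} Condition on $\phi^*$. The planted copy $F^*$ consists of $e(F)=\Delta v(F)/2=O(k)$ variables, each present in $G$ with probability $p=N^{-\theta}$. A union bound gives probability at most $O(k)N^{-\theta}$. This is $o(1)$ as soon as $N\ge k^{C}$ with $C\theta>1$.

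\emph{Step 2: uniqueness of the planted copy.} Fix $\phi^*$ and consider any other map $\psi\ne\phi^*$. Let $S:=\{a\in V(F):\psi(a)\ne\phi^*(a)\}$, which is nonempty.

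The key observation uses the fact that coordinates are indexed by the pattern edge $f$. For $f=(a_f,b_f)$, the variable $X_{f,\psi(a_f),\psi(b_f)}$ equals the planted variable $X_{f,\phi^*(a_f),\phi^*(b_f)}$ exactly when both endpoints of $f$ lie outside $S$. It cannot coincide with the planted variable of any other edge $f'$. Hence, for $\psi$ to be a copy in $G\cup F^*$, every edge with at least one endpoint in $S$ must contribute a variable of $G$. These variables are pairwise distinct, because they carry distinct $f$ indices.

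The number of such edges is
\begin{equation*}
\Delta|S|-e_F(S)\ge\Delta|S|/2,
\end{equation*}
using $\Delta$-regularity and $2e_F(S)\le\Delta|S|$. Given $S$ with $|S|=s$, there are at most $N^{s}$ choices of $\psi|_S$. So the expected number of bad maps is at most
\begin{equation*}
\sum_{s\ge1}\binom{v(F)}{s}N^{s}p^{\Delta s/2}\le\sum_{s\ge1}\Bigl(v(F)\,N^{-(\theta\Delta/2-1)}\Bigr)^{s}.
\end{equation*}
Since $\theta>2/\Delta$, the exponent $\theta\Delta/2-1$ is positive. With $v(F)=O(k)$, the ratio is $o(1)$ once $N\ge k^{C}$ for any
\begin{equation*}
C>\max\Bigl\{\frac{1}{\theta\Delta/2-1},\ \frac1\theta\Bigr\}.
\end{equation*}
So the geometric sum is $o(1)$, uniformly in $\phi^*$. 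Markov's inequality and averaging over $\phi^*$ then give uniqueness with probability $1-o(1)$. I will take $C_{\Delta,\theta}$ slightly larger than this maximum, so that the implied constants in $v(F)=O(k)$ are absorbed.

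\emph{Step 3: proper deletions are $0$-inputs.} Let $R\subsetneq F^*$. Since $F\text{-}\SUB_N$ is monotone and $G\cup R\subseteq G\cup F^*$, any copy in $G\cup R$ is also a copy in $G\cup F^*$, so by Step 2 it is $\phi^*$. That copy requires all of $F^*\subseteq G\cup R$. By Step 1, $F^*\cap G=\varnothing$, so this forces $F^*\subseteq R$, which is a contradiction.

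The main obstacle is the bookkeeping in Step 2. One must check that the forced background variables are distinct and independent of the planted copy, and count them correctly, including edges with only one endpoint in $S$. The $f$-indexing of the coordinates is what makes this clean, and it is also where the threshold $\theta>2/\Delta$ enters.
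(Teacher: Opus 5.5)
Your proposal is correct and follows essentially the same route as the paper: a union bound showing no planted variable lies in $G$, a first-moment count over the disagreement set $J$ (your $S$) using the fact that at least $\Delta|J|/2$ distinct non-planted edge variables must come from $G$, giving the same geometric series $\sum_{j}\bigl(v(F)N^{-(\theta\Delta/2-1)}\bigr)^j$, followed by Markov's inequality and the same monotonicity argument for proper deletions. Your threshold conditions $C\theta>1$ and $C(\theta\Delta/2-1)>1$ are slightly weaker than the paper's choices (which take both products larger than $2$). They still suffice for $o(1)$, since the geometric ratio is decreasing in $N$.
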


\begin{proof}
The planted copy is present in $G\cup F^*$ by construction. Since $e(F)=\Delta v(F)/2=O(k)$,
\begin{equation}\label{eq:depth-three-planted-background}
\Pr[F^*\cap G\ne\varnothing]\le e(F)p=o(1)
\end{equation}
uniformly once $C_{\Delta,\theta}$ is sufficiently large.

It remains to rule out an alternative accepting map $\psi:V(F)\to[N]$. Let $J:=\{v:\psi(v)\ne\phi^*(v)\}$ and $j:=|J|\ge1$. For each fixed $J$, there are at most $N^j$ assignments to the labels on $J$; label collisions are allowed and harmless. Every pattern edge meeting $J$ is represented under $\psi$ by a non-planted variable, these variables are distinct across pattern edges, and all must be supplied by $G$.

If $e_F(J)$ is the number of edges with both endpoints in $J$ and $e_F(J,V(F)\setminus J)$ is the number with exactly one, then counting the $\Delta j$ edge-endpoints at $J$ gives
\begin{equation*}
\Delta j=2e_F(J)+e_F(J,V(F)\setminus J),
\end{equation*}
so the number $e_F(J)+e_F(J,V(F)\setminus J)$ of edges meeting $J$ is at least $\Delta j/2$. Therefore
\begin{align}
\E[\#\{\psi\ne\phi^*: \psi\text{ is a copy in }G\cup F^*\}]&\le\sum_{j=1}^{v(F)}\binom{v(F)}j N^j p^{\Delta j/2}\label{eq:depth-three-unique-first-moment}\\
&\le\sum_{j\ge1}\bigl(v(F)N^{-(\theta\Delta/2-1)}\bigr)^j=o(1).\nonumber
\end{align}
The estimate is uniform for $N\ge k^{C_{\Delta,\theta}}$ after choosing $C_{\Delta,\theta}(\theta\Delta/2-1)>2$: the geometric ratio $v(F)N^{-(\theta\Delta/2-1)}$ decreases in $N$ and is $O(k^{1-C_{\Delta,\theta}(\theta\Delta/2-1)})=O(k^{-1})$ at $N=k^{C_{\Delta,\theta}}$; increasing the same constant so that $C_{\Delta,\theta}\theta>2$ also makes \Cref{eq:depth-three-planted-background} uniform. Markov's inequality together with \Cref{eq:depth-three-planted-background} proves that the asserted good event has probability $1-o(1)$.

Suppose the good event holds, and let $R\subsetneq F^*$ be arbitrary. If $G\cup R$ contained an $F$-copy, then, since $G\cup R\subseteq G\cup F^*$, the same copy would lie in $G\cup F^*$, and uniqueness would force it to be $F^*$; in particular every variable of $F^*$ would lie in $G\cup R$. But a variable of $F^*\setminus R$ lies outside $R$ by choice and outside $G$ because no planted variable belongs to $G$ on the good event, a contradiction.
\end{proof}

\fi

\end{document}